\g@addto@macro{\endabstract}{\@setabstract}
\newcommand{\authorfootnotes}{\renewcommand\thefootnote{\@fnsymbol\c@footnote}}%
\newtheorem{thm}{Theorem}[section]
\newtheorem{lemma}{Lemma}[section]
\newtheorem{prop}{Proposition}[section]
\newtheorem{cor}{Corollary}[section]
\theoremstyle{remark}
\newtheorem{remark}{Remark}
\numberwithin{equation}{section}
\definecolor{green}{rgb}{0.0, 0.5, 0.5}
\definecolor{yellow}{rgb}{0.5, 0.5, 0}
\definecolor{lgray}{gray}{0.9}
\definecolor{llgray}{gray}{0.95}
\definecolor{lllgray}{gray}{0.975}
\newcommand{\nc}{\newcommand}
\nc{\la}{\label}
\nc{\ba}{\begin{array}}
\nc{\ea}{\end{array}}
\nc{\bs}{\begin{split}}
\nc{\es}{\end{split}}
\newcommand{\R}{\mathbb{R}}
\newcommand{\C}{\mathbb{C}}
\newcommand{\Z}{\mathbb{Z}}
\newcommand{\T}{\mathbb{T}}
\newcommand{\cB}{\mathcal{B}}
\newcommand{\cD}{\mathcal{D}}
\newcommand{\cC}{\mathcal{C}}
\newcommand{\cE}{\mathcal{E}}
\newcommand{\cF}{\mathcal{F}}
\newcommand{\cG}{\mathcal{G}}
\newcommand{\cH}{\mathcal{H}}
\newcommand{\cK}{\mathcal{K}}
\newcommand{\cN}{\mathcal{N}}       
\newcommand{\cQ}{\mathcal{Q}}
\newcommand{\cR}{\mathcal{R}}
\newcommand{\cW}{\mathcal{W}}
\nc{\ran}{\rangle}
\nc{\lan}{\langle}
\renewcommand{\Re}{\mathrm{Re}} 
\nc{\bfone}{{\bf 1}}
\newcommand{\DETAILS}[1]{}
\nc{\den}{\text{den}}
\nc{\ex}{\text{xc}}
\begin{document}
\title[Collective behaviors of an electron gas]{Collective behaviors of an electron gas in the mean-field regime}

\maketitle

\begin{center}
	\authorfootnotes
	Dong Hao Ou Yang\footnote{\textit{Email}: ouyang@math.lmu.de}\textsuperscript{1} \par \bigskip
	
	\textsuperscript{1}LMU Munich, Dept of Mathematics, Theresienstr. 39, 80333 Munich, Germany \par\bigskip
\end{center}

\begin{abstract}
In this paper, we study the momentum distribution of an electron gas in a $3$-dimensional torus.  The goal is to compute the occupation number of Fourier modes for some trial state obtained through random phase approximation.  We obtain the mean-field analogue of momentum distribution formulas for electron gas in [Daniel and Voskov, Phys. Rev. \textbf{120}, (1960)] in high density limit and [Lam, Phys. Rev. \textbf{3}, (1971)] at metallic density.  The analysis in the present paper is majorly based on the work [Christiansen, Hainzl, Nam, Comm. Math. Phys. \textbf{401}, (2023)].  Our findings are related to recent results obtained independently by Benedikter, Lill and Naidu, and the analysis applies to a general class of singular potentials rather than just the Coulomb case.
\end{abstract}

\section{Introduction and main result}\label{sec:intro-main}
\subsection{Introduction}\label{sec:intro}
We consider a system of $N$ electron gas in the torus $\T^{3}=[0,2\pi]^{3}$ in the mean field limit, whose Hamiltonian is given by (in the unit $\hbar=1$)
\begin{align}
	H_{N}&=\sum_{j=1}^{N}(-\Delta_{x_{j}})+k_{F}^{-1}\sum_{1\leq i<j\leq N}V(x_{i}-x_{j})\quad\text{ on }\cH_{N}:=L_{a}^{2}(\T^{3N}).
\end{align}
Here we denote $L_{a}^{2}(\T^{3N})$ as the space of square-integrable, totally antisymmetric functions of $3N$-dimensional torus $\T^{3N}$.  The standard Laplacian operator $-\Delta$ describes the kinetic energy for each individual particle, $k_{F}$ denotes the Fermi momentum, and $V$ is given by the Comloub potential function
\begin{align}\label{Coulomb}
	V(x)&=\frac{1}{(2\pi)^{3}}\sum_{k\in\Z_{*}^{3}}\hat{V}_{k}e^{ik\cdot x},\quad \hat{V}_{k}:=\begin{cases}
		g|k|^{-2}\quad&\text{ for }k\neq 0,\\
		0&\text{ for }k=0.
	\end{cases}
\end{align}
where $g\geq 0$ is the coupling constant and $\Z_{*}^{3}:=\Z^{3}\setminus\{0\}$.  Since $H_{N}$ is bounded from below, we define $H_{N}$ as a quadratic form by Friedrichs method with form domain $H^{1}(\T^{3})$.  The \textit{ground state energy} $E_{N}$ is then given by
\begin{align}\label{gs-energy}
	E_{N}&:=\inf\sigma(H_{N})=\inf_{\psi\in L_{a}^{2}(\T^{3N})}\frac{\langle\psi,H_{N}\psi\rangle}{\|\psi\|^{2}},
\end{align}
and any eigenvector of $H_{N}$ with eigenvalue $E_{N}$ is called a \textit{ground state}.  

In the non-interacting case (i.e., $g=0$), the ground states $\Psi_{FS}$ (which will be called \textit{Fermi state}) are given by the Slater determinants comprising $N$ plane waves with different momentum $k_{j}\in\Z^{3}$ for $j=1,...,N$ of minimized kinetic energy $|k_{j}|^{2}$, i.e., 
\begin{align}
	\Psi_{FS}&=u_{k_{1}}\wedge...\wedge u_{k_{N}},\quad u_{p}(x)=(2\pi)^{-3/2}e^{-ip\cdot x}.
\end{align}
This state is unique up to a phase if we assume the Fermi ball $B_{F}$ is completely filled with $N$ integer points in the momentum space, i.e., 
\begin{align}
	N&=|B_{F}|,\quad \text{where } B_{F}=\{k\in\Z^{3}\mid |k|\leq k_{F}\}\text{ for some }k_{F}>0.
\end{align}
This implies that the Fermi momentum $k_{F}$ is scaled like
\begin{align}
	k_{F}&=\Big[\Big(\frac{3\pi}{4}\Big)^{1/3}+O(N^{-1/3})\Big]N^{1/3}.
\end{align}
We denote the complement of Fermi ball by $B_{F}^{c}:=\Z^{3}\setminus B_{F}$.

In the interacting case, the Slater determinant $\Psi_{FS}$ is no longer a ground state for $H_{N}$.  In the present case, we are interested in the collective correction induced by the interaction.  This is to be compared with general quantum mechanical system in which linear combinations of Slater determinants are allowed.  
In the work \cite{CHN-23,CHN-24-1}, Christiansen, Hainzl and Nam estimated the ground state energy by: As $k_{F}\rightarrow\infty$, it holds that
\begin{align}\label{correlation-energy}
	E_{N}&=E_{FS}+E_{\rm corr,bos}+E_{\rm corr,ex}+O(k_{F}^{5/6+\epsilon})
\end{align}
for any $\epsilon>0$, where
\begin{align}\label{boson-energy}
	E_{\rm corr,bos}&=\frac{1}{\pi}\sum_{k\in\Z_{*}^{3}}\int_{0}^{\infty}F\big(Q_{k}(s)\big)ds,\quad F(x)=\log(1+x)-x,
\end{align}
is the energy contribution due to bosonization (i.e., electron-hole pair excitations) and 
\begin{align}\label{ex-energy}
	E_{\rm corr,ex}&=\frac{k_{F}^{-2}}{4(2\pi)^{6}}\sum_{k\in\Z_{*}^{3}}\sum_{p,q\in L_{k}}\frac{\hat{V}_{k}\hat{V}_{p+q-k}}{\lambda_{k,p}+\lambda_{k,q}},
\end{align}
is the exchange correlation (one should not confuse it with exchange energy in Hartree-Fock approximation), with $L_{k}$ denotes the \textit{lune} given by
\begin{align}\label{lune}
	L_{k}&:=B_{F}^{c}\cap (B_{F}+k)=\{p\in\Z^{3}\mid |p-k|\leq k_{F}< |p|\}.
\end{align}
Here we denote
\begin{align}
	\lambda_{k,p}:=\frac{1}{2}\big(|p|^{2}&-|p-k|^{2}\big),\quad Q_{k}(s):=\frac{k_{F}^{-1}\hat{V}_{k}}{(2\pi)^{3}}\sum_{p\in L_{k}}\frac{\lambda_{k,p}}{s^{2}+\lambda_{k,p}^{2}},
\end{align}
and $E_{FS}$ is the energy in Fermi state $\Psi_{FS}$ that can be computed explicitly:
\begin{align}
	E_{FS}&=\langle\Psi_{FS},H_{N}\Psi_{FS}\rangle=\sum_{p\in B_{F}}|p|^{2}+\frac{1}{2(2\pi)^{3}}\sum_{k\in\Z_{*}^{3}}\hat{V}_{k}(|L_{k}|-N).
\end{align}
We note the following important lower bound for $\lambda_{k,p}$: Denote for each $p\in\Z^{3}$ that (see \cite[Eq. (A.2)]{CHN-24-1})
\begin{align}\label{kappa}
	\kappa&:=\frac{1}{2}\Big(\inf_{p\in B_{F}^{c}}|p|^{2}+\sup_{q\in B_{F}}|q|^{2}\Big),\quad m(p)^{-1}:=\big||p|^{2}-\kappa\big|\geq\frac{1}{2}.
\end{align}
Then, since $|p-k|^{2}\leq \kappa\leq |p|^{2}$ for each $p\in L_{k}$, we have 
\begin{align}\label{gap}
	\lambda_{k,p}&=\frac{1}{2}\big(\big||p|^{2}-\kappa\big|+\big||p-k|^{2}-\kappa|\big|\big)\geq \frac{1}{2}\big(m(p)^{-1}+m(p-k)^{-1}\big)\geq\frac{1}{2}.
\end{align}
For convenience, we also introduce $L_{k}':=L_{k}-k$, the set of hole states for each $k\in\Z_{*}^{3}$.  We remark that concerning the upper bound in \eqref{correlation-energy}, the analysis in \cite{CHN-23} can be extended to any interaction potential with positive, square summable Fourier mode on $\Z_{*}^{3}$, with an error of order at most $O(\sqrt{k_{F}})$.


\begin{remark}
Note that for potential less singular than Coulomb potential, e.g., if $\hat{V}_{k}\leq O(|k|^{2+\epsilon})$ for some $\epsilon>0$, then $E_{\rm corr,bos}$ is of order $k_{F}$ and $E_{\rm corr,ex}$ is of order $o(k_{F})$.  However, for Coulomb potential, then $E_{\rm corr,bos}$ is of order $k_{F}\log(k_{F})$ and $E_{\rm corr,ex}$ is of order $k_{F}$; see \cite{CHN-23} for a detailed explanation.  Thus the Coulomb potential is critical, and the correlation energy $E_{\rm corr,bos}+E_{\rm corr,ex}$ in \eqref{correlation-energy} can be interpreted as the mean-field analogue of the Gell-Mann--Brueckner formula $c_{1}\rho\log(\rho)+c_{2}\rho$ for jellium model in thermodynamic limit with the particle density $\rho$ sufficiently high \cite{GMBru-57}.  This is a refinement of the random phase approximation due to Bohm and Pines \cite{BohmPines-51,BohmPines-52,BohmPines-53,Pines-53}.
\end{remark}

\begin{remark}
For less singular positive interaction potentials, e.g., $\sum_{k\in\Z_{*}^{3}}(1+|k|)\hat{V}_{k}<\infty$, one can show $\Psi_{FS}$ remains a unique minimizer for Hartree-Fock approximation and a more precise leading order correlation energy (i.e., the difference between true ground state energy and the energy calculated by Hartree-Fock approximation) are obtained; see \cite{BNPSS-20,BNPSS-21,BPSS-23,CHN-24} and references therein.
\end{remark}

In the present paper, we are interested in the properties of the ground states of the system.  In this direction, it is a fundamental question in the condensed matter physics to understand if a
similar behavior also holds for true ground states, and it is expected each true ground state has a superconducting part that will smooth out the jump discontinuity as we move from inside to outside of Fermi ball (due to Kohn-Littinger theorem \cite{Luttinger-60,LuttingerWard-60}).  On the mathematicla side, the structure of a ground state is very delicate and its momentum distribution is thus difficult to determine.  Inspired by the recent work \cite{BeneLill-25} by Benediter and Lill, we will study the momentum distrubution for the trial state constructed in \cite{CHN-23}.  




\subsection{Trial states}\label{sec:trial-states}
Let us recall the trial state constructed in \cite{CHN-23}, which gives the energy upper bound in \eqref{correlation-energy}.  To this end, it is convenient to pass to the fermonic Fock space
\begin{align*}
	\cF_{a}&\equiv \cF_{a}(L^{2}(\T^{3})):=\C\oplus\bigoplus_{N=1}^{\infty}L_{a}^{2}(\T^{3N}).
\end{align*}
Each element $\Psi\in\cF_{a}$ is given by a sequence $(\psi_{0},\psi_{1},...,\psi_{N},...)$ with $\psi_{0}\in\C$ and $\psi_{N}\in L_{a}^{2}(\T^{3N})$ for each $N\geq 1$.  We denote the annihilation and creation operators on $\cF_{a}$ associated with $f\in L^{2}(\T^{3})$ as $a(f)$ and $a^{*}(f)$, respectively, satisfying the canoncial anticommutation relation (CAR): For any $f,g\in L^{2}(\T^{3})$, 
\begin{align}\label{CAR}
	\{a(f),a(g)\}=\{a^{*}(f),a^{*}(g)\}=0,\quad\{a(f),a^{*}(g)\}=\langle f,g\rangle.
\end{align}
For each plane wave $u_{p}=(2\pi)^{-3/2}e^{ip\cdot x}$ with $p\in\Z^{3}$, we denote
\begin{align}
	a_{p}&=a(u_{p})\text{ and } a_{p}^{*}=a^{*}(u_{p}).
\end{align}

Next, for each $k\in\Z_{*}^{3}$ and $p\in L_{k}$, we define the \textit{excitation operator} $b_{k,p},b_{k,p}^{*}$ as
\begin{align}\label{excitation}
	b_{k,p}&=a_{p-k}^{*}a_{p},\quad b_{k,p}^{*}=a_{p}^{*}a_{p-k},\quad k\in\Z_{*}^{3},p\in L_{k}.
\end{align}
The name is due to the fact that $b_{k,p}^{*}$ acts on $\cF_{a}$ by creating a state with momentum $k\in B_{F}^{c}$ and destroying a state with momentum $p-k\in B_{F}$.  In other words, it excits the state $p-k\in B_{F}$ to the state $k\in B_{F}^{c}$.  

Note that the excitation operators behave ``\textit{quasi-bosonically}" in the sense that, for each $k,l\in\Z_{*}^{3}$ and $p\in L_{k},q\in L_{l}$, they satisfy the following commutation relation:
\begin{align}
	[b_{k,p},b_{l,q}]&=[b_{k,p}^{*},b_{l,q}^{*}]=0,\\
	\label{almost-bosonic}[b_{k,p},b_{l,q}^{*}]&=\delta_{k,p}\delta_{p,q}-\big(\delta_{p,q}a_{q-l}a_{p-k}^{*}+\delta_{p-k,q-l}a_{q}^{*}a_{p}\big),
\end{align}
i.e., $b_{k,p}^{*},b_{k,p}$ is exactly bosonic if the terms in the bracket on the r.h.s. of \eqref{almost-bosonic} vanish.  In our quasi-bosonic setting, it becomes non-trivial and gives exchange contribution $E_{\rm corr, ex}$ to the correlaton energy \eqref{correlation-energy}.  

For computational convenience, it is better to introduce a basis-independent way of writing quasi-bosonic operators.  For each $k\in\Z_{*}^{3}$, we define an auxilliary vector space $\ell^{2}(L_{k})$, which we will consider as a real vector space with the standard orthonormal basis $(e_{p})_{p\in L_{k}}$.  Then, for each $k\in\Z_{*}^{3}$ and $\varphi\in\ell^{2}(L_{k})$, we define the \textit{generalized excitation operators} by
\begin{align}
	b_{k}(\varphi)&=\sum_{p\in L_{k}}\langle \varphi,e_{p}\rangle b_{k,p},\quad b_{k}^{*}(\varphi)=\sum_{p\in L_{k}}\langle e_{p},\varphi\rangle b_{k,p}^{*}.
\end{align}

Now, we construct our trial state according to \cite{CHN-23} as
\begin{align}\label{trial-N}
	\Psi_{N}&=e^{-\cK}\Psi_{FS},
\end{align}
where $\cK$ is the \textit{quasi-Bogoliubov kernel} on $\cH_{N}$ defined as
\begin{align}\label{quasiB-Bog-ker}
	\cK\DETAILS{&=\frac{1}{2}\sum_{k\in\Z_{*}^{3}}\sum_{p,q\in L_{k}}\langle e_{p},K_{k}e_{q}\rangle\big(b_{k,p}b_{-k,-q}-b_{-k,-q}^{*}b_{k,p}^{*}\big)\nonumber\\}
	&=\frac{1}{2}\sum_{k\in\Z_{*}^{3}}\sum_{p\in L_{k}}\big(b_{k}(K_{k}e_{p})b_{-k,-p}-b_{-k,-p}^{*}b_{k}^{*}(K_{k}e_{p})\big),
\end{align}
with the associated family of symmetric operators $K_{k}:\ell^{2}(L_{k})\rightarrow\ell^{2}(L_{k})$ given explicitly as
\begin{align}\label{Kk-exp}
	K_{k}&=-\frac{1}{2}\log\Big[h_{k}^{-1/2}\Big(h_{k}^{1/2}(h_{k}+2P_{k})h_{k}^{1/2}\Big)^{1/2}h_{k}^{-1/2}\Big].
\end{align}
Here, for each $k\in\Z_{*}^{3}$, the operator $h_{k}:\ell^{2}(L_{k})\rightarrow \ell^{2}(L_{k})$ is defined by the relation $h_{k}e_{p}=\lambda_{k,p}e_{p}$ and
\begin{align}
	P_{k}&=\ket{v_{k}}\bra{v_{k}},\quad v_{k}=\sqrt{\frac{k_{F}^{-1}\hat{V}_{k}}{2(2\pi)^{3}}}\sum_{p\in L_{k}}e_{p}\in \ell^{2}(L_{k}).
\end{align}
Note that, by spectral theory, $K_{k}\leq 0$ for each $k\in\Z_{*}^{3}$ and, for each $p,q\in\ell^{2}(L_{k})$, we have
\begin{align}\label{Kk}
	\langle e_{p},K_{k}e_{q}\rangle&=\langle e_{-p},K_{-k}e_{q}\rangle,\quad\langle e_{p},P_{k}e_{q}\rangle=\langle e_{p},v_{k}\rangle\langle v_{k},e_{q}\rangle=\frac{k_{F}^{-1}\hat{V}_{k}}{2(2\pi)^{3}}.
\end{align}
One can check easily that $\Psi_{N}\in\cH_{N}$ for each $N$, and $\cK$ is anti-symmetric, so that $e^{-\cK}$ is a unitary operator on $\cF_{a}$. 

\subsection{Main result}\label{sec:main}
In the present work, we want to compute the following quantity for each $\xi\in\Z^{3}$:
\begin{align}\label{momentum}
	n(\xi):=\langle\Psi_{N},\rho_{\xi}\Psi_{N}\rangle,
\end{align}
for trial state $\Psi_{N}$ defined in \eqref{trial-N}, where
\begin{align}\label{momentum-op}
	\rho_{\xi}&:=\chi_{B_{F}}(\xi)a_{\xi}a_{\xi}^{*}+\chi_{B_{F}^{c}}(\xi)a_{\xi}^{*}a_{\xi}.
\end{align}
For $\xi\in B_{F}^{c}$, the quantity \eqref{momentum-trial} represents the probability of finding a electron with momentum $\xi$, whereas, for $\xi\in B_{F}$, it represents the probability of finding a hole with this momentum.  The quantity $n(\xi)$ can also be interpreted as the momentum deviation in state $\Psi_{N}$ from sharp distribution $\chi_{B_{F}}(\xi)$.

We also consider general collective behaviors of our trial state in momentum space by computing, for a given  symmetric function $f\in \ell^{1}(\Z^{3})$, 
\begin{align}\label{n(f)}
	n(f)&:=\sum_{\xi\in\Z^{3}}f(\xi)n(\xi)\equiv \langle\Psi_{N},\rho(f)\Psi_{N}\rangle,
\end{align}
where
\begin{align}
	\rho(f)&:=\sum_{\xi\in\Z^{3}}f(\xi)\rho_{\xi}=\sum_{\xi\in\Z^{3}}f(\xi)\big(\chi_{B_{F}}(\xi)a_{\xi}a_{\xi}^{*}+\chi_{B_{F}^{c}}(\xi)a_{\xi}^{*}a_{\xi}\big).
\end{align}
It follows that $n(\xi)$ is simply given by \eqref{n(f)} when $f$ is the delta function at point $\xi$.
\begin{remark}
By introducing the \textit{particle-hole transformation} $\cR$ on $\cF_{a}$ as in \cite{BNPSS-20,BNPSS-21,BPSS-23,HPR-20}
\begin{align}\label{ph-transf}
	\cR^{*}a_{\xi}\cR&=\chi_{B_{F}}(\xi)a_{\xi}^{*}+\chi_{B_{F}^{c}}(\xi)a_{\xi},
\end{align}
which is unitary on $\cF_{a}$ and satisfies $\cR=\cR^{*}=\cR^{-1}$, we can rewrite $n(f)$ as
\begin{align}\label{n(f)2}
	n(f)&=\langle\Phi_{N},d\Gamma(f)\Phi_{N}\rangle,\quad\Phi_{N}\equiv \cR\Psi_{N},
\end{align}
where $d\Gamma(f):=\sum_{\xi\in\Z^{3}}f(\xi)a_{\xi}^{*}a_{\xi}$ is the second quantization of one-body observable $f$ on momentum space $\ell^{2}(\Z^{3})$.  By the method in \cite[Lemma 4.3]{BNPSS-20}, one can show $\Phi_{N}\in\cH_{N}$ and, by passing back from $\cF_{a}$ to $\cH_{N}$, we have
\begin{align}
	n(f)&=\sum_{(\xi_{1},...,\xi_{N})\in\Z^{3N}}\Big(\sum_{j=1}^{N}f(\xi_{j})\Big)|\Phi_{N}(\xi_{1},...,\xi_{N})|^{2}.
\end{align}
Hence, we can interpret $n(f)$ as the expectation of observable $f$ on momentum space in the transformed trial state $\Phi_{N}$.
\end{remark}
  

Now, we have our main result:
\begin{thm}\label{thm:main}
	Assume $N=|B_{F}|=4\pi k_{F}^{3}/3$ and suppose $V$ satisfies
	\begin{align}\label{square-sum}
		0\leq \hat{V}_{k}=\hat{V}_{-k},\quad\hat{V}_{0}=0,\quad\|\hat{V}\|_{\ell^{2}}<\infty.
	\end{align}
	Then, for each observable $f\in\ell^{\infty}(\Z^{3})$ such that $f(-\xi)=f(\xi)$, the expectation $n(f)$ in the trial state $\Psi_{N}$ constructed in \eqref{trial-N} is given
	\begin{align}\label{momentum-trial}
		n(f)&=n_{\rm b}(f)+n_{\rm ex}(f)+\cE(f)=\sum_{\xi\in\Z^{3}}f(\xi)\big(n_{\rm b}(\xi)+n_{\rm ex}(\xi)+\cE(\xi)\big),
	\end{align}
	where we have the bosonization contribution
	\begin{align}\label{bosonization}
		n_{\rm b}(\xi)&:=\frac{k_{F}^{-1}}{8\pi^{4}}\sum_{k\in\Z_{*}^{3}}\sum_{\zeta\in\cD_{k,\xi}}\chi_{L_{k}}(\zeta)\hat{V}_{k}\int_{0}^{\infty}\frac{(s^{2}-\lambda_{k,\zeta}^{2})(s^{2}+\lambda_{k,\zeta}^{2})^{-2}}{1+Q_{k}(s)}ds,
	\end{align}
	the exchange contribution
	\begin{align}\label{exchange}
		n_{\rm ex}(\xi)&:=-\frac{k_{F}^{-2}}{8(2\pi)^{6}}\sum_{k\in\Z_{*}^{3}}\sum_{\zeta\in\cD_{k,\xi}}\sum_{p\in L_{k}}\chi_{L_{k}}(\zeta)\frac{\hat{V}_{k}\hat{V}_{p+\zeta-k}}{(\lambda_{k,p}+\lambda_{k,\zeta})^{2}},
	\end{align}
	and some error term $\cE(\xi)$.  In above, $\cD_{k,\xi}:=\{\pm\xi,k\pm\xi\}$ and $\chi_{A}$ denotes the characteristic function of set $A$.
	
	Moreover, for each $f\in\ell^{\infty}(\Z^{3})$ with support $A$ and any $\delta>0$, it holds that
	\begin{align}\label{boson-f-esti}
		|n_{\rm b}(f)|\leq C_{V}k_{F}^{-1}\|f\|_{\ell^{\infty}}\sum_{\xi\in A}m(\xi),
	\end{align}
	for some constant $C_{V}>0$ depends only on $\|\hat{V}\|_{\ell^{2}}$, and
	\begin{align}
		&|n_{\rm ex}(f)|\leq C_{\delta,V}k_{F}^{-1+\delta}\|f\|_{\ell^{\infty}}\sum_{\xi\in A}m(\xi),\\ \label{error-f-esti}&|\cE(f)|\leq C_{\delta,V}k_{F}^{-3/2+\delta}\|f\|_{\ell^{\infty}}\sum_{\xi\in A}m(\xi),
	\end{align}
	for some constant $C_{\delta,V}>0$, depends only on $\delta$ and $\|\hat{V}\|_{\ell^{2}}$.

%
\end{thm}
The rest of this paper is devoted to prove this theorem.  We will first use techniques in \cite{CHN-23} to extract contributions due to bosonization and exchange correlation, together with some additional error terms.  We then estimate these errors in Sections \ref{sec:class-errors}--\ref{sec:principal-errors}, using the analysis based on the works \cite{BeneLill-25,CHN-23,CHN-24-1}.  Finally, in Section \ref{sec:pf-main}, we will use these estimations to complete the proof of Theorem \ref{thm:main}.

Even though we consider trial states instead of ground states, our result is interesting for the following reason.  It gives the rigorous derivation of collective behaviors for electron gas in the mean field regime in terms of momentum distribution and extract the contributions due to bosonization and exchange correlation.  Moreover, the formula is generic in the sense that our error term does not depend on specific structures of Coulomb interaction and is meaningful for a very general class of interactions satisfying \eqref{square-sum}.  

Furthermore, we can regard \eqref{momentum-trial} as the exact mean-field analogue to the formula obtained in the works \cite{DanielVosko-60} by Daniel and Voskov for electron gas in high density limit and \cite{Lam-71} by Lam for the same system at metallic densities.  In Appendix \ref{sec:compare-DV}, we compare our result with formulas in \cite{DanielVosko-60,Lam-71}.

\begin{remark}
	Although Coulomb interaction is of the most physical significance, our analysis extends to a more general class of interactions satisfying \eqref{square-sum}.  If the potential is more regular, then it is possible to improve the bound on errors.  If, for instance, the Fourier mode of interaction potential is compactly supported, we refer to \cite{BeneLill-25} and mention our first term in \eqref{momentum-trial} corresponds exactly to the result in \cite[Theorem 1.1]{BeneLill-25}.  
\end{remark}

\begin{remark}
During the completion of this work, we learned that a similar result was recently derived by Benedikter, Lill and Naidu in \cite{BenLillNa-25}, and these two works are conducted independently and in parallel.  

For comparison, in \cite{BenLillNa-25}, the authors computed momentum distribution at single point as in \eqref{momentum}, while in the present work, we consider a more general form \eqref{n(f)2}, computing expectation value of observable $f$.  In the case of $f\in\ell^{\infty}(\Z^{3})$, results in \cite{BenLillNa-25} and ours are equivalent to each other in the sense that \cite{BenLillNa-25} considered the special case of our result in \eqref{n(f)} with $f$ is taken to be the delta function at single point $\xi\in\Z^{3}$, while, by taking summation as in \eqref{n(f)2}, one can obtain a similar formula as in \eqref{n(f)} from results in \cite{BenLillNa-25}.  

Moreover, concerning error estimates, \cite{BenLillNa-25} presented two types of estimates, one considering a class of potentials as singular as Coulomb interactions while the other considering a class of more regular potentials (e.g. $\sum_{k\in\Z_{*}^{3}}\hat{V}_{k}<\infty$).  For the former class, they obtained an error estimate $O(k_{F}^{-7/6+\delta})$ for any $\delta>0$, whereas the error estimate is much stronger and is of order $O(k_{F}^{-2+\delta})$ for the latter class.  In the present work, we consider a more general class of singular potentials satisfying \eqref{square-sum} (including Coulomb interactions) and obtain a stronger error estimate, of order $O(k_{F}^{-3/2+\delta})$, without referencing specific structure of $\hat{V}$ other than its $\ell^{2}$-norm.  The method in the present work leads to the same upper bound for error terms when $\hat{V}$ is more regular so we omit this case and refer to \cite{BenLillNa-25} for details in this direction.   
\end{remark}

\begin{remark}
Consider Coulomb interaction $\hat{V}_{k}=g|k|^{-2}$ for a constant $g>0$ and $f(\xi)=\chi_{B_{F}}(\xi)$.  Then the quantity 
\begin{align*}
	n(f)&=\sum_{\xi\in B_{F}}n(\xi)=\sum_{\xi\in B_{F}}\langle\Psi_{N},\rho_{\xi}\Psi_{N}\rangle
\end{align*}
measures the number of excited particle-hole pairs.  We argue formally that our main result implies
\begin{align}
	n_{\rm b}(f)\sim O(1),\quad n_{\rm ex}(f)\sim O(1),\quad \big|\cE(f)\big|\leq C_{\delta,V}k_{F}^{-1/2+\delta}.
\end{align}
for some constant $C_{\delta,V}$ depends only on $\delta$ and $\|\hat{V}\|_{\ell^{2}}$.  For the error term, according to \cite[Lemma 3.2]{CHN-24-1}, we obtain immediately that $\sum_{\xi\in B_{F}}m(\xi)\leq C_{\delta}k_{F}^{1+\delta}$ for each $\delta>0$, which implies that
\begin{align}
	\big|\cE(f)\big|&\leq C_{\delta,V}k_{F}^{-3/2+\delta}\sum_{\xi\in B_{F}}m(\xi)\leq C_{\delta,V}k_{F}^{-1/2+\delta}.
\end{align}
For the exchange term, by shifting $\xi\mapsto k+\xi=:q$ for each fixed $k\in\Z_{*}^{3}$, we obtain
\begin{align}
	n_{\rm ex}(f)&=-\frac{k_{F}^{-2}}{4(2\pi)^{6}}\sum_{k\in\Z_{*}^{3}}\sum_{p,q\in L_{k}}\frac{\hat{V}_{k}\hat{V}_{p+q-k}}{(\lambda_{k,p}+\lambda_{k,q})^{2}}.
\end{align}
We observe that this is in the same form as the exchange correlation $E_{\rm corr,ex}$ in \eqref{ex-energy}, except an additional factor $(\lambda_{k,p}+\lambda_{k,q})^{-1}$ in each summand.  Using the approximation $\lambda_{k,p}\sim |k|\max\{|k|,k_{F}\}$ (in an average sense) and $|L_{k}|\sim k_{F}^{2}\min\{|k|,k_{F}\}$, it was argued in \cite[Remark 1]{CHN-23} that $E_{\rm corr,ex}\sim O(k_{F})$.  The same approxiamtion yields $n_{\rm ex}(f)\sim O(1)$.  For bosonization contribution, from the expansion
\begin{align*}
	\frac{1}{1+x}\approx 1-x+O(x^{2})_{x\rightarrow 0},
\end{align*}
we obtain, using the above approximation, 
\begin{align}
	n_{\rm b}(f)&=\frac{k_{F}^{-1}}{8\pi^{4}}\sum_{k\in\Z_{*}^{3}}\sum_{q\in L_{k}}\int_{0}^{\infty}\frac{(s^{2}-\lambda_{k,q}^{2})(s^{2}+\lambda_{k,q}^{2})^{-2}}{1+Q_{k}(s)}ds\nonumber\\
	&\approx \frac{k_{F}^{-1}}{8\pi^{4}}\sum_{k\in\Z_{*}^{3}}\sum_{q\in L_{k}}\hat{V}_{k}\int_{0}^{\infty}\frac{s^{2}-\lambda_{k,q}^{2}}{(s^{2}+\lambda_{k,q}^{2})^{2}}\Big[1-\frac{k_{F}^{-1}\hat{V}_{k}}{(2\pi)^{3}}\sum_{p\in L_{k}}\frac{\lambda_{k,p}}{s^{2}+\lambda_{k,p}^{2}}\Big]ds\nonumber\\
	&=\frac{\pi^{2}}{8(2\pi)^{7}}\sum_{k\in\Z_{*}^{3}}(k_{F}^{-1}\hat{V}_{k})^{2}\sum_{p,q\in L_{k}}\frac{1}{(\lambda_{k,p}+\lambda_{k,q})^{2}}\DETAILS{\nonumber\\
	&\approx\frac{g^{2}\pi^{2}}{8(2\pi)^{7}}\sum_{k\in\Z_{*}^{3}}\frac{k_{F}^{-2}}{|k|^{4}}\frac{k_{F}^{4}\min\{|k|^{2},k_{F}\}}{|k|^{2}\max\{|k|^{2},k_{F}^{2}\}}}\sim O(1).
\end{align}
Hence, we obtain the asymptotic behavior $n_{\rm b}(f)\sim O(1)$.  We observe that, for general singular potentials satisfying \eqref{square-sum}, when $f=\chi_{B_{F}}$, the above formal argument shows that contributions from bosonization and exchange correlation are at least of order $O(1)$, whereas the error term always of order $O(k_{F}^{-1/2+\delta})$.  Hence, Theorem \ref{thm:main} always yields a meaningful result.

Finally, we emphasize again the above estimates are purely formal, and it is not clear how to improve the upper bound for exchange term claimed in our main result under the general condition \eqref{square-sum} (and finding a rigorous, non-trivial lower bound on the exchange term for this general class of potentials remains an interesting open problem).  
\end{remark}


\begin{remark}
	The lower bound $\lambda_{k,p}\geq 1/2$ (or equivalently, the gap between lattice points inside and outside of Fermi ball is of order $O(1)$) is crucial in our analysis.  This breaks down in thermodynamic limit.  More precisely, by replacing the underlying configuration space $[0,1]^{3}\rightarrow[0,L]^{3}$, the corresponding momentum space $\Z^{3}$ becomes $L^{-1}\Z^{3}$ and thus $\lambda_{k,p}\sim O(L^{-1})$ for each $p\in L_{k}$, which vanishes in thermodynamic limit as $L\rightarrow\infty$.  As we will see in the proof, this leads to divergences in our estimates for error terms.
\end{remark}

\smallskip

\noindent\textbf{Strategy of proof}.  We will begin by computing the momentum distribution $n(\xi)$ in our trial state $\Psi_{N}$ using methods from \cite{CHN-23}, which outputs the contributions from bosonization and exchange correlations, together with some error terms.  We then show the error terms can be bounded by products of $k_{F}^{-1}$ and the following quantity
\begin{align*}
	\cQ:=\sup_{\xi\in\Z^{3}}\sup_{0\leq \tau\leq 1}\|\tilde{a}_{\xi}\Phi_{\tau}\|,\quad\text{ where }\Phi_{\tau}=e^{-\tau\cK}\Psi_{FS}.
\end{align*}
Clearly, this quantity is of order $O(1)$.  Nevertheless, by computing $\|\tilde{a}_{\xi}\Phi_{\tau}\|^{2}$ for each fixed $\xi\in\Z^{3}$ and $0\leq\tau\leq 1$ through Bogoliubov transformation, one can show it is bounded by $k_{F}^{-1}$ and $\cQ$ again, which, by a bootstrap argument as in \cite{BeneLill-25}, implies that $\cQ$ is of order $O(k_{F}^{-1/2+\delta})$ for each $\delta>0$.  One then can use this new bound of $\cQ$ to improve bounds for the error terms to obtain our claim result.


\section{Preliminaries}\label{sec:preliminaries}
In this section, we collect and establish some preliminary estimates that will be useful in the proof of our main result.  To simplify notation, in the rest of this paper, we denote any generic constant that depends on some parameters $a,b,c,...$ and $\|\hat{V}\|_{\ell^{2}}$ or $\|\hat{V}\|_{\ell^{1}}$ by $C_{a,b,c,..., V}$.  If the generic constant is independent of any paramter, then we denote it simply by $C$.  Moreover, we will always assume $\hat{V}$ satisfies \eqref{square-sum}.

\subsection{Quasi-bosonic operators}\label{sec:quasi-kernel}
For convenience, we introduce the unitary operator $I_{k}:\ell^{2}(L_{k})\rightarrow\ell^{2}(K_{-k})$ given by $I_{k}e_{p}=e_{-p}$ for each $k\in\Z_{*}^{3}$.  Note that the operators $K_{k}$ given in \eqref{Kk} satisfy the property
\begin{align}\label{unitary-Kk}
	I_{k}K_{k}=K_{-k}I_{k}.
\end{align}
\begin{lemma}[\cite{CHN-23}, Lemma 1.3]\label{lem:CR-excitation}
	For each $k,l\in\Z_{*}^{3}$, $\varphi\in\ell^{2}(L_{k})$ and $\psi\in\ell^{2}(L_{l})$, it holds that 
	\begin{align}
		\label{CR-excitation1}&[b_{k}(\varphi),b_{l}(\psi)]=[b_{k}^{*}(\varphi),b_{l}^{*}(\psi)]=0,\\
		\label{CR-excitation2}&[b_{k}(\varphi),b_{l}^{*}(\psi)]=\delta_{k,l}\langle\varphi,\psi\rangle+\varepsilon_{k,l}(\varphi;\psi),
	\end{align}
	where the {\rm exchange correction} is given by
	\begin{align}\label{exchange-correction}
		\varepsilon_{k,l}(\varphi;\psi)&=-\sum_{p\in L_{k}}\sum_{q\in L_{l}}\langle\varphi,e_{p}\rangle\langle e_{q},\psi\rangle\big(\delta_{p,q}a_{q-l}a_{p-k}^{*}+\delta_{p-k,q-l}a_{q}^{*}a_{p}\big).
	\end{align}
\end{lemma}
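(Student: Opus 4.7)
The plan is to reduce everything to the basis-level commutation relations already displayed as \eqref{almost-bosonic} (reading the first Kronecker delta on its right-hand side as $\delta_{k,l}\delta_{p,q}$, which is what the quasi-bosonic structure requires), and then sum against the coefficients $\langle \varphi,e_p\rangle$ and $\langle e_q,\psi\rangle$. The identities \eqref{CR-excitation1}--\eqref{CR-excitation2} are bilinear in $\varphi,\psi$, so by expanding
\[
b_k(\varphi)=\sum_{p\in L_k}\langle\varphi,e_p\rangle\, b_{k,p},\qquad b_l^{*}(\psi)=\sum_{q\in L_l}\langle e_q,\psi\rangle\, b_{l,q}^{*},
\]
each commutator on the left-hand side becomes a double sum of commutators of the basic operators $b_{k,p}$ and $b_{l,q}^{*}$, which are known quantities.

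First I would handle \eqref{CR-excitation1}: both $[b_{k,p},b_{l,q}]=0$ and $[b_{k,p}^{*},b_{l,q}^{*}]=0$ (each being a product of two annihilation or two creation operators, whose squared anticommutators vanish by CAR \eqref{CAR}) give the vanishing of the corresponding generalized commutators term by term, so \eqref{CR-excitation1} is immediate. For \eqref{CR-excitation2}, I would substitute the basic relation $[b_{k,p},b_{l,q}^{*}]=\delta_{k,l}\delta_{p,q}-\bigl(\delta_{p,q}\,a_{q-l}a_{p-k}^{*}+\delta_{p-k,q-l}\,a_q^{*}a_p\bigr)$ into the double sum, isolating the scalar piece from the operator piece.

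For the scalar piece, the factor $\delta_{k,l}\delta_{p,q}$ collapses one summation and uses completeness of $(e_p)_{p\in L_k}$:
\[
\delta_{k,l}\sum_{p\in L_k}\langle\varphi,e_p\rangle\langle e_p,\psi\rangle=\delta_{k,l}\langle\varphi,\psi\rangle,
\]
which gives the leading bosonic contribution. The remaining two contributions, coming from $\delta_{p,q}a_{q-l}a_{p-k}^{*}$ and $\delta_{p-k,q-l}a_q^{*}a_p$, reassemble verbatim into the definition \eqref{exchange-correction} of $\varepsilon_{k,l}(\varphi;\psi)$ once the overall minus sign is pulled out; no further manipulation is needed.

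There is essentially no main obstacle here beyond careful bookkeeping: the lemma is a direct quasi-bosonic repackaging of the CAR, and the only subtlety worth flagging is to ensure the Kronecker deltas are read consistently as $\delta_{k,l}\delta_{p,q}$ (so that the bosonic normalization $\delta_{k,l}\langle\varphi,\psi\rangle$ emerges from completeness rather than from any accidental coincidence of labels), and that the arguments $p-k\in L_k'$ and $q-l\in L_l'$ in the exchange correction indeed lie in the hole sets, which is guaranteed by the definition of the index sets $L_k, L_l$ in \eqref{lune}.
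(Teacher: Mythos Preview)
Your proposal is correct and follows the standard approach: expand in the basis $(e_p)$, apply the basic commutation relation \eqref{almost-bosonic} (with the typo corrected to $\delta_{k,l}\delta_{p,q}$ as you note), and resum. The paper itself does not give a proof but simply cites \cite{CHN-23}, where the argument is exactly this bilinear reduction to the basis-level identities.
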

\begin{prop}[\cite{CHN-23}, Proposition A.2]\label{prop:Kcom-excit}
For any $k\in\Z_{*}^{3}$ and $\varphi\in \ell^{2}(L_{k})$, it holds that
\begin{align}\label{Kcom-excit}
	[\cK,b_{k}(\varphi)]&=b_{-k}^{*}(I_{k}K_{k}\varphi)+\cE_{k}(\varphi),\quad[\cK,b_{k}^{*}(\varphi)]=b_{-k}(I_{k}K_{k}\varphi)+\cE_{k}(\varphi)^{*},
\end{align}
where
\begin{align}
	\cE_{k}(\varphi)=\frac{1}{2}\sum_{l\in\Z_{*}^{3}}\sum_{q\in L_{l}}\{\varepsilon_{k,l}(\varphi;e_{q}),b_{-l}^{*}(K_{-l}e_{-q})\}.
\end{align}
\end{prop}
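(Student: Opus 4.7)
\smallskip

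\noindent\textbf{Proof proposal for Proposition \ref{prop:Kcom-excit}.} The plan is a direct computation using the quasi-bosonic commutation relations from Lemma \ref{lem:CR-excitation} together with the intertwining identity \eqref{unitary-Kk}. First, I would decompose $\cK = \frac{1}{2}(A - A^*)$, where
\begin{align*}
A \;=\; \sum_{l\in\Z_*^3}\sum_{q\in L_l} b_l(K_l e_q)\,b_{-l,-q}
\;=\; \sum_{l\in\Z_*^3}\sum_{q\in L_l} b_l(K_l e_q)\,b_{-l}(e_{-q}),
\end{align*}
so that $A$ is a sum of products of two annihilation-type excitation operators. Then \eqref{CR-excitation1} gives immediately $[A, b_k(\varphi)] = 0$, and the whole computation reduces to $[\cK, b_k(\varphi)] = -\tfrac12[A^*, b_k(\varphi)]$.

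Next I would expand $A^* = \sum_{l,q} b_{-l}^*(e_{-q})\,b_l^*(K_l e_q)$ and apply the Leibniz rule $[XY,Z] = X[Y,Z] + [X,Z]Y$. Each of the two resulting commutators is controlled by \eqref{CR-excitation2}: the scalar part contributes Kronecker $\delta_{k,l}$ or $\delta_{k,-l}$ terms, while the operator-valued part contributes exchange corrections $\varepsilon_{k,l}(\varphi; K_l e_q)$ and $\varepsilon_{k,-l}(\varphi; e_{-q})$. The two scalar contributions are, respectively,
\begin{align*}
\tfrac12 \sum_{q \in L_k} \langle K_k\varphi, e_q\rangle\, b_{-k}^*(e_{-q})
\quad\text{and}\quad
\tfrac12 \sum_{q \in L_{-k}} \langle\varphi, e_{-q}\rangle\, b_{-k}^*(K_{-k} e_q).
\end{align*}
The first sum collapses to $\tfrac12 b_{-k}^*(I_k K_k\varphi)$ by linearity, using $I_k e_p = e_{-p}$. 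For the second, I substitute $q \mapsto -q$ over $L_k$ and invoke \eqref{unitary-Kk} to write $K_{-k} e_{-q} = I_k K_k e_q$, and linearity of $b_{-k}^*$ then also yields $\tfrac12 b_{-k}^*(I_k K_k \varphi)$. The two halves combine to give the advertised leading term $b_{-k}^*(I_k K_k\varphi)$.

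The main (only) bookkeeping step is to reorganize the exchange-correction remainders into the symmetric anticommutator form defining $\cE_k(\varphi)$. Here I expand $\varepsilon_{k,l}(\varphi; K_l e_q) = \sum_{q'} \langle e_{q'}, K_l e_q\rangle \varepsilon_{k,l}(\varphi; e_{q'})$ by linearity in the second slot, and then use the identity $\sum_q \langle e_{q'}, K_l e_q\rangle b_{-l}^*(e_{-q}) = b_{-l}^*(I_l K_l e_{q'}) = b_{-l}^*(K_{-l} e_{-q'})$, which is again a consequence of \eqref{unitary-Kk}. This converts the first remainder into $\sum_{l,q} b_{-l}^*(K_{-l} e_{-q})\,\varepsilon_{k,l}(\varphi; e_q)$. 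The second remainder is handled by the simultaneous relabelings $l \mapsto -l$ and $q \mapsto -q$ (mapping $L_{-l}$ to $L_l$), which turn $\varepsilon_{k,-l}(\varphi; e_{-q})\,b_l^*(K_l e_q)$ into $\varepsilon_{k,l}(\varphi; e_q)\,b_{-l}^*(K_{-l} e_{-q})$. Adding the two pieces produces exactly the anticommutator $\{\varepsilon_{k,l}(\varphi; e_q),\, b_{-l}^*(K_{-l} e_{-q})\}$, giving the claimed $\cE_k(\varphi)$.

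Finally, the adjoint formula $[\cK, b_k^*(\varphi)] = b_{-k}(I_k K_k\varphi) + \cE_k(\varphi)^*$ is obtained for free: since $\cK^* = -\cK$, one has $[\cK, b_k(\varphi)]^* = [\cK, b_k^*(\varphi)]$, and taking the adjoint of the first identity yields the second. The only real obstacle, beyond carefully tracking signs and index substitutions, is the bookkeeping in the last paragraph, where the intertwining relation \eqref{unitary-Kk} and the symmetry of $K_k$ must be invoked at the right moments to turn an asymmetric sum of operator products into the symmetric anticommutator $\cE_k(\varphi)$.
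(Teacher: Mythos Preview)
Your proof is correct and follows the natural direct-computation route: split $\cK=\tfrac12(A-A^*)$, kill $[A,b_k(\varphi)]$ via \eqref{CR-excitation1}, expand $-\tfrac12[A^*,b_k(\varphi)]$ by Leibniz and \eqref{CR-excitation2}, and then use the intertwining relation \eqref{unitary-Kk} both to combine the two scalar halves into $b_{-k}^*(I_kK_k\varphi)$ and to symmetrize the exchange remainders into the anticommutator defining $\cE_k(\varphi)$. The paper does not give its own proof here (it cites \cite[Proposition A.2]{CHN-23}), and your argument is precisely the standard one from that reference.
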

Furthermore, for any symmetric operator $T_{k}$ on $\ell^{2}(L_{k})$, we define the following associated quasi-bosonic operators:
\begin{align}
	Q_{1}^{k}(T_{k})\DETAILS{&=\sum_{p,q\in L_{k}}\langle e_{p},T_{k}e_{q}\rangle b_{k,p}^{*}b_{k,q}}=\sum_{p\in L_{k}}b_{k}^{*}(T_{k}e_{p})b_{k,p},
\end{align}
and
\begin{align}
	Q_{2}^{k}(T_{k})\DETAILS{&=\sum_{p,q\in L_{k}}\langle e_{p},T_{k}e_{q}\rangle\big(b_{k,p}b_{-k,-q}+b_{-k,-q}^{*}b_{k,p}^{*}\big)\nonumber\\}
	&=\sum_{p\in L_{k}}\big(b_{k}(T_{k}e_{p})b_{-k,-p}+b_{-k,-p}^{*}b_{k}^{*}(T_{k}e_{p})\big).
\end{align}

In the next lemma, we provide how these operators behave under quasi-Bogoliubov transformation:
\begin{lemma}[\cite{CHN-23}, Proposition A.5]\label{lem:exp-Bog-Q12}
	For any $k\in\Z_{*}^{3}$ and symmetric operators $T_{\pm k}:\ell^{2}(L_{\pm k})\rightarrow \ell^{2}(L_{\pm k})$ such that $I_{k}T_{k}=T_{-k}I_{k}$, it holds that 
	\begin{align}\label{exp-Bog-Q1}
			&e^{t\cK}(2Q_{1}^{k}(T_{k})+2Q_{1}^{-k}(T_{-k}))e^{-t\cK}=\Tr(T_{k}^{1}(t)-T_{k})+2Q_{1}^{k}(T_{k}^{1}(t))+Q_{2}^{k}(T_{k}^{2}(t))\\
			&\quad+\int_{0}^{t}e^{(t-\tau)\cK}\Big(\varepsilon_{k}(\{K_{k},T_{k}^{2}(\tau)\})+2\Re(\cE_{1,k}(T_{k}^{1}(\tau)))+2\Re(\cE_{2,k}(T_{k}^{2}(\tau)))\Big)e^{-(t-\tau)\cK}d\tau\nonumber\\
			&\quad\quad\quad\quad\quad+(k\rightarrow -k),\nonumber
			\end{align}
		and
	\begin{align}\label{exp-Bog-Q2}
		&e^{t\cK}(Q_{2}^{k}(T_{k})+Q_{2}^{-k}(T_{-k}))e^{-t\cK}=\Tr(T_{k}^{2}(t))+Q_{1}^{k}(T_{k}^{2}(t))+Q_{2}^{k}(T_{k}^{1}(t))\\
		&\quad+\int_{0}^{t}e^{(t-\tau)\cK}\Big(\varepsilon_{k}(\{K_{k},T_{k}^{1}(\tau)\})+2\Re(\cE_{1,k}(T_{k}^{2}(\tau)))+2\Re(\cE_{2,k}(T_{k}^{1}(\tau)))\Big)e^{-(t-\tau)\cK}d\tau\nonumber\\
		&\quad\quad\quad\quad\quad+(k\rightarrow -k),\nonumber
	\end{align}
	where, for each symmetric operator $A_{k}$ on $\ell^{2}(L_{k})$, we define
	\begin{align*}
		\varepsilon_{k}(A_{k})&=-\sum_{p\in L_{k}}\langle e_{p},A_{k}e_{p}\rangle\big(a_{p}^{*}a_{p}+a_{p-k}a_{p-k}^{*}\big),\\
		\cE_{1,k}(A_{k})&=\sum_{l\in\Z_{*}^{3}}\sum_{p\in L_{k}}\sum_{q\in L_{l}}b_{k}^{*}(A_{k}e_{p})\{\varepsilon_{k,l}(e_{p};e_{q}),b_{-l}^{*}(K_{-l}e_{-q})\},\\
		\cE_{2,k}(A_{k})&=\frac{1}{2}\sum_{l\in\Z_{*}^{3}}\sum_{p\in L_{k}}\sum_{q\in L_{l}}\{b_{k}(A_{k}e_{p}),\{\varepsilon_{-k,-l}(e_{-p};e_{-q}),b_{l}^{*}(K_{l}e_{q})\}\},
	\end{align*}
	and, for $\tau\in [0,t]$, $T_{k}^{1}(\tau)$ and $T_{k}^{2}(\tau)$ are symmetric operators on $\ell^{2}(L_{k})$ given by
	\begin{align*}
		T_{k}^{1}(\tau)&=\frac{1}{2}\Big(e^{\tau K_{k}}T_{k}e^{\tau K_{k}}+e^{-\tau K_{k}}T_{k}e^{-\tau K_{k}}\Big),\\ T_{k}^{2}(\tau)&=\frac{1}{2}\Big(e^{\tau K_{k}}T_{k}e^{\tau K_{k}}-e^{-\tau K_{k}}T_{k}e^{-\tau K_{k}}\Big).
	\end{align*}
\end{lemma}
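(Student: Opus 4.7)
The plan is to prove both identities \eqref{exp-Bog-Q1} and \eqref{exp-Bog-Q2} by showing that each side, viewed as a one-parameter family of operators in $t$, satisfies the same first-order initial value problem. At $t=0$ the right-hand sides collapse to the initial data, since $T_k^1(0)=T_k$, $T_k^2(0)=0$, and the Duhamel integrals vanish. Consequently, it suffices to differentiate in $t$ and verify the resulting operator identity; on the left this amounts to computing $e^{t\cK}[\cK, 2Q_1^k(T_k)+2Q_1^{-k}(T_{-k})]e^{-t\cK}$ (and similarly for the $Q_2$ identity), and then to match it against the derivative of the right-hand ansatz.

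The computational heart of the proof is the evaluation of $[\cK, Q_1^k(T_k)]$ and $[\cK, Q_2^k(T_k)]$ via Leibniz together with Proposition \ref{prop:Kcom-excit}. Writing $Q_1^k(T_k)=\sum_p b_k^*(T_k e_p)b_{k,p}$ and expanding through \eqref{Kcom-excit} yields
\begin{align*}
[\cK, Q_1^k(T_k)] &= \sum_p \Bigl\{\bigl(b_{-k}(I_k K_k T_k e_p)+\cE_k(T_k e_p)^*\bigr) b_{k,p} \\
&\qquad + b_k^*(T_k e_p)\bigl(b_{-k}^*(I_k K_k e_p)+\cE_k(e_p)\bigr)\Bigr\}.
\end{align*}
Using the intertwining $I_k K_k = K_{-k}I_k$ together with the hypothesis $I_k T_k = T_{-k}I_k$, and combining with the $k\to -k$ twin, the bosonic piece assembles into $Q_2^k(\{K_k,T_k\})+Q_2^{-k}(\{K_{-k},T_{-k}\})$, while the exchange remainders produce $2\Re(\cE_{1,k}(T_k))$ after symmetrization. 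The analogous but slightly longer computation for $[\cK, Q_2^k(T_k)]$ yields $Q_1^k(\{K_k,T_k\})+Q_1^{-k}(\{K_{-k},T_{-k}\})$, the c-number $\Tr(\{K_k,T_k\})$ (arising from the genuinely bosonic contractions $[b,b^*]\approx \delta$ embedded in $Q_2$), and the diagonal correction $\varepsilon_k(\{K_k,T_k\})$ (from the non-bosonic part of the same contraction), together with $2\Re(\cE_{2,k}(T_k))$.

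To close the argument, I differentiate the right-hand side of \eqref{exp-Bog-Q1} in $t$. From $\partial_t T_k^1(t)=\{K_k,T_k^2(t)\}$ and $\partial_t T_k^2(t)=\{K_k,T_k^1(t)\}$, the non-integral part contributes $\Tr(\{K_k,T_k^2(t)\}) + 2Q_1^k(\{K_k,T_k^2(t)\}) + Q_2^k(\{K_k,T_k^1(t)\})$, while differentiating the Duhamel integral produces the integrand at $\tau=t$ (with no conjugation) plus a conjugated commutator with $\cK$. After symmetrizing in $k\leftrightarrow -k$, and substituting $T_k\mapsto T_k^1(t)$ or $T_k^2(t)$ into the formulas from the previous step, every term matches. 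Since both sides agree at $t=0$ and satisfy the same linear first-order ODE, Gronwall/uniqueness closes the identity for all $t$. The derivation of \eqref{exp-Bog-Q2} is entirely parallel, with the roles of $T_k^1$ and $T_k^2$ interchanged in the ansatz.

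The main obstacle is the bookkeeping of the exchange corrections $\cE_k(\varphi)$ and $\varepsilon_{k,l}(\varphi;\psi)$. These objects carry anticommutator structures involving $a^*a$ and $aa^*$ which must be normal-ordered via the CAR identities of Lemma \ref{lem:CR-excitation}, and the resulting expressions only consolidate into the compact forms $\cE_{1,k}$, $\cE_{2,k}$, and $\varepsilon_k$ after combining the $k$ and $-k$ branches using $I_k K_k = K_{-k} I_k$ and $I_k T_k = T_{-k}I_k$. Distinguishing which contractions contribute to the c-number trace, which to the diagonal operator $\varepsilon_k$, and which to the operator-valued $\Re(\cE_{j,k})$ pieces, without losing or double-counting any exchange term, is the delicate part of the calculation.
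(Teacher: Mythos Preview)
Your approach is correct and essentially the same as the paper's, which also reduces the identity to the commutator formulas for $[\cK,Q_1^k(T_k)]$ and $[\cK,Q_2^k(T_k)]$ together with the coupled ODEs $\partial_\tau T_k^1=\{K_k,T_k^2\}$, $\partial_\tau T_k^2=\{K_k,T_k^1\}$. The only organizational difference is that the paper packages the argument as a single derivative computation: fixing $t$ and setting $A_k(\tau)=T_k^1(t\tau)$, $B_k(\tau)=T_k^2(t\tau)$, it differentiates $e^{-\tau t\cK}\bigl(2Q_1^k(A_k(\tau))+Q_2^k(B_k(\tau))\bigr)e^{\tau t\cK}+(k\to-k)$ in $\tau$, chosen so that the $Q_1,Q_2$ pieces cancel by design and only the trace and error terms $\varepsilon_k,\cE_{1,k},\cE_{2,k}$ survive; then the fundamental theorem of calculus from $\tau=0$ to $\tau=1$ (followed by the change of variable $\tau\mapsto t\tau$) yields the statement directly. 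Your ODE-uniqueness framing leads to exactly the same matching of terms after unfolding $H'(t)=[\cK,H(t)]$, so the two arguments are equivalent.
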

\begin{proof}
	The proof is the same as in the proof of \cite[Proposition A.5]{CHN-23}, with an additional change of variable.  \DETAILS{We fix $t\in [0,1]$ and define $A_{k}(\tau)=T_{k}^{1}(t\tau)$ and $B_{k}(\tau)=T_{k}^{2}(t\tau)$, satisfying
		$$	A_{k}'(\tau)=t\{K_{t},B_{k}(\tau)\},\quad B_{k}'(\tau)=t\{K_{t},A_{k}(\tau)\},$$
		$$A_{k}(0)=T_{k},\quad A_{k}(1)=T_{k}^{1}(t),\quad B_{k}(0)=0,\quad B_{k}(1)=T_{k}^{2}(t).$$
		By Lemma \ref{lem:Bog-Q1} and these relation, we obtain
		\begin{align*}
			&\frac{d}{d\tau}e^{-\tau t\cK}\Big(2Q_{1}^{k}(A_{k}(\tau))+Q_{2}^{k}(B_{k}(\tau))\Big)e^{\tau t\cK}+(k\rightarrow -k)\nonumber\\
			&=te^{-\tau t\cK}\Big(2Q_{1}^{k}(A_{k}'(\tau))+Q_{2}^{k}(B_{k}'(\tau))-[\cK,2Q_{1}^{k}(A_{k}(\tau))+Q_{2}^{k}(B_{k}(\tau))]\Big)e^{\tau t\cK}+(k\rightarrow -k)\nonumber\\
			&=-t\Big[\Tr(\{K_{k},B_{k}(\tau)\})-e^{\tau t\cK}\Big(\varepsilon_{k}(\{K_{k},B_{t}(\tau)\})+2\Re(\cE_{1,k}(A_{k}(\tau)))+2\Re(\cE_{2,k}(B_{k}(\tau)))\Big)e^{\tau t\cK}\Big]\nonumber\\
			&\quad\quad\quad\quad\quad+(k\rightarrow -k).
		\end{align*}
		We then conclude by the fundamental theorem of calculus and change of variables,
		\begin{align*}
			&e^{t\cK}(2Q_{1}^{k}(T_{k})+2Q_{1}^{-k}(T_{-k}))e^{-t\cK}=2Q_{1}^{k}(A_{k}(1))+Q_{2}^{k}(B_{k}(1))+t\int_{0}^{1}\Tr(\{K_{k},B_{k}(\tau)\})d\tau\nonumber\\
			&\quad+t\int_{0}^{1}e^{(1-\tau)t\cK}\Big(\varepsilon_{k}(\{K_{k},B_{k}(\tau)\})+2\Re(\cE_{1,k}(A_{k}(\tau)))+2\Re(\cE_{2,k}(B_{k}(\tau)))\Big)e^{(1-\tau)t\cK}d\tau\\
			&\quad\quad\quad\quad+(k\rightarrow -k)\\
			&=\Tr(T_{k}^{1}(t)-T_{k})+2Q_{1}^{k}(T_{k}^{1}(t))+Q_{2}^{k}(T_{k}^{2}(t))\\
			&\quad+\int_{0}^{t}e^{(t-\tau)\cK}\Big(\varepsilon_{k}(\{K_{k},T_{k}^{2}(\tau)\})+2\Re(\cE_{1,k}(T_{k}^{1}(\tau)))+2\Re(\cE_{2,k}(T_{k}^{2}(\tau)))\Big)e^{-(t-\tau)\cK}d\tau\\
			&\quad\quad\quad\quad+(k\rightarrow -k).
		\end{align*}
		where we have used the identity
		\begin{align*}
			\int_{0}^{t}\Tr(\{K_{k},T_{k}^{2}(\tau)\})d\tau&=\int_{0}^{t}\Tr((T_{k}^{1})'(\tau))d\tau=\Tr(T_{k}^{1}(t)-T_{k}).
		\end{align*}
		This proves \eqref{exp-Bog-Q1}.  The relation \eqref{exp-Bog-Q2} can be proven by the same procedure.}
\end{proof}

\subsection{Useful identities}\label{sec:useful}
In this subsection, we collect some useful identities from \cite{CHN-23} on the family $(K_{k})_{k\in\Z_{*}^{3}}$ we used to construct our trial state in Subsection \ref{sec:trial-states}.  These identities are crucial when we estimate error $\cE$ in Theorem \ref{thm:main}.  For most of them, we omit proofs for simplicities.

First, for any $|\tau|\leq 1$, we introduce the operator families
\begin{align}\label{CSkt}
	C_{k}(\tau)&=\cosh(-\tau K_{k})-1,\quad S_{k}(\tau)=\sinh(-\tau K_{k}),
\end{align}
and they satisfy the following elementary estimates:
\begin{lemma}[\cite{CHN-23}, Propositions 3.4 and 3.5]\label{lem:matrix-expK}
	For each $k\in\Z_{*}^{3}$, $0\leq \tau\leq 1$ and $p,q\in\ell^{2}(L_{k})$, it holds that
	\begin{align}\label{matrix-K}
		\frac{1}{1+2\langle v_{k}h_{k}^{-1}v_{k}\rangle}\frac{\langle e_{p},v_{k}\rangle\langle v_{k},e_{q}\rangle}{\lambda_{k,p}+\lambda_{k,q}}&\leq \langle e_{p},(-K_{k})e_{q}\rangle\leq \frac{\langle e_{p},v_{k}\rangle\langle v_{k},e_{q}\rangle}{\lambda_{k,p}+\lambda_{k,q}}.
	\end{align}
	\begin{align}
		\frac{1}{1+2\langle v_{k},h_{k}^{-1}v_{k}\rangle}\frac{\langle e_{p},v_{k}\rangle\langle v_{k},e_{q}\rangle}{\lambda_{k,p}+\lambda_{k,q}}\tau\leq \langle e_{p},S_{k}(\tau)e_{q}\rangle&\leq \frac{\langle e_{p},v_{k}\rangle\langle v_{k},e_{q}\rangle}{\lambda_{k,p}+\lambda_{k,q}}\tau,
	\end{align}
	and
	\begin{align}\label{Cktau-ubdd}
		0\leq \langle e_{p},C_{k}(\tau)e_{q}\rangle&\leq \frac{\langle v_{k},h_{k}^{-1}v_{k}\rangle}{1+2\langle v_{k},h_{k}^{-1}v_{k}\rangle}\frac{\langle e_{p},v_{k}\rangle\langle v_{k},e_{q}\rangle}{\lambda_{k,p}+\lambda_{k,q}}.
	\end{align}
	Consequently, $\|K_{k}\|_{\rm HS}\leq C\hat{V}_{k}$ for some constant $C>0$.
\end{lemma}
We observe the operators $T_{k}^{1}(\tau)$ and $T_{k}^{2}(\tau)$ in Lemma \ref{lem:exp-Bog-Q12} can be decomposed as
\begin{align}\label{T1-decomp}
	T_{k}^{1}(\tau)&=T_{k}+\{T_{k},C_{k}(\tau)\}+C_{k}(\tau)T_{k}C_{k}(\tau)+S_{k}(\tau)T_{k}S_{k}(\tau),\\
	\label{T2-decomp}T_{k}^{2}(\tau)&=-\{T_{k},S_{k}(\tau)\}-S_{k}(\tau)T_{k}C_{k}(\tau)-S_{k}(\tau)T_{k}C_{k}(\tau),
\end{align}

Next, we provide the following important lemma on lattice estimates:
\begin{lemma}[\cite{CHN-24}, Proposition A.1]\label{lem:estimate-lambkp}
	For any $k\in\Z_{*}^{3}$ and $\beta\in [-1,0]$, it holds that
	\begin{align}\label{estimate-lambkp}
		\sum_{p\in L_{k}}\lambda_{k,p}^{\beta}&\leq C_{\beta}\begin{cases}
			k_{F}^{2+\beta}|k|^{1+\beta}\quad&\text{ for }|k|\leq 2k_{F},\\
			k_{F}^{3}|k|^{2\beta}&\text{ for }|k|>2k_{F},
		\end{cases}
	\end{align} 
	for some constant $C_{\beta}$ independent of $k_{F}$ and $k$.  In particular, it holds that
	\begin{align}
		\sum_{p\in L_{k}}\lambda_{k,p}^{-1}&\leq C k_{F}\min\{1,k_{F}^{2}|k|^{-2}\}.
	\end{align}
\end{lemma}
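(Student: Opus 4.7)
The plan is to view $\sum_{p\in L_k}\lambda_{k,p}^\beta$ as a weighted count of lattice points stratified by the level sets of $\lambda_{k,p}$. Writing $p = t\hat{k} + p_\perp$ with $\hat{k} = k/|k|$, we have $\lambda_{k,p} = |k|(t - |k|/2)$, so the level sets are affine planes perpendicular to $k$. The cross-sections of $L_k$ by these planes split into two regimes: when $t \in [|k|/2, k_F]$ both constraints $|p-k|\leq k_F$ and $|p|>k_F$ are active and the cross-section is an annulus of area exactly $2\pi\lambda_{k,p}$, while when $t \geq k_F$ only the first constraint is active and the cross-section is a disk of area $\pi(k_F^2 - (t-|k|)^2)$. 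The density of $\Z^3$ on the affine plane $\{x\cdot k = c\}$ (when non-empty) equals $\gcd(k_1,k_2,k_3)/|k|\leq 1/|k|$, so each planar cross-section contains at most $C(\text{area})/|k|$ lattice points up to lower-order boundary corrections.

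I would then split on $|k|\leq 2k_F$ versus $|k|>2k_F$ and bound the sum by the corresponding integral. For $|k|\leq 2k_F$, the annulus contribution reduces to $|k|^{-1}\sum_\lambda \lambda^{1+\beta}\leq C_\beta(|k|k_F)^{2+\beta}/|k| = C_\beta k_F^{2+\beta}|k|^{1+\beta}$, while the disk contribution, after the change of variable $u = \lambda_{k,p}/|k|$, becomes $\pi|k|^\beta\int u^\beta(\alpha - u)(u - a)\,du$ with $a = k_F - |k|/2 \geq 0$ and $\alpha = k_F + |k|/2$, and a direct computation shows it has the same order. For $|k|>2k_F$ only the disk regime occurs, but now $a = |k|/2 - k_F > 0$; an analogous computation yields the target bound $Ck_F^3|k|^{2\beta}$, where a key point is that the lune is simply the ball $B_{k_F}(k)$ so that $|L_k| \leq Ck_F^3$ and the bulk of the ball contributes $\lambda_{k,p}\sim|k|^2$.

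The main obstacle is the near-tangent regime $|k|\approx 2k_F$ (from either side), where $a\to 0^+$ and a naive bound like $(k_F - |k|/2)^\beta$ on the disk integrand is divergent. The point is that the integrand $u^\beta(\alpha - u)(u - a)$ also vanishes at the endpoint $u = a$, so the singular part is integrable. A polynomial-division identity produces a logarithmic term $a(|k|+a)\log(1 + |k|/a)$ (or its analogue for $|k|>2k_F$) which stays bounded as $a\to 0^+$ because $a\log(1/a)\to 0$; moreover, the leading $k_F|k|$ terms cancel against the logarithmic contribution to produce the sharper order $k_F^3/|k|$ when $|k|\gg k_F$. The condition $\beta\geq -1$ is sharp: it makes $\int_0^\cdot\lambda^{1+\beta}\,d\lambda$ convergent at the origin, and is complemented by the lower cutoff $\lambda_{k,p}\geq 1/2$ from \eqref{gap}. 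The ``in particular'' statement at $\beta = -1$ then follows by combining the two regime estimates, noting that $k_F^{2+\beta}|k|^{1+\beta}|_{\beta=-1} = k_F$ for $|k|\leq 2k_F$ and $k_F^3|k|^{-2}$ for $|k|>2k_F$, which is precisely $C k_F\min\{1, k_F^2|k|^{-2}\}$.
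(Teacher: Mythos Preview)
The paper does not supply its own proof of this lemma; it is imported from \cite[Proposition~A.1]{CHN-24}, so there is no in-paper argument to compare against. Your slicing strategy---foliating $L_k$ by the level planes of $\lambda_{k,p}$ orthogonal to $k$---is the natural geometric approach.

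Two points in your sketch need correction. First, the density inequality is backwards: the sublattice $\Z^3\cap\{x\cdot k=c\}$, when nonempty, has two-dimensional covolume $|k|/d$ with $d=\gcd(k_1,k_2,k_3)\geq 1$, hence density $d/|k|\geq 1/|k|$, not $\leq$. This does not ultimately break your estimate, because the nonempty planes are spaced $d$ apart in $\lambda$ and the two factors of $d$ cancel when one passes from the sum over planes to the integral---but your argument as written invokes a false inequality rather than exhibiting this cancellation.

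Second, and more substantively, you dismiss the lattice-versus-area discrepancy as ``lower-order boundary corrections'', which is not justified. Each annular or disk cross-section has perimeter $O(k_F)$, so the Gauss-circle error on each plane is $O(k_F)$. Summed with weight $\lambda^\beta$ over the $\sim|k|k_F/d$ occupied planes, this contributes $O(k_F^{2+\beta}|k|^{1+\beta})$ for $\beta\in(-1,0]$---the \emph{same} order as your main term, which is acceptable since it merely enlarges $C_\beta$. At the endpoint $\beta=-1$, however, the boundary sum $\sum_\lambda\lambda^{-1}\cdot O(k_F)$ acquires an extra factor $\log(|k|k_F)$, so your sketch as it stands yields only $\sum_{p\in L_k}\lambda_{k,p}^{-1}\leq Ck_F\log k_F$ for small $|k|$, not the claimed $Ck_F$. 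Closing this gap at $\beta=-1$ requires a sharper counting argument than plane-by-plane perimeter bounds (for instance, controlling the cumulative count $F(\lambda)=|\{p\in L_k:\lambda_{k,p}\leq\lambda\}|$ directly and integrating $\int\lambda^{-2}F(\lambda)\,d\lambda$, with $F(\lambda)\leq C\lambda^2/|k|$ established by a three-dimensional argument rather than slice-by-slice).
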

By combining Lemmas \ref{lem:matrix-expK} and \ref{lem:estimate-lambkp}, we obtain the following proposition:
\begin{prop}\label{prop:prod-Kk-esti}
Let $A_{k}^{(m)}(\tau)$ be any $m$-fold product of operators from the set $\{C_{k}(\tau),S_{k}(\tau),K_{k}\}$.  Then, for each $k\in\Z_{*}^{3}$, $0\leq t\leq 1$ and $p,q\in \ell^{2}(L_{k})$, it holds that
\begin{align}\label{prod-Kk-esti}
	\big|\langle e_{p},A_{k}^{(m)}(t)e_{q}\rangle\big|&\leq \frac{C_{V,m}k_{F}^{-1}\hat{V}_{k}^{j_{1}}}{\lambda_{k,p}+\lambda_{k,q}}\min\{1,k_{F}^{2}|k|^{-2}\}
\end{align}
for some constant $C_{m,V}$ depends only on $m$ and $\|\hat{V}\|_{\ell^{2}}$.
\end{prop}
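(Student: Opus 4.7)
The plan is to reduce everything to the single-operator bounds in Lemma \ref{lem:matrix-expK}, glued together by an elementary inequality that collapses chained denominators of the form $(\lambda_{k,p_{j-1}}+\lambda_{k,p_j})^{-1}$, followed by Lemma \ref{lem:estimate-lambkp} to handle the remaining lattice sums. The factor $\min\{1,k_F^2|k|^{-2}\}$ appearing in the conclusion will come from exactly one application of this lattice estimate.

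I would first handle the base case $m=1$. For $A_k^{(1)}(t) \in \{K_k, S_k(t)\}$, Lemma \ref{lem:matrix-expK} directly gives the matrix element bound $\langle e_p, v_k\rangle\langle v_k, e_q\rangle/(\lambda_{k,p}+\lambda_{k,q})$, which is a constant multiple of $k_F^{-1}\hat{V}_k/(\lambda_{k,p}+\lambda_{k,q})$ by the definition of $v_k$. For $A_k^{(1)}(t) = C_k(t)$ the same expression is multiplied by $\langle v_k, h_k^{-1}v_k\rangle/(1+2\langle v_k, h_k^{-1}v_k\rangle)$, which I would estimate through
\begin{equation*}
\langle v_k, h_k^{-1}v_k\rangle = \frac{k_F^{-1}\hat{V}_k}{2(2\pi)^3}\sum_{p\in L_k}\lambda_{k,p}^{-1} \leq C\hat{V}_k \min\{1, k_F^2|k|^{-2}\},
\end{equation*}
using Lemma \ref{lem:estimate-lambkp}. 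This single calculation produces both an extra power of $\hat{V}_k$ and the desired $\min\{1, k_F^2|k|^{-2}\}$ factor.

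For $m\geq 2$, I would insert resolutions of the identity on $\ell^2(L_k)$ to write
\begin{equation*}
\langle e_p, A_k^{(m)}(t)e_q\rangle = \sum_{p_1,\ldots,p_{m-1}\in L_k} \prod_{j=1}^m \langle e_{p_{j-1}}, A_j e_{p_j}\rangle, \qquad p_0 = p,\ p_m = q,
\end{equation*}
apply the $m=1$ bound to each factor, and then collapse the chained denominators by iterating the elementary inequality
\begin{equation*}
\frac{1}{(\lambda_{k,p_{j-1}}+\lambda_{k,p_j})(\lambda_{k,p_j}+\lambda_{k,p_{j+1}})} \leq \frac{1}{\lambda_{k,p_j}(\lambda_{k,p_{j-1}}+\lambda_{k,p_{j+1}})}
\end{equation*}
from $j=1$ up through $j=m-1$. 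This leaves a single surviving denominator $(\lambda_{k,p}+\lambda_{k,q})^{-1}$ together with $m-1$ loose factors $\lambda_{k,p_j}^{-1}$, each of which is summed over $p_j\in L_k$ and bounded by $Ck_F \min\{1,k_F^2|k|^{-2}\}$ via Lemma \ref{lem:estimate-lambkp}.

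Finally, I would assemble the estimates: each of the $m$ matrix-element factors contributes $Ck_F^{-1}\hat{V}_k$ (with an additional $\hat{V}_k \min\{1, k_F^2|k|^{-2}\}$ for every $C_k$ factor), while each of the $m-1$ intermediate lattice sums contributes $Ck_F \min\{1, k_F^2|k|^{-2}\}$. The powers of $k_F$ telescope to a single $k_F^{-1}$, and the surplus copies of $\min\{1,k_F^2|k|^{-2}\}\leq 1$ collapse to one, matching the stated bound, with $j_1$ playing the role of the net exponent of $\hat{V}_k$ generated by this bookkeeping. The main obstacle will be precisely this accounting step: tracking separately how many $C_k$, $S_k$, and $K_k$ factors occur, matching the net power of $\hat{V}_k$ to $j_1$, and being careful that only one copy of $\min\{1,k_F^2|k|^{-2}\}$ is charged while the rest are discarded via the trivial estimate $\min\{1,k_F^2|k|^{-2}\}\leq 1$. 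Once this bookkeeping is in hand, \eqref{prod-Kk-esti} follows.
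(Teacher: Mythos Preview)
Your approach is correct and essentially identical to the paper's: both proofs reduce to the single-factor bounds from Lemma~\ref{lem:matrix-expK}, collapse chained denominators via the inequality $(\lambda_{k,a}+\lambda_{k,r})^{-1}(\lambda_{k,r}+\lambda_{k,b})^{-1}\le \lambda_{k,r}^{-1}(\lambda_{k,a}+\lambda_{k,b})^{-1}$, and close with Lemma~\ref{lem:estimate-lambkp} for the intermediate sums. The only cosmetic difference is that the paper packages this as an induction $A_k^{(m)}=A_k^{(m-1)}A_k^{(1)}$ peeling off one factor at a time, whereas you expand all $m-1$ resolutions of the identity at once; the bookkeeping you flag for the exponent $j_1$ and the surplus copies of $\min\{1,k_F^2|k|^{-2}\}$ is exactly what the paper leaves implicit.
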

\begin{proof}
We prove \eqref{prod-Kk-esti} by induction.  The case $m=1$ is done with $C_{V,1}=\max\{1,\|\hat{V}\|_{\infty}\}$ thanks to Lemma \ref{lem:matrix-expK}.  For $m>1$, we suppose the estimate \eqref{prod-Kk-esti} holds for $m=n$ with $n\geq 1$, and we prove for $m=n+1$.  We observe that
\begin{align}
	A_{k}^{(m)}&=A_{k}^{(m-1)}A_{k}^{(1)}.
\end{align}
Then, by inductive hypothesis, triangle inequality and Lemma \ref{lem:estimate-lambkp}, we obtain for any $p,q\in\ell^{2}(L_{k})$ that
\begin{align*}
	\big|\langle e_{p},A_{k}^{(m)}e_{q}\rangle\big|&=\big|\langle e_{p},A_{k}^{(m-1)}A_{k}^{(1)}e_{q}\rangle\big|\leq \sum_{q\in L_{k}}\big|\langle e_{p},A_{k}^{(m-1)}e_{r}\rangle\big|\big|\langle e_{r},A_{k}^{(1)}e_{q}\rangle\big|\nonumber\\
	&\leq\sum_{q\in L_{k}}\frac{C_{V,m-1}C_{V,1} k_{F}^{-2}\hat{V}_{k}^{j_{1}}}{(\lambda_{k,p}+\lambda_{k,r})(\lambda_{k,q}+\lambda_{k,r})}\leq \sum_{r\in L_{k}}\frac{C_{V,m}k_{F}^{-2}\hat{V}_{k}^{m}}{(\lambda_{k,p}+\lambda_{k,q})\lambda_{k,r}}\nonumber\\
	&\leq C_{V,m}C \frac{k_{F}^{-1}\hat{V}_{k}^{m}}{\lambda_{k,p}+\lambda_{k,q}}\min\{1,k_{F}^{2}|k|^{-2}\}.
\end{align*}
where $C_{V,m}=C_{V,m-1}C_{V,1}$ and $C$ comes from \eqref{prod-Kk-esti}, which completes the proof.
\end{proof}

To conclude this section, we denote
\begin{align}
	\cN_{k}:=\sum_{p\in L_{k}}b_{k,p}^{*}b_{k,p},\quad\cN_{E}:=\sum_{p\in \Z^{3}}\tilde{a}_{p}^{*}\tilde{a}_{p},
\end{align}
where
\begin{align}\label{particle-hole}
	\tilde{a}_{p}&=\chi_{B_{F}}(p)a_{p}^{*}+\chi_{B_{F}^{c}}(p)a_{p}.
\end{align}
According to \cite[Proposition 4.4]{CHN-23}, for any $\Psi\in\cH_{N}$ and $\varphi\in\ell^{2}(L_{k})$, we have
\begin{align}\label{excitation-bdd}
	\|b_{k}(\varphi)\Psi\|&\leq \|\varphi\|\|\cN_{k}^{1/2}\Psi\|,\quad\|b_{k}^{*}(\varphi)\Psi\|\leq \|\varphi\|\|(\cN_{k}+1)^{1/2}\Psi\|.
\end{align}
Note that our definition of $\cN_{E}$ is twice of the one defined in \cite{CHN-23} on $\cH_{N}$ (due to particle-hole symmetry).  Nevertheless, we can still obtain similar operator inequalities as in \cite{CHN-23}:
\begin{align}\label{NkNE-esti}
	\cN_{k}\leq \cN_{E}\text{ for each }k\in\Z_{*}^{3},\quad \sum_{k\in\Z_{*}^{3}}\cN_{k}\leq \cN_{E}^{2}.
\end{align}
The first relation in \eqref{NkNE-esti} is straightforward to see.  To show the second relation, we derive the following \textit{pull-through formula}:
\begin{lemma}\label{lem:pull-through}
	For any $p\in \Z^{3}$, it holds that 
	\begin{align}\label{pull-through}
		\cN_{E}\tilde{a}_{p}&=\tilde{a}_{p}(\cN_{E}-1),\quad\tilde{a}_{p}^{*}\cN_{E}=(\cN_{E}-1)\tilde{a}_{p}^{*}.
	\end{align}
\end{lemma}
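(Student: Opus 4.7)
The plan is to prove the pull-through formula by first establishing that the particle-hole transformed operators $\tilde{a}_p, \tilde{a}_p^*$ satisfy the same canonical anticommutation relations as the original $a_p, a_p^*$, and then running the standard algebraic computation for $[\cN_E, \tilde{a}_p]$.

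First I would verify the CAR for the tilded operators. By the definition \eqref{particle-hole}, on the case $p \in B_F$ we have $\tilde{a}_p = a_p^*$, $\tilde{a}_p^* = a_p$, while on $p \in B_F^c$ we have $\tilde{a}_p = a_p$, $\tilde{a}_p^* = a_p^*$. A case-by-case check using \eqref{CAR} (four cases depending on whether each of $p,q$ is inside or outside $B_F$, noting that if $p \in B_F$ and $q \in B_F^c$ then the corresponding pair of original operators already anticommutes to zero) yields
\begin{align*}
\{\tilde{a}_p, \tilde{a}_q\} = \{\tilde{a}_p^*, \tilde{a}_q^*\} = 0, \qquad \{\tilde{a}_p, \tilde{a}_q^*\} = \delta_{p,q}.
\end{align*}
This is essentially the statement that $\cR$ intertwines the canonical anticommutation relations (which is consistent with $\cR$ being unitary in \eqref{ph-transf}), so no new content is hidden here.

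Next I would compute the commutator $[\tilde{a}_q^* \tilde{a}_q, \tilde{a}_p]$ in the standard way using these CAR, moving $\tilde{a}_p$ to the left through each factor:
\begin{align*}
\tilde{a}_q^* \tilde{a}_q \tilde{a}_p
= -\tilde{a}_q^* \tilde{a}_p \tilde{a}_q
= -(\delta_{p,q} - \tilde{a}_p \tilde{a}_q^*)\tilde{a}_q
= -\delta_{p,q}\tilde{a}_p + \tilde{a}_p \tilde{a}_q^* \tilde{a}_q,
\end{align*}
so $[\tilde{a}_q^* \tilde{a}_q, \tilde{a}_p] = -\delta_{p,q}\tilde{a}_p$. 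Summing over $q \in \Z^3$ gives $[\cN_E, \tilde{a}_p] = -\tilde{a}_p$, which rearranges to the first identity $\cN_E \tilde{a}_p = \tilde{a}_p(\cN_E - 1)$. Taking adjoints (and noting that $\cN_E$ is self-adjoint) immediately yields the second identity $\tilde{a}_p^* \cN_E = (\cN_E - 1)\tilde{a}_p^*$.

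There is no real obstacle in this lemma; it is a standard pull-through formula adapted to the particle-hole picture. The only mild subtlety is making sure that the bookkeeping of the four cases ($p, q$ inside or outside $B_F$) really yields the CAR for $\tilde{a}_p$, but this is purely mechanical. Once that is in hand, the commutator identity $[\cN_E, \tilde{a}_p] = -\tilde{a}_p$ is formally identical to the classical one for bare creation/annihilation operators, and the lemma follows.
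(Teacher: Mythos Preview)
Your proposal is correct and follows essentially the same approach as the paper: the paper's proof also uses the CAR for the tilded operators to compute $\cN_E\tilde a_p=\sum_q\tilde a_q^*\tilde a_q\tilde a_p=-\sum_q\tilde a_q^*\tilde a_p\tilde a_q=\tilde a_p\cN_E-\tilde a_p$, then takes the adjoint for the second identity. The only difference is that you spell out the case-by-case verification of the CAR for $\tilde a_p,\tilde a_p^*$, whereas the paper simply invokes ``By CAR'' without further comment.
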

\begin{proof}
By CAR, we obtain
\begin{align*}
	\cN_{E}\tilde{a}_{p}&=\sum_{q\in\Z^{3}}\tilde{a}_{q}^{*}\tilde{a}_{q}\tilde{a}_{p}=-\sum_{q\in\Z^{3}}\tilde{a}_{q}^{*}\tilde{a}_{p}\tilde{a}_{q}=\tilde{a}_{p}\cN_{E}-\sum_{q\in\Z^{3}}\delta_{p,q}\tilde{a}_{q}=\tilde{a}_{p}(\cN_{E}-1),
\end{align*}
which proves the first relation in \eqref{pull-through}.  Taking adjoint yields the second relation.  This completes the proof.
\end{proof}
\begin{cor}\label{cor:sumNk-NE}
It holds that
\begin{align}\label{sumNk-NE}
	\sum_{k\in\Z_{*}^{3}}\cN_{k}\leq \cN_{E}^{2}.
\end{align}
\end{cor}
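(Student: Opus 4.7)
\textbf{Proof plan for Corollary \ref{cor:sumNk-NE}.} The plan is to rewrite $\sum_{k \in \Z_*^3} \cN_k$ explicitly in terms of the particle--hole operators $\tilde a_p$ and show it factors as a product of two commuting summands whose sum is $\cN_E$. Expanding the definition, for $p \in L_k$ we have $b_{k,p}^* b_{k,p} = a_p^* a_{p-k} a_{p-k}^* a_p$, with $p \in B_F^c$ and $p-k \in B_F$ by the definition of $L_k$ in \eqref{lune}. The first step is to change variables: setting $q = p - k$ (which identifies $k$ uniquely as $p-q$, and forces $k \neq 0$ since $p \neq q$), the double sum collapses to
\begin{equation*}
\sum_{k\in\Z_*^3} \cN_k \;=\; \sum_{p \in B_F^c}\sum_{q \in B_F} a_p^* a_q a_q^* a_p.
\end{equation*}

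The second step is to use CAR to reorder the operators. Since $p \in B_F^c$ and $q \in B_F$ implies $p \neq q$, the operator $a_p^*$ anticommutes with both $a_q$ and $a_q^*$, so $a_p^*$ commutes with $a_q a_q^*$. Reordering, and recalling from \eqref{particle-hole} that $\tilde a_q^* \tilde a_q = a_q a_q^*$ for $q \in B_F$ and $\tilde a_p^* \tilde a_p = a_p^* a_p$ for $p \in B_F^c$, this becomes
\begin{equation*}
\sum_{k\in\Z_*^3} \cN_k \;=\; \Big(\sum_{q \in B_F} \tilde a_q^* \tilde a_q\Big)\Big(\sum_{p \in B_F^c} \tilde a_p^* \tilde a_p\Big) \;=:\; \cN_h\, \cN_p.
\end{equation*}
Because $\cN_h$ and $\cN_p$ are supported on disjoint momentum sets, they commute, and both are nonnegative self-adjoint operators.

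The third and final step is an AM--GM style estimate. Since $\cN_E = \cN_h + \cN_p$ by construction, the operator inequality $(\cN_h - \cN_p)^2 \geq 0$ (valid because $\cN_h$ and $\cN_p$ commute) yields $\cN_h^2 + \cN_p^2 \geq 2\,\cN_h \cN_p$, so
\begin{equation*}
\cN_E^2 \;=\; \cN_h^2 + 2\,\cN_h\cN_p + \cN_p^2 \;\geq\; 4\,\cN_h\cN_p \;\geq\; \cN_h\cN_p \;=\; \sum_{k\in\Z_*^3}\cN_k,
\end{equation*}
which is \eqref{sumNk-NE}. I do not anticipate a serious obstacle here; the only subtle point is the bookkeeping in the change of variable $q = p-k$ and the sign check when commuting $a_p^*$ past $a_q a_q^*$, neither of which requires the pull-through formula of Lemma \ref{lem:pull-through}. (The pull-through formula would provide an alternative route by directly computing $[\cN_E, \tilde a_p^* \tilde a_q^* \tilde a_q \tilde a_p]$, but the factorization argument above seems cleaner.)
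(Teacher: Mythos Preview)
Your proof is correct and in fact sharper than the paper's. Both arguments begin the same way, rewriting $b_{k,p}^* b_{k,p}$ in terms of $\tilde a$ and rearranging the sum over $(k,p)$ into a sum over a particle momentum $p\in B_F^c$ and a hole momentum $q=p-k\in B_F$. From there the paper does not carry out your factorization; instead it bounds the inner hole sum by extending its range from $B_F$ to all of $\Z^3$, obtaining $\sum_k \cN_k \leq \sum_{p\in B_F^c}\tilde a_p^*\tilde a_p\,\cN_E$, then uses $[\tilde a_p^*\tilde a_p,\cN_E]=0$ to rewrite this as $\cN_E^{1/2}\big(\sum_{p\in B_F^c}\tilde a_p^*\tilde a_p\big)\cN_E^{1/2}\leq \cN_E^2$. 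Your route is cleaner: you observe the exact identity $\sum_k \cN_k = \cN_h\cN_p$ with $\cN_h+\cN_p=\cN_E$, and then AM--GM for commuting nonnegative operators gives $\cN_h\cN_p\leq \tfrac14\cN_E^2$, which is a factor of $4$ better than what is claimed (and what the paper's argument yields). The paper's approach has the mild advantage of being the same template used for the pointwise bound $\cN_k\leq\cN_E$ and for related estimates, but your factorization is more transparent and does not lose a constant.
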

\begin{proof}
By rearranging of summation as in \cite[Eq. (4.18)]{CHN-23} and using the fact that $[\tilde{a}_{p}^{*}\tilde{a}_{p},\cN_{E}]=0$ for any $p\in\Z^{3}$, we obtain
\begin{align}
	\sum_{k\in\Z_{*}^{3}}\cN_{k}&=\sum_{k\in\Z_{*}^{3}}\sum_{p\in L_{k}}\tilde{a}_{p}^{*}\tilde{a}_{p-k}^{*}\tilde{a}_{p-k}\tilde{a}_{p}=\sum_{p\in B_{F}^{c}}\tilde{a}_{p}^{*}\tilde{a}_{p}\sum_{k\in (B_{F}+p)}\tilde{a}_{p-k}^{*}\tilde{a}_{p-k}\nonumber\\
	&\leq \sum_{p\in B_{F}^{c}}\tilde{a}_{p}^{*}\tilde{a}_{p}\sum_{k\in(\Z^{3}+p)}\tilde{a}_{p-k}^{*}\tilde{a}_{p-k}=\sum_{p\in B_{F}^{c}}\tilde{a}_{p}^{*}\tilde{a}_{p}\cN_{E}\nonumber\\
	&\leq \cN_{E}^{1/2}\Big(\sum_{p\in B_{F}^{c}}\tilde{a}_{p}^{*}\tilde{a}_{p}\Big)\cN_{E}^{1/2}\leq \cN_{E}^{1/2}\Big(\sum_{p\in\Z^{3}}\tilde{a}_{p}^{*}\tilde{a}_{p}\Big)\cN_{E}^{1/2}=\cN_{E}^{2},
\end{align}
which completes the proof.
\end{proof}

\begin{cor}\label{cor:pull-through}
For each $p\in\Z^{3}$ and $\Psi\in\cH_{N}$, it holds that 
\begin{align}\label{excitation-bdd2}
	\|\cN_{k}^{1/2}\tilde{a}_{p}\Psi\|&\leq \|\tilde{a}_{p}\cN_{k}^{1/2}\Psi\|\leq \|\tilde{a}_{p}\cN_{E}^{1/2}\Psi\|,\quad \|\cN_{E}^{1/2}\tilde{a}_{p}\Psi\|\leq \|\tilde{a}_{p}\cN_{E}^{1/2}\Psi\|,\\
	\label{excitation-bdd5}\|\cN_{E}\tilde{a}_{p}\Psi\|&\leq \|\tilde{a}_{p}\cN_{E}\Psi\|,\quad\|\cN_{E}^{3/2}\tilde{a}_{p}\Psi\|\leq \|\tilde{a}_{p}\cN_{E}^{3/2}\Psi\|+\|\tilde{a}_{p}\cN_{E}^{1/2}\Psi\|.
\end{align}
\end{cor}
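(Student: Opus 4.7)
The plan is to prove the five inequalities of \eqref{excitation-bdd2} and \eqref{excitation-bdd5} in two groups: the $\cN_k$ estimates rely on the structural fact that $\cN_k$ is a sum of mutually commuting number operators, while the $\cN_E$ estimates follow from the pull-through formula of Lemma~\ref{lem:pull-through} combined with elementary scalar inequalities applied through spectral calculus.

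For the $\cN_k$ bounds I would first note that for each $q\in L_k$ we have $q\neq q-k$, so the CAR gives
\begin{align*}
b_{k,q}^{*}b_{k,q}=\tilde{a}_q^{*}\tilde{a}_{q-k}^{*}\tilde{a}_{q-k}\tilde{a}_q=\hat{n}_q\,\hat{n}_{q-k},\qquad \hat{n}_r:=\tilde{a}_r^{*}\tilde{a}_r,
\end{align*}
so that $\cN_k=\sum_{q\in L_k}\hat{n}_q\hat{n}_{q-k}$ is a sum of products of mutually commuting number operators; in particular $[\cN_k,\tilde{a}_p^{*}\tilde{a}_p]=0$. A short CAR computation then yields
\begin{align*}
[\tilde{a}_p^{*},\cN_k]=-\bigl(\chi_{L_k}(p)\hat{n}_{p-k}+\chi_{L_k'}(p)\hat{n}_{p+k}\bigr)\tilde{a}_p^{*},
\end{align*}
so multiplying on the right by $\tilde{a}_p$ gives the operator inequality $\tilde{a}_p^{*}\cN_k\tilde{a}_p\le \tilde{a}_p^{*}\tilde{a}_p\,\cN_k$, since the extra piece is a product of commuting non-negative operators. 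Taking expectation against $\Psi$ and using $[\cN_k,\tilde{a}_p^{*}\tilde{a}_p]=0$ produces the first inequality of \eqref{excitation-bdd2}. The second inequality of \eqref{excitation-bdd2} then follows by combining $\cN_k\le\cN_E$ from \eqref{NkNE-esti} with $[\cN_E,\tilde{a}_p^{*}\tilde{a}_p]=0$.

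For the $\cN_E$ estimates I would apply Lemma~\ref{lem:pull-through}. Decomposing $\Psi=\sum_{n\ge 0}\Psi_n$ into eigenspaces of $\cN_E$, the identity $\cN_E\tilde{a}_p=\tilde{a}_p(\cN_E-1)$ extends by functional calculus to $f(\cN_E)\tilde{a}_p\Psi=\tilde{a}_p f(\cN_E-1)\Psi$ for any continuous $f$ on $[0,\infty)$, once one notes that $\tilde{a}_p\Psi_0=0$ so only $n\ge 1$ contributes. The scalar inequalities $(n-1)^{\alpha}\le n^{\alpha}$ valid for $\alpha\in\{1/2,1\}$ and $n\ge 1$ then yield the third inequality of \eqref{excitation-bdd2} and the first inequality of \eqref{excitation-bdd5}. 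For the $\cN_E^{3/2}$ bound I would write $\cN_E^{3/2}\tilde{a}_p\Psi=\tilde{a}_p(\cN_E-1)^{3/2}\Psi$ and apply the elementary pointwise bound $(n-1)^{3/2}\le n^{3/2}+n^{1/2}$ valid for $n\ge 1$, followed by the triangle inequality.

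The main technical point is the first commutator calculation: identifying $\cN_k$ with a quadratic expression in commuting number operators makes the operator inequality $\tilde{a}_p^{*}\cN_k\tilde{a}_p\le \tilde{a}_p^{*}\tilde{a}_p\cN_k$ transparent, but it requires careful bookkeeping with the signs arising when moving $\tilde{a}_p$ past each $b_{k,q}$ under the CAR. Once this step is secured, the remaining four estimates are essentially cosmetic applications of the pull-through formula together with elementary scalar inequalities, so I expect no genuine difficulty beyond the first step.
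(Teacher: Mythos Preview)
Your proposal is correct. For the $\cN_k$ inequalities you essentially reconstruct the argument from \cite{CHN-23} that the paper simply cites, namely writing $\cN_k=\sum_{q\in L_k}\hat n_q\hat n_{q-k}$ and deducing the operator inequality $\tilde a_p^{*}\cN_k\tilde a_p\le \cN_k^{1/2}\hat n_p\cN_k^{1/2}$; this is the same idea.

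For the $\cN_E$ inequalities your route genuinely differs from the paper's. The paper argues by expanding $\tilde a_p^{*}\cN_E^{2}\tilde a_p$ and $\tilde a_p^{*}\cN_E^{3}\tilde a_p$ via the pull-through identity and then dropping manifestly non-negative terms. You instead diagonalize $\cN_E$, observe that $\tilde a_p\Psi_0=0$, and reduce everything to the scalar inequalities $(n-1)^{\alpha}\le n^{\alpha}$. Your approach is cleaner and in fact yields more: since $(n-1)^{3}\le n^{3}$ for $n\ge 1$, the spectral argument gives directly $\|\cN_E^{3/2}\tilde a_p\Psi\|\le \|\tilde a_p\cN_E^{3/2}\Psi\|$, i.e.\ the extra $\|\tilde a_p\cN_E^{1/2}\Psi\|$ term in \eqref{excitation-bdd5} is not needed. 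One small comment on presentation: your phrase ``$(n-1)^{3/2}\le n^{3/2}+n^{1/2}$ followed by the triangle inequality'' is slightly awkward, because the triangle inequality is not the right tool once the vectors $\tilde a_p\Psi_n$ are orthogonal; it is cleaner to square and use $(n-1)^{3}\le n^{3}$ directly.
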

\begin{proof}
The relations in \eqref{excitation-bdd2} is proved in \cite{CHN-23}.  For the first relation in \eqref{excitation-bdd5}, by Lemma \ref{lem:pull-through}, we have for each $p\in\Z^{3}$ that
\begin{align}\label{Na-aN-esti}
	\cN_{E}^{2}\tilde{a}_{p}&=\cN_{E}\tilde{a}_{p}(\cN_{E}-1)=\tilde{a}_{p}\cN_{E}^{2}-\tilde{a}_{p}\cN_{E}-\cN_{E}\tilde{a}_{p}.
\end{align}
Since $[\tilde{a}_{p}^{*}\tilde{a}_{p},\cN_{E}]=0$ for any $p\in\Z^{3}$, Eq. \eqref{Na-aN-esti} implies that 
\begin{align}
	\tilde{a}_{p}^{*}\cN_{E}^{2}\tilde{a}_{p}&\leq \tilde{a}_{p}^{*}\tilde{a}_{p}\cN_{E}^{2}=\cN_{E}\tilde{a}_{p}^{*}\tilde{a}_{p}\cN_{E},
\end{align}
which in turns implies that 
\begin{align}
	\|\cN_{E}\tilde{a}_{p}\Psi\|^{2}&=\langle\Psi,\tilde{a}_{p}^{*}\cN_{E}\tilde{a}_{p}\Psi\rangle\leq\langle\Psi,\cN_{E}\tilde{a}_{p}^{*}\tilde{a}_{p}\cN_{E}\Psi\rangle=\|\tilde{a}_{p}\cN_{E}\Psi\|^{2}.
\end{align}
This gives the first relation in \eqref{excitation-bdd5}.  For the second relation in \eqref{excitation-bdd5}, we again apply Lemma \ref{lem:pull-through} to obtain
\begin{align}
	\cN_{E}^{3}\tilde{a}_{p}&=\cN_{E}\tilde{a}_{p}\cN_{E}^{2}-\cN_{E}^{2}\tilde{a}_{p}-\cN_{E}\tilde{a}_{p}\cN_{E}=\tilde{a}_{p}\cN_{E}(\cN_{E}^{2}+1)-\tilde{a}_{p}\cN_{E}^{2}-\cN_{E}^{2}\tilde{a}_{p}.
\end{align}
Hence, by the relation $[\tilde{a}_{p}^{*}\tilde{a}_{p},\cN_{E}]=0$ again, we have
\begin{align}
	\tilde{a}_{p}^{*}\cN_{E}^{3}\tilde{a}_{p}&=\tilde{a}_{p}^{*}\tilde{a}_{p}\cN_{E}^{3}+\tilde{a}_{p}^{*}\tilde{a}_{p}\cN_{E}-\tilde{a}_{p}^{*}\tilde{a}_{p}\cN_{E}^{2}-\tilde{a}_{p}^{*}\cN_{E}^{2}\tilde{a}_{p}\nonumber\\
	&=\cN_{E}^{3/2}\tilde{a}_{p}^{*}\tilde{a}_{p}\cN_{E}^{3/2}+\cN_{E}^{1/2}\tilde{a}_{p}^{*}\tilde{a}_{p}\cN_{E}^{1/2}-\cN_{E}\tilde{a}_{p}^{*}\tilde{a}_{p}\cN_{E}-\tilde{a}_{p}^{*}\cN_{E}^{2}\tilde{a}_{p}\nonumber\\
	&\leq \cN_{E}^{3/2}\tilde{a}_{p}^{*}\tilde{a}_{p}\cN_{E}^{3/2}+\cN_{E}^{1/2}\tilde{a}_{p}^{*}\tilde{a}_{p}\cN_{E}^{1/2}.
\end{align}
This gives the second relation in \eqref{excitation-bdd5}, which completes the proof.
\end{proof}

Finally, we prove the following important estimates:
\begin{lemma}\label{lem:gap-sum}
For any $\xi\in\Z^{3}$, it holds for any $\delta>0$ that 
\begin{align}\label{gap-sum}
	\sum_{k\in\Z_{*}^{3}}\frac{\chi_{L_{k}}(\xi)}{\lambda_{k,\xi}}&\leq C_{\delta}k_{F}^{1+\delta},\quad
	\sum_{k\in\Z_{*}^{3}}\frac{\chi_{L_{k}'}(\xi)}{\lambda_{k,k+\xi}^{2}}\leq C_{\delta}k_{F}^{1+\delta}m(\xi),
\end{align}
for some constant $C_{\delta}$ depends only on $\delta$ and is independent of $\xi$.
\end{lemma}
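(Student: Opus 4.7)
Both bounds reduce, after a change of variables that turns the constraint $\chi_{L_k}(\xi)=1$ (resp.\ $\chi_{L_k'}(\xi)=1$) into a sum over $B_F$ (resp.\ $B_F^c$), to the lattice counting estimate $\sum_{\xi\in B_F} m(\xi) \leq C_\delta k_F^{1+\delta}$ from \cite[Lemma 3.2]{CHN-24-1}. The key algebraic input is the identity
\begin{align*}
|p|^{2}-|q|^{2} = m(p)^{-1}+m(q)^{-1},\qquad p\in B_F^c,\ q\in B_F,
\end{align*}
which follows from $|p|^2 \geq \kappa \geq |q|^2$ together with the definition of $m(\cdot)$ in \eqref{kappa}.

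\textbf{Step 1 (first estimate).} Observe first that if $\xi \in B_F$ there is nothing to prove since $L_k\subset B_F^c$. For $\xi\in B_F^c$, the condition $\xi\in L_k$ is equivalent to $q:=\xi-k\in B_F$, and the substitution $k\mapsto q$ defines a bijection from $\{k\in\Z_*^3:\xi-k\in B_F\}$ onto $B_F$ (note $q\neq \xi$ automatically since $\xi\notin B_F$). Under this change,
\begin{align*}
\lambda_{k,\xi} = \tfrac{1}{2}(|\xi|^{2}-|q|^{2}) = \tfrac{1}{2}\bigl(m(\xi)^{-1}+m(q)^{-1}\bigr) \geq \tfrac{1}{2}m(q)^{-1},
\end{align*}
so the sum is bounded by $2\sum_{q\in B_F} m(q)\leq C_\delta k_F^{1+\delta}$ by \cite[Lemma 3.2]{CHN-24-1}.

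\textbf{Step 2 (second estimate).} Again only $\xi\in B_F$ contributes. Writing $p:=\xi+k$ gives a bijection from the relevant set of $k$ onto $B_F^c\setminus\{\xi\}=B_F^c$, and
\begin{align*}
\lambda_{k,k+\xi} = \tfrac{1}{2}(|p|^{2}-|\xi|^{2}) = \tfrac{1}{2}\bigl(m(p)^{-1}+m(\xi)^{-1}\bigr).
\end{align*}
I split the sum according to whether $|p|\leq 2k_F$ or not. For the ``near'' part, the AM-GM inequality $(a+b)^{2}\geq 4ab$ applied to $a=m(p)^{-1}$, $b=m(\xi)^{-1}$ gives $\lambda_{k,k+\xi}^{-2}\leq m(p)m(\xi)$, and invoking \cite[Lemma 3.2]{CHN-24-1} (applied to the thin shell $B_F^c\cap\{|p|\leq 2k_F\}$, which admits the same bound as $B_F$) yields the required $C_\delta k_F^{1+\delta} m(\xi)$. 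For the ``far'' part $|p|>2k_F$, we use $|p|^2-|\xi|^2\geq |p|^2-k_F^2\geq \tfrac{3}{4}|p|^2$ and the elementary lattice bound $\sum_{p\in\Z^3,|p|>R} |p|^{-4}\leq C R^{-1}$, which gives $\leq C k_F^{-1}\leq 2C k_F^{-1}m(\xi)$ since $m(\xi)\geq 1/2$.

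\textbf{Obstacle.} There is no deep obstacle: the only subtlety is that the second sum ranges a priori over the infinite set $B_F^c$, forcing the splitting and the use of the polynomial decay of $|p|^{-4}$ at infinity. Everything else is an elementary accounting, with the nontrivial input hidden in the lattice counting bound \cite[Lemma 3.2]{CHN-24-1}.
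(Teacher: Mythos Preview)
Your approach is essentially the same as the paper's: both proofs reduce to the lattice bound $\sum_{q\in B_F} m(q)\leq C_\delta k_F^{1+\delta}$ of \cite[Lemma~3.2]{CHN-24-1} via the change of variables $q=\xi-k$ (resp.\ $p=\xi+k$), and both split the second sum according to $|p|\leq 2k_F$ versus $|p|>2k_F$. Your use of the identity $|p|^2-|q|^2=m(p)^{-1}+m(q)^{-1}$ and the AM--GM bound $(a+b)^2\geq 4ab$ is a clean way to organize the near part; the paper does the same bookkeeping slightly less explicitly.

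There is, however, one genuine slip in Step~2. You write ``$\leq Ck_F^{-1}\leq 2Ck_F^{-1}m(\xi)$ since $m(\xi)\geq 1/2$.'' This is false: by definition $m(\xi)^{-1}=||\xi|^2-\kappa|\geq 1/2$, so $m(\xi)\leq 2$, and for $\xi$ deep inside $B_F$ (e.g.\ $\xi=0$) one has $m(\xi)\sim k_F^{-2}$. The inequality you need goes the other way. The fix is exactly what the paper does: for $\xi\in B_F$ one has $m(\xi)^{-1}=\kappa-|\xi|^2\leq \kappa\leq C k_F^2$, hence
\[
Ck_F^{-1}=Ck_F^{-1}\,m(\xi)^{-1}\,m(\xi)\leq C' k_F\, m(\xi)\leq C' k_F^{1+\delta} m(\xi).
\]
With this correction your argument is complete.
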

\begin{proof}
First, for $\xi\in B_{F}^{c}$, the second relation in \eqref{gap-sum} vanishes so we only need to consider the first one.  We denote the region $D_{\xi}:=\{k\in\Z_{*}^{3}\mid \xi\in L_{k}\}$.  We observe that $D_{\xi}\subseteq \overline{B(\xi,k_{F})}\cap \Z^{3}$, where $B_{R}(a)$ denotes the open ball centered at $a$ with radius $R>0$.  Clearly, $|D_{\xi}|\leq |\overline{B}(0,2k_{F})\cap\Z^{3}|$.  Hence, by \cite[Lemma 3.2]{CHN-24-1} and lower bound \eqref{gap}, for any $\delta>0$, there is some $C_{\delta}>0$ such that
\begin{align*}
	\sum_{k\in \Z_{*}^{3}}\frac{\chi_{L_{k}}(\xi)}{\lambda_{k,\xi}}&\leq \sum_{k\in D_{\xi}}\frac{2}{\big||\xi-k|^{2}-\kappa\big|}\leq \sum_{p\in B_{k_{F}}(0)\cap\Z^{3}}\frac{2}{\big||p|^{2}-\kappa\big|}\leq C_{\delta}k_{F}^{1+\delta},
\end{align*}
for some constant $C_{\delta}$ depends only on $\delta$ and is independent of $\xi$.

Next, for $\xi\in B_{F}$, again, the first relation in \eqref{gap-sum} vanishes so we only need to consider the second one.  In this case, we define the region $D_{\xi}':=\{k\in\Z_{*}^{3}\mid \xi\in L_{k}'\}$.  We observe that $|k+\xi|>k_{F}$ if $k\in D_{\xi}'$, and decompose $D_{\xi}':=D_{\xi,1}'\cup D_{\xi,2}'$ as
\begin{align*}
	D_{\xi,1}'&:=\{k\in D_{\xi}'\mid |k+\xi|\leq 2k_{F}\},\quad D_{\xi,2}':=D_{\xi}'\setminus D_{\xi,1}'.
\end{align*}
We observe that $D_{\xi,1}'\subseteq\overline{B(\xi,2k_{F})}\cap B(\xi,k_{F})^{c}\cap \Z^{3}$ so that
\begin{align}
	|D_{\xi,1}'|&\leq C|\overline{B(0,2k_{F})}\cap\Z^{3}|,
\end{align}
for some universal constant $C>0$.  We further decompose $D_{\xi,1}'=\cup_{\ell=1}^{L}D_{\xi,1,n}'$ for some integer $C\leq L\leq C+1$ such that $|D_{\xi,1,\ell}'|\leq |\overline{B(0,2k_{F})}\cap \Z^{3}|$ for each $n$ so that, by \cite[Lemma 3.2]{CHN-24-1} and lower bound \eqref{gap},
\begin{align*}
	\sum_{k\in D_{\xi,1}'}\frac{1}{\lambda_{k,k+\xi}^{2}}&\leq m(\xi) \sum_{\ell=1}^{L}\sum_{k\in D_{\xi,1,\ell}'}\frac{2}{\big||k+\xi|^{2}-\kappa\big|}\nonumber\\
	&\leq m(\xi)\sum_{p\in B_{k_{F}}(0)\cap\Z^{3}}\frac{2L}{\big||p|^{2}-\kappa\big|}\leq (C+1)C_{\delta}k_{F}^{1+\delta}m(\xi).
\end{align*}
For region $D_{\xi,2}'$, since $\kappa=k_{F}^{2}(1+o(1))$ and $|k+\xi|^{2}\geq 4k_{F}^{2}$ for any $k\in D_{\xi,2}'$, we have
\begin{align*}
	\big||k+\xi|^{2}-\kappa\big|&\geq \frac{1}{2}|k+\xi|^{2}-o(1).
\end{align*}
Moreover, we observe that, for each$\xi\in B_{F}$, there is some universal constant $C'>0$ such that
\begin{align}
	m(\xi)^{-1}=\big||\xi|^{2}-\kappa\big|\leq |\xi|^{2}+\kappa\leq C'k_{F}^{2}(1+o(1)).
\end{align} 
Then we bound the sum over region $D_{\xi,2}'$ with the corresponding integral (up to some constant $C''>0$) to obtain
\begin{align}\label{ex-esti-regionD2}
	\sum_{k\in D_{\xi,2}'}\frac{1}{\lambda_{k,k+\xi}^{2}}&\leq\sum_{k\in D_{\xi,2}'}\frac{1}{\big||k+\xi|^{2}-\kappa\big|^{2}}\leq C''\int_{k_{F}}^{\infty}\frac{p^{2}dp}{(p^{2}-o(1))^{2}}\nonumber\\
	&\leq C'' k_{F}^{-1}\leq C'C''m(\xi)\big(m(\xi)^{-1}k_{F}^{-1}\big)\leq C'C''k_{F}(1+o(1))m(\xi).
\end{align}
Consequently, we obtain for any $\delta>0$ that
\begin{align}\label{ex-esti2}
	\sum_{k\in\Z_{*}^{3}}\frac{\chi_{L_{k}'}(\xi)}{\lambda_{k,k+\xi}^{2}}\leq C_{\delta}k_{F}^{1+\delta}m(\xi),
\end{align}
for some constant $C_{\delta}$ again depends only on $\delta$.  This completes the proof.
\end{proof}

\section{Momentum distribution: Bosonization and exchange contributions}\label{sec:momentum-boson-ex}
In this section, we compute the momentum distribution $n(\xi)$ of our trial state and extract the contribution due to bosonization and exchange correlations.  First, instead of only $n(\xi)$, we consider the family
\begin{align}
	n_{t}(\xi)&=\langle \Phi_{t},\rho_{\xi}\Phi_{t}\rangle,\quad \Phi_{t}:=e^{-t\cK}\Psi_{FS}\text{ for }0\leq t\leq 1,
\end{align}
so that $n(\xi)\equiv n_{1}(\xi)$.  Moreover, due to the reflection symmetry of trial state $\Phi_{t}$ for each $0\leq t\leq 1$ in the momentum space, 
\begin{align}\label{ref-aveg}
	n_{t}(\xi)&=\langle\Phi_{t},\rho_{\xi}\Phi_{t}\rangle=\frac{1}{2}\langle\Phi_{t},(\rho_{\xi}+\rho_{-\xi})\Phi_{t}\rangle.
\end{align}
This symmetry condition is important when we compute momentum distribution (see Remark \ref{rem:symmetric-reason} for explanation).

Next, we compute the occupation number of a single excitation mode:
\begin{lemma}\label{lem:single-mode}
	Let $k\in\Z_{*}^{3}$ and $p\in L_{k}$.  Then we have
	\begin{align}\label{single-mode}
		&\frac{1}{2}[b_{k,p},\rho_{\xi}+\rho_{-\xi}]=(g_{\xi,k}(p)+g_{-\xi,k}(p))b_{k,p},
	\end{align}
	where
	\begin{align}
		g_{\xi,k}(p)&=\frac{1}{2}\big(\chi_{L_{k}}(\xi)\delta_{p,q}+\chi_{L_{k}'}(\xi)\delta_{p,k+\xi}\big).
	\end{align}
\end{lemma}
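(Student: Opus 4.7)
The plan is to compute $[b_{k,p}, \rho_\xi]$ directly from the CAR and then symmetrize in $\xi \mapsto -\xi$. Although $\rho_\xi$ is defined piecewise depending on whether $\xi$ lies in $B_F$ or in $B_F^c$, the two cases can be unified via $a_\xi a_\xi^* = 1 - a_\xi^* a_\xi$, which gives
\begin{align*}
\rho_\xi = \chi_{B_F}(\xi) + \bigl(\chi_{B_F^c}(\xi) - \chi_{B_F}(\xi)\bigr) a_\xi^* a_\xi.
\end{align*}
The constant piece drops out of any commutator, so it suffices to evaluate $[b_{k,p}, a_\xi^* a_\xi]$.

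Using $\{a_p, a_\xi^*\} = \delta_{p,\xi}$ and $\{a_p, a_\xi\} = 0$ (together with the adjoint versions), the two elementary commutators are $[a_p, a_\xi^* a_\xi] = \delta_{p,\xi}\, a_\xi$ and $[a_{p-k}^*, a_\xi^* a_\xi] = -\delta_{p-k,\xi}\, a_\xi^*$. Combining via the Leibniz rule $[AB, C] = A[B, C] + [A, C]B$ yields
\begin{align*}
[b_{k,p}, a_\xi^* a_\xi] = \delta_{p,\xi}\, a_{p-k}^* a_\xi \,-\, \delta_{p-k,\xi}\, a_\xi^* a_p.
\end{align*}

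Now I invoke the hypothesis $p \in L_k$, which forces $p \in B_F^c$ and $p-k \in B_F$. Hence $\delta_{p,\xi}$ can only contribute when $\xi = p$, in which case $\xi \in L_k$, $a_{p-k}^* a_\xi = b_{k,p}$, and the prefactor $\chi_{B_F^c}(\xi) - \chi_{B_F}(\xi)$ equals $+1$. Similarly $\delta_{p-k,\xi}$ can only contribute when $\xi = p-k$, in which case $\xi \in L_k'$, $a_\xi^* a_p = b_{k,p}$, and the prefactor equals $-1$; the two minus signs cancel. Using $\delta_{p-k,\xi} = \delta_{p,k+\xi}$ and inserting the characteristic functions $\chi_{L_k}$ and $\chi_{L_k'}$ (which equal $1$ on the respective supports of the deltas), I obtain
\begin{align*}
[b_{k,p}, \rho_\xi] = \bigl(\chi_{L_k}(\xi)\delta_{p,\xi} + \chi_{L_k'}(\xi)\delta_{p,k+\xi}\bigr) b_{k,p} = 2\, g_{\xi,k}(p)\, b_{k,p}.
\end{align*}
Adding the reflected commutator $[b_{k,p}, \rho_{-\xi}]$ and dividing by two yields the claimed identity.

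There is no substantive obstacle here; the lemma is essentially a direct CAR computation. The only point requiring care is the sign bookkeeping: one minus sign arises from the particle-hole rearrangement of $\rho_\xi$ when $\xi \in B_F$, and another from the anticommutation of $a_{p-k}^*$ with $a_\xi^*$; these cancel so as to produce a uniform $+b_{k,p}$ contribution in both the particle ($\xi \in L_k$) and hole ($\xi \in L_k'$) branches, which is precisely what makes the symmetric factor $g_{\xi,k}(p) + g_{-\xi,k}(p)$ arise cleanly after averaging over $\xi \mapsto -\xi$.
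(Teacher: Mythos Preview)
Your proof is correct and takes essentially the same approach as the paper, which simply states ``This is proven by direct application of CAR.'' Your trick of rewriting $\rho_\xi = \chi_{B_F}(\xi) + (\chi_{B_F^c}(\xi)-\chi_{B_F}(\xi))a_\xi^* a_\xi$ unifies the particle and hole cases in a single stroke, whereas the paper's (suppressed) details treat $\xi\in B_F^c$ and $\xi\in B_F$ separately before combining; the underlying CAR identities and the conclusion are identical.
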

\begin{proof}
	This is proven by direct application of CAR.
	\DETAILS{We will only show the first relation in \eqref{single-mode} since the other one can be shown by the same computation.  For each $k\in\Z_{*}^{3}$ and $p\in L_{k}$, by CAR, we obtain
		\begin{align*}
			[a_{\pm\xi}^{*}a_{\pm\xi},a_{p}]=0\xi\in B_{F}^{c},\quad[a_{\pm\xi}a_{\pm\xi}^{*},a_{p-k}^{*}]\xi\in B_{F},
		\end{align*}
		so that
		\begin{align}\label{single-mode2}
			2[b_{k,p},\rho_{\xi}]&=\chi_{B_{F}^{c}}(\xi)[a_{p-k}^{*}a_{p},a_{\xi}^{*}a_{\xi}+a_{-\xi}^{*}a_{-\xi}]+\chi_{B_{F}}(\xi)[a_{p-k}^{*}a_{p},a_{\xi}a_{\xi}^{*}+a_{-\xi}a_{-\xi}^{*}]\nonumber\\
			&=\chi_{B_{F}^{c}}(\xi)a_{p-k}^{*}[a_{p},a_{\xi}^{*}a_{\xi}+a_{-\xi}^{*}a_{-\xi}]+\chi_{B_{F}}(\xi)[a_{p-k}^{*},a_{\xi}a_{\xi}^{*}+a_{-\xi}a_{-\xi}^{*}]a_{p}.
		\end{align}
		By CAR, for $\xi\in B_{F}^{c}$, we have
		\begin{align*}
			[a_{p},a_{\pm\xi}^{*}a_{\pm\xi}]&=a_{p}a_{\pm\xi}^{*}a_{\pm\xi}-a_{\pm\xi}^{*}a_{\pm\xi}a_{p}=\delta_{p,\pm\xi}a_{\pm\xi},
		\end{align*}
		and, for $\xi\in B_{F}$, we have
		\begin{align*}
			[a_{p-k}^{*},a_{\pm\xi}a_{\pm\xi}^{*}]&=a_{p-k}^{*}a_{\pm\xi}a_{\pm\xi}^{*}-a_{\pm\xi}a_{\pm\xi}^{*}a_{p-k}^{*}=\delta_{p-k,\pm\xi}a_{\pm\xi}^{*}.
		\end{align*}
		Together with the fact that $p\in L_{k}$, substituting them into \eqref{single-mode2} yields
		\begin{align*}
			2[b_{k,p},\rho_{\xi}]&=\chi_{B_{F}^{c}}(\xi)a_{p-k}^{*}\big(\delta_{p,\xi}a_{\xi}+\delta_{p,-\xi}a_{-\xi}\big)+\chi_{B_{F}}(\xi)\big(\delta_{p-k,\xi}a_{\xi}^{*}+\delta_{p-k,-\xi}a_{\xi}^{*}\big)a_{p}\nonumber\\
			&=\Big[\chi_{B_{F}^{c}\cap L_{k}}(\xi)\big(\delta_{p,\xi}+\delta_{p,-\xi}\big)+\chi_{B_{F}\cap L_{k}'}(\xi)\big(\delta_{p,k+\xi}+\delta_{p,k-\xi}\big)\Big]b_{k,p}\nonumber\\
			&=\Big[\chi_{L_{k}}(\xi)\big(\delta_{p,\xi}+\delta_{p,-\xi}\big)+\chi_{L_{k}'}(\xi)\big(\delta_{p,k+\xi}+\delta_{p,k-\xi}\big)\Big]b_{k,p},
		\end{align*}
		which completes the proof.}
\end{proof}
This lemma allows us to compute the commutator $[\cK,\rho_{\xi}]$:
\begin{prop}\label{prop:1st-com}
	For each $\xi\in\Z^{3}$, we obtain
	\begin{align}\label{1st-com}
		\frac{1}{2}[\cK,\rho_{\xi}+\rho_{-\xi}]\DETAILS{&=\frac{1}{2}\sum_{k\in\Z_{*}^{3}}Q_{2}^{k}(\Theta_{\xi;k})\nonumber\\
			&=\frac{1}{2}\sum_{k\in\Z_{*}^{3}}\sum_{p,q\in L_{k}}\Theta_{\xi;k}(r,s)\Big(b_{-k,-q}^{*}b_{k,p}^{*}+b_{k,p}b_{-k,-q}\Big)\nonumber\\
			&=\frac{1}{2}\sum_{k\in\Z_{*}^{3}}\sum_{p\in L_{k}}\Big(b_{-k,-p}^{*}b_{k}^{*}(\Theta_{\xi;k}e_{p})+b_{k}(\Theta_{\xi;k}e_{p})b_{-k,-p}\Big)}=\frac{1}{2}\sum_{k\in\Z_{*}^{3}}Q_{2}^{k}(\Theta_{\xi;k}),\quad \Theta_{\xi;k}:=\frac{1}{2}\sum_{\zeta\in\cD_{k,\xi}}\chi_{L_{k}}(\zeta)\{P_{\zeta;k},K_{k}\}
	\end{align}
	where $\cD_{k,\xi}:=\{\pm\xi,k\pm\xi\}$ and $P_{\zeta;k}:=\ket{e_{\zeta}}\bra{e_{\zeta}}$ denotes the rank-1 projection onto the vector $e_{\zeta}$ on $\ell^{2}(L_{k})$, provided $\zeta\in L_{k}$.  Moreover, $\Theta_{\xi;k}$ is a symmetric operator on $\ell^{2}(L_{k})$ and satisfies the property
	\begin{align}\label{unitary-tileKqk}
		I_{k}\Theta_{\xi;k}&=\Theta_{\xi;-k}I_{k}\quad\text{ for each }k\in \Z_{*}^{3}.
	\end{align}
\end{prop}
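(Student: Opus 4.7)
The plan is to expand $\cK$ explicitly via \eqref{quasiB-Bog-ker}, apply the Leibniz rule $[AB,C] = A[B,C] + [A,C]B$ to every product of two excitation operators in $\cK$, and reduce each resulting commutator to the single-mode identity in Lemma 2.2. The terms that survive will naturally regroup into the structure of $Q_2^k$, and the symmetrized kernel $\Theta_{\xi;k}$ will emerge from pairing the contribution in which the commutator acts on the ``first'' excitation factor $b_k(K_k e_p)$ with the one in which it acts on the ``second'' factor $b_{-k,-p}$. This parallels the commutator computation used to prove Proposition 2.3 ([CHN-23, Proposition A.2]).

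\textbf{Key steps.} First I would compute the action on $b_k(K_k e_p) = \sum_{r\in L_k}\langle e_r, K_k e_p\rangle b_{k,r}$. Applying Lemma 2.2 termwise and pooling the four indicator functions via the identity $\chi_{L_k'}(\xi) = \chi_{L_k}(k+\xi)$, I obtain
\[
\tfrac12[b_k(K_k e_p),\,\rho_\xi+\rho_{-\xi}] = \tfrac12\sum_{\zeta\in\cD_{k,\xi}}\chi_{L_k}(\zeta)\,\langle e_\zeta,K_k e_p\rangle\,b_{k,\zeta} = \tfrac12\sum_{\zeta\in\cD_{k,\xi}}\chi_{L_k}(\zeta)\,b_k(P_{\zeta;k}K_k e_p),
\]
which is the $P_{\zeta;k}K_k$-half of the anticommutator. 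Secondly, the action on $b_{-k,-p}$ is governed by Lemma 2.2 applied on the $-k$ side; using $\cD_{-k,\xi}=-\cD_{k,\xi}$ and $\chi_{L_{-k}}(-\zeta)=\chi_{L_k}(\zeta)$ (from $L_{-k}=-L_k$), the selection rule translates back to $\tfrac12\sum_{\zeta}\chi_{L_k}(\zeta)\delta_{p,\zeta}\,b_{-k,-p}$, so that after summing over $p\in L_k$ the $\delta_{p,\zeta}$ collapses to $p=\zeta$ and produces $b_k(K_k e_\zeta)b_{-k,-\zeta}$, recognizable as $\sum_p b_k(K_k P_{\zeta;k}e_p)b_{-k,-p}$. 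Adding the two Leibniz contributions, the identity
\[
\sum_{p\in L_k}b_k(\Theta_{\xi;k}e_p)b_{-k,-p} = \tfrac12\sum_{\zeta\in\cD_{k,\xi}}\chi_{L_k}(\zeta)\Bigl(\sum_{p\in L_k}\langle e_\zeta,K_k e_p\rangle\,b_{k,\zeta}b_{-k,-p} + b_k(K_k e_\zeta)b_{-k,-\zeta}\Bigr)
\]
(obtained by expanding $\Theta_{\xi;k}e_p$) matches precisely, assembling the first half of $\tfrac12 Q_2^k(\Theta_{\xi;k})$. The hermitian-conjugate half $\sum_p b_{-k,-p}^*b_k^*(\Theta_{\xi;k}e_p)$ arises identically from the $-\mathrm{h.c.}$ term in $\cK$; this is consistent with the fact that $[\cK,\rho_\xi+\rho_{-\xi}]$ is self-adjoint, as is $Q_2^k(\Theta_{\xi;k})$.

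\textbf{Main obstacle and symmetry.} The main difficulty is purely organizational: tracking the several factors of $\tfrac12$ (one from $\cK$, one from Lemma 2.2, one from the anticommutator inside $\Theta_{\xi;k}$) and ensuring that the two Leibniz contributions combine into the \emph{symmetrized} object $\{P_{\zeta;k},K_k\}$ rather than a one-sided product, which is precisely the reason for the explicit anticommutator in the definition of $\Theta_{\xi;k}$. The symmetry \eqref{unitary-tileKqk} then follows immediately by conjugating the defining sum of $\Theta_{\xi;k}$ by $I_k$: the relation \eqref{unitary-Kk} sends $K_k\mapsto K_{-k}$, a direct basis computation gives $I_k P_{\zeta;k}I_k^* = P_{-\zeta;-k}$, and the reindexing $\zeta\mapsto -\zeta$ simultaneously exchanges $\cD_{k,\xi}$ with $\cD_{-k,\xi}$ and $\chi_{L_k}(\zeta)$ with $\chi_{L_{-k}}(-\zeta)$, reproducing the defining sum of $\Theta_{\xi;-k}$.
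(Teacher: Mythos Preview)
Your proposal is correct and follows the same overall strategy as the paper: expand $\cK$, reduce everything to the single-mode commutator of Lemma~\ref{lem:single-mode}, and then reorganize into $Q_2^k$-form. The only difference is bookkeeping. The paper first collects the commutator into terms $I_\xi,II_\xi$ (distinguishing whether the selected mode is $\pm\xi$ or $k\pm\xi$), then invokes the bilinear identity of Lemma~\ref{lem:bilinear} together with an explicit $k\mapsto-k$ substitution to symmetrize each one-sided product $P_{\zeta;k}K_k$ into the anticommutator $\{P_{\zeta;k},K_k\}$. Your route via the Leibniz rule is more direct: the two contributions (commutator hitting $b_k(K_ke_p)$ versus $b_{-k,-p}$) already furnish the two halves $P_{\zeta;k}K_k$ and $K_kP_{\zeta;k}$ of the anticommutator, so neither Lemma~\ref{lem:bilinear} nor the $k\mapsto-k$ trick is needed. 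Your argument for the intertwining relation~\eqref{unitary-tileKqk} is the same as the paper's, phrased at the level of individual projections rather than their sums $P_{\xi;k}',P_{\xi;k}''$.
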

\begin{proof}
	Using Lemma \ref{lem:single-mode}, we compute
	\begin{align*}
		&\frac{1}{2}[\cK,\rho_{\xi}+\rho_{-\xi}]=\frac{1}{4}\sum_{k\in\Z_{*}^{3}}\sum_{p,q\in L_{k}}\langle e_{p},K_{k}e_{q}\rangle\Big([b_{k,p}b_{-k,-q},\rho_{\xi}+\rho_{-\xi}]-[b_{-k,-q}^{*}b_{k,p}^{*},\rho_{\xi}+\rho_{-\xi}]\Big)\nonumber\\
		&=\Re\Big(\sum_{k\in\Z_{*}^{3}}\Big(\chi_{L_{k}}(\xi)b_{-k}^{*}(K_{-k}e_{-\xi})b_{k,q}^{*}+\chi_{L_{k}'}(\xi)b_{k}^{*}(K_{k}e_{k+\xi})b_{-k,-(k+\xi)}^{*}+(\xi\rightarrow -\xi)\Big)\nonumber\\
		&=\Re\sum_{k\in\Z_{*}^{3}}\Big(\chi_{L_{k}}(\xi)b_{-k}^{*}(I_{k}K_{k}e_{\xi})b_{k}^{*}(e_{\xi})+\chi_{L_{k}'}(\xi)b_{k}^{*}(K_{k}e_{k+\xi})b_{-k}^{*}(I_{k}e_{k+\xi})\Big)+(\xi\rightarrow -\xi)\\
		&=:I_{\xi}+II_{\xi}+(\xi\rightarrow -\xi).
	\end{align*}

	Next, we rewrite the above expression in a more convenient form using the following elementary lemma (which can be proven by orthonormal basis expansion):
	\begin{lemma}[\cite{CHN-23}, Lemma A.1]\label{lem:bilinear}
		Let $(V,q)$ be a $n$-dimenisonal Hilbert space and let $q:V\times V\rightarrow W$ be a sesquilinear form into a vector space $W$.  Let $(e_{i})_{i=1}^{n}$ be an orthonormal basis for $V$.  Then, for any linear operators $S,T:V\rightarrow V$, we have
		\begin{align}\label{bilinear}
			\sum_{i=1}^{n}q(Se_{i},Te_{i})&=\sum_{i=1}^{n}q(ST^{*}e_{i},e_{i}).
		\end{align}
	\end{lemma}
	By this lemma and using the symmetry $k\rightarrow -k$, we see that
	\begin{align*}
		I_{\xi}&=\Re\Big(\sum_{k\in\Z_{*}^{3}}\sum_{p\in L_{k}}b_{-k}^{*}(I_{k}K_{k}P_{\xi;k}e_{p})b_{k}^{*}(e_{p})\Big)=\Re\Big(\sum_{k\in\Z_{*}^{3}}\sum_{p\in L_{k}}b_{-k}^{*}(e_{-p})b_{k}^{*}(P_{\xi;k}K_{k}e_{p})\Big)\\
		&=\frac{1}{2}\sum_{k\in\Z_{*}^{3}}\Re\Big(\sum_{p\in L_{k}}b_{-k}^{*}(e_{-p})b_{k}^{*}(P_{\xi;k}K_{k}e_{p})+\sum_{p\in L_{k}}b_{k}^{*}(e_{p})b_{-k}^{*}(I_{k}P_{-\xi;k}K_{k}e_{p})\Big)\\
		&=\frac{1}{2}\sum_{k\in\Z_{*}^{3}}\Re\Big(\sum_{p\in L_{k}}b_{-k}^{*}(e_{-p})b_{k}^{*}(P_{\xi;k}K_{k}e_{p})+\sum_{p\in L_{k}}b_{k}^{*}(K_{k}P_{-\xi;k}e_{p})b_{-k}^{*}(e_{-p})\Big),
	\end{align*}
	where we have used the elementary identities
	\begin{align*}
		P_{-\xi;k}&=P_{\xi;-k},\quad I_{k}P_{\xi;k}=\chi_{L_{k}}(\xi)\ket{e_{-\xi}}\bra{e_{\xi}}=P_{-\xi;-k}I_{k}.
	\end{align*}
	Summing over $\xi$ and $-\xi$ yields
	\begin{align*}
		I_{\xi}+I_{-\xi}&=\frac{1}{2}\sum_{k\in\Z_{*}^{3}}\Re\Big(\sum_{p\in L_{k}}b_{-k}^{*}(e_{-p})b_{k}^{*}(\{P_{\xi;k}+P_{-\xi;k},K_{k}\}e_{p})\Big)\\
		&=\frac{1}{2}\sum_{k\in\Z_{*}^{3}}Q_{2}^{k}(\{P_{\xi;k}+P_{-\xi;k},K_{k}\}).
	\end{align*}
	The same argument above yields for term $II_{\xi}$ that
	\DETAILS{Similarly, for the term $II_{q}$, we have
		\begin{align*}
			II_{q}&=\Re\Big(\sum_{k\in\Z_{*}^{3}}\sum_{p\in L_{k}}b_{k}^{*}(K_{k}P_{k-\xi;k}e_{p})b_{-k}^{*}(e_{-p})\Big)\\
			&=\frac{1}{2}\sum_{k\in\Z_{*}^{3}}\Re\Big(\sum_{p\in L_{k}}b_{k}^{*}(K_{k}P_{k-\xi;k}e_{p})b_{-k}^{*}(e_{-p})+\sum_{p\in L_{k}}b_{-k}^{*}(e_{-p})b_{k}^{*}(P_{k+\xi;k}K_{k}e_{p})\Big).
		\end{align*}
		Again, summing over $\xi$ and $-\xi$ yields}
	\begin{align*}
		II_{\xi}+II_{-\xi}\DETAILS{&=\frac{1}{2}\sum_{k\in\Z_{*}^{3}}\Re\Big(\sum_{p\in L_{k}}b_{-k}^{*}(e_{-p})b_{k}^{*}(\{P_{k-\xi;k}+P_{k+\xi;k},K_{k}\}e_{p})\Big)\\}
		&=\frac{1}{2}\sum_{k\in\Z_{*}^{3}}Q_{2}^{k}(\{P_{k-\xi;k}+P_{k+\xi;k},K_{k}\}),
	\end{align*}
	thus proving \eqref{1st-com}.
	
	To finish the proof, we denote $P_{\xi;k}':=P_{\xi;k}+P_{-\xi;k}$ and $P_{\xi;k}'':=P_{k-\xi;k}+P_{k+\xi;k}$ and then compute
	\begin{align*}
		I_{k}P_{\xi;k}'&=I_{k}(P_{\xi;k}+P_{-\xi;k})=(P_{-\xi;-k}+P_{\xi,-k})I_{k}=P_{\xi,-k}'I_{k},
	\end{align*}
	and
	\begin{align*}
		I_{k}P_{\xi;k}''&=I_{k}(P_{k-\xi;k}+P_{k+\xi;k})=(P_{-k+\xi,-k}+P_{-k-\xi,-k})I_{k}=P_{\xi,-k}''I_{k},
	\end{align*}
	which implies the relation \eqref{unitary-tileKqk}.  This completes the proof.
\end{proof}
\begin{remark}\label{rem:symmetric-reason}
	If our trial state is not invariant under reflection in momentum space, then we cannot average over $\pm \xi$ in \eqref{ref-aveg}, and the above computation would output 
	\begin{align*}
		\begin{cases}
			Q_{2}^{k}(P_{\xi;k}K_{k}+K_{k}P_{-\xi;k})\quad&\xi\in B_{F}^{c},\\
			Q_{2}^{k}(P_{k+\xi;k}K_{k}+K_{k}P_{k-\xi;k})\quad&\xi \in B_{F},
		\end{cases}
	\end{align*}
	for each $k\in\Z_{*}^{3}$.  The operators $P_{\xi;k}K_{k}+K_{k}P_{-\xi;k}$ and $P_{k+\xi;k}K_{k}+K_{k}P_{k-\xi;k}$ are not symmetric and thus prevent us to apply techniques from \cite{CHN-23}.
\end{remark}

Recall that $\Phi_{t}=e^{-t\cK}\Psi_{N}$ for $0\leq t\leq 1$ so that $\Psi_{FS}=\Phi_{0}$.  By the fundamental theorem of calculus and by using the fact that $\langle\Psi_{FS},\rho_{\xi}\Psi_{FS}\rangle=0$ for any $\xi\in\Z^{3}$, we obtain
\begin{align}\label{FTC}
	n_{t}(\xi)=\langle \Phi_{t},\rho_{\xi}\Phi_{t}\rangle&=\langle\Psi_{FS},\rho_{\xi}\Psi_{FS}\rangle+\int_{0}^{t}\langle \Phi_{\tau},[\cK,\rho_{\xi}]\Phi_{\tau}\rangle d\tau\nonumber\\
	&=\frac{1}{2}\sum_{k\in\Z_{*}^{3}}\int_{0}^{t}\langle \Phi_{\tau},Q_{2}^{k}(\Theta_{\xi;k})\Phi_{\tau}\rangle d\tau.
\end{align}
Using Lemma \ref{lem:exp-Bog-Q12} and the fact that $\langle\Psi_{FS}, Q_{1}^{k}(T)\Psi_{FS}\rangle=\langle\Psi_{FS}, Q_{2}^{k}(T)\Psi_{FS}\rangle=0$ for any $k\in\Z_{*}^{3}$ and operators $T$ on $\ell^{2}(L_{k})$, we obtain from \eqref{FTC} that
\begin{align}\label{trial-momentum}
	n_{t}(\xi)&=n_{{\rm b},t}(\xi)+n_{{\rm ex},t}(\xi)+\cE_{1,t}(\xi)+\cE_{2,t}(\xi)+\cE_{3,t}(\xi),
\end{align}
where 
\begin{align*}
	n_{{\rm b},t}(\xi)&:=\frac{1}{2}\sum_{k\in\Z_{*}^{3}}\int_{0}^{t}\Tr\big(\Theta_{\xi;k}^{2}(\tau)\big)d\tau,\\
	n_{{\rm ex},t}(\xi)&:=\sum_{k\in\Z_{*}^{3}}\Re\int_{0}^{t}\int_{0}^{\tau}\big\langle\Psi_{FS},\cE_{2,k}(\Theta_{\xi;k}^{1}(\tau_{1}))\Psi_{FS}\big\rangle d\tau_{1} d\tau,\\
	\cE_{1,t}(\xi)&:=\sum_{k\in\Z_{*}^{3}}\Re\int_{0}^{t}\int_{0}^{\tau}\Big\langle e^{-(\tau-\tau_{1})\cK}\Psi_{FS},\widetilde{\cE}_{2,k}(\Theta_{\xi;k}^{1}(\tau_{1}))e^{-(\tau-\tau_{1})\cK}\Psi_{FS}\big\rangle d\tau_{1}d\tau,\\
	\cE_{2,t}(\xi)&:=\sum_{k\in\Z_{*}^{3}}\Re\int_{0}^{t}\int_{0}^{\tau}\Big\langle e^{-(\tau-\tau_{1})\cK}\Psi_{FS},\cE_{1,k}(\Theta_{\xi;k}^{2}(\tau))e^{-(\tau-\tau_{1})\cK}\Psi_{FS}\big\rangle d\tau_{1} d\tau,\\
	\cE_{3,t}(\xi)&:=\frac{1}{2}\sum_{k\in\Z_{*}^{3}}\int_{0}^{1}\int_{0}^{t}\Big\langle e^{-(\tau-\tau_{1})\cK}\Psi_{FS},\varepsilon_{k}(\{K_{k},\Theta_{\xi;k}^{1}(\tau)\})e^{-(\tau-\tau_{1})\cK}\Psi_{FS}\big\rangle d\tau_{1} d\tau.
\end{align*}
with $\widetilde{\cE}_{2,k}(T_{k}):=\cE_{2,k}(T_{k})-\langle\Psi_{FS},\cE_{2,k}(T_{k})\Psi_{FS}\rangle$.  

\subsection{Momentum distribution from bosonization}\label{sec:boson-contrib}
This subsection is devoted to prove the following proposition (c.f. \cite[Theorem 1.1]{BeneLill-25}):
\begin{prop}\label{prop:bosonized-momentum}
	It holds that
	\begin{align}\label{bosonized-momentum}
		n_{{\rm b},t}(\xi)=\frac{1}{2}\sum_{k\in\Z_{*}^{3}}\sum_{\zeta\in\cD_{k,\xi}}\langle e_{\zeta},(\cosh(-2tK_{k})-1)e_{\zeta}\rangle.
	\end{align} 
	Moreover, $0\leq n_{{\rm b},t}(\xi)\leq C_{V}k_{F}^{-1}m(\xi)$ uniformly in $0\leq t\leq 1$ and $\xi\in\Z^{3}$.  
\end{prop}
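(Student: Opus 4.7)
The plan is to reduce $\Tr(\Theta_{\xi;k}^2(\tau))$ to an explicit scalar expression via spectral calculus on $K_k$, integrate in $\tau$, and then bound the resulting quantity using Lemmas \ref{lem:matrix-expK} and \ref{lem:estimate-lambkp}.

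For the identity, I substitute the decomposition \eqref{T2-decomp} into the definition $n_{{\rm b},t}(\xi) = \tfrac{1}{2}\sum_{k} \int_0^t \Tr(\Theta_{\xi;k}^2(\tau))\,d\tau$ and apply cyclicity of the trace together with the fact that $C_k(\tau)$, $S_k(\tau)$ and $K_k$ are mutually commuting functions of $K_k$. This collapses the three terms to
\[
\Tr(\Theta_{\xi;k}^2(\tau)) = -2\Tr\!\big(\Theta_{\xi;k}\,S_k(\tau)(1+C_k(\tau))\big) = \Tr\!\big(\Theta_{\xi;k}\sinh(2\tau K_k)\big),
\]
where the last equality uses $S_k(\tau)(1+C_k(\tau)) = \sinh(-\tau K_k)\cosh(-\tau K_k) = -\tfrac{1}{2}\sinh(2\tau K_k)$. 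Substituting $\Theta_{\xi;k} = \tfrac{1}{2}\sum_{\zeta\in\cD_{k,\xi}}\chi_{L_k}(\zeta)\{P_{\zeta;k},K_k\}$ and again invoking cyclicity together with $[K_k,\sinh(2\tau K_k)]=0$, each anticommutator reduces to $2\langle e_\zeta, K_k\sinh(2\tau K_k) e_\zeta\rangle$. Integrating via $\int_0^t K_k\sinh(2\tau K_k)\,d\tau = \tfrac{1}{2}(\cosh(-2tK_k)-1)$ and invoking the intertwining relation \eqref{unitary-Kk} to pair $k$ with $-k$ (which doubles the effective contribution of each $\zeta$ through its image $-\zeta$ under $I_k$) then produces \eqref{bosonized-momentum}.

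Nonnegativity is immediate: $K_k \leq 0$ by spectral theory, so $\cosh(-2tK_k)-1$ is a positive operator and every diagonal matrix element in the formula is nonnegative. For the upper bound, spectrally decomposing $K_k$ and using $\cosh(x)-1 \leq \tfrac{1}{2}x^2 e^{|x|}$ together with $\|K_k\| \leq \|K_k\|_{\rm HS} \leq C\hat{V}_k$ from Lemma \ref{lem:matrix-expK} gives
\[
\langle e_\zeta,(\cosh(-2tK_k)-1)e_\zeta\rangle \leq C_V\,\langle e_\zeta, K_k^2 e_\zeta\rangle = C_V \sum_{p\in L_k}|\langle e_\zeta, K_k e_p\rangle|^2.
\]
Applying the matrix element bound \eqref{matrix-K} yields $|\langle e_\zeta, K_k e_p\rangle| \leq C k_F^{-1}\hat{V}_k/(\lambda_{k,\zeta}+\lambda_{k,p})$, and combining $(\lambda_{k,\zeta}+\lambda_{k,p})^{-2} \leq \lambda_{k,\zeta}^{-1}(\lambda_{k,\zeta}+\lambda_{k,p})^{-1}$ with Lemma \ref{lem:estimate-lambkp} (giving $\sum_p\lambda_{k,p}^{-1}\leq Ck_F$) produces $\langle e_\zeta, K_k^2 e_\zeta\rangle \leq C k_F^{-1}\hat{V}_k^2\lambda_{k,\zeta}^{-1}$.

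The crucial geometric observation is that for every $\zeta\in\cD_{k,\xi}$, either $\zeta$ or $\zeta-k$ equals $\pm\xi$, so the gap estimate \eqref{gap} together with $m(-\xi)=m(\xi)$ delivers the uniform bound $\lambda_{k,\zeta}^{-1}\leq 2m(\xi)$ for all relevant $\zeta$. Summing over $k$ via $\sum_k\hat{V}_k^2 = \|\hat{V}\|_{\ell^2}^2 <\infty$ then yields $n_{{\rm b},t}(\xi)\leq C_V k_F^{-1}m(\xi)$. The main obstacle is the combinatorial bookkeeping: tracking the numerical prefactor correctly through the cyclicity reductions and through the $k\leftrightarrow -k$ symmetry mediated by $I_k$, which effectively doubles the $\zeta\leftrightarrow -\zeta$ contributions; everything else is routine operator-theoretic manipulation.
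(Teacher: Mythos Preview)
Your argument is correct and follows essentially the same path as the paper: cyclicity of the trace collapses $\Tr(\Theta_{\xi;k}^2(\tau))$ to diagonal matrix elements of $K_k\sinh(2\tau K_k)$, integration yields $\cosh(-2tK_k)-1$, and the upper bound then follows from elementwise control of $K_k$ (your use of $\cosh(x)-1\le\tfrac12x^2e^{|x|}$ together with Lemma~\ref{lem:matrix-expK} is a self-contained substitute for the paper's appeal to \cite[Corollary~3.3]{CHN-23}). One caveat on the prefactor: the $k\leftrightarrow-k$ ``doubling'' you invoke does not actually contribute an extra factor of two, since both $(k,\zeta)$ and $(-k,-\zeta)$ already appear independently in the double sum over $\Z_*^3\times\cD_{k,\xi}$ --- but the paper's own proof invokes the same symmetry at the same step with equally informal constant tracking, so this is a shared bookkeeping wrinkle rather than a gap in your strategy.
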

\begin{proof}
	Using the cyclicity of trace, we have
	\begin{align*}
		\Tr\big(\Theta_{\xi;k}^{2}(t)\big)&=\sum_{\zeta\in\cD_{k,\xi}}\Tr\big(S_{k}(t)(-K_{k})P_{\zeta;k}\big)=\sum_{\zeta\in\cD_{k,\xi}}\chi_{L_{k}}(\zeta)\langle e_{\zeta},(-K_{k})S_{k}(t)e_{\zeta}\rangle,
	\end{align*}
	and, by integrating over $t$ on $[0,1]$ and the symmetry property of $K_{k}$ under $k\mapsto -k$, we obtain
	\begin{align*}
		\sum_{k\in\Z_{*}^{3}}\int_{0}^{1}\Tr(\Theta_{\xi;k}^{2}(t))dt&=\sum_{k\in\Z_{*}^{3}}\sum_{\zeta\in\cD_{k,\xi}}\langle e_{\zeta},C_{k}(2t)e_{\zeta}\rangle=:n_{{\rm b},t}(\xi),
	\end{align*}
	where we recall $S_{k}(t)$ and $C_{k}(t)$ from \eqref{CSkt}.  Note that $C_{k}(t)$ is an increasing, positive operator-valued function on $0\leq t\leq 1$ so that, by spectral theory, \cite[Corollary 3.3]{CHN-23} and the lower bound \eqref{gap},
	\begin{align*}
		0\leq n_{{\rm b},t}(\xi)&\leq \sum_{k\in\Z_{*}^{3}}\sum_{\zeta\in\cD_{k,\xi}}\langle e_{\zeta},\big(\cosh(-2K_{k})-1\big)e_{\zeta}\rangle\nonumber\\
		&\leq C k_{F}^{-1}\sum_{k\in\Z_{*}^{3}}\hat{V}_{k}^{2}\Big(\frac{\chi_{L_{k}}(\xi)}{\lambda_{k,\xi}}+\frac{\chi_{L_{k}'}(\xi)}{\lambda_{k,k+\xi}}\Big)\leq C_{V}k_{F}^{-1}m(\xi).
	\end{align*}
	This completes the proof.
\end{proof}

%
%
%

\subsection{Momentum distribution from exchange correlation}\label{sec:ex-correct}
Now, we extract information from exchange correlation by proving the following proposition:
\begin{prop}\label{prop:ex-corr-momentum}
	For each $\xi\in\Z^{3}$ and $0\leq t\leq 1$, it holds that
	\begin{align}
		\Big|n_{{\rm ex},t}(\xi)-t^{2}n_{\rm ex}(\xi)\Big|\leq C_{V}(k_{F}^{-3/2}m(\xi)+k_{F}^{-2}m(\xi)),
	\end{align}
	where
	\begin{align}
		n_{{\rm ex}}(\xi)&=-\frac{k_{F}^{-2}}{8(2\pi)^{6}}\sum_{k\in\Z_{*}^{3}}\sum_{\zeta\in\cD_{k,\xi}}\sum_{p\in L_{k}}\frac{\hat{V}_{k}\hat{V}_{p+\zeta-k}}{(\lambda_{k,p}+\lambda_{k,\zeta})^{2}}.
	\end{align}
	Moreover, for each $\xi\in\Z^{3}$, $|n_{\rm ex}(\xi)|\leq C_{\delta,V}k_{F}^{-1+\delta}m(\xi)$ for any $\delta>0$.
\end{prop}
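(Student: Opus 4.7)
The plan is to compute $\langle\Psi_{FS},\cE_{2,k}(\Theta_{\xi;k}^{1}(\tau_{1}))\Psi_{FS}\rangle$ in closed form, identify its leading piece with $n_{\rm ex}(\xi)$, and control the remainder.  Two vanishing identities on the Fermi state drive the first reduction.  Since $p\in L_{k}\subset B_{F}^{c}$ forces $a_{p}\Psi_{FS}=0$, one has $b_{k}(\varphi)\Psi_{FS}=0$ for every $\varphi\in\ell^{2}(L_{k})$.  A quick inspection of each summand of \eqref{exchange-correction} shows moreover that $\varepsilon_{-k,-l}(e_{-p};e_{-q})\Psi_{FS}=0$: the first term contains the creation operator $a_{k-p}^{*}$ on the filled momentum $k-p\in B_{F}$, and the second contains the annihilator $a_{-p}$ on the empty momentum $-p\in B_{F}^{c}$.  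Together these collapse the nested anticommutator in the definition of $\cE_{2,k}$ to
\begin{equation*}
\cE_{2,k}(A_{k})\Psi_{FS}=\frac{1}{2}\sum_{l\in\Z_{*}^{3}}\sum_{p\in L_{k}}\sum_{q\in L_{l}}b_{k}(A_{k}e_{p})\,\varepsilon_{-k,-l}(e_{-p};e_{-q})\,b_{l}^{*}(K_{l}e_{q})\Psi_{FS},
\end{equation*}
reducing the relevant inner product to a six-point function on $\Psi_{FS}$.

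Next, I would replace $\Theta_{\xi;k}^{1}(\tau_{1})$ by $\Theta_{\xi;k}$ using \eqref{T1-decomp} and evaluate the resulting expectation via the CAR/Wick rule on $\Psi_{FS}$.  Commuting $b_{k}(\Theta_{\xi;k}e_{p})$ past $\varepsilon_{-k,-l}(e_{-p};e_{-q})b_{l}^{*}(K_{l}e_{q})$, the only surviving contractions are those pairing the hole index of $b_{l}^{*}$ with the particle index of $b_{k}$ and matching the remaining momentum shift against one of the two Kronecker deltas inside $\varepsilon_{-k,-l}$.  These constraints fix $l=p+\zeta-k$ and turn the matrix element of $K_{l}$ into the factor $\hat V_{p+\zeta-k}$ via Lemma \ref{lem:matrix-expK}, while the $\Theta_{\xi;k}$-factor contributes $\hat V_{k}/(\lambda_{k,p}+\lambda_{k,\zeta})$; summing over $\zeta\in\cD_{k,\xi}$ and $p\in L_{k}$ then reproduces the fixed-$k$ piece of $n_{\rm ex}(\xi)$.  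The double time integral $\int_{0}^{t}\int_{0}^{\tau}d\tau_{1}\,d\tau=t^{2}/2$ finally produces the prefactor $t^{2}$ in the statement.

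The errors are of two types.  First, the difference $\Theta_{\xi;k}^{1}(\tau_{1})-\Theta_{\xi;k}$ decomposed via \eqref{T1-decomp} carries at least one extra $C_{k}$ or $S_{k}$, and hence an additional $K_{k}$; by Proposition \ref{prop:prod-Kk-esti} each such factor contributes an extra $k_{F}^{-1/2}\hat V_{k}$ in an averaged sense, yielding the $k_{F}^{-3/2}m(\xi)$ piece after summing against $\chi_{L_{k}}(\zeta)/\lambda_{k,\zeta}$ with Lemma \ref{lem:gap-sum}.  Second, the subleading correction $(1+2\langle v_{k},h_{k}^{-1}v_{k}\rangle)^{-1}$ in Lemma \ref{lem:matrix-expK}, together with the non-resonant Wick contractions (those not producing $n_{\rm ex}(\xi)$ above), are both bounded by Lemma \ref{lem:estimate-lambkp} and give the $k_{F}^{-2}m(\xi)$ piece.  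The closing estimate $|n_{\rm ex}(\xi)|\leq C_{\delta,V}k_{F}^{-1+\delta}m(\xi)$ then follows by Cauchy--Schwarz in $p$ using $\|\hat V\|_{\ell^{2}}<\infty$, Lemma \ref{lem:estimate-lambkp} for the remaining $p$-sum, and the second estimate in Lemma \ref{lem:gap-sum} for the sum over $k$ with $\zeta\in L_{k}$.  The main obstacle will be the second step: careful bookkeeping of all Wick contractions in order to cleanly isolate the single resonant contraction producing $n_{\rm ex}(\xi)$ from the many non-resonant ones, which must then be absorbed into the error terms.
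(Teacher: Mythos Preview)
Your strategy is correct and close to the paper's; the differences are organizational. First, the Wick bookkeeping you flag as the main obstacle is simpler than you fear: the paper derives once and for all (eqs.~\eqref{matrix-Ek2}--\eqref{matrix-Ek2-3}) that for any symmetric $T_k$,
\[
\langle\Psi_{FS},\cE_{2,k}(T_k)\Psi_{FS}\rangle=\sum_{l\in\Z_{*}^{3}}\sum_{p,q\in L_k\cap L_l}\delta_{k+l,p+q}\,\langle T_k e_p,e_q\rangle\,\langle e_q,(-K_l)e_p\rangle.
\]
The two Kronecker deltas in $\varepsilon_{-k,-l}$ produce exactly two contractions, which are shown to coincide; there is no zoo of non-resonant terms to absorb. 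Second, instead of freezing $\Theta_{\xi;k}^{1}(\tau_1)$ at $\tau_1=0$ via \eqref{T1-decomp}, the paper observes the derivative identity $\Theta_{\xi;k}^{1}(s)=\tfrac12\sum_{\zeta\in\cD_{k,\xi}}\tfrac{d}{ds}P_{\zeta;k}^{2}(s)$, which by linearity of $\cE_{2,k}$ collapses the double time integral to
\[
n_{{\rm ex},t}(\xi)=\sum_{(k,\zeta)\in\cC_\xi}\big\langle\Psi_{FS},\cE_{2,k}(G_{\zeta;k}(t))\Psi_{FS}\big\rangle,\qquad G_{\zeta;k}(t)=\tfrac12\int_0^t P_{\zeta;k}^{2}(\tau)\,d\tau.
\]
A dedicated Lemma~\ref{lem:approx-Gzeta} then gives $\langle e_p,G_{\zeta;k}(t)e_q\rangle\approx -\tfrac{t^2}{2}\langle e_p,\{P_{\zeta;k},P_{v_k}\}e_q\rangle/(\lambda_{k,p}+\lambda_{k,q})$ in one stroke, bundling your first error (the $\Theta^1\to\Theta$ replacement) with the $K_k$-approximation inside $\Theta$. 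The remaining $(-K_l)$ factor is replaced by $P_{v_l}/(\lambda_{l,p}+\lambda_{l,q})$, and the identity $\lambda_{k,p}+\lambda_{k,q}=\lambda_{l,p}+\lambda_{l,q}$ on the constraint $k+l=p+q$ recovers $t^2 n_{\rm ex}(\xi)$ exactly. Both error pieces are bounded by $C_V k_F^{-3/2}m(\xi)$ via the Cauchy--Schwarz/Lemma~\ref{lem:estimate-lambkp} argument you describe (your attribution of $k_F^{-2}m(\xi)$ to the second error is too optimistic but harmless). Your route---approximate first, then integrate---also works since all the matrix-element bounds are uniform in $\tau$; it just splits the first error into two pieces that the paper handles together.
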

To prove Proposition \ref{prop:ex-corr-momentum}, we need to first establish some technical results.  Recall the definition, for any symmetric operator $T_{k}$ on $\ell^{2}(L_{k})$,
\begin{align*}
	\cE_{2,k}(T_{k})&=\frac{1}{2}\sum_{l\in\Z_{*}^{3}}\sum_{p\in L_{k}}\sum_{q\in L_{l}}\{b_{k}(T_{k}e_{p}),\{\varepsilon_{-k,-l}(e_{-p};e_{-q}),b_{l}^{*}(K_{l}e_{q})\}\}
\end{align*}
Since $b_{k}(T_{k})\Psi_{FS}=\varepsilon_{-k,-l}(e_{-p};e_{-q})\Psi_{FS}=0$ for each $k,l\in\Z_{*}^{3}$ and $p\in L_{k},q\in L_{l}$, for each $0\leq \tau\leq 1$, we have
\begin{align}\label{matrix-Ek2}
	-2\big\langle\Psi_{FS},&\cE_{2,k}(T_{k})\Psi_{FS}\big\rangle=-\sum_{l\in\Z_{*}^{3}}\sum_{p\in L_{k}}\sum_{q\in L_{l}}\big\langle\Psi_{FS}, b_{k}(T_{k}e_{p})\varepsilon_{-k,-l}(e_{-p};e_{-q})b_{l}^{*}(K_{l}e_{q})\Psi_{FS}\big\rangle\nonumber\\
	&=\sum_{l\in \Z_{*}^{3}}\sum_{p\in L_{k}\cap L_{l}}\big\langle\Psi_{FS}, b_{k}(T_{k}e_{p})\tilde{a}_{-p+l}^{*}\tilde{a}_{-p+k}b_{l}^{*}(K_{l}e_{p})\Psi_{FS}\big\rangle\\
	&\quad\quad+\sum_{l\in\Z_{*}^{3}}\sum_{p\in L_{k}'\cap L_{l}'}\big\langle\Psi_{FS}, b_{k}(T_{k}e_{p+k})\tilde{a}_{-p-l}^{*}\tilde{a}_{-p-k}b_{l}^{*}(K_{l}e_{p+l})\Psi_{FS}\big\rangle,\nonumber
\end{align}
where we recall $\tilde{a}_{p}^{*}$ from \eqref{particle-hole}.  It follows that $\tilde{a}_{p}\Psi_{FS}=0$ for each $p\in\Z^{3}$ under our assumption for $\Psi_{FS}$.  Hence, in evaluating $\langle \Psi_{FS},A\Psi_{FS}\rangle$, it suffices to rewrite $A$ in terms of $\tilde{a}_{p}^{\#}$'s, put it into normal order and drop all the terms with $\tilde{a}^{*}$ on the most left and $\tilde{a}$ on the most right.  

For the first term in \eqref{matrix-Ek2}, since $p\in L_{l}$, we obtain
\begin{align}\label{matrix-Ek2-1}
	&\sum_{l\in\Z_{*}^{3}}\sum_{p\in L_{k}\cap L_{l}}\big\langle\Psi_{FS}, b_{k}(T_{k}e_{p})\tilde{a}_{-p+l}^{*}\tilde{a}_{-p+k}b_{l}^{*}(K_{l}e_{p})\Psi_{FS}\big\rangle\nonumber\\
	&=\sum_{l\in\Z_{*}^{3}}\sum_{p\in L_{k}\cap L_{l}}\big\langle\Psi_{FS},  \big[b_{k}(T_{k}e_{p}),\tilde{a}_{-p+l}^{*}\big]\big[b_{l}(K_{l}e_{p}),\tilde{a}_{-p+k}^{*}\big]^{*}\Psi_{FS}\big\rangle\nonumber\\
	&=\sum_{l\in\Z_{*}^{3}}\sum_{p\in L_{k}\cap L_{l}}\sum_{q\in L_{k}}\sum_{q'\in L_{l}}\delta_{-p+l,q-k}\delta_{-p+k,q'-l}\langle T_{k}e_{p},e_{q}\rangle\langle e_{q'},K_{l}e_{p}\rangle\big\langle\Psi_{FS}, \tilde{a}_{q}\tilde{a}_{q'}^{*}\Psi_{FS}\big\rangle\nonumber\\
	&=\sum_{l\in\Z_{*}^{3}}\sum_{p,q\in L_{k}\cap L_{l}}\delta_{k+l,p+q}\langle T_{k}e_{p},e_{q}\rangle\langle e_{q},K_{l}e_{p}\rangle,
\end{align}
where we use the following identity (which is easily followed from CAR; see also \cite[Eq. (4.22)]{CHN-23})
\begin{align}\label{ba-comm}
	[b_{l}(\psi),\tilde{a}_{p}^{*}]&=\begin{cases}
		-\chi_{L_{l}'}(p)\langle\psi,e_{p+l}\rangle\tilde{a}_{p+l},\quad&p\in B_{F},\\
		\chi_{L_{l}}(p)\langle \psi,e_{p}\rangle\tilde{a}_{p-l},\quad&p\in B_{F}^{c}.
	\end{cases}
\end{align}
Similarly, the second term in \eqref{matrix-Ek2-1} gives 
\begin{align}\label{matrix-Ek2-2}
	&\sum_{l\in\Z_{*}^{3}}\sum_{p\in L_{k}'\cap L_{l}'}\Big\langle b_{k}(T_{k}e_{p+k})\tilde{a}_{-p-l}^{*}\tilde{a}_{-p-k}b_{l}^{*}(K_{l}e_{p+l})\Psi_{FS}\big\rangle\nonumber\\
	&=\sum_{l\in\Z_{*}^{3}}\sum_{p,q\in L_{k}'\cap L_{l}'}\delta_{k+l,-(p+q)}\langle T_{k}e_{p+k},e_{q+k}\rangle\langle e_{q+k},K_{l}e_{p+k}\rangle.
\end{align}
By the same argument as in \cite[Subsection 4.3]{CHN-23}, it turns out these two terms are equal so that
\begin{align}\label{matrix-Ek2-3}
	\big\langle\cE_{2,k}(T_{k})\Psi_{FS}\big\rangle&=\sum_{l\in\Z_{*}^{3}}\sum_{p,q\in L_{k}\cap L_{l}}\delta_{k+l,p+q}\langle T_{k}e_{p},e_{q}\rangle\langle e_{q},(-K_{l})e_{p}\rangle.
\end{align}

Next, we note that, since $\cE_{2,k}(T_{k})$ depends linearly on $T_{k}$ and
\begin{align*}
	\Theta_{\xi;k}^{1}(s)&=\frac{1}{2}\sum_{k\in\Z_{*}^{3}}\sum_{\cD_{k,\xi}}\frac{d}{ds}P_{\zeta;k}^{2}(s).
\end{align*}
we can rewrite
\begin{align}\label{momentum2}
	n_{{\rm ex},t}(\xi)&=\sum_{(k,\zeta)\in\cC_{\xi}}\big\langle\Psi_{FS},\cE_{2,k}(G_{\zeta;k}(t))\Psi_{FS}\big\rangle.
\end{align}
where 
\begin{align}\label{Gzeta}
	G_{\zeta;k}(t)&:=\frac{1}{2}\int_{0}^{t}P_{\zeta;k}^{2}(\tau)d\tau.
\end{align}
and we have dropped the ``Re" symbol in $n_{{\rm ex},t}(\xi)$ since the quantity in \eqref{momentum2} is real for each $t$.  To extract contribution from exchange correction due to the second term in \eqref{trial-momentum}, we need to first approximate the integral $G_{\zeta;k}(t)$:
\begin{lemma}\label{lem:approx-Gzeta}
	For each $k\in\Z_{*}^{3}$, $p,q\in L_{k}$, $\zeta\in\cD_{k,\xi}$ and $0\leq t\leq 1$, it holds that 
	\begin{align}\label{approx-Gzeta}
		\Big|\langle e_{p},\Big(G_{\zeta;k}(t)&+\frac{t^{2}}{2}\frac{\{P_{\zeta;k},P_{v_{k}}\}}{\lambda_{k,p}+\lambda_{k,q})}\Big)e_{q}\rangle\Big|\nonumber\\
		&\leq Ck_{F}^{-1}\hat{V}_{k}^{2} \chi_{L_{k}}(\zeta)\Big(\frac{\delta_{p,\zeta}+\delta_{q,\zeta}}{\lambda_{k,p}+\lambda_{k,q}}+\frac{k_{F}^{-1}\hat{V}_{k}}{(\lambda_{k,p}+\lambda_{k,\zeta})(\lambda_{k,\zeta}+\lambda_{k,q})}\Big).
	\end{align}
\end{lemma}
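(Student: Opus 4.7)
The plan is to Taylor-expand $P_{\zeta;k}^{2}(\tau)$ in powers of the (small) symmetric operator $K_{k}$, integrate over $\tau\in[0,t]$, and then substitute the leading rank-one approximation of $K_{k}$ furnished by Lemma~\ref{lem:matrix-expK}. From the decomposition \eqref{T2-decomp} applied with $T_{k}=P_{\zeta;k}$, together with $S_{k}(\tau)=-\tau K_{k}+R_{k}(\tau)$ where $R_{k}(\tau)$ is of order $K_{k}^{3}$, and $C_{k}(\tau)$ of order $K_{k}^{2}$, I split $P_{\zeta;k}^{2}(\tau)$ into a leading piece $\tau\{P_{\zeta;k},K_{k}\}$, a cubic Taylor remainder from $R_{k}(\tau)$ inside $-\{P_{\zeta;k},S_{k}(\tau)\}$, and the two cross-product terms $S_{k}P_{\zeta;k}C_{k}$ and $C_{k}P_{\zeta;k}S_{k}$. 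Integrating in $\tau$ and multiplying by the $\tfrac{1}{2}$ from the definition of $G_{\zeta;k}(t)$ produces a leading term proportional to $t^{2}\{P_{\zeta;k},K_{k}\}$.

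To obtain the subtracted approximation on the left-hand side of \eqref{approx-Gzeta} I invoke Lemma~\ref{lem:matrix-expK}, which decomposes
\begin{equation*}
\langle e_{p},(-K_{k})e_{q}\rangle=\frac{\langle e_{p},v_{k}\rangle\langle v_{k},e_{q}\rangle}{\lambda_{k,p}+\lambda_{k,q}}+E_{k}(p,q),\qquad |E_{k}(p,q)|\leq C\alpha_{k}\,\frac{\langle e_{p},v_{k}\rangle\langle v_{k},e_{q}\rangle}{\lambda_{k,p}+\lambda_{k,q}},
\end{equation*}
with $\alpha_{k}:=\langle v_{k},h_{k}^{-1}v_{k}\rangle\leq C\hat{V}_{k}$ (using Lemma~\ref{lem:estimate-lambkp} to sum $\sum_{p}\lambda_{k,p}^{-1}$). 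The main part of $K_{k}$ yields the claimed $\{P_{\zeta;k},P_{v_{k}}\}/(\lambda_{k,p}+\lambda_{k,q})$ expression. The residue $\{P_{\zeta;k},E_{k}\}$ has matrix elements $\delta_{p,\zeta}E_{k}(\zeta,q)+\delta_{q,\zeta}E_{k}(p,\zeta)$; the delta factor from the rank-one $P_{\zeta;k}$ collapses one propagator so that $\lambda_{k,\zeta}+\lambda_{k,q}=\lambda_{k,p}+\lambda_{k,q}$ on the support of $\delta_{p,\zeta}$, and together with $(v_{k})_{p}(v_{k})_{\zeta}\leq Ck_{F}^{-1}\hat{V}_{k}\chi_{L_{k}}(\zeta)$ this produces exactly the first (delta-type) error term of \eqref{approx-Gzeta}.

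For the two cross-product remainders I apply Lemma~\ref{lem:matrix-expK} twice, one factor at a time. For instance,
\begin{equation*}
|\langle e_{p},S_{k}(\tau)P_{\zeta;k}C_{k}(\tau)e_{q}\rangle|\leq C\tau\alpha_{k}\,\chi_{L_{k}}(\zeta)\frac{\langle e_{p},v_{k}\rangle(v_{k})_{\zeta}^{2}\langle v_{k},e_{q}\rangle}{(\lambda_{k,p}+\lambda_{k,\zeta})(\lambda_{k,\zeta}+\lambda_{k,q})};
\end{equation*}
integrating over $\tau\in[0,t]$ and simplifying with $(v_{k})_{\zeta}^{2}\leq Ck_{F}^{-1}\hat{V}_{k}\chi_{L_{k}}(\zeta)$ and $\alpha_{k}\leq C\hat{V}_{k}$ gives exactly the second error term of \eqref{approx-Gzeta}. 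The cubic Taylor remainder coming from $R_{k}(\tau)$ contributes $\{P_{\zeta;k},K_{k}^{m}\}$-type terms with $m\geq 3$; by Proposition~\ref{prop:prod-Kk-esti}, each extra power of $K_{k}$ brings an additional factor of $\hat{V}_{k}$ and a resolvent-like denominator, so these contributions are strictly smaller than the leading correction and are absorbed into the first error term.

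The main obstacle I foresee is the careful bookkeeping of the propagator denominators: the rank-one structure of $P_{\zeta;k}$ forces $p=\zeta$ or $q=\zeta$, which lets me identify $\lambda_{k,\zeta}+\lambda_{k,q}$ (or $\lambda_{k,p}+\lambda_{k,\zeta}$) with $\lambda_{k,p}+\lambda_{k,q}$ on the support of the respective Kronecker delta, exactly matching the form of the right-hand side of \eqref{approx-Gzeta}. Beyond this, every step reduces to a direct application of Lemma~\ref{lem:matrix-expK} and Proposition~\ref{prop:prod-Kk-esti}.
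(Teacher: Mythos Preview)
Your proof is correct and follows essentially the same strategy as the paper: both decompose $P_{\zeta;k}^{2}(\tau)$ via \eqref{T2-decomp}, bound the cross-terms $S_{k}P_{\zeta;k}C_{k}$ and $C_{k}P_{\zeta;k}S_{k}$ directly from Lemma~\ref{lem:matrix-expK}, and compare the anticommutator $\{P_{\zeta;k},S_{k}(\tau)\}$ to the rank-one expression $\tau\{P_{\zeta;k},P_{v_{k}}\}/(\lambda_{k,p}+\lambda_{k,q})$. The only organizational difference is that the paper does this last comparison in one step, using that the two-sided bound on $\langle e_{p},S_{k}(\tau)e_{q}\rangle$ in Lemma~\ref{lem:matrix-expK} already sandwiches it between $\tau\langle e_{p},v_{k}\rangle\langle v_{k},e_{q}\rangle/(\lambda_{k,p}+\lambda_{k,q})$ and the same quantity divided by $1+2\langle v_{k},h_{k}^{-1}v_{k}\rangle$; the difference is therefore bounded by $\frac{2\alpha_{k}}{1+2\alpha_{k}}$ times the main term, giving the delta-type error in one line. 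You instead pass through the intermediate step $S_{k}(\tau)=-\tau K_{k}+R_{k}(\tau)$ and then separately replace $K_{k}$ by its rank-one approximation, which introduces the extra cubic remainder $R_{k}$ to control. This works (the power series for $R_{k}$ converges and each $K_{k}^{m}$ is handled by Proposition~\ref{prop:prod-Kk-esti}), but the paper's route is slightly shorter because it avoids that detour.
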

\begin{proof}
	First, we rewrite, for each fixed $p,q\in L_{k}$ and $0\leq \tau\leq t$,
	\begin{align}\label{Pzetak2}
		P_{\zeta;k}^{2}(\tau)=-\tau\frac{\{P_{\zeta;k},P_{v_{k}}\}}{\lambda_{k,p}+\lambda_{k,q}}+F_{k,\zeta}(\tau)
	\end{align}
	where
	\begin{align*}
		F_{k,\zeta}(\tau)&=\tau\frac{\{P_{\zeta;k},P_{v_{k}}\}}{\lambda_{k,p}+\lambda_{k,q}}-\{P_{\zeta;k},S_{k}(\tau)\}-C_{k}(\tau)P_{\zeta;k}S_{k}(\tau)-S_{k}(\tau)P_{\zeta;k}C_{k}(\tau).
	\end{align*}
	
	Next, we estimate the matrix element of $F_{k,\zeta}(\tau)$.  By Lemma \ref{lem:matrix-expK}, it is straightforward to obtain
	\begin{align*}
		|\langle e_{p},S_{k}(\tau)P_{\zeta;k}C_{k}(\tau)e_{q}\rangle|\leq C\frac{k_{F}^{-2}\hat{V}_{k}^{3}}{(\lambda_{k,p}+\lambda_{k,\zeta})(\lambda_{k,\zeta}+\lambda_{k,q})}.
	\end{align*}
	We can obtain the same bound for $\langle e_{p},C_{k}(\tau)P_{\zeta;k}S_{k}(\tau)e_{q}\rangle$.  For remaining term, we again use Lemma \ref{lem:matrix-expK} and the lower bound \eqref{gap} to obtain
	\begin{align*}
		&\Big|\langle e_{p},\Big(\tau\frac{\{P_{\zeta;k},P_{v_{k}}\} }{\lambda_{k,p}+\lambda_{k,q}}-\{P_{\zeta;k},S_{k}(\tau)\}\Big)e_{q}\rangle\Big|\nonumber\\
		&\leq\frac{\tau}{\lambda_{k,p}+\lambda_{k,q}}\Big(1-\frac{1}{1+2\langle v_{k},h_{k}^{-1}v_{k}\rangle}\Big)\big(\delta_{p,\zeta}+\delta_{q,\zeta}\big)\langle e_{p},v_{k}\rangle\langle v_{k},e_{q}\rangle\nonumber\\
		&\leq \frac{2\langle v_{k},h_{k}^{-1}v_{k}\rangle}{1+2\langle v_{k},h_{k}^{-1}v_{k}\rangle}\big(\delta_{p,\zeta}+\delta_{q,\zeta}\big)k_{F}^{-1}\hat{V}_{k}\leq C\big(\delta_{p,\zeta}+\delta_{q,\zeta}\big)\frac{k_{F}^{-1}\hat{V}_{k}^{2}}{\lambda_{k,p}+\lambda_{k,q}}.
	\end{align*}
	Then integrating $\tau$ over $[0,t]$ yields \eqref{approx-Gzeta}.
\end{proof}

\begin{proof}[Proof of Proposition \ref{prop:ex-corr-momentum}]
	First, we note that, when $k+l=p+q$, we have the relation
	\begin{align}\label{symmetry-k+l=p+q}
		\lambda_{k,p}+\lambda_{k,q}&=\frac{|p|^{2}-|p-k|^{2}}{2}+\frac{|q|^{2}-|q-k|^{2}}{2}\nonumber\\
		&=\frac{|p|^{2}-|p-l|^{2}}{2}+\frac{|q|^{2}-|q-l|^{2}}{2}=\lambda_{l,p}+\lambda_{l,q},
	\end{align}
	so that, together with symmetry of summation in $n_{\rm ex}(\xi)$ under the transformation $k\rightarrow -k$, we can write
	\begin{align}\label{ex-distribution}
		n_{{\rm ex}}(\xi)&=-\frac{k_{F}^{-2}}{8(2\pi)^{6}}\sum_{k,l\in\Z_{*}^{3}}\sum_{p,q\in L_{k}\cap L_{l}}\sum_{\zeta\in\cD_{k,\xi}}\delta_{k+l,p+q}\chi_{L_{k}}(\zeta)\frac{(\delta_{p,\zeta}+\delta_{q,\zeta})\hat{V}_{k}\hat{V}_{l}}{(\lambda_{k,p}+\lambda_{k,q})(\lambda_{l,p}+\lambda_{l,q})}\nonumber\\
		&=-\frac{1}{2}\sum_{k,l\in\Z_{*}^{3}}\sum_{p,q\in L_{k}\cap L_{l}}\sum_{\zeta\in\cD_{k,\xi}}\delta_{k+l,p+q}\frac{k_{F}^{-1}\hat{V}_{l}}{2(2\pi)^{3}}\frac{\langle\{P_{\zeta;k},P_{v_{k}}\}e_{p},e_{q}\rangle}{(\lambda_{k,p}+\lambda_{k,q})(\lambda_{l,p}+\lambda_{l,q})}.
	\end{align}
	Hence, by \eqref{momentum2}, we can write
	\begin{align}\label{nex-t}
		n_{{\rm ex},t}(\xi)\DETAILS{&=\sum_{k,l\in \Z_{*}^{3}}\sum_{p,q\in L_{k}\cap L_{l}}\sum_{\zeta\in\cD_{k,\xi}}\delta_{k+l,p+q}\chi_{L_{k}}(\zeta)\langle G_{\zeta;k}(t)e_{p},e_{q}\rangle\langle e_{q},(-K_{l})e_{p}\rangle\nonumber\\}
		&=t^{2}n_{\rm ex}(\xi)+\sum_{k,l\in\Z_{*}^{3}}\sum_{p,q\in L_{k}\cap L_{l}}\sum_{\zeta\in\cD_{k,\xi}}\delta_{k+l,p+q}\chi_{L_{k}}(\zeta)\langle e_{q},(-K_{l})e_{p}\rangle\nonumber\\
		&\quad\quad\quad\quad\quad\quad\quad\quad\quad\quad\quad\quad\quad\times\Big(\langle G_{\zeta;k}(t)+\frac{t^{2}\{P_{\zeta;k},P_{v_{k}}\}}{2(\lambda_{k,p}+\lambda_{k,q})}e_{p},e_{q}\rangle\Big)\nonumber\\
		&\quad-t^{2}\sum_{k,l\in\Z_{*}^{3}}\sum_{p,q\in L_{k}\cap L_{l}}\sum_{\zeta\in\cD_{k,\xi}}\delta_{k+l,p+q}\chi_{L_{k}}(\zeta)\frac{\langle e_{p},\{P_{\zeta;k},P_{v_{k}}\}e_{q}\rangle}{2(\lambda_{k,p}+\lambda_{k,q})}\nonumber\\
		&\quad\quad\quad\quad\quad\quad\quad\quad\quad\quad\quad\quad\times\Big(\langle e_{q},(-K_{l})e_{p}\rangle-\frac{\langle e_{q},v_{l}\rangle\langle v_{l},e_{p}\rangle}{\lambda_{l,p}+\lambda_{l,q}}\Big).
	\end{align}
	We denote the last two sums in \eqref{nex-t} by $A(t)$ and $t^{2}B$, respectively.

	Next, we estimate the term $A(t)$ and $B(t)$.  By Lemmas \ref{lem:matrix-expK}, \ref{lem:approx-Gzeta} and relation \ref{symmetry-k+l=p+q}, we obtain
	\begin{align}\label{Fkt-error}
		|A(t)|\DETAILS{&\leq Ck_{F}^{-1}\sum_{k\in\Z_{*}^{3}}\sum_{\zeta\in\cD_{k,\xi}}\sum_{p,q\in L_{k}}\delta_{k+l,p+q}\hat{V}_{k}^{2}\frac{\langle e_{p},v_{l}\rangle\langle v_{l},e_{q}\rangle}{\lambda_{l,p}+\lambda_{l,q}}\nonumber\\
		&\quad\quad\quad\quad\quad\quad\quad\quad\quad\quad\quad\times\Big(\frac{\delta_{p,\zeta}+\delta_{q,\zeta}}{\lambda_{l,p}+\lambda_{l,q}}+\frac{k_{F}^{-1}\hat{V}_{k}}{(\lambda_{k,p}+\lambda_{k,\zeta})(\lambda_{k,\zeta}+\lambda_{k,q})}\Big)\nonumber\\}
		&\leq Ck_{F}^{-2}\sum_{k\in\Z_{*}^{3}}\sum_{\zeta\in\cD_{k,\xi}}\hat{V}_{k}^{2}\sum_{p,q\in L_{k}}\Big(\delta_{p,\zeta}+\delta_{q,\zeta}+\frac{k_{F}^{-1}\hat{V}_{k}}{\lambda_{k,\zeta}}\Big)\frac{\hat{V}_{p+q-k}}{(\lambda_{k,p}+\lambda_{k,q})^{2}}\nonumber\\
		&\leq Ck_{F}^{-2}\sum_{k\in\Z_{*}^{3}}\sum_{\zeta\in\cD_{k,\xi}}\hat{V}_{k}^{2}\sum_{p\in L_{k}}\frac{\hat{V}_{p+\zeta-k}}{(\lambda_{k,p}+\lambda_{k,\zeta})^{2}}\nonumber\\
		&\quad\quad+Ck_{F}^{-3}\sum_{k\in\Z_{*}^{3}}\sum_{\zeta\in\cD_{k,\zeta}}\hat{V}_{k}^{3}\frac{\chi_{L_{k}}(\zeta)}{\lambda_{k,\zeta}}\sum_{p,q\in L_{k}}\frac{\hat{V}_{p+q-k}}{(\lambda_{k,p}+\lambda_{k,q})^{2}}=:A_{1}+A_{2}.
	\end{align}
	Since $\lambda_{k,\zeta}\geq m(\xi)^{-1}$ for any $\zeta\in\cD_{k,\xi}$, the terms $A_{1}$ and $A_{2}$ can be estimated by Lemmas \ref{lem:estimate-lambkp}, \ref{lem:gap-sum}, Cauchy-Schwarz inequality and the lower bound \eqref{gap} to be
	\begin{align*}
		A_{1}&\leq C k_{F}^{-2}\sum_{k\in\Z_{*}^{3}}\sum_{\zeta\in\cD_{k,\xi}}\hat{V}_{k}^{2}\sum_{p\in L_{k}}\frac{\hat{V}_{p+\zeta-k}}{\lambda_{k,p}\lambda_{k,\zeta}}\nonumber\\
		&\leq Ck_{F}^{-2}m(\xi)\sum_{k\in\Z_{*}^{3}}\hat{V}_{k}^{2}\sqrt{\sum_{l\in\Z_{*}^{3}}\hat{V}_{l}^{2}}\sqrt{\sum_{p\in L_{k}}\frac{1}{\lambda_{k,p}}}\leq C_{V}k_{F}^{-3/2}m(\xi),
	\end{align*}
	and, similarly,
	\begin{align*}
		A_{2}&\leq C k_{F}^{-3}\sum_{k\in\Z_{*}^{3}}\sum_{\zeta\in\cD_{k,\xi}}\hat{V}_{k}^{3}\frac{\chi_{L_{k}}(\zeta)}{\lambda_{k,\zeta}}\sum_{p,q\in L_{k}}\frac{\hat{V}_{p+q-k}}{(\lambda_{k,p}+\lambda_{k,q})^{2}}\nonumber\\
		&\leq C k_{F}^{-3}m(\xi)\|\hat{V}\|_{\infty}\sum_{k\in\Z_{*}^{3}}\hat{V}_{k}^{2}\Big(\sum_{p\in L_{k}}\frac{1}{\lambda_{k,p}}\Big)\sqrt{\sum_{l\in\Z_{*}^{3}}\hat{V}_{l}^{2}}\sqrt{\sum_{q\in L_{k}}\frac{1}{\lambda_{k,q}}}\leq C_{V}k_{F}^{-3/2}m(\xi),
	\end{align*}
	where we have used the inequality $\|\hat{V}\|_{\infty}\leq \|\hat{V}\|_{\ell^{2}}$.  For the term $B$, by a similar argument, we obtain
	\begin{align*}
		|B|&\leq Ck_{F}^{-2}\sum_{k\in\Z_{*}^{3}}\sum_{p\in L_{k}}\sum_{\zeta\in\cD_{k,\xi}}\chi_{L_{k}}(\zeta)\frac{\hat{V}_{k}^{2}\hat{V}_{\zeta+p-k}}{(\lambda_{k,p}+\lambda_{k,\zeta})^{2}}\nonumber\\
		&\leq Ck_{F}^{-2}\sum_{k\in\Z_{*}^{3}}\frac{\hat{V}_{k}^{2}}{\lambda_{k,\zeta}}\sqrt{\sum_{l\in\Z_{*}^{3}}\hat{V}_{l}}\sqrt{\sum_{p\in L_{k}}\frac{1}{\lambda_{k,p}}}\leq C_{V}k_{F}^{-3/2}m(\xi).
	\end{align*}
	
	Now, we estimate $n_{\rm ex}(\xi)$.  To this end, we consider $\xi\in B_{F}^{c}$ and $\xi\in B_{F}$ separately.  For $\xi\in B_{F}^{c}$, by \eqref{ex-distribution}, Lemmas \ref{lem:matrix-expK}, \ref{lem:gap-sum}, \ref{lem:approx-Gzeta}, and relation \ref{symmetry-k+l=p+q}, we obtain for each $\delta>0$ that
	\begin{align}\label{ex-esti}
		|n_{\rm ex}(\xi)|&\leq C k_{F}^{-2}\sum_{k\in\Z_{*}^{3}}\sum_{p\in L_{k}}\chi_{L_{k}}(\xi)\frac{\hat{V}_{k}\hat{V}_{p+\xi-k}}{\lambda_{k,\xi}(\lambda_{k,\xi}+\lambda_{k,p})}\nonumber\\
		&\leq Ck_{F}^{-2}m(\xi)\sum_{k\in\Z_{*}^{3}}\frac{\hat{V}_{k}\chi_{L_{k}}(\xi)}{\sqrt{\lambda_{k,\xi}}}\sqrt{\sum_{l\in\Z_{*}^{3}}\hat{V}_{l}^{2}}\sqrt{\sum_{p\in L_{k}}\frac{1}{\lambda_{k,p}}}\nonumber\\
		&\leq C_{V}k_{F}^{-3/2}m(\xi)\sqrt{\sum_{k\in\Z_{*}^{3}}\hat{V}_{k}^{2}}\sqrt{\sum_{k\in\Z_{*}^{3}}\frac{\chi_{L_{k}}(\xi)}{\lambda_{k,\xi}}}\leq C_{\delta, V}k_{F}^{-1+\delta}m(\xi).
	\end{align}
	For $\xi\in B_{F}$, by a similar argument, we obtain for each $\delta>0$ that
	\begin{align}\label{ex-esti-1}
		|n_{\rm ex}(\xi)|&\leq C k_{F}^{-2}\sum_{k\in\Z_{*}^{3}}\sum_{p\in L_{k}}\chi_{L_{k}'}(\xi)\frac{\hat{V}_{k}\hat{V}_{p+\xi}}{\lambda_{k,k+\xi}(\lambda_{k,k+\xi}+\lambda_{k,p})}\nonumber\\
		&\leq Ck_{F}^{-2}m(\xi)^{1/2}\sum_{k\in\Z_{*}^{3}}\frac{\hat{V}_{k}\chi_{L_{k}}(\xi)}{\lambda_{k,k+\xi}}\sqrt{\sum_{l\in\Z_{*}^{3}}\hat{V}_{l}^{2}}\sqrt{\sum_{p\in L_{k}}\frac{1}{\lambda_{k,p}}}\nonumber\\
		&\leq C_{V}k_{F}^{-3/2}m(\xi)^{1/2}\sqrt{\sum_{k\in\Z_{*}^{3}}\hat{V}_{k}^{2}}\sqrt{\sum_{k\in\Z_{*}^{3}}\frac{\chi_{L_{k}'}(\xi)}{\lambda_{k,k+\xi}^{2}}}\leq C_{\delta,V}k_{F}^{-1+\delta}m(\xi).
	\end{align}
	
	This completes the proof.
\end{proof}

\section{Generic structure of error terms}\label{sec:class-errors}
First, we rewrite the error terms $\cE_{j}(t)$, $j=1,2,3$, into generic forms.  By decomposing $P_{\zeta;k}^{1}(\tau)$ and $P_{\zeta;k}^{2}(\tau)$ as in \eqref{T1-decomp}--\eqref{T2-decomp} and then by expanding anticommutators, we have
\begin{align*}
	\{P_{\zeta;k}^{1}(\tau),K_{k}\}\DETAILS{&=\{P_{\zeta;k}+\{P_{\zeta;k},C_{k}(\tau)\}+C_{k}(\tau)P_{\zeta;k}C_{k}(\tau)+S_{k}(\tau)P_{\zeta;k}S_{k}(\tau),K_{k}\}\nonumber\\}
	&=\{P_{\zeta;k},K_{k}\}+\Lambda_{\zeta;k}^{1}(\tau)+\Gamma_{\zeta;k}^{1}(\tau),\quad\{P_{\zeta;k}^{2}(\tau),K_{k}\}\DETAILS{&=-\{\{P_{\zeta;k},S_{k}(\tau)\}+S_{k}(\tau)P_{\zeta;k}C_{k}(\tau)+C_{k}(\tau)P_{\zeta;k}S_{k}(\tau),K_{k}\}\nonumber\\}
	=-\Lambda_{\zeta;k}^{2}(\tau)-\Gamma_{\zeta;k}^{2}(\tau),
\end{align*}
\begin{align*}
	\{\{P_{\zeta;k}^{1}(\tau),K_{k}\},K_{k}\}\DETAILS{&=\{\{P_{\zeta;k},K_{k}\}+K_{k}C_{k}(\tau)P_{\zeta;k}+P_{\zeta;k}C_{k}(\tau)K_{k},K_{k}\}+\{\Gamma_{\zeta;k}^{1}(\tau),K_{k}\}\nonumber\\}
	&=\Lambda_{\zeta;k}^{3}(\tau)+\Gamma_{\zeta;k}^{3}(\tau),
\end{align*}
where
\begin{align*}
	\Lambda_{\zeta;k}^{1}(\tau)&=K_{k}C_{k}(\tau)P_{\zeta;k}+P_{\zeta;k}C_{k}(\tau)K_{k},\\
	\Lambda_{\zeta;k}^{2}(\tau)&=P_{\zeta;k}S_{k}(\tau)K_{k}+K_{k}S_{k}(\tau)P_{\zeta;k},\\
	\Lambda_{\zeta;k}^{3}(\tau)&=P_{\zeta;k}(K_{k})^{2}+(K_{k})^{2}P_{\zeta;k}+(K_{k})^{2}C_{k}(\tau)P_{\zeta;k}+P_{\zeta;k}C_{k}(\tau)(K_{k})^{2},
\end{align*}
and
\begin{align*}
	\Gamma_{\zeta;k}^{1}(\tau)&=C_{k}(\tau)P_{\zeta;k}K_{k}+K_{k}P_{\zeta;k}C_{k}(\tau)+\{C_{k}(\tau)P_{\zeta;k}C_{k}(\tau)+S_{k}(\tau)P_{\zeta;k}S_{k}(\tau),K_{k}\},\\
	\Gamma_{\zeta;k}^{2}(\tau)&=S_{k}(\tau)P_{\zeta;k}K_{k}+K_{k}P_{\zeta;k}S_{k}(\tau)+\{C_{k}(\tau)P_{\zeta;k}S_{k}(\tau)+S_{k}(\tau)P_{\zeta;k}C_{k}(\tau),K_{k}\},\\
	\Gamma_{\zeta;k}^{3}(\tau)&=2K_{k}P_{\zeta;k}K_{k}+K_{k}C_{k}(\tau)P_{\zeta;k}K_{k}+K_{k}P_{\zeta;k}C_{k}(\tau)K_{k}+\{\Gamma_{\zeta;k}^{1}(\tau),K_{k}\}.
\end{align*}

We observe that $\Gamma_{\zeta;k}^{i}(\tau)$, $i=1,2,3$, are all finite linear combinations of operators in the form
\begin{align*}
	A_{k,\tau}^{(j_{1})}P_{\zeta;k}A_{k,\tau}^{(j_{2})},
\end{align*}
for $1\leq j_{1},j_{2}\leq 3$.  Similarly, $\Lambda_{\zeta;k}^{i}(\tau)$, $i=1,2,3$, are all finite linear combinations of operators in the form
\begin{align*}
	A_{k,\tau}^{(j_{1})}P_{\zeta;k}+P_{\zeta;k}A_{k,\tau}^{(j_{2})},
\end{align*}
for $2\leq j_{1},j_{2}\leq 3$.  Here we denote $A_{k}^{(m)}(\tau)$ denotes any generic $m$-fold products of operators from the set $\{C_{k}(\tau),S_{k}(\tau),K_{k}\}$.  Hence, it suffices to estimate terms in the following sets for each fixed $0\leq s_{1},s_{2}\leq 1$ and for each $(j_{1},j_{2})\in\Sigma_{*}$:
\begin{align}\label{error-class}
	\cW_{s_{1},s_{2}}^{(j_{1},j_{2})}(\cG_{k})&:=\Big\{\sum_{k\in\Z_{*}^{3}}\sum_{\zeta\in\cD_{k,\xi}}\big\langle \Phi_{s_{1}},\cG_{k}\big(A_{k,s_{2}}^{(j_{1})}P_{\zeta;k}A_{k,s_{2}}^{(j_{2})}\big)\Phi_{s_{1}}\big\rangle\Big\},
\end{align}
where
$$\Sigma_{*}:=\{(j_{1},j_{2})\in\Z^{2}\setminus\{0\}\mid 0\leq m,n\leq 3\}$$
and $\cG_{k}$ denotes either $\cE_{1,k},\widetilde{\cE}_{2,k}$ or $\varepsilon_{k}$.  We decompose the index set $\Sigma_{*}=\Sigma_{*}^{1}\cup\Sigma_{*}^{2}$ as 
\begin{align}
	\Sigma_{*}^{1}&:=\{(j_{1},j_{2})\in\Sigma_{*}\mid 1\leq m,n\leq 3\},\quad\Sigma_{*}^{2}\setminus\Sigma_{*}^{1}.
\end{align}
Moreover, to simplify notation, we introduce the following measure for each $\xi\in\Z^{3}$ and function $f$ depending on $k$ and $\zeta$:
\begin{align}\label{measure}
	\sum_{(k,\zeta)\in\cC_{\xi}}f(k,\zeta)&:=\sum_{k\in\Z_{*}^{3}}\sum_{\zeta\in\cD_{k,\xi}}\chi_{L_{k}}(\zeta)f(k,\zeta),
\end{align}
where $\cC_{\xi}:=\{(k,\zeta)\in\Z_{*}^{3}\times\Z_{*}^{3}\mid \zeta\in\cD_{k,\xi}\}$, and we observe from \eqref{gap} that
\begin{align}\label{gap2}
	\lambda_{k,\zeta}&\geq m(\xi)^{-1}\quad\text{ for each }(k,\zeta)\in\cC_{\xi}.
\end{align}

\bigskip

\section{Analysis on error terms}\label{sec:principal-errors}
In this section, we consider the error terms in \eqref{trial-momentum}.  Since the structures of operator-valued kernels $\widetilde{\cE}_{2,k},\cE_{1,k}$ and $\varepsilon_{k}$ are very different, we must analyze them separately.

\subsection{Preliminary estimates}\label{sec:preliminary-esti}
Before proceeding, we need some preliminary estimates that will be useful when we estimate our major errors.  First, we generalize \cite[Proposition 5.8]{CHN-23} to all powers of $(\cN_{E}+1)$ by the same argument in \cite{CHN-23} with Gr\"onwall's lemma (we omit its proof for simplicity).
\begin{lemma}\label{lem:Gronwall}
For any $\Psi\in\cH_{N}$, $m\in\mathbb{N}$ and $|\tau|\leq 1$, it holds that
\begin{align}\label{Gronwall}
	\big\langle e^{-\tau\cK}\Psi,(\cN_{E}+1)^{m}e^{-\tau\cK}\Psi\rangle&\leq C_{m,V}\langle\Psi,(\cN_{E}+1)^{m}\Psi\rangle,
\end{align}
for some constant $C_{m}>0$.
\end{lemma}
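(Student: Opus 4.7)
The natural approach is Gr\"onwall's inequality applied to
\[
F(\tau):=\bigl\langle e^{-\tau\cK}\Psi,\,(\cN_E+1)^m\,e^{-\tau\cK}\Psi\bigr\rangle,\qquad |\tau|\leq 1.
\]
Antisymmetry of $\cK$ gives $F'(\tau)=-\bigl\langle e^{-\tau\cK}\Psi,\,[(\cN_E+1)^m,\cK]\,e^{-\tau\cK}\Psi\bigr\rangle$, so the lemma reduces to the operator-type bound
\[
\bigl|\langle\Phi,[(\cN_E+1)^m,\cK]\Phi\rangle\bigr|\;\leq\; C_{m,V}\,\langle\Phi,(\cN_E+1)^m\Phi\rangle,\qquad \Phi\in\cH_N,
\]
after which Gr\"onwall yields $F(\tau)\leq e^{C_{m,V}|\tau|}F(0)$, which is the claim.

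To establish the commutator bound I would split $\cK=\cK_-+\cK_+$ with $\cK_-=\tfrac{1}{2}\sum_{k,p}b_k(K_k e_p)b_{-k,-p}$ and $\cK_+=-\cK_-^*$; each monomial in $\cK_-$ is a product of four particle--hole annihilations $\tilde a_\bullet$, and each monomial in $\cK_+$ is a product of four particle--hole creations $\tilde a_\bullet^*$. Iterating the pull-through formula of Lemma~\ref{lem:pull-through} four times gives
\[
(\cN_E+1)^m\cK_-=\cK_-(\cN_E-3)^m,\qquad (\cN_E+1)^m\cK_+=\cK_+(\cN_E+5)^m,
\]
hence $[(\cN_E+1)^m,\cK_\pm]=\cK_\pm R^\pm_{m-1}(\cN_E)$ for explicit polynomials $R^\pm_{m-1}$ of degree $m-1$ whose coefficients depend only on $m$.

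The remaining step is to bound $|\langle\Phi,\cK_\pm R^\pm_{m-1}(\cN_E)\Phi\rangle|$ by $\langle\Phi,(\cN_E+1)^m\Phi\rangle$. Rewriting this matrix element as $\langle\cK_\mp\Phi,R^\pm_{m-1}(\cN_E)\Phi\rangle$, applying Cauchy--Schwarz after splitting $(\cN_E+1)^m=(\cN_E+1)^{m/2}(\cN_E+1)^{m/2}$ and inserting the powers appropriately on each side, and using the excitation bounds~\eqref{excitation-bdd} together with $\cN_k\leq\cN_E$ and $\sum_k\cN_k\leq \cN_E^2$ from~\eqref{NkNE-esti}, Corollary~\ref{cor:pull-through} to commute powers of $(\cN_E+1)^{1/2}$ past the $b^{\#}$'s, and the Hilbert--Schmidt estimate $\|K_k\|_{\rm HS}\leq C\hat V_k$ from Lemma~\ref{lem:matrix-expK}, one obtains the desired bound; the sum over $k\in\Z_*^3$ converges thanks to $\sum_k \hat V_k^2=\|\hat V\|_{\ell^2}^2<\infty$.

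The main obstacle I anticipate is bookkeeping: one must keep track of the various integer shifts in $R^\pm_{m-1}(\cN_E)$ produced by the pull-through, and verify that after Cauchy--Schwarz the combined powers of $(\cN_E+1)$ on the two factors add up to precisely $m$ rather than $m+\mathrm{const}$, so that the resulting estimate closes on $(\cN_E+1)^m$ and Gr\"onwall can be invoked without a loss of power. This is exactly the $m=1,2$ case treated in \cite[Proposition~5.8]{CHN-23}, and the general $m$ argument is an inductive extension of their computation.
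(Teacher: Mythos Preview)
Your proposal is correct and matches the paper's approach exactly: the paper simply states that the lemma follows by the same Gr\"onwall argument as in \cite[Proposition~5.8]{CHN-23}, extended to all powers $m$, and omits the details. Your outline---differentiating $F(\tau)$, reducing to a commutator bound via the pull-through formula, and closing with the excitation estimates \eqref{excitation-bdd}, \eqref{NkNE-esti} and $\|K_k\|_{\rm HS}\leq C\hat V_k$---is precisely that argument.
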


\begin{lemma}\label{lem:excitation-Gronwall}
For any $m\geq 1$, $|\tau|\leq 1$ and $\delta>0$, it holds that
\begin{align}\label{excitation-Gronwall2}
	\|\tilde{a}_{p}(\cN_{E}+1)^{m}\Phi_{\tau}\|&\leq C_{\delta,m,V}\|\tilde{a}_{p}\Phi_{\tau}\|^{1-\delta}.
\end{align}
\end{lemma}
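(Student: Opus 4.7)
The plan is to combine the pull-through formula of Lemma \ref{lem:pull-through} with a standard moment interpolation, reducing the bound to Lemma \ref{lem:Gronwall} applied to $\Phi_{\tau}$. I would first iterate the identity $\cN_{E}\tilde{a}_{p}=\tilde{a}_{p}(\cN_{E}-1)$ (equivalently $\tilde{a}_{p}(\cN_{E}+1)=(\cN_{E}+2)\tilde{a}_{p}$) to obtain
\begin{align*}
\|\tilde{a}_{p}(\cN_{E}+1)^{m}\Phi_{\tau}\|=\|(\cN_{E}+2)^{m}\tilde{a}_{p}\Phi_{\tau}\|.
\end{align*}
Applying the Heinz moment interpolation for the positive self-adjoint operator $\cN_{E}+2$ (a direct consequence of the spectral theorem and H\"older's inequality) with interpolation parameter $\delta$,
\begin{align*}
\|(\cN_{E}+2)^{m}\Psi\|\leq\|\Psi\|^{1-\delta}\|(\cN_{E}+2)^{m/\delta}\Psi\|^{\delta},
\end{align*}
and setting $\Psi=\tilde{a}_{p}\Phi_{\tau}$, the proof reduces to a uniform bound on $\|(\cN_{E}+2)^{m/\delta}\tilde{a}_{p}\Phi_{\tau}\|$.

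For this second factor, I would pull $\tilde{a}_{p}$ back through $(\cN_{E}+2)^{m/\delta}$ via the same identity and use the elementary bound $\|\tilde{a}_{p}\Psi\|\leq\|\Psi\|$ (from CAR $\{\tilde{a}_{p},\tilde{a}_{p}^{*}\}=1$) to get
\begin{align*}
\|(\cN_{E}+2)^{m/\delta}\tilde{a}_{p}\Phi_{\tau}\|=\|\tilde{a}_{p}(\cN_{E}+1)^{m/\delta}\Phi_{\tau}\|\leq\|(\cN_{E}+1)^{m/\delta}\Phi_{\tau}\|.
\end{align*}
Lemma \ref{lem:Gronwall} (applied with $\lceil m/\delta\rceil$ in place of $m$, which covers the non-integer exponent since $(\cN_{E}+1)^{m/\delta}\leq(\cN_{E}+1)^{\lceil m/\delta\rceil}$ as positive operators) bounds this by a constant times $\|(\cN_{E}+1)^{\lceil m/\delta\rceil}\Psi_{FS}\|$. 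Since $\tilde{a}_{q}\Psi_{FS}=0$ for every $q\in\Z^{3}$, we have $\cN_{E}\Psi_{FS}=0$ and hence $(\cN_{E}+1)^{s}\Psi_{FS}=\Psi_{FS}$ for every $s\geq 0$; combining these estimates yields the claim.

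I do not anticipate any serious obstacle: the pull-through identity is elementary, the moment inequality is standard, and Lemma \ref{lem:Gronwall} is already at our disposal. The only mild subtlety is that $m/\delta$ may be non-integer, which is handled by monotonicity of powers of the positive operator $\cN_{E}+1$.
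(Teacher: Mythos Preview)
Your proposal is correct and follows essentially the same approach as the paper. The paper uses the commutation $[\tilde{a}_{p}^{*}\tilde{a}_{p},\cN_{E}]=0$ (equivalent to your pull-through) together with an iterated Cauchy--Schwarz bound $\|\tilde{a}_{p}(\cN_{E}+1)^{m}\Phi_{\tau}\|^{2}\leq\|\tilde{a}_{p}\Phi_{\tau}\|\|\tilde{a}_{p}(\cN_{E}+1)^{2m}\Phi_{\tau}\|$, which after $n$ iterations gives exponent $1-2^{-n}$; your direct H\"older/spectral interpolation is just the continuous version of that dyadic iteration, yielding arbitrary $\delta$ in one step, and the conclusion via Lemma~\ref{lem:Gronwall} and $\cN_{E}\Psi_{FS}=0$ is identical.
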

\begin{proof}
Since $[\tilde{a}_{p}^{*}\tilde{a}_{p},\cN_{E}]=0$, by Cauchy-Schwarz inequality, we have
\begin{align*}
	\|\tilde{a}_{p}(\cN_{E}+1)^{m}e^{-\tau\cK}\Psi\|^{2}&=\langle e^{-\tau\cK}\Psi,(\cN_{E}+1)^{m}\tilde{a}_{p}^{*}\tilde{a}_{p}(\cN_{E}+1)^{m}e^{-\tau\cK}\Psi\rangle\nonumber\\
	&=\langle \tilde{a}_{p}e^{-\tau\cK}\Psi,\tilde{a}_{p}(\cN_{E}+1)^{2m}e^{-\tau\cK}\Psi\rangle\nonumber\\
	&\leq \|\tilde{a}_{p}e^{-\tau\cK}\Psi\|\|\tilde{a}_{p}(\cN_{E}+1)^{2m}e^{-\tau\cK}\Psi\|.
\end{align*}
Together with operator norm $\|\tilde{a}_{p}\|=1$, by iterating this estimate $n$ times, we obtain
\begin{align}
	\|\tilde{a}_{p}(\cN_{E}+1)^{m}\Phi_{\tau}\|^{2}&\leq \|\tilde{a}_{p}\Phi_{\tau}\|^{2(1-2^{-n})}\|\tilde{a}_{p}(\cN_{E}+1)^{2^{n}m}\Phi_{\tau}\|^{2^{-n+1}}\nonumber\\
	&\leq C_{m,n,V}\|\tilde{a}_{p}\Phi_{\tau}\|^{2(1-2^{-n})}\|(\cN_{E}+1)^{2^{n}m}\Psi_{FS}\|^{2^{-n+1}}.
\end{align}
Since $n$ is arbitrary, together with Lemma \ref{lem:Gronwall}, this yields \eqref{excitation-Gronwall2}.
\end{proof}
To estimate error terms in \eqref{error-class}, we introduce the following quantity:
\begin{align}\label{bootstrap}
	\cQ&:=\sup_{\xi\in\Z^{3}}\sup_{0\leq \tau\leq 1}\|\tilde{a}_{\xi}\Phi_{\tau}\|.
\end{align}
We will show below each of the error terms are bounded by products of $k_{F}^{-1}$ and $\cQ$.  

\subsection{Analysis on errors in $\cW_{s_{1},s_{2}}^{(j_{1},j_{2})}(\varepsilon_{k})$}\label{sec:esti-E3134}
Recall the definition for any symmetric operator $T_{k}$ on $\ell^{2}(L_{k})$ satisfying $\langle e_{r},T_{k}e_{s}\rangle=\langle e_{-r},T_{-k}e_{-s}\rangle$ for each $r,s\in L_{k}$ that
\begin{align*}
	-\varepsilon_{k}(T_{k})&=\sum_{p\in L_{k}}\langle e_{p},T_{k}e_{p}\rangle\big(\tilde{a}_{p}^{*}\tilde{a}_{p}+\tilde{a}_{p-k}^{*}\tilde{a}_{p-k}\big),
\end{align*}
since the variable $k$ in the summand always belongs to $B_{F}^{c}$, whereas $p-k$ belongs to $B_{F}$, which implies that, for any $0\leq s_{1}\leq 1$, 
\begin{align}
	\big|\big\langle\Phi_{s_{1}},\varepsilon_{k}(T_{k})\Phi_{s_{1}}\big\rangle\big|&\leq C_{V}\cQ^{2}\sum_{p\in L_{k}}\big|\langle e_{p},T_{k}e_{p}\rangle\big|.
\end{align}
Then we obtain the following theorem:
\begin{thm}\label{thm:esti-varepk}
	For each $0\leq s_{1},s_{2}\leq 1$, it holds that each term in $\cW_{s_{1},s_{2}}^{(j_{1},j_{2})}(\varepsilon_{k})$ such that either $(j_{1},j_{2})\in\Sigma_{*}^{1}$, or $(j_{1},0),(0,j_{2})\in\Sigma_{*}^{2}$ with $j_{1},j_{2}\geq 2$, is bounded by $C_{V}k_{F}^{-1}\cQ^{2}m(\xi)$.
\end{thm}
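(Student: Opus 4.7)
The plan is to leverage the preliminary bound $|\langle\Phi_{s_1},\varepsilon_k(T_k)\Phi_{s_1}\rangle|\leq C_V\cQ^2\sum_{p\in L_k}|\langle e_p, T_k e_p\rangle|$ already noted just before the theorem, applied to $T_k = A_{k,s_2}^{(j_1)}P_{\zeta;k}A_{k,s_2}^{(j_2)}$. The task then reduces to a purely combinatorial estimate of
\[
\cS(\xi):=\sum_{(k,\zeta)\in\cC_\xi}\sum_{p\in L_k}\big|\langle e_p, A_{k,s_2}^{(j_1)}P_{\zeta;k}A_{k,s_2}^{(j_2)}e_p\rangle\big|,
\]
which I will show satisfies $\cS(\xi)\leq C_V k_F^{-1}m(\xi)$ uniformly in $s_2\in[0,1]$. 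The two cases in the theorem are handled by the same mechanism but differ in how the rank-one projection $P_{\zeta;k}=|e_\zeta\rangle\langle e_\zeta|$ is opened up.

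For $(j_1,j_2)\in\Sigma_*^1$, I would write
\[
\langle e_p, A_k^{(j_1)}P_{\zeta;k}A_k^{(j_2)} e_p\rangle = \langle e_p, A_k^{(j_1)} e_\zeta\rangle\,\langle e_\zeta, A_k^{(j_2)} e_p\rangle,
\]
and apply Proposition \ref{prop:prod-Kk-esti} to each matrix element, producing an overall bound of order $k_F^{-2}\hat V_k^2 (\lambda_{k,p}+\lambda_{k,\zeta})^{-2}$. The gap estimate \eqref{gap2}, $\lambda_{k,\zeta}\geq m(\xi)^{-1}$, then lets me replace $(\lambda_{k,p}+\lambda_{k,\zeta})^{-2}$ by $m(\xi)\lambda_{k,p}^{-1}$. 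Summing over $p\in L_k$ via Lemma \ref{lem:estimate-lambkp} yields a factor $Ck_F\min\{1,k_F^2|k|^{-2}\}$, so $\cS(\xi)\leq C k_F^{-1}m(\xi)\sum_k \hat V_k^2$, which is controlled by $C_V k_F^{-1}m(\xi)$ via \eqref{square-sum}, after absorbing the harmless four-fold sum over $\zeta\in\cD_{k,\xi}$.

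For $(j_1,0)\in\Sigma_*^2$ with $j_1\geq 2$ (the case $(0,j_2)$ is symmetric), the diagonal sum over $p$ collapses to a trace: $\sum_p \langle e_p, A_k^{(j_1)}P_{\zeta;k}e_p\rangle=\chi_{L_k}(\zeta)\langle e_\zeta, A_k^{(j_1)} e_\zeta\rangle$. Since $j_1\geq 2$, I decompose $A_k^{(j_1)}=A_k^{(j_1-1)}A_k^{(1)}$, insert a resolution of the identity $\sum_{r\in L_k}|e_r\rangle\langle e_r|$, and apply Proposition \ref{prop:prod-Kk-esti} to each of the two factors. This produces the indispensable $\hat V_k^2$, and together with $\lambda_{k,\zeta}\geq m(\xi)^{-1}$ and Lemma \ref{lem:estimate-lambkp} gives $|\langle e_\zeta, A_k^{(j_1)} e_\zeta\rangle|\leq Ck_F^{-1}\hat V_k^2 m(\xi)$, after which summation over $(k,\zeta)\in\cC_\xi$ closes the argument by \eqref{square-sum}.

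The one point to watch is the bookkeeping of $\hat V_k$ powers: under the bare $\ell^2$ assumption \eqref{square-sum}, only $\sum_k \hat V_k^2$ is controlled, so I must extract at least two factors of $\hat V_k$ from each summand. In the $\Sigma_*^1$ case this is automatic, one factor coming from each side of $P_{\zeta;k}$; in the one-sided $\Sigma_*^2$ case it is precisely the hypothesis $j_1\geq 2$ (or $j_2\geq 2$) that forces the appearance of at least two $\hat V_k$'s through the decomposition above. The excluded configurations $(1,0)$ and $(0,1)$ would yield only a single $\hat V_k$ and cannot be controlled by this direct approach---they must be absorbed into the \emph{exchange} contribution analyzed in Section~\ref{sec:ex-correct}.
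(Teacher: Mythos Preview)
Your proof is correct and follows essentially the same route as the paper's: both reduce via the preliminary bound to estimating $\sum_{(k,\zeta)}\sum_{p}|\langle e_p, A_k^{(j_1)}P_{\zeta;k}A_k^{(j_2)}e_p\rangle|$, open up $P_{\zeta;k}$, apply Proposition~\ref{prop:prod-Kk-esti}, and close with $\lambda_{k,\zeta}\geq m(\xi)^{-1}$, Lemma~\ref{lem:estimate-lambkp}, and $\ell^2$-summability of $\hat V$. The only cosmetic difference is that in the $\Sigma_*^2$ case the paper applies Proposition~\ref{prop:prod-Kk-esti} directly to the full product $A_k^{(j_1)}$, while you split it as $A_k^{(j_1-1)}A_k^{(1)}$ and insert a resolution of the identity --- a harmless but redundant step, since the proposition already handles $m$-fold products and the hypothesis $j_1\geq 2$ then ensures $\sum_k\hat V_k^{j_1}<\infty$ directly.

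One small correction to your closing remark: for the $\varepsilon_k$ error (i.e.\ $\cE_{3,t}$) the excluded $(1,0)$ and $(0,1)$ configurations never actually arise, because the relevant operator is $\{K_k,\Theta_{\xi;k}^1(\tau)\}$, and the decomposition in Section~\ref{sec:class-errors} (see $\Lambda_{\zeta;k}^3$, $\Gamma_{\zeta;k}^3$) always leaves at least two factors of $K_k/C_k/S_k$ on one side of $P_{\zeta;k}$. The $(1,0)$ issue is genuine only for the $\widetilde{\cE}_{2,k}$ and $\cE_{1,k}$ analyses, not here.
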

\begin{proof}
Recall that the terms in $\cW_{s_{1},s_{2}}^{(j_{1},j_{2})}(\varepsilon_{k})$ are
\begin{align}\label{varepk-esti1}
	\sum_{(k,\zeta)\in\cC_{\xi}}\big|\big\langle\Phi_{s_{1}},\varepsilon_{k}\big(A_{k,s_{2}}^{(j_{1})}P_{\zeta;k}A_{k,s_{2}}^{(j_{2})}\big)\Phi_{s_{1}}\big\rangle\big|.
\end{align}
For $(j_{1},j_{2})\in\Sigma_{*}^{1}$, by Lemma \ref{lem:estimate-lambkp} and the lower bound \eqref{gap}, we have
\begin{align*}
	&\sum_{(k,\zeta)\in\cC_{\xi}}\big|\big\langle\Phi_{s_{1}},\varepsilon_{k}\big(A_{k,s_{2}}^{(j_{1})}P_{\zeta;k}A_{k,s_{2}}^{(j_{2})}\big)\Phi_{s_{1}}\big\rangle\big|\leq C_{V}\cQ^{2}\sum_{(k,\zeta)\in\cC_{\xi}}\sum_{p\in L_{k}}\big|\langle e_{p},A_{k,s_{2}}^{(j_{1})}e_{\zeta}\rangle\big|\big|\langle e_{p},A_{k,s_{2}}^{(j_{2})}e_{\zeta}\rangle\big|\nonumber\\
	&\leq C_{V}k_{F}^{-2}\cQ^{2}\sum_{(k,\zeta)\in\cC_{\xi}}\frac{\hat{V}_{k}^{j_{1}+j_{2}}}{\lambda_{k,\zeta}}\sum_{p\in L_{k}}\frac{1}{\lambda_{k,p}}\leq C_{V}k_{F}^{-1}\cQ^{2}m(\xi)\sum_{k\in\Z_{*}^{3}}\hat{V}_{k}^{j_{1}+j_{2}}\leq C_{V}k_{F}^{-1}\cQ^{2}m(\xi).
\end{align*} 

Similarly, for $(j_{1},0)\in\Sigma_{*}^{2}$ with $j_{1}\geq 2$, we have
\begin{align}\label{varepk-esti2}
	&\sum_{(k,\zeta)\in\cC_{\xi}}\big|\big\langle\Phi_{s_{1}},\varepsilon_{k}\big(A_{k,s_{2}}^{(j_{1})}P_{\zeta;k}\big)\Phi_{s_{1}}\big\rangle\big|\leq C_{V}\cQ^{2}\sum_{(k,\zeta)\in\cC_{\xi}}\big|\langle e_{\zeta},A_{k,s_{2}}^{(j_{1})}e_{\zeta}\rangle\big|\nonumber\\
	&\leq C_{V}k_{F}^{-1}\cQ^{2}m(\xi)\sum_{k\in\Z_{*}^{3}}\hat{V}_{k}^{j_{1}}\leq C_{V}k_{F}^{-1}\cQ^{2}m(\xi).
\end{align}
By the same computation, we obtain the same bound for terms in \eqref{varepk-esti1} with $(0,j_{2})\in\Sigma_{*}^{2}$ for $j_{2}\geq 2$.
\end{proof}

\subsection{Analysis on $\cW_{s_{1},s_{2}}^{(j_{1},j_{2})}(\cG_{k})$ for $(j_{1},j_{2})\in\Sigma_{*}^{1}$}\label{sec:esti-hoe}
In this section, we consider the error terms in $\cW_{s_{1},s_{2}}^{(j_{1},j_{2})}(\cG_{k})$ for $(j_{1},j_{2})\in\Sigma_{*}^{1}$, with $\cG_{k}$ denotes either $\cE_{1,k}$ or $\widetilde{\cE}_{2,k}$.  These terms can be readily estimated using results in \cite{CHN-23} and is given in the following proposition:
\begin{prop}\label{prop:esti-A5A6}
	For each $0\leq s_{1},s_{2}\leq 1$ and $(j_{1},j_{2})\in\Sigma_{*}^{1}$, each terms in $\cW_{s_{1},s_{2}}^{(j_{1},j_{2})}(\cG_{k})$ is bounded by $C_{V}k_{F}^{-3/2}m(\xi)$, where $\cG_{k}$ denotes either $\cE_{1,k}$ or $\widetilde{\cE}_{2,k}$.
\end{prop}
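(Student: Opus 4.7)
The plan is to exploit the rank-one factorization
\begin{align*}
T_k := A_{k,s_2}^{(j_1)}P_{\zeta;k}A_{k,s_2}^{(j_2)}=|u_k\rangle\langle v_k|,\quad u_k:=A_{k,s_2}^{(j_1)}e_\zeta,\ v_k:=(A_{k,s_2}^{(j_2)})^{*}e_\zeta,
\end{align*}
which holds because $P_{\zeta;k}=|e_\zeta\rangle\langle e_\zeta|$ is a rank-one projection. Combining Proposition~\ref{prop:prod-Kk-esti} with the elementary inequality $4ab\leq(a+b)^{2}$ and the lattice sum $\sum_{p\in L_k}\lambda_{k,p}^{-1}\leq Ck_F$ from Lemma~\ref{lem:estimate-lambkp} yields the $\ell^{2}$ estimates
\begin{align*}
\|u_k\|_{\ell^{2}},\ \|v_k\|_{\ell^{2}}\leq \frac{Ck_F^{-1/2}\hat V_k^{j_i}}{\sqrt{\lambda_{k,\zeta}}},\qquad \|T_k\|_{\rm HS}\leq \frac{Ck_F^{-1}\hat V_k^{j_1+j_2}}{\lambda_{k,\zeta}}.
\end{align*}
The key feature is that each factor $A_{k,s_2}^{(j_i)}$ closed off through $e_\zeta$ contributes only $k_F^{-1/2}$ in $\ell^{2}$ norm, rather than the naive $k_F^{-1}$.

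Next, I would apply the Hilbert--Schmidt-type operator inequalities for $\cE_{1,k}$ and $\widetilde{\cE}_{2,k}$ derived in~\cite{CHN-23} (specifically those used for the upper bound on the exchange correlation contribution), paired with the $\cN_E$-moment control from Lemma~\ref{lem:Gronwall}, which holds uniformly in $s_1\in[0,1]$. These inequalities exhibit the intrinsic $k_F^{-1/2}$ smallness of the exchange corrections in the mean-field regime, ultimately tracing back to the $\sqrt{k_F^{-1}\hat V_k}$ normalization of the vector $v_k$ in the definition of $P_k$, together with a summable factor in the internal index $l$ via $\|K_l\|_{\rm HS}\leq C\hat V_l$ (Lemma~\ref{lem:matrix-expK}) and $\hat V\in\ell^{2}$. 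Substituting the Hilbert--Schmidt bound for $T_k$ produces a summand of order $C_V k_F^{-3/2}\hat V_k^{j_1+j_2}/\lambda_{k,\zeta}$ for each $(k,\zeta)\in\cC_\xi$.

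The final step sums over $(k,\zeta)\in\cC_\xi$ using the gap lower bound $\lambda_{k,\zeta}\geq m(\xi)^{-1}$ from~\eqref{gap2}, which contributes the factor $m(\xi)$, while the $k$-sum $\sum_k\hat V_k^{j_1+j_2}$ is finite since $j_1+j_2\geq 2$ and $\hat V\in\ell^{2}\cap\ell^{\infty}$ (both following from~\eqref{square-sum}). The main obstacle is simply the bookkeeping of $k_F$ powers: isolating the $k_F^{-1/2}$ coming from the exchange correction machinery, the $k_F^{-1}$ coming from the two rank-one matrix elements of $A_{k,s_2}^{(j_i)}$, and tracking the gap factor $\lambda_{k,\zeta}^{-1}$ all the way into $m(\xi)$ via \eqref{gap2}. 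No analytic idea beyond those in~\cite{CHN-23} is required, which is precisely why the statement promises that the bound "can be readily" obtained.
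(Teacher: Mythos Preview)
Your approach is essentially the same as the paper's: invoke Propositions 4.6--4.11 of \cite{CHN-23} together with Lemma~\ref{lem:Gronwall}, then control the resulting norms using the rank-one structure of $T_k$, the matrix-element bounds of Proposition~\ref{prop:prod-Kk-esti}, and finally sum over $(k,\zeta)\in\cC_\xi$ via the gap bound \eqref{gap2}.

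One point of imprecision in your bookkeeping: the bound coming from \cite{CHN-23} is not of the form $k_F^{-1/2}\|T_k\|_{\rm HS}$. In the paper's proof it reads explicitly
\[
C_V\Big(\sqrt{\textstyle\sum_{k}\sum_{p\in L_k}\max_{q\in L_k}|\langle e_q,T_{k,s_2}^{(j_1,j_2)} e_p\rangle|^2}+k_F^{-1/2}\sqrt{\textstyle\sum_{k}\|T_{k,s_2}^{(j_1,j_2)} h_k^{-1/2}\|_{\rm HS}^2}\Big),
\]
and the first term carries no explicit $k_F^{-1/2}$ prefactor. To get $k_F^{-3/2}$ out of that term one must use the \emph{pointwise} estimate $\max_q|\langle e_q,A_{k,s_2}^{(j_1)}e_\zeta\rangle|\leq k_F^{-1}\hat V_k^{j_1}/\lambda_{k,\zeta}$ from Proposition~\ref{prop:prod-Kk-esti}, not just the $\ell^2$ bound $\|u_k\|\leq Ck_F^{-1/2}\hat V_k^{j_1}/\sqrt{\lambda_{k,\zeta}}$; with the Hilbert--Schmidt norm alone that first term would only yield $k_F^{-1}m(\xi)$. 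This is a bookkeeping correction rather than a conceptual gap, and the remainder of your outline is accurate.
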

\begin{proof}
	By \cite[Propositions 4.6--4.11]{CHN-23} and Lemma \ref{lem:Gronwall}, each term in $\cW_{s_{1},s_{2}}^{(j_{1},j_{2})}(\cG_{k})$ is bounded by
	\begin{align*}
		C_{V}\Big(\sqrt{\sum_{k\in\Z_{*}^{3}}\sum_{p\in L_{k}}\max_{q\in L_{k}}\big|\langle e_{q},T_{k,s_{2}}^{(j_{1},j_{2})}e_{p}\rangle\big|^{2}}+k_{F}^{-1/2}\sqrt{\sum_{k\in\Z_{*}^{3}}\|T_{k,s_{2}}^{(j_{1},j_{2})}h_{k}^{-1/2}\|_{\rm HS}^{2}}\Big).
	\end{align*}
	Hence, it suffices to estimate these quantities.  By Proposition \ref{prop:prod-Kk-esti}, Lemma \ref{lem:estimate-lambkp} and the lower bound \eqref{gap}, we obtain
	\begin{align}\label{E1-mn-bdd1}
		&\sum_{k\in\Z_{*}^{3}}\sum_{p\in L_{k}}\max_{q\in L_{k}}\big|\langle e_{q},T_{k,s_{2}}^{(j_{1},j_{2})}e_{p}\rangle\big|^{2}=\sum_{(k,\zeta)\in\cC_{\xi}}\sum_{p\in L_{k}}\max_{q\in L_{k}}\big|\langle e_{\zeta},A_{k,s_{2}}^{(j_{1})}e_{q}\rangle\big|^{2}\big|\langle e_{\zeta},A_{k,s_{2}}^{(j_{2})}e_{p}\rangle\big|^{2}\nonumber\\
		&\leq k_{F}^{-4}\sum_{(k,\zeta)\in\cC_{\xi}}\sum_{p\in L_{k}}\frac{\hat{V}_{k}^{2(j_{1}+j_{2})}}{\lambda_{k,\zeta}^{2}(\lambda_{k,\zeta}+\lambda_{k,p})^{2}}\leq k_{F}^{-4}m(\xi)^{2}\sum_{k\in\Z_{*}^{3}}\hat{V}_{k}^{2(j_{1}+j_{2})}\sum_{p\in L_{k}}\frac{1}{\lambda_{k,p}}\nonumber\\
		&\leq C_{V}k_{F}^{-3}m(\xi)^{2},
	\end{align}
	and, similarly,
	\begin{align}\label{E1-mn-bdd2}
		\sum_{k\in\Z_{*}^{3}}\|T_{k,s_{2}}^{(j_{1},j_{2})}h_{k}^{-1/2}\|_{\rm HS}^{2}&=\sum_{(k,\zeta)\in\cC_{\xi}}\sum_{p\in L_{k}}\frac{\big|\langle e_{\zeta},A_{k,s_{2}}^{(j_{2})}e_{p}\rangle\big|^{2}}{\lambda_{k,p}}\|A_{k,s_{2}}^{(j_{1})}e_{\zeta}\|^{2}\nonumber\\
		&\leq Ck_{F}^{-3}\sum_{(k,\zeta)\in\cC_{\xi}}\frac{\hat{V}_{k}^{2(j_{1}+j_{2})}}{\lambda_{k,\zeta}^{2}}\sum_{p\in L_{k}}\frac{1}{\lambda_{k,p}}\leq Ck_{F}^{-2}m(\xi)^{2}.
	\end{align}
	This completes the proof.
\end{proof}

\subsection{Analysis on $\cW_{s_{1},s_{2}}^{(j_{1},j_{2})}(\cE_{1,k})$ for $(j_{1},j_{2})\in\Sigma_{*}^{2}$}\label{sec:esti-errEk1}
Recall for symmetric operator $T_{k}$ on $\ell^{2}(L_{k})$ that
\begin{align*}
	\cE_{1,k}(T_{k})&=\sum_{l\in\Z_{*}^{3}}\sum_{p\in L_{k}}\sum_{q\in L_{l}}b_{k}^{*}(T_{k}e_{p})\{\varepsilon_{k,l}(e_{p};e_{q}),b_{-l}^{*}(K_{-l}e_{-q})\},
\end{align*}
and
\begin{align*}
	\varepsilon_{k,l}(e_{p};e_{q})&=-(\delta_{p,q}a_{q-l}a_{p-k}^{*}+\delta_{p-k,q-l}a_{q}^{*}a_{p}).
\end{align*}
We see that $\cE_{1,k}(T_{k})$ splits into two sums:
\begin{align}
	-\cE_{1,k}(T_{k})&=\sum_{l\in\Z_{*}^{3}}\sum_{p\in L_{k}\cap L_{l}}b_{k}^{*}(T_{k}e_{p})\{a_{p-l}a_{p-k}^{*},b_{-l}^{*}(K_{-l}e_{-p})\}\nonumber\\
	&\quad\quad\quad+\sum_{l\in\Z_{*}^{3}}\sum_{p\in L_{k}'\cap L_{l}'}b_{k}^{*}(T_{k}e_{p+k})\{a_{p+l}^{*}a_{p+k},b_{-l}^{*}(K_{-l}e_{-p-l})\}.
\end{align}
Following the argument in \cite{CHN-23}, both of these sums can be written in the following schematic form
\begin{align}\label{E1-schematic}
	&\sum_{l\in\Z_{*}^{3}}\sum_{p\in S_{k}\cap S_{l}}b_{k}^{*}(T_{k}e_{p_{1}})\{\tilde{a}_{p_{2}}^{*}\tilde{a}_{p_{3}},b_{-l}^{*}(K_{-l}e_{p_{4}})\}\nonumber\\
	&=2\sum_{l\in\Z_{*}^{3}}\sum_{p\in S_{k}\cap S_{l}}b_{k}^{*}(T_{k}e_{p_{1}})\tilde{a}_{p_{2}}^{*}b_{-l}(K_{-l}e_{p_{4}})\tilde{a}_{p_{3}}\nonumber\\
	&\quad\quad\quad\quad\quad\quad+\sum_{l\in\Z_{*}^{3}}\sum_{p\in S_{k}\cap S_{l}}b_{k}^{*}(T_{k}e_{p_{1}})\tilde{a}_{p_{2}}^{*}[b_{-l}^{*}(K_{-l}e_{p_{4}}),\tilde{a}_{p_{3}}],
\end{align}
where $S_{k}$ denotes either $L_{k}$ or $L_{k}'$ and
\begin{align}\label{E1-gen-p}
	(p_{1},p_{2},p_{3},p_{4})&=\begin{cases}
		(p,p-l,p-k,-p)\quad&S_{k}=L_{k},\\
		(p+k,p+l,p+k,-p-l)&S_{k}=L_{k}'.
	\end{cases}
\end{align}
The terms in $\cW_{s_{1},s_{2}}^{(m)}(\cE_{1,k})$ are the matrix elements $\big\langle\Phi_{s_{1}},\cE_{1,k}(T_{k,s_{2}}^{(j_{1},j_{2})})\Phi_{s_{1}}\big\rangle$ for $(j_{1},j_{2})\in\Sigma_{*}$, where
\begin{align}\label{T1T2T3}
	T_{k,s_{2}}^{(j_{1},j_{2})}&:=\sum_{\zeta\in\cD_{k,\xi}}A_{k,s_{2}}^{(j_{1})}P_{\zeta;k}A_{k,s_{2}}^{(j_{2})}.
\end{align}
Hence, it estimate to consider the following terms for $(j_{1},j_{2})\in\Sigma_{*}^{2}$:
\begin{align}
	\label{Ek1-1}&\sum_{k,l\in\Z_{*}^{3}}\sum_{p\in S_{k}\cap S_{l}}\big\langle\Phi_{s_{1}},b_{k}^{*}\big(T_{k,s_{2}}^{(j_{1},j_{2})}e_{p_{1}}\big)\tilde{a}_{p_{2}}^{*}b_{-l}(K_{-l}e_{p_{4}})\tilde{a}_{p_{3}}\Phi_{s_{1}}\big\rangle,\\
	\label{Ek1-2}&\sum_{k,l\in\Z_{*}^{3}}\sum_{p\in S_{k}\cap S_{l}}\big\langle\Phi_{s_{1}},b_{k}^{*}\big(T_{k,s_{2}}^{(j_{1},j_{2})}e_{p_{1}}\big)\tilde{a}_{p_{2}}^{*}[b_{-l}^{*}(K_{-l}e_{p_{4}}),\tilde{a}_{p_{3}}]\Phi_{s_{1}}\big\rangle.
\end{align}

\begin{prop}\label{prop:esti-A1A2}
For each $0\leq s_{1},s_{2}\leq 1$ and $1\leq j_{1}\leq 3$, it holds that
\begin{align}
	\label{esti-A1}&\sum_{k,l\in\Z_{*}^{3}}\sum_{p\in S_{k}\cap S_{l}}\big|\big\langle\Phi_{s_{1}},b_{k}^{*}\big(T_{k,s_{2}}^{(j_{1},0)}e_{p_{1}}\big)\tilde{a}_{p_{2}}^{*}b_{-l}(K_{-l}e_{p_{4}})\tilde{a}_{p_{3}}\Phi_{s_{1}}\big\rangle\big|,\\
	\label{esti-A2}&\sum_{k,l\in\Z_{*}^{3}}\sum_{p\in S_{k}\cap S_{l}}\big|\big\langle\Phi_{s_{1}},b_{k}^{*}\big(T_{k,s_{2}}^{(j_{1},0)}e_{p_{1}}\big)\tilde{a}_{p_{2}}^{*}[b_{-l}^{*}(K_{-l}e_{p_{4}}),\tilde{a}_{p_{3}}]\Phi_{s_{1}}\big\rangle\big|,
\end{align}
are bounded by
\begin{align*}
	C_{V}k_{F}^{-1}\big(\cQ m(\xi)+m(\xi)^{1/2}\cdot\sup_{0\leq \tau\leq 1}\|\tilde{a}_{\xi}\Phi_{\tau}\|\big).
\end{align*}  
\end{prop}
\begin{proof}
For \eqref{esti-A1}, by Proposition \ref{prop:prod-Kk-esti}, Cauchy-Schwarz inequality, pull-through formula \eqref{pull-through}, the bounds \eqref{excitation-bdd} and \eqref{gap}, we obtain for $S_{k}=L_{k}$ that 
\begin{align}\label{A1m-esti}
	&\sum_{k,l\in\Z_{*}^{3}}\sum_{p\in L_{k}\cap L_{l}}\big|\big\langle\Phi_{s_{1}},b_{k}^{*}\big(T_{k,s_{2}}^{(j_{1},0)}e_{p_{1}}\big)\tilde{a}_{p_{2}}^{*}b_{-l}(K_{-l}e_{p_{4}})\tilde{a}_{p_{3}}\Phi_{s_{1}}\big\rangle\big|\nonumber\\
	\displaybreak
	&\leq \sum_{(k,\zeta)\in\cC_{\xi}}\sum_{l\in\Z_{*}^{3}}\sum_{p\in L_{k}\cap L_{l}}\delta_{p,\zeta}\|A_{k,s_{2}}^{(j_{1})}e_{\zeta}\|\|(\cN_{E}+1)^{-1}\cN_{-l}^{1/2}\tilde{a}_{p-k}\Phi_{s_{1}}\|\nonumber\\
	&\quad\quad\quad\quad\quad\quad\quad\quad\quad\quad\quad\quad\quad\quad\times\|K_{l}e_{p}\|\|(\cN_{E}+1)\cN_{k}^{1/2}\tilde{a}_{p-l}\Phi_{s_{1}}\|\nonumber\\
	&\leq k_{F}^{-1} \sum_{k,l\in\Z_{*}^{3}}\sum_{\zeta\in\cD_{k,\xi}}\frac{\hat{V}_{k}\hat{V}_{l}}{\sqrt{\lambda_{k,\zeta}\lambda_{l,\zeta}}}\|\tilde{a}_{\zeta-k}\Phi_{s_{1}}\|\|(\cN_{E}+1)\cN_{k}^{1/2}\tilde{a}_{\zeta-l}\Phi_{s_{1}}\|\nonumber\\
	&\leq k_{F}^{-1}m(\xi)^{1/2}\sum_{k,l\in\Z_{*}^{3}}\hat{V}_{k}\hat{V}_{l}\Big(\chi_{L_{k}\cap L_{l}}(\xi)m(\xi)^{1/2}\|\tilde{a}_{\xi-k}\Phi_{s_{1}}\|\|\cN_{k}^{1/2}(\cN_{E}+1)\tilde{a}_{\xi-l}\Phi_{s_{1}}\|\nonumber\\
	&\quad\quad\quad\quad\quad\quad\quad\quad\quad\quad+\chi_{L_{k}\cap L_{l}}(\xi+k)\|\tilde{a}_{\xi}\Phi_{s_{1}}\|\|\cN_{k}^{1/2}(\cN_{E}+1)\tilde{a}_{\xi+k-l}\Phi_{s_{1}}\|\Big)\nonumber\\
	&\leq k_{F}^{-1}m(\xi)^{1/2}\Big(\sum_{l\in\Z_{*}^{3}}\hat{V}_{l}\chi_{L_{l}}(\xi)m(\xi)^{1/2}\cQ\sqrt{\sum_{k\in\Z_{*}^{3}}\hat{V}_{k}^{2}}\sqrt{\sum_{k\in\Z_{*}^{3}}\|\cN_{k}^{1/2}\tilde{a}_{\xi-l}\cN_{E}\Phi_{s_{1}}\|^{2}}\nonumber\\
	&\quad\quad+\sum_{k\in\Z_{*}^{3}}\hat{V}_{k}\chi_{L_{k}'}(\xi)\|\tilde{a}_{\xi}\Phi_{s_{1}}\|\sqrt{\sum_{l\in\Z_{*}^{3}}\hat{V}_{l}^{2}}\sqrt{\sum_{l\in\Z_{*}^{3}}\chi_{L_{l}}(\xi+k)\|\tilde{a}_{\xi+k-l}\cN_{k}^{1/2}\cN_{E}\Phi_{s_{1}}\|^{2}}\Big)\nonumber\\
	&\leq C_{V}k_{F}^{-1}m(\xi)^{1/2}\Big(m(\xi)^{1/2}\cQ\sqrt{\sum_{l\in\Z_{*}^{3}}\hat{V}_{l}^{2}}\sqrt{\sum_{l\in\Z_{*}^{3}}\|\tilde{a}_{\xi-l}\cN_{E}^{3/2}\Phi_{s_{1}}\|^{2}}\nonumber\\
	&\quad\quad\quad\quad\quad\quad\quad\quad\quad\quad\quad\quad+\|\tilde{a}_{\xi}\Phi_{s_{1}}\|\sqrt{\sum_{k\in\Z_{*}^{3}}\hat{V}_{k}^{2}}\sqrt{\sum_{k\in\Z_{*}^{3}}\|\cN_{k}^{1/2}\cN_{E}^{3/2}\Phi_{s_{1}}\|^{2}}\Big)\nonumber\\
	&\leq C_{V}k_{F}^{-1}\big(\chi_{B_{F}^{c}}(\xi)m(\xi)\cQ+\chi_{B_{F}}(\xi)m(\xi)^{1/2}\|\tilde{a}_{\xi}\Phi_{s_{1}}\|\big).
\end{align}
By the same computation, the case $S_{k}=L_{k}'$ yields the bound
\begin{align}
	&\sum_{k,l\in\Z_{*}^{3}}\sum_{p\in L_{k}'\cap L_{l}'}\big|\big\langle\Phi_{s_{1}},b_{k}^{*}\big(T_{k,s_{2}}^{(j_{1},0)}e_{p_{1}}\big)\tilde{a}_{p_{2}}^{*}b_{-l}(K_{-l}e_{p_{4}})\tilde{a}_{p_{3}}\Phi_{s_{1}}\big\rangle\big|\nonumber\\
	&\leq C_{V}k_{F}^{-1}\big(\chi_{B_{F}}(\xi)m(\xi)\cQ+\chi_{B_{F}^{c}}(\xi)m(\xi)^{1/2}\|\tilde{a}_{\xi}\Phi_{s_{1}}\|\big).
\end{align}
In summary, we obtain
\begin{align}
	&\sum_{k,l\in\Z_{*}^{3}}\sum_{p\in S_{k}\cap S_{l}}\big|\big\langle\Phi_{s_{1}},b_{k}^{*}\big(T_{k,s_{2}}^{(j_{1},0)}e_{p_{1}}\big)\tilde{a}_{p_{2}}^{*}b_{-l}(K_{-l}e_{p_{4}})\tilde{a}_{p_{3}}\Phi_{s_{1}}\big\rangle\big|\nonumber\\
	&\leq C_{V}k_{F}^{-1}\big(m(\xi)\cQ+m(\xi)^{1/2}\|\tilde{a}_{\xi}\Phi_{s_{1}}\|\big).
\end{align}

Next, for \eqref{esti-A2}, we first compute the commutator
\begin{align}\label{E1-com}
	[b_{-l}(K_{-l}e_{p_{4}}),\tilde{c}_{p_{3}}^{*}]&=\begin{cases}
		-\chi_{L_{-l}'}(p_{3})\langle K_{-l}e_{p_{4}},e_{p_{3}-l}\rangle\tilde{a}_{p_{3}-l}\quad&S_{k}=L_{k},\\
		\chi_{L_{-l}}(p_{3})\langle K_{-l}e_{p_{4}},e_{p_{3}}\rangle\tilde{a}_{p_{3}+l}&S_{k}=L_{k}',
	\end{cases}.
\end{align}
which, by \cite[Eq. (4.28)]{CHN-23}, satifies
\begin{align}
	\begin{cases}\label{E1-com2}
		\big|\chi_{L_{-l}}(p_{3}-l)\langle K_{-l}e_{p_{4}},e_{p_{3}-l}\rangle\big|\quad&S_{k}=L_{k}\\
		\big|\chi_{L_{-l}}(p_{3})\langle K_{-l}e_{p_{4}},e_{p_{3}}\rangle\big|&S_{k}=L_{k}'
	\end{cases}\leq C\frac{k_{F}^{-1}\hat{V}_{-l}}{\sqrt{\lambda_{k,p_{1}}\lambda_{-l,p_{4}}}}.
\end{align}
Then, using \eqref{excitation-bdd}--\eqref{excitation-bdd5} and \eqref{gap2}, we obtain
\begin{align}\label{A2m-esti}
	&\sum_{k,l\in\Z_{*}^{3}}\sum_{p\in S_{k}\cap S_{l}}\big|\big\langle\Phi_{s_{1}},b_{k}^{*}\big(T_{k,s_{2}}^{(j_{1},0)}e_{p_{1}}\big)\tilde{a}_{p_{2}}^{*}[b_{-l}^{*}(K_{-l}e_{p_{4}}),\tilde{a}_{p_{3}}]\Phi_{s_{1}}\big\rangle\big|\nonumber\\
	&\leq \sum_{k,l\in\Z_{*}^{3}}\sum_{p\in S_{k}\cap S_{l}}\|b_{k}(T_{k,s_{2}}^{(j_{1},0)}e_{p_{1}})\tilde{a}_{p_{2}}[b_{-l}(K_{-l}e_{p_{4}}),\tilde{a}_{p_{3}}^{*}]^{*}\Phi_{s_{1}}\|\nonumber\\
	&\leq Ck_{F}^{-1}\sum_{(k,\zeta)\in\cC_{\xi}}\sum_{l\in\Z_{*}^{3}}\sum_{p\in S_{k}\cap S_{l}}\delta_{p_{1},\zeta}\frac{\hat{V}_{-l}}{\sqrt{\lambda_{k,p_{1}}\lambda_{-l,p_{4}}}}\|A_{k,s_{2}}^{(j_{1})}e_{\zeta}\|\|\cN_{k}^{1/2}\tilde{a}_{p_{2}}\tilde{a}_{p_{3}\mp l}\Phi_{s_{1}}\|\nonumber\\
	&\leq Ck_{F}^{-3/2}\sum_{(k,\zeta)\in\cC_{\xi}}\sum_{l\in\Z_{*}^{3}}\sum_{p\in S_{k}\cap S_{l}}\delta_{p_{1},\zeta}\frac{\hat{V}_{-l}}{\sqrt{\lambda_{k,p_{1}}\lambda_{-l,p_{4}}}}\frac{\hat{V}_{k}^{j_{1}}}{\sqrt{\lambda_{k,\zeta}}}\|\cN_{k}^{1/2}\tilde{a}_{p_{2}}\tilde{a}_{p_{3}\mp l}\Phi_{s_{1}}\|\nonumber\\
	&\leq Ck_{F}^{-3/2}m(\xi)\sum_{(k,\zeta)\in\cC_{\xi}}\hat{V}_{k}^{j_{1}}\sqrt{\sum_{l\in\Z_{*}^{3}}\hat{V}_{l}^{2}}\sqrt{\sum_{l\in\Z_{*}^{3}}\sum_{p\in S_{k}\cap S_{l}}\frac{\delta_{p_{1},\zeta}}{\lambda_{-l,p_{4}}}\|\cN_{k}^{1/2}\tilde{a}_{p_{2}}\tilde{a}_{p_{3}\mp l}\Phi_{s_{1}}\|^{2}}.
\end{align}
By \eqref{excitation-bdd}--\eqref{excitation-bdd5} and operator inequality $\|\tilde{a}_{q}\|=1$, the last factor in \eqref{A2m-esti} gives
\begin{align}
	\sum_{l\in\Z_{*}^{3}}\sum_{p\in S_{k}\cap S_{l}}\frac{\delta_{p_{1},\zeta}}{\lambda_{-l,p_{4}}}\|\cN_{k}^{1/2}\tilde{a}_{p_{2}}\tilde{a}_{p_{3}\mp l}\Phi_{s_{1}}\|^{2}&=\sum_{l\in\Z_{*}^{3}}\begin{cases}
		\chi_{L_{l}}(\zeta)\|\cN_{k}^{1/2}\tilde{a}_{\zeta+l}\Phi_{s_{1}}\|^{2}\quad&S_{k}=L_{k}\\
		\chi_{L_{l}'}(\zeta-k)\|\cN_{k}^{1/2}\tilde{a}_{\zeta-k-l}\Phi_{s_{1}}\|^{2}&S_{k}=L_{k}'
	\end{cases}\nonumber\\
	&\leq\|\cN_{k}^{1/2}\cN_{E}^{1/2}\Phi_{s_{1}}\|^{2},
\end{align}
where we recall the values of $p_{2}$ from \eqref{E1-gen-p}.  Consequently, \eqref{A2m-esti} becomes
\begin{align}
	&\sum_{k,l\in\Z_{*}^{3}}\sum_{p\in S_{k}\cap S_{l}}\big|\big\langle\Phi_{s_{1}},b_{k}^{*}\big(T_{k,s_{2}}^{(j_{1},0)}e_{p_{1}}\big)\tilde{a}_{p_{2}}^{*}[b_{-l}^{*}(K_{-l}e_{p_{4}}),\tilde{a}_{p_{3}}]\Phi_{s_{1}}\big\rangle\big|\nonumber\\
	&\leq C_{V}k_{F}^{-3/2}m(\xi)\sum_{(k,\zeta)\in\cC_{\xi}}\hat{V}_{k}^{j_{1}}\|\cN_{k}^{1/2}\cN_{E}\Phi_{s_{1}}\|\nonumber\\
	&\leq C_{V}k_{F}^{-3/2}m(\xi)\sqrt{\sum_{k\in\Z_{*}^{3}}\hat{V}_{k}^{2j_{1}}}\sqrt{\sum_{k\in\Z_{*}^{3}}\|\cN_{k}^{1/2}\cN_{E}^{1/2}\Phi_{s_{1}}\|^{2}}\leq C_{V}k_{F}^{-3/2}m(\xi).
\end{align}
This completes the proof.
\end{proof}
\begin{prop}\label{prop:esti-A3A4}
For each $0\leq s_{1},s_{2}\leq 1$ and $1\leq j_{2}\leq 3$, it holds for any $\delta>0$ that
\begin{align}\label{esti-A3}
	&\sum_{k,l\in\Z_{*}^{3}}\sum_{p\in S_{k}\cap S_{l}}\big|\big\langle\Phi_{s_{1}},b_{k}^{*}\big(T_{k,s_{2}}^{(0,j_{2})}e_{p_{1}}\big)\tilde{a}_{p_{2}}^{*}b_{-l}(K_{-l}e_{p_{4}})\tilde{a}_{p_{3}}\Phi_{s_{1}}\big\rangle\big|,\\
	\label{esti-A4}&\sum_{k,l\in\Z_{*}^{3}}\sum_{p\in S_{k}\cap S_{l}}\big|\big\langle\Phi_{s_{1}},b_{k}^{*}\big(T_{k,s_{2}}^{(0,j_{2})}e_{p_{1}}\big)\tilde{a}_{p_{2}}^{*}[b_{-l}^{*}(K_{-l}e_{p_{4}}),\tilde{a}_{p_{3}}]\Phi_{s_{1}}\big\rangle\big|,
\end{align}
are bounded by $C_{\delta,V}k_{F}^{-1}\cQ^{1-\delta}m(\xi)$.  
\end{prop}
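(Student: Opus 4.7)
The plan is to adapt the proof of Proposition~\ref{prop:esti-A1A2} to the present situation, where the rank-one projection in $T_{k,s_2}^{(0,j_2)}=\sum_{\zeta\in\cD_{k,\xi}} P_{\zeta;k} A_{k,s_2}^{(j_2)}$ now sits on the \emph{left}, so that
\[
T_{k,s_2}^{(0,j_2)}e_{p_1}=\sum_{\zeta\in\cD_{k,\xi}} \chi_{L_k}(\zeta)\langle e_\zeta,A_{k,s_2}^{(j_2)}e_{p_1}\rangle e_\zeta.
\]
The crucial structural difference from Proposition~\ref{prop:esti-A1A2} is that the Kronecker delta $\delta_{p_1,\zeta}$ appearing in \eqref{A1m-esti} is now replaced by the off-diagonal matrix element $\langle e_\zeta,A^{(j_2)}_{k,s_2}e_{p_1}\rangle$, which by Proposition~\ref{prop:prod-Kk-esti} is bounded by $Ck_F^{-1}\hat V_k^{j_2}(\lambda_{k,\zeta}+\lambda_{k,p_1})^{-1}$. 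This off-diagonal decay in $p_1$ is the tool that will replace the pointwise localization previously provided by the delta.

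First I would apply Cauchy--Schwarz in the form
\[
\bigl|\langle\Phi_{s_1}, b_k^*(T^{(0,j_2)}_{k,s_2}e_{p_1})\tilde a_{p_2}^* b_{-l}(K_{-l}e_{p_4})\tilde a_{p_3}\Phi_{s_1}\rangle\bigr|\leq \|b_k(T^{(0,j_2)}_{k,s_2}e_{p_1})\tilde a_{p_2}\Phi_{s_1}\|\,\|b_{-l}(K_{-l}e_{p_4})\tilde a_{p_3}\Phi_{s_1}\|,
\]
then invoke the excitation bound \eqref{excitation-bdd} together with Corollary~\ref{cor:pull-through} to push $\cN_k^{1/2}$ past $\tilde a_{p_2}$. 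This will yield a master estimate of the same shape as \eqref{A1m-esti}, but with the factor $\delta_{p_1,\zeta}\|A^{(j_1)}e_\zeta\|$ replaced throughout by $Ck_F^{-1}\hat V_k^{j_2}(\lambda_{k,\zeta}+\lambda_{k,p_1})^{-1}$.

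In the second step I would apply Cauchy--Schwarz over $(p,l)$: use \eqref{K-lep4-esti} in the form $\sum_l\|K_{-l}e_{p_4}\|^2\lesssim k_F^{-1}\hat V_l^2\lambda_{-l,p_4}^{-1}$ to absorb the $l$-weights, use Lemma~\ref{lem:estimate-lambkp} to handle the $p$-sum via $\sum_p(\lambda_{k,\zeta}+\lambda_{k,p_1})^{-2}\lesssim \lambda_{k,\zeta}^{-1}\sum_p\lambda_{k,p_1}^{-1}$, and extract the factor $m(\xi)$ through \eqref{gap2}. Lemma~\ref{lem:excitation-Gronwall} then trades the remaining $\cN_E$-weighted norms for $\cQ^{1-\delta}$, and the outer $k$-summation is controlled by $\|\hat V\|_{\ell^2}$ together with Lemma~\ref{lem:gap-sum}, delivering the claimed bound $C_{\delta,V}k_F^{-1}\cQ^{1-\delta}m(\xi)$ for \eqref{esti-A3}. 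For \eqref{esti-A4}, the same scheme will be applied after expanding the commutator $[b_{-l}^*(K_{-l}e_{p_4}),\tilde a_{p_3}]$ via \eqref{E1-com}--\eqref{E1-com2}, which contributes the additional factor $k_F^{-1}\hat V_{-l}(\lambda_{k,p_1}\lambda_{-l,p_4})^{-1/2}$; its treatment then mirrors that of \eqref{A2m-esti} in the proof of Proposition~\ref{prop:esti-A1A2}.

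The main obstacle will be the loss of the delta function $\delta_{p_1,\zeta}$: without it, the sum over $p\in S_k\cap S_l$ is no longer reduced to a single point, and I need to verify that the combined decay $(\lambda_{k,\zeta}+\lambda_{k,p_1})^{-1}\lambda_{-l,p_4}^{-1/2}$ is strong enough to produce an $\ell^2$-summable kernel in $p$ after Cauchy--Schwarz splits the two factors. This is the genuinely new combinatorial step; once it is checked, the outer $(k,\zeta)$-summation proceeds exactly as in Proposition~\ref{prop:esti-A1A2}, with the factor $m(\xi)$ emerging from the lower bound $\lambda_{k,\zeta}\geq m(\xi)^{-1}$ in \eqref{gap2}.
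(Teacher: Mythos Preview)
Your proposal has a genuine gap in how you plan to extract the factor $\cQ^{1-\delta}$. The key point you miss is that $T^{(0,j_2)}_{k,s_2}e_{p_1}=\sum_{\zeta\in\cD_{k,\xi}}\chi_{L_k}(\zeta)\langle e_\zeta,A^{(j_2)}_{k,s_2}e_{p_1}\rangle\, e_\zeta$ is (for each $\zeta$) a scalar multiple of the single basis vector $e_\zeta$, so that
\[
b_k\big(T^{(0,j_2)}_{k,s_2}e_{p_1}\big)=\sum_{\zeta\in\cD_{k,\xi}}\chi_{L_k}(\zeta)\,\langle e_\zeta,A^{(j_2)}_{k,s_2}e_{p_1}\rangle\, b_{k,\zeta},\qquad b_{k,\zeta}=\tilde a_{\zeta-k}\tilde a_\zeta .
\]
Since $\zeta\in\cD_{k,\xi}$ forces $\zeta=\pm\xi$ or $\zeta-k=\pm\xi$, one of these two annihilators is always $\tilde a_{\pm\xi}$. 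The paper uses this explicitly: the left side of the inner product produces $\|\tilde a_\zeta\tilde a_{\zeta-k}\tilde a_{p_2}\Phi_{s_1}\|\le\|\tilde a_\zeta\tilde a_{\zeta-k}\Phi_{s_1}\|$, and after Cauchy--Schwarz in $k$ one closes with
\[
\sum_{(k,\zeta)\in\cC_\xi}\|\tilde a_\zeta\tilde a_{\zeta-k}\Phi_{s_1}\|^2\le \|\tilde a_\xi\cN_E^{1/2}\Phi_{s_1}\|^2\le C_{\delta,V}\,\cQ^{2-2\delta},
\]
which is exactly where $\cQ^{1-\delta}$ comes from.

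Your plan applies the excitation bound \eqref{excitation-bdd} in the form $\|b_k(T e_{p_1})\Psi\|\le\|T e_{p_1}\|\,\|\cN_k^{1/2}\Psi\|$, which replaces the specific $b_{k,\zeta}$ by a generic $\cN_k^{1/2}$ and thereby erases the $\tilde a_{\pm\xi}$ from the estimate. In the resulting ``master estimate'' the only $\Phi$-dependent factors are $\|\tilde a_{p_2}\cN_k^{1/2}\Phi_{s_1}\|$ and $\|\cN_{-l}^{1/2}\tilde a_{p_3}\Phi_{s_1}\|$, with $p_2,p_3$ free summation variables unrelated to $\xi$. Lemma~\ref{lem:excitation-Gronwall} then cannot be invoked to produce $\cQ^{1-\delta}$ unless you first take a supremum over $p_2$ (or $p_3$); but doing so leaves you with a bare $p$-sum whose only remaining weight is $(\lambda_{k,\zeta}+\lambda_{k,p_1})^{-1}$, and Lemma~\ref{lem:estimate-lambkp} gives that sum size $O(k_F)$. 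Tracking the powers (and the outer $(k,\zeta)$-sum) you end up with at best $C_V k_F^{-1}m(\xi)^{1/2}$ \emph{without} the factor $\cQ^{1-\delta}$, which is insufficient for the bootstrap in Lemma~\ref{lem:apriori-new}. The remedy is precisely the paper's move: do not use \eqref{excitation-bdd} on $b_k$, but keep $b_{k,\zeta}=\tilde a_{\zeta-k}\tilde a_\zeta$ explicit so that $\tilde a_{\pm\xi}$ survives into the final Cauchy--Schwarz.
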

\begin{proof}
For \eqref{esti-A3}, by Cauchy-Schwarz inequality, Lemma \ref{lem:excitation-Gronwall} and the lower bound \eqref{gap}, we obtain for each $\delta>0$ that
\begin{align}\label{A3m-esti}
	&\sum_{k,l\in\Z_{*}^{3}}\sum_{p\in S_{k}\cap S_{l}}\big|\big\langle\Phi_{s_{1}},b_{k}^{*}\big(T_{k,s_{2}}^{(0,j_{2})}e_{p_{1}}\big)\tilde{a}_{p_{2}}^{*}b_{-l}(K_{-l}e_{p_{4}})\tilde{a}_{p_{3}}\Phi_{s_{1}}\big\rangle\big|\nonumber\\
	\displaybreak
	&\leq C\sum_{(k,\zeta)\in\cC_{\xi}}\sum_{l\in\Z_{*}^{3}}\sum_{p\in S_{k}\cap S_{l}}\big|\langle e_{\zeta},A_{k,s_{2}}^{(j_{2})}e_{p_{1}}\rangle\big|\|K_{-l}e_{p_{4}}\|\|\tilde{a}_{\zeta}\tilde{a}_{\zeta-k}\tilde{a}_{p_{2}}\Phi_{s_{1}}\|\|\cN_{-l}^{1/2}\tilde{a}_{p_{3}}\Phi_{s_{1}}\|\nonumber\\
	&\leq Ck_{F}^{-1}\sum_{(k,\zeta)\in\cC_{\xi}}\sum_{l\in\Z_{*}^{3}}\sum_{p\in S_{k}\cap S_{l}}\frac{\hat{V}_{k}^{j_{2}}}{\lambda_{k,\zeta}+\lambda_{k,p_{1}}}\|K_{-l}e_{p_{4}}\|\|\tilde{a}_{\zeta}\tilde{a}_{\zeta-k}\tilde{a}_{p_{2}}\Phi_{s_{1}}\|\|\cN_{-l}^{1/2}\tilde{a}_{p_{3}}\Phi_{s_{1}}\|\nonumber\\
	&\leq Cm(\xi)k_{F}^{-1}\sum_{(k,\zeta)\in\cC_{\xi}}\sum_{l\in\Z_{*}^{3}}\hat{V}_{k}^{j_{2}}\|\tilde{a}_{\zeta}\tilde{a}_{\zeta-k}\Phi_{s_{1}}\|\sqrt{\sum_{p\in S_{k}\cap S_{l}}\|K_{-l}e_{p_{4}}\|^{2}}\sqrt{\sum_{p\in S_{k}\cap S_{l}}\|\tilde{a}_{p_{3}}\cN_{-l}^{1/2}\Phi_{s_{1}}\|^{2}}\nonumber\\
	&\leq Cm(\xi)k_{F}^{-1}\sum_{(k,\zeta)\in\cC_{\xi}}\hat{V}_{k}^{j_{2}}\|\tilde{a}_{\zeta}\tilde{a}_{\zeta-k}\Phi_{s_{1}}\|\sqrt{\sum_{l\in\Z_{*}^{3}}\|K_{l}\|_{\rm HS}^{2}}\sqrt{\sum_{l\in\Z_{*}^{3}}\|\cN_{-l}^{1/2}\cN_{E}^{1/2}\Phi_{s_{1}}\|^{2}}\nonumber\\
	&\leq C_{V}m(\xi)k_{F}^{-1}\sqrt{\sum_{k\in\Z_{*}^{3}}\hat{V}_{k}^{2j_{2}}}\sqrt{\sum_{(k,\zeta)\in\cC_{\xi}}\|\tilde{a}_{\zeta}\tilde{a}_{\zeta-k}\Phi_{s_{1}}\|^{2}}\nonumber\\
	&\leq C_{V}m(\xi)k_{F}^{-1}\sqrt{\sum_{k\in\Z_{*}^{3}}\Big(\chi_{L_{k}}(\xi)\|\tilde{a}_{\xi-k}\tilde{a}_{\xi}\Phi_{s_{1}}\|^{2}+\chi_{L_{k}'}(\xi)\|\tilde{a}_{k+\xi}\tilde{a}_{\xi}\Phi_{s_{1}}\|^{2}\Big)}\nonumber\\
	&\leq C_{V}m(\xi)k_{F}^{-1}\|\tilde{a}_{\xi}\cN_{E}^{1/2}\Phi_{s_{1}}\|\leq C_{\delta,V}k_{F}^{-1}\cQ^{1-\delta}m(\xi),
\end{align}
where we have used the estimates $\|K_{l}\|_{\rm HS}\leq C\hat{V}_{l}$ and the lower bound \eqref{gap}.  

Next, by relations \eqref{E1-com}--\eqref{E1-com2}, Lemmas \ref{lem:estimate-lambkp}, \ref{lem:excitation-Gronwall}, and the operator norm $\|\tilde{a}_{q}\|=1$, we obtain for each $\delta>0$ that
\begin{align*}
	&\sum_{k,l\in\Z_{*}^{3}}\sum_{p\in S_{k}\cap S_{l}}\big|\big\langle\Phi_{s_{1}},b_{k}^{*}\big(T_{k,s_{2}}^{(0,j_{2})}e_{p_{1}}\big)\tilde{a}_{p_{2}}^{*}[b_{-l}^{*}(K_{-l}e_{p_{4}}),\tilde{a}_{p_{3}}]\Phi_{s_{1}}\big\rangle\big|\nonumber\\
	&\leq C\sum_{l\in\Z_{*}^{3}}\sum_{(k,\zeta)\in\cC_{\xi}}\sum_{p\in S_{k}\cap S_{l}}\big|\langle e_{\zeta},A_{k,s_{2}}^{(j_{2})}e_{p_{1}}\rangle\big|\|[b_{-l}(K_{-l}e_{p_{4}}),\tilde{a}_{p_{3}}^{*}]\tilde{a}_{p_{2}}b_{k,\zeta}\Phi_{s_{1}}\|\nonumber\\
	&\leq C k_{F}^{-2}\sum_{l\in\Z_{*}^{3}}\sum_{(k,\zeta)\in\cC_{\xi}}\sum_{p\in S_{k}\cap S_{l}}\frac{\hat{V}_{-l}}{\sqrt{\lambda_{k,p_{1}}\lambda_{-l,p_{4}}}}\frac{\hat{V}_{k}^{j_{2}}}{\lambda_{k,\zeta}+\lambda_{k,p_{1}}}\|\tilde{a}_{p_{3}\mp l}\tilde{a}_{p_{2}}\tilde{a}_{\zeta}\tilde{a}_{\zeta-k}\Phi_{s_{1}}\|\nonumber\\
	&\leq Cm(\xi)k_{F}^{-2}\sum_{k,l\in\Z_{*}^{3}}\sum_{p\in S_{k}\cap S_{l}}\frac{\hat{V}_{k}^{j_{2}}\hat{V}_{-l}}{\sqrt{\lambda_{k,p_{1}}\lambda_{-l,p_{4}}}}\|\tilde{a}_{p_{3}\mp l}\tilde{a}_{p_{2}}\tilde{a}_{\xi}\Phi_{s_{1}}\|\nonumber\\
	&\leq Cm(\xi)k_{F}^{-2}\sum_{p}\sum_{l\in\Z_{*}^{3}}\frac{\hat{V}_{-l}\chi_{S_{l}}(p)}{\sqrt{\lambda_{-l,p_{4}}}}\sqrt{\sum_{k\in\Z_{*}^{3}}\frac{\hat{V}_{k}^{2j_{1}}\chi_{S_{k}}(p)}{\lambda_{k,p_{1}}}}\sqrt{\sum_{k\in\Z_{*}^{3}}\chi_{S_{k}}(p)\|\tilde{a}_{p_{3}\mp l}\tilde{a}_{p_{2}}\tilde{a}_{\xi}\Phi_{s_{1}}\|^{2}}\nonumber\\
	&\leq Cm(\xi)k_{F}^{-2}\sum_{p}\sqrt{\sum_{l\in\Z_{*}^{3}}\frac{\hat{V}_{-l}^{2}\chi_{S_{l}}(p)}{\lambda_{-l,p_{4}}}}\sqrt{\sum_{k\in\Z_{*}^{3}}\frac{\hat{V}_{k}^{2j_{1}}\chi_{S_{k}}(p)}{\lambda_{k,p_{1}}}}\sqrt{\sum_{l\in\Z_{*}^{3}}\chi_{ S_{l}}(p)\|\tilde{a}_{p_{2}}\cN_{E}^{1/2}\tilde{a}_{\xi}\Phi_{s_{1}}\|^{2}}\nonumber\\
	&\leq Cm(\xi)k_{F}^{-2}\|\tilde{a}_{\xi}\cN_{E}\Phi_{s_{1}}\|\sqrt{\sum_{l\in\Z_{*}^{3}}\hat{V}_{l}^{2}\sum_{p\in L_{l}}\frac{1}{\lambda_{l,p}}}\sqrt{\sum_{k\in\Z_{*}^{3}}\hat{V}_{k}^{2j_{1}}\sum_{p\in L_{k}}\frac{1}{\lambda_{k,p}}}\leq C_{\delta,V}k_{F}^{-1}\cQ^{1-\delta}m(\xi).
\end{align*}
This completes the proof.
\end{proof}
In summary, we obtain the following theorem from Propositions \ref{prop:esti-A1A2}--\ref{prop:esti-A3A4}:
\begin{thm}\label{thm:errorE1k}
For each $0\leq s_{1},s_{2}\leq 1$ and $(j_{1},j_{2})\in\Sigma_{*}^{2}$, each term in $\cW_{s_{1},s_{2}}^{(j_{1},j_{2})}(\cE_{1,k})$ is bounded by
\begin{align*}
	C_{V}k_{F}^{-1}m(\xi)^{1/2}\cdot\sup_{0\leq \tau\leq 1}\|\tilde{a}_{\xi}\Phi_{\tau}\|+C_{\delta,V}k_{F}^{-1}\cQ^{1-\delta}m(\xi)
\end{align*}
for each $\delta>0$.
\end{thm}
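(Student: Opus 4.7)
The plan is that Theorem \ref{thm:errorE1k} is a direct corollary of Propositions \ref{prop:esti-A1A2} and \ref{prop:esti-A3A4} once one unpacks the operator-valued kernel $\cE_{1,k}$ via the schematic decomposition \eqref{E1-schematic}. Specifically, I would first observe that $\Sigma_{*}^{2} = \Sigma_{*} \setminus \Sigma_{*}^{1}$ consists of pairs $(j_{1},j_{2})$ with $0\leq j_{1},j_{2}\leq 3$, $(j_{1},j_{2})\neq(0,0)$, in which at least one of $j_{1},j_{2}$ is zero. Since $(0,0)\notin\Sigma_{*}$, every element of $\Sigma_{*}^{2}$ is either of the form $(j_{1},0)$ with $1\leq j_{1}\leq 3$ or of the form $(0,j_{2})$ with $1\leq j_{2}\leq 3$; these two subcases can be handled separately.

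Second, I would use the definition \eqref{T1T2T3} of $T_{k,s_{2}}^{(j_{1},j_{2})}$ together with the linearity of $\cE_{1,k}$ to write each term of $\cW_{s_{1},s_{2}}^{(j_{1},j_{2})}(\cE_{1,k})$ as
\begin{align*}
\sum_{(k,\zeta)\in\cC_{\xi}}\big\langle\Phi_{s_{1}},\cE_{1,k}\big(A_{k,s_{2}}^{(j_{1})}P_{\zeta;k}A_{k,s_{2}}^{(j_{2})}\big)\Phi_{s_{1}}\big\rangle.
\end{align*}
Applying \eqref{E1-schematic} (with $S_{k}$ ranging over $\{L_{k},L_{k}'\}$ and $(p_{1},p_{2},p_{3},p_{4})$ given by \eqref{E1-gen-p}) expands this into a finite sum of matrix elements. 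When $(j_{1},j_{2})=(j_{1},0)$, these matrix elements are precisely the expressions appearing in \eqref{Ek1-1}--\eqref{Ek1-2} with $T_{k,s_{2}}^{(j_{1},0)}$, namely \eqref{esti-A1} and \eqref{esti-A2}; when $(j_{1},j_{2})=(0,j_{2})$, they are \eqref{esti-A3} and \eqref{esti-A4}.

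Third, I would invoke Proposition \ref{prop:esti-A1A2} to bound \eqref{esti-A1} and \eqref{esti-A2} by $C_{\delta,V}k_{F}^{-1}\cQ^{1-\delta}m(\xi)$ in the $(j_{1},0)$ case, and Proposition \ref{prop:esti-A3A4} to bound \eqref{esti-A3} and \eqref{esti-A4} by the same quantity in the $(0,j_{2})$ case. Since the schematic decomposition produces only finitely many (indeed at most four) summands per choice of $S_{k}$, adding the contributions (and absorbing the numerical factor into the constant) yields the claimed upper bound $C_{\delta,V}k_{F}^{-1}\cQ^{1-\delta}m(\xi)$.

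There is essentially no new obstacle here: the genuinely hard estimates (Cauchy--Schwarz rearrangements, the pull-through relations from Lemma \ref{lem:pull-through} and Corollary \ref{cor:pull-through}, the lattice bounds of Lemmas \ref{lem:estimate-lambkp} and \ref{lem:gap-sum}, and the bootstrap quantity $\cQ$ of \eqref{bootstrap}) have already been absorbed into the two propositions. The only care needed is bookkeeping: when $j_{2}=0$ (resp.\ $j_{1}=0$) the factor $A_{k,s_{2}}^{(0)}$ is the identity, which collapses $\langle e_{\zeta},A_{k,s_{2}}^{(0)}e_{p_{1}}\rangle$ into $\delta_{p_{1},\zeta}$ and thus matches exactly the form required by Proposition \ref{prop:esti-A1A2}; symmetrically, the other reduction matches Proposition \ref{prop:esti-A3A4}. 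This finishes the proof outline.
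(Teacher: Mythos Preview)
Your proposal is correct and matches the paper's approach exactly: the paper presents Theorem \ref{thm:errorE1k} simply as a summary of Propositions \ref{prop:esti-A1A2}--\ref{prop:esti-A3A4}, with no additional argument beyond what you have written out. If anything, your writeup supplies more bookkeeping detail (the explicit description of $\Sigma_{*}^{2}$ and the reduction via \eqref{E1-schematic}) than the paper itself provides.
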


\subsection{Analysis on $\cW_{s_{1},s_{2}}^{(j_{1},j_{2})}(\widetilde{\cE}_{2,k})$ for $(j_{1},j_{2})\in\Sigma_{*}^{2}$}\label{sec:esti-errEk2}
Again, recall for symmetric operator $T_{k}$ on $\ell^{2}(L_{k})$ that
\begin{align*}
	\cE_{2,k}(T_{k})&=\frac{1}{2}\sum_{l\in\Z_{*}^{3}}\sum_{p\in L_{k}}\sum_{q\in L_{l}}\{b_{k}(T_{k}e_{p}),\{\varepsilon_{-k,-l}(e_{-p};e_{-q}),b_{l}^{*}(K_{l}e_{q})\}\},
\end{align*}
where
\begin{align*}
	\varepsilon_{-k,-l}(e_{-p};e_{-q})&=-\big(\delta_{p,q}\tilde{a}_{-q+l}^{*}\tilde{a}_{-p+k}+\delta_{p-k,q-l}\tilde{a}_{-q}^{*}\tilde{a}_{-p}\big).
\end{align*}
Just as for $\cE_{1,k}(T_{k})$, we can split $\cE_{2,k}(T_{k})$ into two sums
\begin{align}
	-2\cE_{2,k}(T_{k})&=\sum_{l\in\Z_{*}^{3}}\sum_{p\in L_{k}\cap L_{l}}\{b_{k}(T_{k}e_{p}),\{\tilde{a}_{-p+l}^{*}\tilde{a}_{-p+k},b_{l}^{*}(K_{l}e_{p})\}\}\nonumber\\
	&\quad\quad+\sum_{l\in\Z_{*}^{3}}\sum_{p\in L_{k}'\cap L_{l}'}\{b_{k}(T_{k}e_{p+k}),\{\tilde{a}_{-p-l}^{*}\tilde{a}_{-p-k},b_{l}^{*}(K_{l}e_{p+l})\}\},
\end{align}
and again we can write the summand of these sums into schematic form:
\begin{align}\label{E2k-generic}
	\sum_{l\in\Z_{*}^{3}}\sum_{p\in S_{k}\cap S_{l}}\{b_{k}(T_{k}e_{p_{1}}),\{\tilde{a}_{p_{2}}^{*}\tilde{a}_{p_{3}},b_{l}^{*}(K_{l}e_{p_{4}})\}\},
\end{align}
where
\begin{align*}
	(p_{1},p_{2},p_{3},p_{4})&=\begin{cases}
		(p,-p+l,-p+k,p)\quad&S_{k}=L_{k},\\
		(p+k,-p-l,-p-k,p+l)&S_{k}=L_{k}'.
	\end{cases}
\end{align*}

Next, we put this schematic form into normal order: According to \cite[Eq. (4.40)]{CHN-23}, we have
\begin{align}
	&\{b_{k}(T_{k}e_{p_{1}}),\{\tilde{a}_{p_{2}}^{*}\tilde{a}_{p_{3}},b_{l}^{*}(K_{l}e_{p_{4}})\}\}\nonumber\\
	\label{top}&=4\tilde{a}_{p_{2}}^{*}b_{l}^{*}(K_{l}e_{p_{4}})b_{k}(T_{k}e_{p_{1}})\tilde{a}_{p_{3}}+2\tilde{a}_{p_{2}}^{*}[b_{k}(T_{k}e_{p_{1}}),b_{l}^{*}(K_{l}e_{p_{4}})]\tilde{a}_{p_{3}}\\
	\label{single-com}&\quad+2\tilde{a}_{p_{2}}^{*}[b_{l}(K_{l}e_{p_{4}}),\tilde{a}_{p_{3}}^{*}]^{*}b_{k}(T_{k}e_{p_{1}})+2b_{l}^{*}(K_{l}e_{p_{4}})[b_{k}(T_{k}e_{p_{1}}),\tilde{a}_{p_{2}}^{*}]\tilde{a}_{p_{3}}\\
	\label{double-com1}&\quad+\tilde{a}_{p_{2}}^{*}[b_{k}(T_{k}e_{p_{1}}),[b_{l}(K_{l}e_{p_{4}}),\tilde{a}_{p_{3}}^{*}]^{*}]+2[b_{l}(K_{l}e_{p_{4}}),[b_{k}(T_{k}e_{p_{1}}),\tilde{a}_{p_{2}}^{*}]^{*}]^{*}\tilde{a}_{p_{3}}\\
	\label{double-com2}&\quad-[b_{l}(K_{l}e_{p_{4}}),\tilde{a}_{p_{3}}^{*}]^{*}[b_{k}(T_{k}e_{p_{1}}),\tilde{a}_{p_{2}}^{*}]+\{[b_{k}(T_{k}e_{p_{1}}),\tilde{a}_{p_{2}}^{*}],[b_{l}(K_{l}e_{p_{4}}),\tilde{a}_{p_{3}}^{*}]^{*}\}.
\end{align}
Note that only the last term in \eqref{double-com2} is a constant that does not depend on any creation or annihilation operator.  Since all other terms are annihilated by taking expectation w.r.t. $\Psi_{FS}$, the constant term yields precisely $\langle\Psi_{FS},\cE_{2,k}(T_{k})\Psi_{FS}\rangle$, whence estimating other terms yield a bound for 
\begin{align*}
	\widetilde{\cE}_{2,k}(T_{k})&=\cE_{2,k}(T_{k})-\langle\Psi_{FS},\cE_{2,k}(T_{k})\Psi_{FS}\rangle.
\end{align*}
Again, in what follows, we estimate each term in \eqref{top}--\eqref{double-com2} for $T_{k}$ is given by $T_{k,s_{2}}^{(j_{1},j_{2})}$ defined in \eqref{T1T2T3} for $(j_{1},j_{2})\in\Sigma_{*}$.

\smallskip

\noindent\textbf{Estimation of the Top Terms}.  We begin by estimating the ``top" terms in \eqref{top}:
\begin{align*}
	\sum_{k,l\in\Z_{*}^{3}}\sum_{p\in S_{k}\cap S_{l}}\tilde{a}_{p_{2}}^{*}b_{l}^{*}(K_{l}e_{p_{4}})b_{k}(T_{k}e_{p_{1}})\tilde{a}_{p_{3}},\quad\sum_{k,l\in\Z_{*}^{3}}\sum_{p\in S_{k}\cap S_{l}}\tilde{a}_{p_{2}}^{*}[b_{k}(T_{k}e_{p_{1}}),b_{l}^{*}(K_{l}e_{p_{4}})]\tilde{a}_{p_{3}}.
\end{align*}
By Lemma \ref{lem:CR-excitation}, the commutator term becomes
\begin{align}\label{top-com}
	&\sum_{k,l\in\Z_{*}^{3}}\sum_{p\in S_{k}\cap S_{l}}\tilde{a}_{p_{2}}^{*}[b_{k}(T_{k}e_{p_{1}}),b_{l}^{*}(K_{l}e_{p_{4}})]\tilde{a}_{p_{3}}\\
	&=\sum_{k\in\Z_{*}^{3}}\sum_{p\in S_{k}}\langle T_{k}e_{p_{1}},K_{k}e_{p_{1}}\rangle\tilde{a}_{p_{3}}^{*}\tilde{a}_{p_{3}}+\sum_{k,l\in\Z_{*}^{3}}\sum_{p\in S_{k}\cap S_{l}}\tilde{a}_{p_{2}}^{*}\varepsilon_{k,l}(T_{k}e_{p_{1}};K_{l}e_{p_{4}})\tilde{a}_{p_{3}},\nonumber
\end{align}
since $p_{1}=p_{4}$ and $p_{2}=p_{3}$ when $k=l$.  The exchange correction of the second sum splits as
\begin{align}
	-\varepsilon_{k,l}(T_{k}e_{p_{1}};K_{l}e_{p_{4}})&=\sum_{q\in L_{k}\cap L_{l}}\langle T_{k}e_{p_{1}},e_{q}\rangle\langle e_{q},K_{l}e_{p_{4}}\rangle\tilde{a}_{q-l}^{*}\tilde{a}_{q-k}\nonumber\\
	&\quad\quad+\sum_{q\in L_{k}'\cap L_{l}'}\langle T_{k}e_{p_{1}},e_{q+k}\rangle\langle e_{q+l},K_{l}e_{p_{4}}\rangle\tilde{a}_{q+l}^{*}\tilde{a}_{q+k},
\end{align}
and we see that these sums both take the schematic form
\begin{align}
	\sum_{q\in S_{k}'\cap S_{l}'}\langle T_{k}e_{p_{1}},e_{q_{1}}\rangle\langle e_{q_{4}},K_{l}e_{p_{4}}\rangle\tilde{a}_{q_{2}}^{*}\tilde{a}_{q_{3}}.
\end{align}
Hence, to estimate $\sum_{k,l\in\Z_{*}^{3}}\sum_{p\in S_{k}\cap S_{l}}\tilde{a}_{p_{2}}^{*}\varepsilon_{k,l}(T_{k}e_{p_{1}};K_{l}e_{p_{4}})\tilde{a}_{p_{3}}$, it suffices to estimate
\begin{align}
	\sum_{k,l\in\Z_{*}^{3}}\sum_{p\in S_{k}\cap S_{l}}\sum_{q\in S_{k}'\cap S_{l}'}\langle T_{k}e_{p_{1}},e_{q_{1}}\rangle\langle e_{q_{4}},K_{l}e_{p_{4}}\rangle\tilde{a}_{p_{2}}^{*}\tilde{a}_{q_{2}}^{*}\tilde{a}_{q_{3}}\tilde{a}_{p_{3}}.
\end{align}

Before proceeding, we remark that the first top term in \eqref{top} with $T_{k,s_{2}}^{(1,0)}$ is particularly hard to estimate in comparison to those $T_{k,s_{2}}^{(j_{1},0)}$ with $j_{1}\geq 2$; if we naively estimate them with the same method, then the term that involves $T_{k,s_{2}}^{(1,0)}$ is of order $O(k_{F}^{-1})$, which is the same order as the bosonization contribution.  Hence, we single out this term
\begin{align}\label{top-hard}
	\sum_{k,l\in\Z_{*}^{3}}\sum_{p\in S_{k}\cap S_{l}}\big\langle\Phi_{s_{1}},\tilde{a}_{p_{2}}^{*}b_{l}^{*}(K_{l}e_{p_{4}})b_{k}(T_{k,s_{2}}^{(1,0)}e_{p_{1}})\tilde{a}_{p_{3}}\Phi_{s_{1}}\big\rangle,
\end{align}
and postpone its estimation in the next subsection.  In the next proposition, we estimate those remaining terms:
\begin{prop}\label{prop:esti-top1}
For each $0\leq s_{1},s_{2}\leq 1$ and $2\leq j_{1}\leq 3$, it holds that
\begin{align}
	\label{esti-top1-1}&\sum_{k,l\in\Z_{*}^{3}}\sum_{p\in S_{k}\cap S_{l}}\Big|\big\langle\Phi_{s_{1}},\tilde{a}_{p_{2}}^{*}b_{l}^{*}(K_{l}e_{p_{4}})b_{k}(T_{k,s_{2}}^{(j_{1},0)}e_{p_{1}})\tilde{a}_{p_{3}}\Phi_{s_{1}}\big\rangle\Big|,\\
	\label{esti-top1-2}&\sum_{k,l\in\Z_{*}^{3}}\sum_{p\in S_{k}\cap S_{l}}\Big|\big\langle\Phi_{s_{1}},\tilde{a}_{p_{2}}^{*}[b_{k}(T_{k,s_{2}}^{(j_{1},0)}e_{p_{1}}),b_{l}^{*}(K_{l}e_{p_{4}})]\tilde{a}_{p_{3}}\Phi_{s_{1}}\big\rangle\Big|,
\end{align}
are all bounded by
\begin{align*}
	C_{V}k_{F}^{-1}\big(\cQ m(\xi)+m(\xi)^{1/2}\cdot\sup_{0\leq \tau\leq 1}\|\tilde{a}_{\xi}\Phi_{\tau}\|\big).
\end{align*}  
The same bound also holds for term in \eqref{esti-top1-2} with $j_{1}=1$.
\end{prop}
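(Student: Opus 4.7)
The plan is to bound the inner products in \eqref{esti-top1-1} and \eqref{esti-top1-2} by Cauchy--Schwarz combined with two structural observations: (i) $T_{k,s_{2}}^{(j_{1},0)}e_{p_{1}}=\sum_{\zeta\in\cD_{k,\xi}}\delta_{p_{1},\zeta}A_{k,s_{2}}^{(j_{1})}e_{\zeta}$, so the sum over $p$ collapses to a single point ($p=\zeta$ when $S_{k}=L_{k}$, $p=\zeta-k$ when $S_{k}=L_{k}'$); (ii) once $p_{1}=\zeta$ is fixed, the remaining index $p_{3}$ always lies in $\{\pm\xi,k\pm\xi\}$, so its norm against $\Phi_{s_{1}}$ is controlled by $\cQ$.

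For \eqref{esti-top1-1} with $j_{1}\in\{2,3\}$, I would rewrite the inner product as $\langle b_{l}(K_{l}e_{p_{4}})\tilde{a}_{p_{2}}\Phi_{s_{1}},b_{k}(A_{k,s_{2}}^{(j_{1})}e_{\zeta})\tilde{a}_{p_{3}}\Phi_{s_{1}}\rangle$ and Cauchy--Schwarz. By \eqref{excitation-bdd} and the pull-through \eqref{excitation-bdd2}, the two norms are bounded by $\|K_{l}e_{p_{4}}\|\,\|\tilde{a}_{p_{2}}\cN_{E}^{1/2}\Phi_{s_{1}}\|$ and $\|A_{k,s_{2}}^{(j_{1})}e_{\zeta}\|\,\|\tilde{a}_{p_{3}}\cN_{E}^{1/2}\Phi_{s_{1}}\|$ respectively. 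An inner Cauchy--Schwarz in $l$ then uses $\sum_{l}\|\tilde{a}_{p_{2}(l)}\cN_{E}^{1/2}\Phi_{s_{1}}\|^{2}\le\|\cN_{E}^{3/2}\Phi_{s_{1}}\|^{2}\le C_{V}$ from Lemma \ref{lem:Gronwall}, while $\sum_{l}\chi_{S_{l}}(p)\|K_{l}e_{p_{4}}\|^{2}$ is estimated via Lemma \ref{lem:matrix-expK} together with Lemma \ref{lem:gap-sum} (distinguishing the cases $p_{4}=\zeta$ for $S_{k}=L_{k}$ and $p_{4}=\zeta-k+l$ for $S_{k}=L_{k}'$, handled by the two parts of \eqref{gap-sum} respectively). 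Since $j_{1}\ge 2$, the final sum over $(k,\zeta)\in\cC_{\xi}$ closes using $\|A_{k,s_{2}}^{(j_{1})}e_{\zeta}\|\le C k_{F}^{-1/2}\hat{V}_{k}^{j_{1}}\sqrt{m(\xi)}$ from Proposition \ref{prop:prod-Kk-esti} and \eqref{gap2}, together with $\|\hat{V}\|_{\ell^{2}}<\infty$, giving the claimed $C_{\delta,V}k_{F}^{-3/2}m(\xi)$ bound.

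For \eqref{esti-top1-2}, I would apply the commutation relation \eqref{CR-excitation2} to split $[b_{k}(A_{k,s_{2}}^{(j_{1})}e_{\zeta}),b_{l}^{*}(K_{l}e_{p_{4}})]$ into its bosonic contraction $\delta_{k,l}\langle A_{k,s_{2}}^{(j_{1})}e_{\zeta},K_{k}e_{p_{4}}\rangle$ and the exchange correction $\varepsilon_{k,l}(A_{k,s_{2}}^{(j_{1})}e_{\zeta};K_{l}e_{p_{4}})$. The diagonal piece collapses to $\sum_{(k,\zeta)\in\cC_{\xi}}\langle e_{\zeta},A_{k,s_{2}}^{(j_{1})}K_{k}e_{\zeta}\rangle\|\tilde{a}_{p_{3}}\Phi_{s_{1}}\|^{2}$, which Proposition \ref{prop:prod-Kk-esti} together with \eqref{gap2} and Lemma \ref{lem:estimate-lambkp} bounds by $C_{V}k_{F}^{-1}\cQ^{2}m(\xi)$. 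The exchange-correction piece becomes a quartic fermionic expression of the schematic form $\tilde{a}_{p_{2}}^{*}\tilde{a}_{q_{2}}^{*}\tilde{a}_{q_{3}}\tilde{a}_{p_{3}}$ treated in Section \ref{sec:esti-hoe}, bounded by the same Cauchy--Schwarz strategy.

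The main obstacle is the sub-case $j_{1}=1$ in \eqref{esti-top1-2}, which the statement explicitly includes. A single factor of $\hat{V}_{k}$ does not give $\ell^{2}$-summability after an outer Cauchy--Schwarz in $(k,\zeta)$, and a naive estimate produces only $O(k_{F}^{-1})$, the same order as the bosonization contribution. The fix is to use Lemma \ref{lem:excitation-Gronwall} to extract $\|\tilde{a}_{p_{3}}\cN_{E}^{1/2}\Phi_{s_{1}}\|\le C_{\delta,V}\cQ^{1-\delta}$ --- legitimate precisely because $p_{3}\in\{\pm\xi,k\pm\xi\}$ --- and then close the $(k,\zeta)$-sum by a further Cauchy--Schwarz: $\sum_{k}\hat{V}_{k}\chi_{L_{k}}(\zeta)/\sqrt{\lambda_{k,\zeta}}\le\sqrt{\sum_{k}\hat{V}_{k}^{2}}\sqrt{\sum_{k}\chi_{L_{k}}(\zeta)/\lambda_{k,\zeta}}\le C_{\delta,V}k_{F}^{1/2+\delta/2}$ via Lemma \ref{lem:gap-sum}, yielding the $C_{\delta,V}k_{F}^{-1}\cQ^{1-\delta}m(\xi)$ contribution in the claimed bound.
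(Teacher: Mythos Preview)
Your overall strategy matches the paper's: Cauchy--Schwarz, collapse the $p$--sum via $\delta_{p_{1},\zeta}$, and for \eqref{esti-top1-2} split the commutator via \eqref{CR-excitation2} into the diagonal contraction plus the exchange correction. However, three points in the execution are off.

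First, the bound you claim for \eqref{esti-top1-1} is $C_{\delta,V}k_{F}^{-3/2}m(\xi)$, but the argument you outline yields only $C_{\delta,V}k_{F}^{-1}\cQ^{1-\delta}m(\xi)$, which is exactly what the paper obtains. After Cauchy--Schwarz the two $b$--norms contribute $\|A_{k,s_{2}}^{(j_{1})}e_{\zeta}\|\le Ck_{F}^{-1/2}\hat{V}_{k}^{j_{1}}m(\xi)^{1/2}$ and $\bigl(\sum_{l}\|K_{l}e_{p_{4}}\|^{2}\bigr)^{1/2}\le C_{V}k_{F}^{-1/2}m(\xi)^{1/2}$; the $\cN_{E}$--moments give $O(1)$; summing $\hat{V}_{k}^{j_{1}}$ over $(k,\zeta)$ is $O(1)$ for $j_{1}\ge2$. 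That totals $k_{F}^{-1}m(\xi)$, which is \emph{not} inside the proposition's bound unless you extract $\cQ^{1-\delta}$ from $\|\tilde{a}_{p_{3}}\cN_{E}^{1/2}\Phi_{s_{1}}\|$ via Lemma~\ref{lem:excitation-Gronwall}. The paper does precisely this; there is no route to a bare $k_{F}^{-3/2}$ here.

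Second, Lemma~\ref{lem:gap-sum} is the wrong tool for $\sum_{l}\chi_{S_{l}}(p)\|K_{l}e_{p_{4}}\|^{2}$: that lemma produces a factor $k_{F}^{1+\delta}$, which goes in the wrong direction. The paper instead uses the pointwise bound $\lambda_{l,p_{4}}^{-1}\le Cm(\xi)$ (cf.\ \eqref{K-lep4-esti}) together with $\sum_{l}\hat{V}_{l}^{2}<\infty$, giving $\sum_{l}\|K_{l}e_{p_{4}}\|^{2}\le C_{V}k_{F}^{-1}m(\xi)$. This is where the second $m(\xi)^{1/2}$ comes from; your route through Lemma~\ref{lem:gap-sum} loses it.

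Third, you misidentify the delicate case. For \eqref{esti-top1-2} with $j_{1}=1$ there is no obstacle: the diagonal contraction contains $\langle e_{\zeta},A_{k,s_{2}}^{(j_{1})}K_{k}e_{\zeta}\rangle$, so the effective $k$--weight is $\hat{V}_{k}^{j_{1}+1}=\hat{V}_{k}^{2}$, which is already summable; and in the exchange piece the first Cauchy--Schwarz factor carries $\hat{V}_{k}^{2j_{1}}=\hat{V}_{k}^{2}$ (the paper notes explicitly that the estimate \eqref{esti-top2-2} extends to $j_{1}=1$). Your proposed fix via Lemma~\ref{lem:gap-sum} is addressing a difficulty that does not exist here. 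The genuinely hard case is \eqref{esti-top1-1} with $j_{1}=1$, which the proposition deliberately excludes and the paper treats separately in Section~\ref{sec:esti-top-hard} by a further Duhamel expansion.
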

\begin{proof}
For \eqref{esti-top1-1}, by \eqref{excitation-bdd}--\eqref{excitation-bdd5}, Cauchy-Schwarz inequality, pull-through formula \eqref{pull-through} and Lemma \ref{lem:Gronwall}, we compute for $S_{k}=L_{k}$ that
\begin{align}\label{esti-top1-1-1}
	&\sum_{k,l\in\Z_{*}^{3}}\sum_{p\in L_{k}\cap L_{l}}\Big|\big\langle\Phi_{s_{1}},\tilde{a}_{p_{2}}^{*}b_{l}^{*}(K_{l}e_{p_{4}})b_{k}(T_{k,s_{2}}^{(j_{1},0)}e_{p_{1}})\tilde{a}_{p_{3}}\Phi_{s_{1}}\big\rangle\Big|\nonumber\\
	&\leq\sum_{(k,\zeta)\in\cC_{\xi}}\sum_{l\in\Z_{*}^{3}}\sum_{p\in L_{k}\cap L_{l}}\delta_{p,\zeta}\|A_{k,s_{2}}^{(j_{1})}e_{\zeta}\|\|\cN_{l}^{1/2}(\cN_{E}+1)\tilde{a}_{-p+l}\Phi_{s_{1}}\|\nonumber\\
	&\quad\quad\quad\quad\quad\quad\quad\quad\quad\quad\quad\quad\quad\quad\quad\quad\quad\times\|K_{l}e_{p}\|\|(\cN_{E}+1)^{-1}\cN_{k}^{1/2}\tilde{a}_{-p+k}\Phi_{s_{1}}\|\nonumber\\
	\displaybreak
	&\leq k_{F}^{-1}\sum_{k,l\in\Z_{*}^{3}}\sum_{\zeta\in\cD_{k,\xi}}\frac{\hat{V}_{k}^{j_{1}}\hat{V}_{l}}{\sqrt{\lambda_{k,\zeta}\lambda_{l,\zeta}}}\|\cN_{l}^{1/2}\cN_{E}\Phi_{s_{1}}\|\|\tilde{a}_{-\zeta+k}\Phi_{s_{1}}\|\nonumber\\
	&\leq k_{F}^{-1}\sum_{k,l\in\Z_{*}^{3}}\hat{V}_{k}^{j_{1}}\hat{V}_{l}\|\cN_{l}^{1/2}\cN_{E}\Phi_{s_{1}}\|\Big(\chi_{L_{k}\cap L_{l}}(\xi)m(\xi)\|\tilde{a}_{-\xi+k}\Phi_{s_{1}}\|\nonumber\\
	&\quad\quad\quad\quad\quad\quad\quad\quad\quad\quad\quad\quad\quad\quad\quad+\chi_{L_{k}\cap L_{l}}(\xi+k)m(\xi)^{1/2}\|\tilde{a}_{-\xi}\Phi_{s_{1}}\|\Big)\nonumber\\
	&\leq k_{F}^{-1}\sum_{k\in\Z_{*}^{3}}\hat{V}_{k}^{j_{1}}\sqrt{\sum_{l\in\Z_{*}^{3}}\hat{V}_{l}^{2}}\sqrt{\sum_{l\in\Z_{*}^{3}}\|\cN_{l}^{1/2}\cN_{E}\Phi_{s_{1}}\|^{2}}\Big(\chi_{L_{k}}(\xi)m(\xi)\cQ+\chi_{L_{k}'}(\xi)m(\xi)^{1/2}\|\tilde{a}_{\xi}\Phi_{s_{1}}\|\Big)\nonumber\\
	&\leq C_{V}k_{F}^{-1}\Big(\chi_{B_{F}^{c}}(\xi)m(\xi)\cQ+\chi_{B_{F}}(\xi)m(\xi)^{1/2}\|\tilde{a}_{\xi}\Phi_{s_{1}}\|\Big)
\end{align}
The case $S_{k}=L_{k}'$ can be handled by the same method and is bounded by
\begin{align*}
	C_{V}k_{F}^{-1}\big(\chi_{B_{F}}(\xi)m(\xi)\cQ+\chi_{B_{F}^{c}}(\xi)m(\xi)^{1/2}\|\tilde{a}_{\xi}\Phi_{s_{1}}\|\big).
\end{align*}
In summary, we obtain
\begin{align}
	&\sum_{k,l\in\Z_{*}^{3}}\sum_{p\in S_{k}\cap S_{l}}\Big|\big\langle\Phi_{s_{1}},\tilde{a}_{p_{2}}^{*}b_{l}^{*}(K_{l}e_{p_{4}})b_{k}(T_{k,s_{2}}^{(j_{1},0)}e_{p_{1}})\tilde{a}_{p_{3}}\Phi_{s_{1}}\big\rangle\Big|\nonumber\\
	&\leq C_{V}k_{F}^{-1}\big(m(\xi)\cQ+m(\xi)^{1/2}\|\tilde{a}_{\xi}\Phi_{s_{1}}\|\big).
\end{align}

Next, for \eqref{esti-top1-2}, by \eqref{top-com}, we first use Proposition \ref{prop:prod-Kk-esti} to obtain
\begin{align*}
	&\sum_{k\in\Z_{*}^{3}}\sum_{p\in S_{k}}\big|\langle T_{k,s_{2}}^{(j_{1},0)}e_{p_{1}},K_{k}e_{p_{1}}\rangle\big|\langle \Phi_{s_{1}},\tilde{a}_{p_{3}}^{*}\tilde{a}_{p_{3}}\Phi_{s_{1}}\rangle\leq C_{V}\cQ^{2}\sum_{(k,\zeta)\in\cC_{\xi}}\langle A_{k,s_{2}}^{(j_{1})}e_{\zeta},K_{k}e_{\zeta}\rangle\nonumber\\
	&\leq C_{V}k_{F}^{-1}\cQ^{2}\sum_{(k,\zeta)\in\cC_{\xi}}\hat{V}_{k}^{m+1}\frac{\chi_{L_{k}}(\zeta)}{\lambda_{k,\zeta}}\leq C_{V}k_{F}^{-1}\cQ^{2}m(\xi).
\end{align*}
Finally, using a similar argument, we obtain
\begin{align}\label{esti-top2-1}
	&\sum_{k,l\in\Z_{*}^{3}}\sum_{p\in S_{k}\cap S_{l}}\sum_{q\in S_{k}'\cap S_{l}'}\big|\langle T_{k,s_{2}}^{(j_{1},0)}e_{p_{1}},e_{q_{1}}\rangle\big|\big|\langle e_{q_{4}},K_{l}e_{p_{4}}\rangle\big|\big|\langle \Phi_{s_{1}},\tilde{a}_{p_{2}}^{*}\tilde{a}_{q_{2}}^{*}\tilde{a}_{q_{3}}\tilde{a}_{p_{3}}\Phi_{s_{1}}\rangle\big|\nonumber\\
	&\leq\sum_{(k,\zeta)\in\cC_{\xi}}\sum_{l\in\Z_{*}^{3}}\sum_{p\in S_{k}\cap S_{l}}\sum_{q\in S_{k}'\cap S_{l}'}\delta_{p_{1},\zeta}\big|\langle A_{k,s_{2}}^{(j_{1})}e_{\zeta},e_{q_{1}}\rangle\big|\big|\langle e_{q_{4}},K_{l}e_{p_{4}}\rangle\big|\|\tilde{a}_{q_{2}}\tilde{a}_{p_{2}}\Phi_{s_{1}}\|\|\tilde{a}_{q_{3}}\tilde{a}_{p_{3}}\Phi_{s_{1}}\|\nonumber\\ 
	&\leq\sqrt{\sum_{(k,\zeta)\in\cC_{\xi}}\sum_{l\in\Z_{*}^{3}}\sum_{p\in S_{k}\cap S_{l}}\sum_{q\in S_{k}'\cap S_{l}'}\chi_{L_{k}}(\zeta)\delta_{p_{1},\zeta}\big|\langle A_{k,s_{2}}^{(j_{1})}e_{\zeta},e_{q_{1}}\rangle\big|^{2}\|\tilde{a}_{q_{2}}\tilde{a}_{p_{2}}\Phi_{s_{1}}\|^{2}}\nonumber\\
	&\times\sqrt{\sum_{(k,\zeta)\in\cC_{\xi}}\sum_{l\in\Z_{*}^{3}}\sum_{p\in S_{k}\cap S_{l}}\sum_{q\in S_{k}'\cap S_{l}'}\chi_{L_{k}}(\zeta)\delta_{p_{1},\zeta}\big|\langle e_{q_{4}},K_{l}e_{p_{4}}\rangle\big|^{2}\|\tilde{a}_{q_{3}}\tilde{a}_{p_{3}}\Phi_{s_{1}}\|^{2}}.
\end{align}
We estimate these factors separately.  For the first one, by \eqref{excitation-bdd}--\eqref{excitation-bdd5} and Proposition \ref{prop:prod-Kk-esti}, we obtain for $S_{k}=L_{k}$ that
\begin{align*}
	&\sum_{(k,\zeta)\in\cC_{\xi}}\sum_{l\in\Z_{*}^{3}}\sum_{p\in S_{k}\cap S_{l}}\sum_{q\in S_{k}'\cap S_{l}'}\chi_{L_{k}}(\zeta)\delta_{p_{1},\zeta}\big|\langle A_{k,s_{2}}^{(j_{1})}e_{\zeta},e_{q_{1}}\rangle\big|^{2}\|\tilde{a}_{q_{2}}\tilde{a}_{p_{2}}\Phi_{s_{1}}\|^{2}\nonumber\\
	\displaybreak
	&\leq C_{V}k_{F}^{-2}\sum_{(k,\zeta)\in\cC_{\xi}}\sum_{l\in\Z_{*}^{3}}\sum_{q\in S_{k}'\cap S_{l}'}\frac{\hat{V}_{k}^{2j_{1}}\chi_{L_{k}\cap L_{l}}(\zeta)}{(\lambda_{k,\zeta}+\lambda_{k,q_{1}})^{2}}\|\tilde{a}_{q_{3}}\tilde{a}_{l-\zeta}\Phi_{s_{1}}\|^{2}\nonumber\\
	&\leq C_{V}k_{F}^{-2}\sum_{(k,\zeta)\in\cC_{\xi}}\sum_{l\in\Z_{*}^{3}}\frac{\hat{V}_{k}^{2j_{1}}\chi_{L_{k}\cap L_{l}}(\zeta)}{\lambda_{k,\zeta}^{2}}\|\tilde{a}_{l-\zeta}\cN_{E}^{1/2}\Phi_{s_{1}}\|^{2}\nonumber\\
	&\leq C_{V}k_{F}^{-2}m(\xi)^{2}\sum_{k\in\Z_{*}^{3}}\hat{V}_{k}^{2j_{1}}\sum_{l\in\Z_{*}^{3}}\chi_{L_{l}}(\zeta)\|\tilde{a}_{l-\zeta}\cN_{E}^{1/2}\Phi_{s_{1}}\|^{2}\leq C_{V}k_{F}^{-2}m(\xi)^{2}.
\end{align*}
For the second factor in \eqref{esti-top2-1}, we proceed similarly to obtain for $S_{k}=L_{k}$ that
\begin{align*}
	&\sum_{(k,\zeta)\in\cC_{\xi}}\sum_{l\in\Z_{*}^{3}}\sum_{p\in S_{k}\cap S_{l}}\sum_{q\in S_{k}'\cap S_{l}'}\delta_{p_{1},\zeta}\big|\langle e_{q_{4}},K_{l}e_{p_{4}}\rangle\big|^{2}\|\tilde{a}_{q_{3}}\tilde{a}_{p_{3}}\Phi_{s_{1}}\|^{2}\nonumber\\
	&=\sum_{k,l\in\Z_{*}^{3}}\sum_{q\in S_{k}'\cap S_{l}'}\chi_{L_{k}\cap L_{l}}(\xi)\big|\langle e_{q_{4}},K_{l}e_{\xi}\rangle\big|^{2}\|\tilde{a}_{q_{3}}\tilde{a}_{k-\xi}\Phi_{s_{1}}\|^{2}\nonumber\\
	&\quad\quad\quad\quad\quad+\sum_{k,l\in\Z_{*}^{3}}\sum_{q\in S_{k}'\cap S_{l}'}\chi_{L_{k}\cap L_{l}}(k+\xi)\big|\langle e_{q_{4}},K_{l}e_{k+l+\xi}\rangle\big|^{2}\|\tilde{a}_{q_{3}}\tilde{a}_{\xi}\Phi_{s_{1}}\|^{2}\nonumber\\
	&\leq k_{F}^{-2}\sum_{l\in\Z_{*}^{3}}\sum_{q\in S_{l}'}\frac{\hat{V}_{l}^{2}}{\lambda_{l,q_{4}}}\Big(\sum_{k\in\Z_{*}^{3}}\chi_{S_{k}'}(q)\|\tilde{a}_{q_{3}}\Phi_{s_{1}}\|^{2}+\sum_{k\in\Z_{*}^{3}}\chi_{L_{k}'}(\xi)\chi_{S_{k}'}(q)\|\tilde{a}_{q_{3}}\tilde{a}_{\xi}\Phi_{s_{1}}\|^{2}\Big)\nonumber\\
	&\leq k_{F}^{-2}\Big(\|\cN_{E}^{1/2}\Phi_{s_{1}}\|^{2}+\|\tilde{a}_{\xi}\cN_{E}^{1/2}\Phi_{s_{1}}\|^{2}\Big)\sum_{l\in\Z_{*}^{3}}\hat{V}_{l}^{2}\sum_{q\in S_{l}'}\frac{1}{\lambda_{l,q_{4}}}\leq C_{V}k_{F}^{-1}.
\end{align*}
Substituting them in \eqref{esti-top2-1} gives
\begin{align}\label{esti-top2-2}
	&\sum_{k,l\in\Z_{*}^{3}}\sum_{p\in S_{k}\cap S_{l}}\sum_{q\in S_{k}'\cap S_{l}'}\big|\langle T_{k,s_{2}}^{(j_{1},0)}e_{p_{1}},e_{q_{1}}\rangle\big|\big|\langle e_{q_{4}},K_{l}e_{p_{4}}\rangle\big|\big|\langle \Phi_{s_{1}},\tilde{a}_{p_{2}}^{*}\tilde{a}_{q_{2}}^{*}\tilde{a}_{q_{3}}\tilde{a}_{p_{3}}\Phi_{s_{1}}\rangle\big|\nonumber\\
	&\leq C_{\delta,V}k_{F}^{-3/2}m(\xi).
\end{align}
We obtain the same estimate for $S_{k}=L_{k}'$ and omit it for simplicity.  We observe that, for commutator term, the above estimate extends to the case with $j_{1}=1$.
\end{proof}

\begin{prop}\label{prop:esti-top3}
For each $0\leq s_{1},s_{2}\leq 1$ and $1\leq j_{2}\leq 3$, it holds for each $\delta>0$ that
\begin{align}
	\label{esti-top3-1}&\sum_{k,l\in\Z_{*}^{3}}\sum_{p\in S_{k}\cap S_{l}}\Big|\big\langle\Phi_{s_{1}},\tilde{a}_{p_{2}}^{*}b_{l}^{*}(K_{l}e_{p_{4}})b_{k}(T_{k,s_{2}}^{(0,j_{2})}e_{p_{1}})\tilde{a}_{p_{3}}\Phi_{s_{1}}\big\rangle\Big|,\\
	\label{esti-top3-2}&\sum_{k,l\in\Z_{*}^{3}}\sum_{p\in S_{k}\cap S_{l}}\Big|\big\langle\Phi_{s_{1}},\tilde{a}_{p_{2}}^{*}[b_{k}(T_{k,s_{2}}^{(0,j_{2})}e_{p_{1}}),b_{l}^{*}(K_{l}e_{p_{4}})]\tilde{a}_{p_{3}}\Phi_{s_{1}}\big\rangle\Big|,
\end{align}
are all bounded by $C_{\delta,V}(k_{F}^{-1}\cQ^{1-\delta}+k_{F}^{-3/2})m(\xi)$ for any $\delta>0$.  
\end{prop}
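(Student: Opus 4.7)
The proof proceeds in close analogy to Propositions \ref{prop:esti-top1} and \ref{prop:esti-A3A4}, the crucial difference being that with $j_{1}=0$ the rank-one projection $P_{\zeta;k}$ sits to the left of $A_{k,s_{2}}^{(j_{2})}$ inside $T_{k,s_{2}}^{(0,j_{2})}$. As a result,
\begin{align*}
T_{k,s_{2}}^{(0,j_{2})} e_{p_{1}} = \sum_{\zeta\in\cD_{k,\xi}}\chi_{L_{k}}(\zeta)\langle e_{\zeta}, A_{k,s_{2}}^{(j_{2})}e_{p_{1}}\rangle\,e_{\zeta},
\end{align*}
so the generalized excitation operator collapses to a single mode, $b_{k}(T_{k,s_{2}}^{(0,j_{2})}e_{p_{1}})=\sum_{\zeta}\chi_{L_{k}}(\zeta)\langle e_{\zeta},A_{k,s_{2}}^{(j_{2})}e_{p_{1}}\rangle\,b_{k,\zeta}$. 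Writing $b_{k,\zeta}=\tilde a_{\zeta-k}\tilde a_{\zeta}$ and exploiting the fact that for $\zeta\in\cD_{k,\xi}=\{\pm\xi,k\pm\xi\}$ exactly one of $\zeta$ or $\zeta-k$ coincides with $\pm\xi$, every application of $b_{k,\zeta}$ will force the appearance of a factor $\tilde a_{\pm\xi}$, which is precisely what drives the bootstrap quantity $\cQ$.

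For \eqref{esti-top3-1} the plan is to pull $b_{l}^{*}(K_{l}e_{p_{4}})$ onto the ket using \eqref{excitation-bdd}, then bound $|\langle e_{\zeta}, A_{k,s_{2}}^{(j_{2})}e_{p_{1}}\rangle|$ by $k_{F}^{-1}\hat V_{k}^{j_{2}}(\lambda_{k,\zeta}+\lambda_{k,p_{1}})^{-1}$ via Proposition \ref{prop:prod-Kk-esti}. A case split on $\cD_{k,\xi}$ rewrites $\|b_{k,\zeta}\tilde a_{p_{3}}\Phi_{s_{1}}\|$ as a norm of the form $\|\tilde a_{\pm\xi}\,\tilde a_{\ast}\tilde a_{p_{3}}\Phi_{s_{1}}\|$, which by Corollary \ref{cor:pull-through} is controlled by $\|\tilde a_{\xi}\cN_{E}^{1/2}\Phi_{s_{1}}\|$, and hence by $C_{\delta,V}\cQ^{1-\delta}$ through Lemma \ref{lem:excitation-Gronwall}. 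What remains is the purely combinatorial sum: one Cauchy--Schwarz in $l$ controls $\sum_{l}\chi_{S_{l}}(p)\hat V_{l}^{2}/\lambda_{l,p_{4}}\leq Cm(\xi)\sum_{l}\hat V_{l}^{2}$ (as in \eqref{K-lep4-esti}), while another Cauchy--Schwarz in $p$ together with Lemma \ref{lem:estimate-lambkp} and the lower bound \eqref{gap2} contributes a factor $m(\xi)^{1/2}\sqrt{\sum_{k}\hat V_{k}^{2j_{2}}\sum_{p\in L_{k}}\lambda_{k,p}^{-1}}$. Collecting the two $m(\xi)^{1/2}$ factors and the $k_{F}^{-1/2}$ arising from Lemma \ref{lem:estimate-lambkp} yields the asserted bound $C_{\delta,V}k_{F}^{-1}\cQ^{1-\delta}m(\xi)$; the computation is modelled on \eqref{A3m-esti}.

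For \eqref{esti-top3-2} I would expand the commutator via Lemma \ref{lem:CR-excitation}, giving
\begin{align*}
[b_{k}(T_{k,s_{2}}^{(0,j_{2})}e_{p_{1}}),b_{l}^{*}(K_{l}e_{p_{4}})]=\delta_{k,l}\langle T_{k,s_{2}}^{(0,j_{2})}e_{p_{1}},K_{k}e_{p_{1}}\rangle+\varepsilon_{k,l}(T_{k,s_{2}}^{(0,j_{2})}e_{p_{1}};K_{l}e_{p_{4}}),
\end{align*}
exactly as in \eqref{top-com}. The scalar diagonal piece multiplies $\tilde a_{p_{3}}^{*}\tilde a_{p_{3}}$ and, after applying Proposition \ref{prop:prod-Kk-esti} to both factors and the constraint $\zeta\in\cD_{k,\xi}$, is dominated by $C_{V}\cQ^{2}\sum_{(k,\zeta)\in\cC_{\xi}}k_{F}^{-1}\hat V_{k}^{j_{2}+1}\lambda_{k,\zeta}^{-1}\leq C_{V}k_{F}^{-1}\cQ^{2}m(\xi)$. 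The exchange-correction piece decomposes into two schematic sums of the type treated in \eqref{esti-top2-1}--\eqref{esti-top2-2}; the same Cauchy--Schwarz split applies verbatim (the only change being that the projection now acts on the left, so the roles of $p_{1}$ and $q_{1}$ in the matrix element $\langle A_{k,s_{2}}^{(j_{2})}e_{\zeta},e_{\ast}\rangle$ are interchanged), producing the remaining $C_{\delta,V}k_{F}^{-3/2}m(\xi)$ contribution.

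The main obstacle is the bookkeeping for the case analysis on $\cD_{k,\xi}$: one has to track carefully which of $\zeta$ or $\zeta-k$ is equal to $\pm\xi$ in each of the four cases so that a single factor $\tilde a_{\pm\xi}$ can be extracted without double-counting, and simultaneously verify that the constraints $\zeta\in L_{k}$, $p\in S_{k}\cap S_{l}$ are compatible with the delta relations $\delta_{p_{1},\zeta}$ forced by the rank-one projection. Once this case split is arranged, no new analytic input is required beyond the pull-through identities of Section \ref{sec:preliminaries}, Proposition \ref{prop:prod-Kk-esti}, and the lattice bounds of Lemmas \ref{lem:estimate-lambkp} and \ref{lem:gap-sum}.
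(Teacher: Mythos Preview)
Your overall strategy is sound and matches the paper's for \eqref{esti-top3-2}, but there is a genuine error in your treatment of \eqref{esti-top3-1}. You invoke \eqref{K-lep4-esti} to claim
\[
\sum_{l}\chi_{S_{l}}(p)\frac{\hat V_{l}^{2}}{\lambda_{l,p_{4}}}\leq Cm(\xi)\sum_{l}\hat V_{l}^{2},
\]
but that estimate in \eqref{K-lep4-esti} relied on the constraint $\delta_{p_{1},\zeta}$, which in the $(j_{1},0)$ case fixes $p$ to $\zeta$ (or $\zeta-k$) and hence forces $\lambda_{l,p_{4}}\geq m(\xi)^{-1}$. In the present $(0,j_{2})$ case the projection sits on the \emph{left}, so $p$ remains a free summation variable with no relation to $\xi$; the best you get from \eqref{gap} is $\lambda_{l,p_{4}}\geq 1/2$, which yields no $m(\xi)$ factor. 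Without this $m(\xi)^{1/2}$, your scheme (splitting $(\lambda_{k,\zeta}+\lambda_{k,p_{1}})^{-1}\leq(\lambda_{k,\zeta}\lambda_{k,p_{1}})^{-1/2}$ and summing $\lambda_{k,p}^{-1}$ over $p$) produces only $k_{F}^{-1}\cQ^{1-\delta}m(\xi)^{1/2}$, which does not imply the stated bound with the full factor $m(\xi)$.

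The paper avoids this by extracting $m(\xi)$ in one piece from the matrix element itself: it uses the cruder bound $(\lambda_{k,\zeta}+\lambda_{k,p_{1}})^{-1}\leq\lambda_{k,\zeta}^{-1}\leq m(\xi)$, after which no $\lambda$-decay in $p$ remains. The $l,p$ sum is then controlled directly by $\sum_{l}\|K_{l}\|_{\rm HS}^{2}\leq C_{V}$ (not by \eqref{K-lep4-esti}), and the remaining $(k,\zeta)$ sum is closed via Cauchy--Schwarz and the key inequality
\[
\sum_{(k,\zeta)\in\cC_{\xi}}\|\cN_{E}^{1/2}\tilde a_{\zeta}\tilde a_{\zeta-k}\Phi_{s_{1}}\|^{2}\leq \|\tilde a_{\xi}\cN_{E}\Phi_{s_{1}}\|^{2}\leq C_{\delta,V}\cQ^{2-\delta},
\]
which is exactly the mechanism you identified. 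So the fix is small: bound the matrix element by $k_{F}^{-1}\hat V_{k}^{j_{2}}m(\xi)$ rather than $k_{F}^{-1}\hat V_{k}^{j_{2}}(\lambda_{k,\zeta}\lambda_{k,p_{1}})^{-1/2}$, and drop the appeal to \eqref{K-lep4-esti}.
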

\begin{proof}
For \eqref{esti-top3-1}, by \eqref{excitation-bdd}--\eqref{excitation-bdd5}, Cauchy-Schwarz inequality, Lemmas \ref{lem:Gronwall}, \ref{lem:excitation-Gronwall}, Proposition \ref{prop:prod-Kk-esti} and the lower bound \eqref{gap}, we have
\begin{align}\label{esti-top3-1-1}
	&\sum_{k,l\in\Z_{*}^{3}}\sum_{p\in S_{k}\cap S_{l}}\Big|\big\langle\Phi_{s_{1}},\tilde{a}_{p_{2}}^{*}b_{l}^{*}(K_{l}e_{p_{4}})b_{k}(T_{k,s_{2}}^{(0,j_{2})}e_{p_{1}})\tilde{a}_{p_{3}}\Phi_{s_{1}}\big\rangle\Big|\nonumber\\
	\displaybreak
	&\leq \sum_{(k,\zeta)\in\cC_{\xi}}\sum_{l\in\Z_{*}^{3}}\sum_{p\in S_{k}\cap S_{l}}\big|\langle e_{\zeta}, A_{k,s_{2}}^{(j_{2})}e_{p_{1}}\rangle\big|\|K_{l}e_{p_{4}}\|\|\cN_{l}^{1/2}\tilde{a}_{p_{2}}\Phi_{s_{1}}\|\|\tilde{a}_{\zeta}\tilde{a}_{\zeta-k}\tilde{a}_{p_{3}}\Phi_{s_{1}}\|\nonumber\\
	&\leq k_{F}^{-1}\sum_{(k,\zeta)\in\cC_{\xi}}\frac{\hat{V}_{k}^{j_{2}}}{\lambda_{k,\zeta}}\sqrt{\sum_{l\in\Z_{*}^{3}}\sum_{p\in S_{l}}\|K_{l}e_{p_{4}}\|^{2}}\sqrt{\sum_{l\in\Z_{*}^{3}}\|\cN_{l}^{1/2}\Phi_{s_{1}}\|^{2}}\sqrt{\sum_{p\in S_{k}}\|\tilde{a}_{p_{3}}\tilde{a}_{\zeta}\tilde{a}_{\zeta-k}\Phi_{s_{1}}\|^{2}}\nonumber\\
	&\leq C_{V}k_{F}^{-1}m(\xi)\|\cN_{E}\Phi_{s_{1}}\|\sqrt{\sum_{k\in\Z_{*}^{3}}\hat{V}_{k}^{2j_{2}}}\sqrt{\sum_{(k,\zeta)\in\cC_{\xi}}\|\cN_{E}^{1/2}\tilde{a}_{\zeta}\tilde{a}_{\zeta-k}\Phi_{s_{1}}\|^{2}}.
\end{align}
For the last factor in \eqref{esti-top3-1-1}, we use \eqref{excitation-bdd}--\eqref{excitation-bdd5} to obtain for each $\delta>0$ that
\begin{align}
	\sum_{(k,\zeta)\in\cC_{\xi}}\|\cN_{E}^{1/2}\tilde{a}_{\zeta}\tilde{a}_{\zeta-k}\Phi_{s_{1}}\|^{2}&=\sum_{k\in\Z_{*}^{3}}\chi_{L_{k}}(\xi)\|\tilde{a}_{\xi-k}\cN_{E}^{1/2}\tilde{a}_{\xi}\Phi_{s_{1}}\|^{2}\nonumber\\
	&\quad\quad\quad\quad\quad\quad+\sum_{k\in\Z_{*}^{3}}\chi_{L_{k}'}(\xi)\|\tilde{a}_{k+\xi}\cN_{E}^{1/2}\tilde{a}_{\xi}\Phi_{s_{1}}\|^{2}\nonumber\\
	&\leq \|\cN_{E}\tilde{a}_{\xi}\Phi_{s_{1}}\|^{2}\leq \|\tilde{a}_{\xi}\cN_{E}\Phi_{s_{1}}\|^{2}\leq C_{\delta,V}\cQ^{2-\delta}.
\end{align}
Consequently, \eqref{esti-top3-1-1} becomes
\begin{align}
	&\sum_{k,l\in\Z_{*}^{3}}\sum_{p\in S_{k}\cap S_{l}}\Big|\big\langle\Phi_{s_{1}},\tilde{a}_{p_{2}}^{*}b_{l}^{*}(K_{l}e_{p_{4}})b_{k}(T_{k,s_{2}}^{(0,j_{2})}e_{p_{1}})\tilde{a}_{p_{3}}\Phi_{s_{1}}\big\rangle\Big|\leq C_{\delta,V}k_{F}^{-1}\cQ^{1-\delta}m(\xi).
\end{align}

For \eqref{esti-top3-2}, by \eqref{top-com}, we have 
\begin{align}\label{esti-top3-4}
	&\sum_{k\in\Z_{*}^{3}}\sum_{p\in S_{k}}\big|\langle T_{k,s_{2}}^{(0,j_{2})}e_{p_{1}},K_{k}e_{p_{1}}\rangle\big|\big|\langle \Phi_{s_{1}},\tilde{a}_{p_{3}}^{*}\tilde{a}_{p_{3}}\Phi_{s_{1}}\rangle\big|\nonumber\\
	&\leq\sum_{(k,\zeta)\in\cC_{\xi}}\sum_{p\in S_{k}}\chi_{L_{k}}(\zeta)\big|\langle e_{\zeta},A_{k,s_{2}}^{(j_{2})}e_{p_{1}}\rangle\big|\big|\langle e_{\zeta},K_{k}e_{p_{1}}\rangle\big|\|\tilde{a}_{p_{3}}\Phi_{s_{1}}\|^{2}\nonumber\\
	&\leq C k_{F}^{-2}\cQ^{2}\sum_{(k,\zeta)\in\cC_{\xi}}\sum_{p\in S_{k}}\hat{V}_{k}^{j_{2}+1}\frac{\chi_{L_{k}}(\zeta)}{\lambda_{k,p_{1}}\lambda_{k,\zeta}}\leq C_{V}k_{F}^{-1}\cQ^{2}m(\xi),
\end{align}
and
\begin{align}\label{esti-top3-3}
	&\sum_{k,l\in\Z_{*}^{3}}\sum_{p\in S_{k}\cap S_{l}}\sum_{q\in S_{k}'\cap S_{l}'}\big|\langle T_{k,s_{2}}^{(0,j_{2})}e_{p_{1}},e_{q_{1}}\rangle\big|\big|\langle e_{q_{4}},K_{l}e_{p_{4}}\rangle\big|\big|\big\langle\Phi_{s_{1}},\tilde{a}_{p_{2}}^{*}\tilde{a}_{q_{2}}^{*}\tilde{a}_{q_{3}}\tilde{a}_{p_{3}}\Phi_{s_{1}}\big\rangle\big|\nonumber\\
	&\leq\sum_{(k,\zeta)\in\cC_{\xi}}\sum_{l\in\Z_{*}^{3}}\sum_{p\in S_{k}\cap S_{l}}\sum_{q\in S_{k}'\cap S_{l}'}\chi_{L_{k}}(\zeta)\delta_{q_{1},\zeta}\big|\langle A_{k,s_{2}}^{(j_{1})}e_{p_{1}},e_{\zeta}\rangle\big|\nonumber\\
	&\quad\quad\quad\quad\quad\quad\quad\quad\quad\quad\quad\times\big|\langle e_{q_{4}},K_{l}e_{p_{4}}\rangle\big|\big|\big\langle\Phi_{s_{1}},\tilde{a}_{p_{2}}^{*}\tilde{a}_{2}^{*}\tilde{a}_{q_{3}}\tilde{a}_{p_{3}}\Phi_{s_{1}}\big\rangle\big|.
\end{align}
We see that, due to the symmetry of variables $p$ and $q$, this estimate is exactly the same as for \eqref{esti-top2-1} by switching the roles of $p$ and $q$, which we know from \eqref{esti-top2-2} that \eqref{esti-top3-3} is bounded by $C_{\delta,V}k_{F}^{-1}\cQ^{1-\delta}m(\xi)$. 
\end{proof}

\smallskip

\noindent\textbf{Estimation of the Single Commutator Terms}.  Next, we estimate the singule commutator terms in \eqref{single-com}:
\begin{align*}
	\sum_{k,l\in\Z_{*}^{3}}\sum_{p\in S_{k}\cap S_{l}}\tilde{a}_{p_{2}}^{*}[b_{l}(K_{l}e_{p_{4}}),\tilde{a}_{p_{3}}^{*}]^{*}b_{k}(T_{k}e_{p_{1}}),\quad \sum_{k,l\in\Z_{*}^{3}}\sum_{p\in S_{k}\cap S_{l}}b_{l}^{*}(K_{l}e_{p_{4}})[b_{k}(T_{k}e_{p_{1}}),\tilde{a}_{p_{2}}^{*}]\tilde{a}_{p_{3}}.
\end{align*}
Note that
\begin{align}\label{E2-com-single-esti}
	[b_{l}(K_{l}e_{p_{4}}),\tilde{a}_{p_{3}}^{*}]&=\begin{cases}
		-\chi_{L_{l}}(p_{3}+l)\langle K_{l}e_{p_{4}},e_{p_{3}+l}\rangle\tilde{a}_{p_{3}+l}\quad&S_{k}=L_{k},\\
		\chi_{L_{l}}(p_{3})\langle K_{l}e_{p_{4}},e_{p_{3}}\rangle \tilde{a}_{p_{3}-l}&S_{k}=L_{k}',
	\end{cases},
\end{align}
which satisfies
\begin{align}\label{E2-com-single-esti1}
	\begin{cases}
		|\chi_{L_{l}'}(p_{3})\langle K_{l}e_{p_{4}},e_{p_{3}+l}\rangle|\quad&S_{k}=L_{k}\\
		|\chi_{L_{l}}(p_{3})\langle K_{l}e_{p_{4}},e_{p_{3}}\rangle|&S_{k}=L_{k}'\end{cases}
	&\leq C \frac{k_{F}^{-1}\hat{V}_{l}}{\sqrt{\lambda_{k,p_{1}}\lambda_{l,p_{4}}}}.
\end{align}
\begin{prop}\label{prop:esti-single-com-1}
For each $0\leq s_{1},s_{2}\leq 1$ and $1\leq j_{1}\leq 3$, it holds for each $\delta>0$ that
\begin{align}
	\label{esti-single-com-1}&\sum_{k,l\in\Z_{*}^{3}}\sum_{p\in S_{k}\cap S_{l}}\big|\big\langle\Phi_{s_{1}},\tilde{a}_{p_{2}}^{*}[b_{l}(K_{l}e_{p_{4}}),\tilde{a}_{p_{3}}^{*}]^{*}b_{k}(T_{k,s_{2}}^{(j_{1},0)}e_{p_{1}})\Phi_{s_{1}}\big\rangle\big|,\\
	\label{esti-single-com-2}&\sum_{k,l\in\Z_{*}^{3}}\sum_{p\in S_{k}\cap S_{l}}\big|\big\langle\Phi_{s_{1}},b_{l}^{*}(K_{l}e_{p_{4}})[b_{k}(T_{k,s_{2}}^{(j_{1},0)}e_{p_{1}}),\tilde{a}_{p_{2}}^{*}]\tilde{a}_{p_{3}}\Phi_{s_{1}}\big\rangle\big|,
\end{align}
are bounded by $C_{\delta,V}\big(k_{F}^{-3/2}+k_{F}^{-1+\delta}\cQ\big)m(\xi)$.  
\end{prop}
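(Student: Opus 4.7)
The plan is to handle both sums by substituting the explicit commutator formula \eqref{E2-com-single-esti}, which reduces $[b_{l}(K_{l}e_{p_{4}}),\tilde{a}_{p_{3}}^{*}]$ to a scalar multiple of a single annihilation operator $\tilde{a}_{p_{3}\pm l}$ (the sign depending on whether $S_{k}=L_{k}$ or $S_{k}=L_{k}'$), with the scalar quantitatively controlled by \eqref{E2-com-single-esti1}. This is precisely the same opening move used for $[b_{-l}(K_{-l}e_{p_{4}}),\tilde{a}_{p_{3}}^{*}]$ in the proof of Proposition \ref{prop:esti-A1A2}, so the structural bounds transfer directly.

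After this substitution, \eqref{esti-single-com-1} becomes a sum of terms of the shape
\[
\delta_{p_{1},\zeta}\,\frac{k_{F}^{-1}\hat{V}_{l}}{\sqrt{\lambda_{k,p_{1}}\lambda_{l,p_{4}}}}\,\bigl|\bigl\langle \tilde{a}_{p_{3}\pm l}\tilde{a}_{p_{2}}\Phi_{s_{1}},\,b_{k}(A_{k,s_{2}}^{(j_{1})}e_{\zeta})\Phi_{s_{1}}\bigr\rangle\bigr|.
\]
A Cauchy--Schwarz followed by \eqref{excitation-bdd} controls the $b_{k}$-factor by $\|A_{k,s_{2}}^{(j_{1})}e_{\zeta}\|\,\|\cN_{k}^{1/2}\Phi_{s_{1}}\|$, and Proposition \ref{prop:prod-Kk-esti} combined with Lemma \ref{lem:estimate-lambkp} delivers $\|A_{k,s_{2}}^{(j_{1})}e_{\zeta}\|\leq Ck_{F}^{-1/2}\hat{V}_{k}^{j_{1}}/\sqrt{\lambda_{k,\zeta}}$.

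The remaining double sum over $l$ and $p\in S_{k}\cap S_{l}$ I will treat with a further Cauchy--Schwarz on $l$, so that $\sum_{l}\hat{V}_{l}^{2}/\lambda_{l,p_{4}}$ is controlled by $C_{V}k_{F}^{-1}m(\xi)$ exactly as in \eqref{K-lep4-esti}. The leftover $\tilde{a}$-factor $\sum_{l}\|\tilde{a}_{p_{3}\pm l}\tilde{a}_{p_{2}}\Phi_{s_{1}}\|^{2}$ will then be estimated in two complementary ways. First, invoking the crude bound $\|\tilde{a}_{p_{3}\pm l}\tilde{a}_{p_{2}}\Phi_{s_{1}}\|\leq \|\cN_{E}\Phi_{s_{1}}\|\leq C_{V}$ from \eqref{excitation-bdd5} and Lemma \ref{lem:Gronwall}, summing $\hat{V}_{k}^{2j_{1}}/\lambda_{k,\zeta}^{2}$ in $k$ using Lemma \ref{lem:gap-sum} yields the $C_{V}k_{F}^{-3/2}m(\xi)$ contribution. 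Second, by repackaging $\sum_{l}\tilde{a}_{p_{2}}^{*}\tilde{a}_{p_{3}\pm l}^{*}\tilde{a}_{p_{3}\pm l}\tilde{a}_{p_{2}}$ and using \eqref{excitation-bdd2}--\eqref{excitation-bdd5} together with the observation that $\zeta\in\cD_{k,\xi}=\{\pm\xi,k\pm\xi\}$ forces (after an appropriate change of variable in $l$) one of the momentum indices to equal $\pm\xi$, the norm collapses to $\|\tilde{a}_{\xi}\cN_{E}^{1/2}\Phi_{s_{1}}\|$, which Lemma \ref{lem:excitation-Gronwall} converts into $C_{\delta,V}\cQ^{1-\delta}$ for any $\delta>0$. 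Summing the residual $\hat{V}_{k}^{j_{1}}/\lambda_{k,\zeta}$ in $k$ via Lemma \ref{lem:gap-sum} then yields the $C_{\delta,V}k_{F}^{-1+\delta}\cQ\,m(\xi)$ contribution.

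Estimate \eqref{esti-single-com-2} follows by an entirely parallel argument: the roles of $b_{k}(T_{k,s_{2}}^{(j_{1},0)}e_{p_{1}})$ and $[b_{l}(K_{l}e_{p_{4}}),\tilde{a}_{p_{2}}^{*}]$ are interchanged, but the commutator bound \eqref{E2-com-single-esti1} and Proposition \ref{prop:prod-Kk-esti} apply identically and yield the same two contributions. The main obstacle I anticipate is the careful bookkeeping of the shifted momentum indices $p_{2}$ and $p_{3}\pm l$ across the two cases $S_{k}=L_{k}$ and $S_{k}=L_{k}'$: the shift patterns differ, and in each case a specific change of variable in $l$ (and a case analysis on which element of $\cD_{k,\xi}$ the index $\zeta$ is) is needed to match a shifted annihilation operator to $\tilde{a}_{\pm\xi}$ and thereby harvest the $\cQ^{1-\delta}$-saving rather than the weaker $O(1)$ estimate through $\cN_{E}$.
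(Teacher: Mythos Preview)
Your overall strategy---commutator reduction, Cauchy--Schwarz, and the matrix-element bounds from Proposition~\ref{prop:prod-Kk-esti}---is correct, but two specific points need correction.

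First, a power-counting slip: the claim that $\sum_{l}\hat{V}_{l}^{2}/\lambda_{l,p_{4}}$ is $O(k_{F}^{-1}m(\xi))$ ``exactly as in \eqref{K-lep4-esti}'' is wrong. In \eqref{K-lep4-esti} the factor $k_{F}^{-1}$ comes from $\|K_{-l}e_{p_{4}}\|^{2}$ itself; here the $k_{F}^{-1}$ is already absorbed into the commutator scalar \eqref{E2-com-single-esti1}, so the bare sum is only $O(\|\hat{V}\|_{\ell^{2}}^{2})$ (or $O(\|\hat{V}\|_{\ell^{2}}^{2}m(\xi))$ when $\zeta=\pm\xi$). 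This does not break your ``first way'', but your bookkeeping is off by $k_{F}^{-1/2}$.

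Second, and more seriously, you have swapped the roles of the two estimates. In the paper, \eqref{esti-single-com-1} is handled entirely by your ``first way'' and yields only the $C_{V}k_{F}^{-3/2}m(\xi)$ contribution; no $\cQ$-extraction is possible or needed there, because after the commutator the indices $p_{2}=-\zeta+l$ and $p_{3}\pm l$ both depend on $l$ and neither collapses to $\pm\xi$. The $\cQ$-term is forced in \eqref{esti-single-com-2}, and the mechanism is different from what you sketch. There the relevant commutator is \eqref{single-T1-com}, not \eqref{E2-com-single-esti1}, and it leaves the free annihilation $\tilde{a}_{p_{3}}=\tilde{a}_{-\zeta+k}$. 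When $\zeta=\pm\xi$ this can be absorbed into $\cN_{E}^{1/2}$ via the $k$-sum (giving $k_{F}^{-3/2}m(\xi)$), but when $\zeta=k\pm\xi$ one has $\tilde{a}_{p_{3}}=\tilde{a}_{\mp\xi}$, \emph{independent of $k$}, so $\|\tilde{a}_{\mp\xi}\Phi_{s_{1}}\|\leq\cQ$ directly (no $\cN_{E}^{1/2}$, hence $\cQ$ rather than $\cQ^{1-\delta}$). The now-decoupled $k$-sum of scalar factors, essentially $\sqrt{\sum_{k}\chi_{L_{k}}(l-\xi)m(l-\xi-k)}$, is then bounded by $C_{\delta}k_{F}^{1/2+\delta}$ via the argument of Lemma~\ref{lem:gap-sum}, producing the $k_{F}^{-1+\delta}\cQ\,m(\xi)$ term. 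Your ``entirely parallel'' remark for \eqref{esti-single-com-2} misses precisely this case distinction, which is where the whole difficulty lies.
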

\begin{proof}
For \eqref{esti-single-com-1}, by \eqref{excitation-bdd}--\eqref{excitation-bdd5}, \eqref{E2-com-single-esti}--\eqref{E2-com-single-esti1}, Cauchy-Schwarz inequality, Lemmas \ref{lem:Gronwall}, \ref{lem:excitation-Gronwall} and Proposition \ref{prop:prod-Kk-esti}, if $S_{k}=L_{k}$, we have
\begin{align}\label{esti-single-com-1-1}
	&\sum_{k,l\in\Z_{*}^{3}}\sum_{p\in S_{k}\cap S_{l}}\big|\big\langle\Phi_{s_{1}},\tilde{a}_{p_{2}}^{*}[b_{l}(K_{l}e_{p_{4}}),\tilde{a}_{p_{3}}^{*}]^{*}b_{k}(T_{k,s_{2}}^{(j_{1},0)}e_{p_{1}})\Phi_{s_{1}}\big\rangle\big|\nonumber\\
	&\leq C k_{F}^{-3/2}\sum_{(k,\zeta)\in\cC_{\xi}}\sum_{l\in\Z_{*}^{3}}\chi_{L_{l}}(\zeta)\frac{\hat{V}_{k}^{j_{1}}\hat{V}_{l}}{\lambda_{k,\zeta}\sqrt{\lambda_{l,\zeta}}}\|\tilde{a}_{k+l-\zeta}\tilde{a}_{l-\zeta}\Phi_{s_{1}}\|\|\cN_{k}^{1/2}\Phi_{s_{1}}\|\nonumber\\
	&\leq Ck_{F}^{-3/2}m(\xi)\sum_{k,l\in\Z_{*}^{3}}\chi_{L_{k}\cap L_{l}}(\xi)\hat{V}_{k}^{j_{1}}\hat{V}_{l}\|\tilde{a}_{k+l-\xi}\tilde{a}_{l-\xi}\Phi_{s_{1}}\|\|\cN_{k}^{1/2}\Phi_{s_{1}}\|\nonumber\\
	&\quad\quad+Ck_{F}^{-3/2}m(\xi)\sum_{k,l\in\Z_{*}^{3}}\chi_{L_{k}\cap L_{l}}(k+\xi)\hat{V}_{k}^{j_{1}}\hat{V}_{l}\|\tilde{a}_{l-\xi}\tilde{a}_{-k-\xi+l}\Phi_{s_{1}}\|\|\cN_{k}^{1/2}\Phi_{s_{1}}\|\nonumber\\
	&\leq Ck_{F}^{-3/2}m(\xi)\sum_{k\in\Z_{*}^{3}}\hat{V}_{k}^{j_{1}}\|\cN_{k}^{1/2}\Phi_{s_{1}}\|\sqrt{\sum_{l\in\Z_{*}^{3}}\hat{V}_{l}^{2}}\sqrt{\sum_{l\in\Z_{*}^{3}}\chi_{L_{l}}(\xi)\|\tilde{a}_{l-\xi}\Phi_{s_{1}}\|^{2}}\nonumber\\
	&\quad\quad+Ck_{F}^{-3/2}m(\xi)\sum_{k\in\Z_{*}^{3}}\hat{V}_{k}^{j_{1}}\|\cN_{k}^{1/2}\Phi_{s_{1}}\|\sqrt{\sum_{l\in\Z_{*}^{3}}\hat{V}_{l}^{2}}\sqrt{\sum_{l\in\Z_{*}^{3}}\chi_{L_{l}}(k+\xi)\|\tilde{a}_{-k-\xi+l}\Phi_{s_{1}}\|^{2}}\nonumber\\
	&\leq C_{V}k_{F}^{-3/2}m(\xi)\|\cN_{E}^{1/2}\Phi_{s_{1}}\|\sqrt{\sum_{k\in\Z_{*}^{3}}\|\cN_{k}^{1/2}\Phi_{s_{1}}\|^{2}}\sqrt{\sum_{k\in\Z_{*}^{3}}\hat{V}_{k}^{2j_{1}}}\leq C_{V}k_{F}^{-3/2}m(\xi).
\end{align}  

Next, for \eqref{esti-single-com-2}, we first compute the commutator using \eqref{ba-comm}:
\begin{align}\label{single-T1-com}
	[b_{k}(T_{k,s_{2}}^{(j_{1},0)}e_{p_{1}}),\tilde{a}_{p_{2}}^{*}]&=\begin{cases}
		-\chi_{L_{k}'}(p_{2})\langle T_{k,s_{2}}^{(j_{1},0)}e_{p_{1}},e_{p_{2}+k}\rangle\tilde{a}_{p_{2}+k}\quad&S_{k}=L_{k},\\
		\chi_{L_{k}}(p_{2})\langle T_{k,s_{2}}^{(j_{1},0)}e_{p_{1}},e_{p_{2}}\rangle\tilde{a}_{p_{2}-k}&S_{k}=L_{k}'.
	\end{cases}
\end{align}
Again, we will only consider \eqref{esti-single-com-2} for $S_{k}=L_{k}$ since the case $S_{k}=L_{k}'$ can be done in the same way.  Together with the lower bound \eqref{gap} and Lemma \ref{lem:gap-sum}, this gives for each $\delta>0$ that 
\begin{align}\label{single-T1-com2}
	&\sum_{k,l\in\Z_{*}^{3}}\sum_{p\in S_{k}\cap S_{l}}\big|\big\langle\Phi_{s_{1}},b_{l}^{*}(K_{l}e_{p_{4}})[b_{k}(T_{k,s_{2}}^{(j_{1},0)}e_{p_{1}}),\tilde{a}_{p_{2}}^{*}]\tilde{a}_{p_{3}}\Phi_{s_{1}}\big\rangle\big|\nonumber\\
	&=\sum_{k,l\in\Z_{*}^{3}}\sum_{p\in L_{k}\cap L_{l}}\chi_{L_{k}'}(-p+l)\big|\langle T_{k,s_{2}}^{(j_{1},0)}e_{p},e_{-p+l+k}\rangle\big|\big|\|b_{l}(K_{l}e_{p})\Phi_{s_{1}}\|\|\tilde{a}_{-p+l+k}\tilde{a}_{-p+k}\Phi_{s_{1}}\|\nonumber\\
	&\leq C\sum_{k,l\in\Z_{*}^{3}}\chi_{L_{k}\cap L_{l}}(\zeta)\chi_{L_{k}'}(l-\zeta)\big|\langle A_{k,s_{2}}^{(j_{1})}e_{\zeta},e_{-\zeta+l+k}\rangle\big|\|K_{l}e_{\zeta}\|\|\cN_{l}^{1/2}\Phi_{s_{1}}\|\|\tilde{a}_{-\zeta+k}\Phi_{s_{1}}\|\nonumber\\
	&\leq C k_{F}^{-3/2}\sum_{(k,\zeta)\in\cC_{\xi}}\sum_{l\in\Z_{*}^{3}}\chi_{L_{k}\cap L_{l}}(\zeta)\chi_{L_{k}'}(l-\zeta)\frac{\hat{V}_{k}^{j_{1}}\hat{V}_{l}}{(\lambda_{k,\zeta}+\lambda_{k,-\zeta+l+k})\sqrt{\lambda_{l,\zeta}}}\|\tilde{a}_{-\zeta+k}\Phi_{s_{1}}\|\nonumber\\
	&\leq Ck_{F}^{-3/2}\Big[\sum_{k,l\in\Z_{*}^{3}}\chi_{L_{k}\cap L_{l}}(\xi)\chi_{L_{k}'}(l-\xi)\frac{\hat{V}_{k}^{j_{1}}\hat{V}_{l}}{(\lambda_{k,\xi}+\lambda_{k,-\xi+l+k})\sqrt{\lambda_{l,\xi}}}\|\cN_{l}^{1/2}\Phi_{s_{1}}\|\|\tilde{a}_{-\xi+k}\Phi_{s_{1}}\|\nonumber\\
	&\quad+\sum_{k,l\in\Z_{*}^{3}}\chi_{L_{k}\cap L_{l}}(k+\xi)\chi_{L_{k}}(l-\xi)\frac{\hat{V}_{k}^{j_{1}}\hat{V}_{l}}{(\lambda_{k,k+\xi}+\lambda_{k,-\xi+l})\sqrt{\lambda_{l,\xi+k}}}\|\cN_{l}^{1/2}\Phi_{s_{1}}\|\|\tilde{a}_{-\xi}\Phi_{s_{1}}\|\Big]\nonumber\\
	&\leq Ck_{F}^{-3/2}m(\xi)\sqrt{\sum_{k\in\Z_{*}^{3}}\hat{V}_{k}^{2j_{1}}}\sum_{l\in\Z_{*}^{3}}\hat{V}_{l}\|\cN_{l}^{1/2}\Phi_{s_{1}}\|\nonumber\\
	&\quad\quad\quad\quad\quad\quad\quad\times\Big(\sqrt{\sum_{k\in\Z_{*}^{3}}\chi_{L_{k}}(\xi)\|\tilde{a}_{-\xi+k}\Phi_{s_{1}}\|^{2}}+\cQ	\sqrt{\sum_{k\in\Z_{*}^{3}}\chi_{L_{k}}(l-\xi)m(l-\xi-k)}\Big)\nonumber\\
	&\leq C_{\delta,V}k_{F}^{-3/2}m(\xi)\Big(\|\cN_{E}^{1/2}\Phi_{s_{1}}\|+k_{F}^{1/2+\delta}\cQ\Big)\leq C_{\delta,V}\big(k_{F}^{-3/2}+k_{F}^{-1+\delta}\cQ\big)m(\xi).
\end{align}
This completes the proof.
\end{proof}

\begin{prop}\label{prop:esti-single-com-2}
For each $0\leq s_{1},s_{2}\leq 1$ and $1\leq j_{2}\leq 3$, it holds that
\begin{align}
	\label{esti-single-com-3}&\sum_{k,l\in\Z_{*}^{3}}\sum_{p\in S_{k}\cap S_{l}}\big|\big\langle\Phi_{s_{1}},\tilde{a}_{p_{2}}^{*}[b_{l}(K_{l}e_{p_{4}}),\tilde{a}_{p_{3}}^{*}]^{*}b_{k}(T_{k,s_{2}}^{(0,j_{2})}e_{p_{1}})\Phi_{s_{1}}\big\rangle\big|,\\
	\label{esti-single-com-4}&\sum_{k,l\in\Z_{*}^{3}}\sum_{p\in S_{k}\cap S_{l}}\big|\big\langle\Phi_{s_{1}},b_{l}^{*}(K_{l}e_{p_{4}})[b_{k}(T_{k,s_{2}}^{(0,j_{2})}e_{p_{1}}),\tilde{a}_{p_{2}}^{*}]\tilde{a}_{p_{3}}\Phi_{s_{1}}\big\rangle\big|,
\end{align}
are bounded by $C_{V}\big(k_{F}^{-1}\cQ+k_{F}^{-3/2}\big)m(\xi)$.  
\end{prop}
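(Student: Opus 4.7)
The plan is to mirror the argument of Proposition \ref{prop:esti-single-com-1}, but exploiting the structural difference between $T_{k,s_2}^{(j_1,0)}$ and $T_{k,s_2}^{(0,j_2)}$. Since $T_{k,s_2}^{(0,j_2)}=\sum_{\zeta\in\cD_{k,\xi}}P_{\zeta;k}A_{k,s_2}^{(j_2)}$, we have
\begin{align*}
b_k\big(T_{k,s_2}^{(0,j_2)}e_{p_1}\big)&=\sum_{\zeta\in\cD_{k,\xi}}\chi_{L_k}(\zeta)\langle e_\zeta,A_{k,s_2}^{(j_2)}e_{p_1}\rangle\, b_{k,\zeta},
\end{align*}
so in contrast to Proposition \ref{prop:esti-single-com-1}, no delta function $\delta_{p_1,\zeta}$ arises; instead, the excitation index of $b$ is pinned to $\cD_{k,\xi}=\{\pm\xi,k\pm\xi\}$. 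Writing $b_{k,\zeta}=\tilde{a}_{\zeta-k}\tilde{a}_{\zeta}$ after particle-hole transformation, at least one of $\zeta$ or $\zeta-k$ is always $\pm\xi$, which is precisely the mechanism that produces the $\cQ$-factor in the claimed bound.

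For \eqref{esti-single-com-3}, I would apply the commutator estimate \eqref{E2-com-single-esti}--\eqref{E2-com-single-esti1} to replace $[b_l(K_le_{p_4}),\tilde{a}_{p_3}^*]^*$ by a single $\tilde{a}^*_{p_3\pm l}$ with scalar coefficient $\lesssim k_F^{-1}\hat{V}_l/\sqrt{\lambda_{k,p_1}\lambda_{l,p_4}}$, and apply Proposition \ref{prop:prod-Kk-esti} to control $|\langle e_\zeta,A_{k,s_2}^{(j_2)}e_{p_1}\rangle|\lesssim k_F^{-1}\hat{V}_k^{j_2}(\lambda_{k,\zeta}+\lambda_{k,p_1})^{-1}$. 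A single Cauchy-Schwarz then separates the matrix element into $\|\tilde{a}_{p_3\pm l}\tilde{a}_{p_2}\Phi_{s_1}\|\cdot\|b_{k,\zeta}\Phi_{s_1}\|$, the second factor being $\leq\|\tilde{a}_{\pm\xi}\Phi_{s_1}\|\leq\cQ$ by the observation above. The remaining triple sum in $p,l,k$ is then organized so as to apply Lemma \ref{lem:estimate-lambkp}, the gap bound $\lambda_{k,\zeta}\geq m(\xi)^{-1}$ from \eqref{gap2}, $\|K_l\|_{\rm HS}\lesssim\hat{V}_l$, and $\sum_k\hat{V}_k^{2j_2}<\infty$, ultimately yielding the $k_F^{-1}\cQ\, m(\xi)$ contribution.

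For \eqref{esti-single-com-4}, the plan is to compute $[b_k(T_{k,s_2}^{(0,j_2)}e_{p_1}),\tilde{a}_{p_2}^*]$ via \eqref{ba-comm}; this collapses to a single $\tilde{a}_{p_2\pm k}$ with coefficient $\chi_{L_k}(\zeta)\langle e_\zeta,A_{k,s_2}^{(j_2)}e_{p_1}\rangle$ in which $\zeta$ is forced to equal $p_2\pm k$, hence to lie in $\cD_{k,\xi}$. Taking adjoints to move $b_l^*(K_le_{p_4})$ to the bra-side produces a $\|\cN_l^{1/2}\Phi_{s_1}\|$-type factor, and then two strategies compete: either bound the remaining $\tilde{a}$-factor trivially (using $\|\tilde{a}_q\|=1$) to obtain the subleading $k_F^{-3/2}m(\xi)$ term, or, whenever the forced index equals $\pm\xi$, extract $\|\tilde{a}_{\pm\xi}\Phi_{s_1}\|\leq\cQ$ to obtain the $k_F^{-1}\cQ\, m(\xi)$ term. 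The resulting sums are controlled by the same ingredients as in Paragraph 2, together with Lemma \ref{lem:gap-sum} to bound $\sum_{k}\chi_{L_k}(\xi)\lambda_{k,\xi}^{-1}$ and $\sum_k\chi_{L_k'}(\xi)\lambda_{k,k+\xi}^{-2}$ by $k_F^{1+\delta}m(\xi)$.

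The main obstacle is the combinatorial book-keeping: one must carefully track which of the indicator constraints ($\chi_{L_k},\chi_{L_l},\chi_{L_k'},\chi_{L_l'}$, and the forced $\zeta\in\cD_{k,\xi}$) survive after each Cauchy-Schwarz split, so that the resulting sums decouple into one factor controlled by $\sum_k\hat{V}_k^2$, one factor producing $m(\xi)$ via \eqref{gap2}, and one factor producing a lattice sum like $\sum_{p\in L_k}\lambda_{k,p}^{-1}\lesssim k_F$. A secondary subtlety is that the case $\xi\in B_F$ versus $\xi\in B_F^c$ must be split (since $\zeta=\pm\xi$ is possible only in the first case and $\zeta=k\pm\xi$ only when $k\pm\xi\in L_k$, i.e., when $\xi\in B_F$), just as in the treatment of $n_{\rm ex}(\xi)$ in the proof of Proposition \ref{prop:ex-corr-momentum}; the estimates in the two cases are parallel and yield the same final bound $C_V(k_F^{-1}\cQ+k_F^{-3/2})m(\xi)$.
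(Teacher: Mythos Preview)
Your proposal is essentially correct and follows the same route as the paper. For \eqref{esti-single-com-3} your argument is identical to the paper's: the key observation that $b_k(T_{k,s_2}^{(0,j_2)}e_{p_1})=\sum_{\zeta}\langle e_\zeta,A_{k,s_2}^{(j_2)}e_{p_1}\rangle\,b_{k,\zeta}$ with $b_{k,\zeta}=\tilde a_{\zeta-k}\tilde a_\zeta$ always containing a factor $\tilde a_{\pm\xi}$ is exactly what produces the $\cQ$; the remaining Cauchy--Schwarz and lattice estimates match.

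For \eqref{esti-single-com-4} there is one clarification worth making. The paper does \emph{not} use your second ``$\cQ$-extraction'' strategy at all: after computing the commutator (which forces $\zeta=p_2\pm k$) it simply bounds $\|\tilde a_{p_2+k}\tilde a_{p_3}\Phi_{s_1}\|$ through the number operator, obtaining the pure $C_Vk_F^{-3/2}m(\xi)$ contribution with no $\cQ$ and no appeal to Lemma~\ref{lem:gap-sum}. Your alternative of extracting $\cQ$ together with Lemma~\ref{lem:gap-sum} would produce $k_F^{-1+\delta}\cQ\,m(\xi)$ rather than $k_F^{-1}\cQ\,m(\xi)$, which does not match the $\delta$-free bound in the statement; so for this term you should commit to the ``trivial'' strategy you already identified.
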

\begin{proof}
For \eqref{esti-single-com-3}, by \eqref{excitation-bdd}--\eqref{excitation-bdd5}, \eqref{E2-com-single-esti}--\eqref{E2-com-single-esti1}, Cauchy-Schwarz inequality, Lemmas \ref{lem:Gronwall}, \ref{lem:excitation-Gronwall}, Proposition \ref{prop:prod-Kk-esti} and operator norm $\|\tilde{a}_{q}\|=1$, we have
\begin{align*}
	&\sum_{k,l\in\Z_{*}^{3}}\sum_{p\in S_{k}\cap S_{l}}\big|\big\langle\Phi_{s_{1}},\tilde{a}_{p_{2}}^{*}[b_{l}(K_{l}e_{p_{4}}),\tilde{a}_{p_{3}}^{*}]^{*}b_{k}(T_{k,s_{2}}^{(0,j_{2})}e_{p_{1}})\Phi_{s_{1}}\big\rangle\big|\nonumber\\
	&\leq C k_{F}^{-1}\sum_{(k,\zeta)\in\cC_{\xi}}\sum_{l\in\Z_{*}^{3}}\sum_{p\in S_{k}\cap S_{l}}\frac{\hat{V}_{l}}{\sqrt{\lambda_{k,p_{1}}\lambda_{l,p_{4}}}}\big|\langle e_{\zeta},A_{k,s_{2}}^{(j_{1})}e_{p_{1}}\rangle\big|\|\tilde{a}_{p_{3}\pm l}\tilde{a}_{p_{2}}\Phi_{s_{1}}\|\|\tilde{a}_{\zeta}\tilde{a}_{\zeta-k}\Phi_{s_{1}}\|\nonumber\\
	\displaybreak
	&\leq C k_{F}^{-2}\sum_{(k,\zeta)\in\cC_{\xi}}\sum_{l\in\Z_{*}^{3}}\sum_{p\in S_{k}\cap S_{l}}\frac{\hat{V}_{l}}{\sqrt{\lambda_{k,p_{1}}\lambda_{l,p_{4}}}}\frac{\hat{V}_{k}^{j_{1}}}{\lambda_{k,\zeta}+\lambda_{k,p_{1}}}\|\tilde{a}_{p_{3}\pm l}\tilde{a}_{p_{2}}\Phi_{s_{1}}\|\|\tilde{a}_{\zeta}\Phi_{s_{1}}\|\nonumber\\
	&\leq C k_{F}^{-2}\cQ m(\xi)\sum_{l\in\Z_{*}^{3}}\sum_{p\in S_{l}}\frac{\hat{V}_{l}}{\sqrt{\lambda_{l,p_{4}}}}\sqrt{\sum_{k\in\Z_{*}^{3}}\frac{\hat{V}_{k}^{2j_{1}}\chi_{S_{k}}(p)}{\lambda_{k,p_{1}}}}\sqrt{\sum_{k\in\Z_{*}^{3}}\chi_{S_{k}\cap S_{l}}(p)\|\tilde{a}_{p_{3}\pm l}\tilde{a}_{p_{2}}\Phi_{s_{1}}\|^{2}}\nonumber\\
	&\leq C k_{F}^{-2}\cQ m(\xi)\sum_{p}\sqrt{\sum_{l\in\Z_{*}^{3}}\frac{\hat{V}_{l}^{2}\chi_{S_{l}}(p)}{\lambda_{l,p_{4}}}}\sqrt{\sum_{l\in\Z_{*}^{3}}\chi_{S_{l}}(p)\|\tilde{a}_{p_{2}}\cN_{E}^{1/2}\Phi_{s_{1}}\|^{2}}\sqrt{\sum_{k\in\Z_{*}^{3}}\frac{\hat{V}_{k}^{2j_{1}}\chi_{S_{k}}(p)}{\lambda_{k,p_{1}}}}\nonumber\\
	&\leq C k_{F}^{-2}\cQ m(\xi)\|\cN_{E}^{3/2}\Phi_{s_{1}}\|\sqrt{\sum_{l\in\Z_{*}^{3}}\hat{V}_{l}^{2}\sum_{p\in L_{l}}\frac{1}{\lambda_{l,p}}}\sqrt{\sum_{k\in\Z_{*}^{3}}\hat{V}_{k}^{2j_{1}}\sum_{p\in L_{k}}\frac{1}{\lambda_{k,p}}}\nonumber\\
	&\leq C_{V}k_{F}^{-1}\cQ m(\xi).
\end{align*}

Next, for \eqref{esti-single-com-4}, by the same set of arguments and also \eqref{single-T1-com} with $T_{k,s_{2}}^{(j_{1},0)}$ replaced by $T_{k,s_{2}}^{(0,j_{2})}$, we obtain for $S_{k}=L_{k}$ that 
\begin{align}
	&\sum_{k,l\in\Z_{*}^{3}}\sum_{p\in S_{k}\cap S_{l}}\big|\big\langle\Phi_{s_{1}},b_{l}^{*}(K_{l}e_{p_{4}})[b_{k}(T_{k,s_{2}}^{(0,j_{2})}e_{p_{1}}),\tilde{a}_{p_{2}}^{*}]\tilde{a}_{p_{3}}\Phi_{s_{1}}\big\rangle\big|\nonumber\\
	&\leq \sum_{(k,\zeta)\in\cC_{\xi}}\sum_{l\in\Z_{*}^{3}}\sum_{p\in L_{k}\cap L_{l}}\delta_{p_{2}+k,\zeta}\big|\langle e_{\zeta},A_{k,s_{2}}^{(j_{1})}e_{p_{1}}\rangle\big|\|K_{l}e_{p_{4}}\|\|\cN_{l}^{1/2}\Phi_{s_{1}}\|\|\tilde{a}_{p_{2}+k}\tilde{a}_{p_{3}}\Phi_{s_{1}}\|\nonumber\\
	&\leq\sum_{(k,\zeta)\in\cC_{\xi}}\sum_{l\in\Z_{*}^{3}}\chi_{L_{k}\cap L_{l}}(l-\zeta)\big|\langle e_{\zeta},A_{k,s_{2}}^{(j_{1})}e_{k+l-\zeta}\rangle\big|\|K_{l}e_{l-\zeta}\|\|\cN_{l}^{1/2}\Phi_{s_{1}}\|\|\tilde{a}_{k+l-\zeta}\tilde{a}_{-l+\zeta}\Phi_{s_{1}}\|\nonumber\\
	&\leq C k_{F}^{-3/2}\sum_{(k,\zeta)\in\cC_{\xi}}\sum_{l\in\Z_{*}^{3}}\chi_{L_{k}\cap L_{l}}(l-\zeta)\frac{\hat{V}_{k}^{j_{1}}\hat{V}_{l}}{\lambda_{k,\zeta}+\lambda_{k,l-\zeta}}\|\cN_{l}^{1/2}\Phi_{s_{1}}\|\|\tilde{a}_{k+l-\zeta}\tilde{a}_{-l+\zeta}\Phi_{s_{1}}\|\nonumber\\
	&\leq C k_{F}^{-3/2}m(\xi)\sum_{l\in\Z_{*}^{3}}\hat{V}_{l}\|\cN_{l}^{1/2}\Phi_{s_{1}}\|\sqrt{\sum_{k\in\Z_{*}^{3}}\hat{V}_{k}^{2j_{1}}}\nonumber\\
	&\quad\quad\quad\quad\times\sqrt{\sum_{k\in\Z_{*}^{3}}\chi_{L_{k}}(l-\xi)\|\tilde{a}_{k+l-\xi}\Phi_{s_{1}}\|^{2}+\sum_{k\in\Z_{*}^{3}}\chi_{L_{k}}(l-k-\xi)\|\tilde{a}_{-l+k+\xi}\Phi_{s_{1}}\|^{2}}\nonumber\\
	&\leq C_{V}k_{F}^{-3/2}m(\xi)\|\cN_{E}^{1/2}\Phi_{s_{1}}\|\sqrt{\sum_{l\in\Z_{*}^{3}}\hat{V}_{l}^{2}}\sqrt{\sum_{l\in\Z_{*}^{3}}\|\cN_{l}^{1/2}\Phi_{s_{1}}\|^{2}}\leq C_{V}k_{F}^{-3/2}m(\xi).
\end{align}
The case for $S_{k}=L_{k}'$ can be done similarly.  This completes the proof.
\end{proof}

\smallskip

\noindent\textbf{Estimation of the Double Commutator Terms}.  \DETAILS{Now, we come to the double commutator terms in \eqref{double-com1}--\eqref{double-com2}.  }Before proceeding, we note the following identity from \cite[Eq. (4.57)]{CHN-23} for double commutators: For each $\varphi\in\ell^{2}(L_{k}),\psi\in\ell^{2}(L_{l})$ and $p\in\Z_{*}^{3}$, it holds that 
\begin{equation}\label{double-com}
	-[b_{k}(\varphi),[b_{l}(\psi),\tilde{a}_{p}^{*}]^{*}]=\begin{cases}
		\chi_{L_{k}}(p+l)\chi_{L_{l}'}(p)\langle\varphi,e_{p+l}\rangle\langle e_{p+l},\psi\rangle\tilde{a}_{p+l-k}\quad&p\in B_{F},\\
		\chi_{L_{k}'}(p-l)\chi_{L_{l}}(p)\langle\varphi,e_{p-l+k}\rangle\langle e_{p},\psi\rangle\tilde{a}_{p-l+k}&p\in B_{F}^{c}.
	\end{cases}
\end{equation}
Then we have
\begin{prop}\label{prop:esti-double-com}
For each $0\leq s_{1},s_{2}\leq 1$ and $(j_{1},j_{2})\in\Sigma_{*}^{2}$, it holds for each $\delta>0$ that
\begin{align}
	\label{esti-double-com1}&\sum_{k,l\in\Z_{*}^{3}}\sum_{p\in S_{k}\cap S_{l}}\big|\big\langle\Phi_{s_{1}},\tilde{a}_{p_{2}}^{*}[b_{k}(T_{k,s_{2}}^{(j_{1},j_{2})}e_{p_{1}}),[b_{l}(K_{l}e_{p_{4}}),\tilde{a}_{p_{3}}^{*}]^{*}]\Phi_{s_{1}}\big\rangle\big|,\\
	\label{esti-double-com2}&\sum_{k,l\in\Z_{*}^{3}}\sum_{p\in S_{k}\cap S_{l}}\big|\big\langle\Phi_{s_{1}},[b_{l}(K_{l}e_{p_{4}}),[b_{k}(T_{k,s_{2}}^{(j_{1},j_{2})}e_{p_{1}}),\tilde{a}_{p_{2}}^{*}]^{*}]^{*}\tilde{a}_{p_{3}}\Phi_{s_{1}}\big\rangle\big|,\\
	\label{esti-double-com3}&\sum_{k,l\in\Z_{*}^{3}}\sum_{p\in S_{k}\cap S_{l}}\big|\big\langle\Phi_{s_{1}},[b_{l}(K_{l}e_{p_{4}}),\tilde{a}_{p_{3}}^{*}]^{*}[b_{k}(T_{k,s_{2}}^{(j_{1},j_{2})}e_{p_{1}}),\tilde{a}_{p_{2}}^{*}]\Phi_{s_{1}}\big\rangle\big|,
\end{align}
are bounded by $C_{\delta,V}k_{F}^{-3/2+\delta}\cQ m(\xi)$.  
\end{prop}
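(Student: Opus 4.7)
The plan is to parallel the strategy of Propositions \ref{prop:esti-top1}--\ref{prop:esti-single-com-2}, exploiting the key feature that each of the three quantities \eqref{esti-double-com1}--\eqref{esti-double-com3} contains \emph{two} scalar matrix elements (one of $T_{k,s_2}^{(j_1,j_2)}$ and one of $K_l$) together with only a single remaining pair of $\tilde a$ operators. This is one more matrix element than appears in the single commutator terms of Section \ref{sec:esti-errEk2}, and the extra factor of $k_F^{-1/2}\hat V$ from this additional matrix element (via Proposition \ref{prop:prod-Kk-esti} and Lemma \ref{lem:matrix-expK}) is precisely what should improve the rate from $k_F^{-1+\delta}\cQ$ (Proposition \ref{prop:esti-single-com-1}) to $k_F^{-3/2+\delta}\cQ$.

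Concretely, I would first apply the double commutator identity \eqref{double-com} directly to \eqref{esti-double-com1}--\eqref{esti-double-com2}, with $\varphi=T_{k,s_2}^{(j_1,j_2)}e_{p_1}$ and $\psi=K_l e_{p_4}$, and separately apply the single commutator identities \eqref{E2-com-single-esti} and \eqref{single-T1-com} (the latter with $T_{k,s_2}^{(j_1,0)}$ replaced by $T_{k,s_2}^{(j_1,j_2)}$) to the two slots of \eqref{esti-double-com3}. In every case the resulting expression is a finite sum of terms of the schematic form
\begin{align*}
\sum_{(k,\zeta)\in\cC_\xi}\sum_{l\in\Z_*^3}\sum_p(\text{indicator/delta})\,\bigl|\langle e_\zeta,A_{k,s_2}^{(\cdot)}e_\cdot\rangle\bigr|\,\bigl|\langle e_\cdot,K_l e_\cdot\rangle\bigr|\,\bigl|\bigl\langle\Phi_{s_1},\tilde a_\alpha^{\#}\tilde a_\beta^{\#}\Phi_{s_1}\bigr\rangle\bigr|,
\end{align*}
where the pinning from $P_{\zeta;k}$ inside $T_{k,s_2}^{(j_1,j_2)}$ forces either $p_1=\zeta$ (when $j_2=0$) or $p_3+l=\zeta$ (when $j_1=0$), and one of the indices $\alpha,\beta$ reduces to a lattice translate of $\pm\xi$.

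Next, I would bound the two scalar matrix elements using Proposition \ref{prop:prod-Kk-esti} and Lemma \ref{lem:matrix-expK}, obtaining a combined weight of order $k_F^{-1}\hat V_k^{j_1+j_2}\hat V_l$ divided by suitable products of the $\lambda$'s. For the remaining $\tilde a$-pair I would apply Cauchy--Schwarz and control one factor by $\cQ$ (directly via \eqref{bootstrap}, or through the interpolation $\|\tilde a_\alpha\cN_E^m\Phi_{s_1}\|\leq C_{\delta,m,V}\cQ^{1-\delta}$ from Lemma \ref{lem:excitation-Gronwall}), and the other by $\|\cN_E^{1/2}\Phi_{s_1}\|\leq C_V$ from Lemma \ref{lem:Gronwall} after summing the free lattice variable via \eqref{NkNE-esti}. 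The remaining lattice sums are then absorbed as follows: the $p$-sum by $\sum_{p\in L_k}\lambda_{k,p}^{-1}\leq C k_F$ (Lemma \ref{lem:estimate-lambkp}); the $l$-sum by $\sum_l\hat V_l^2\leq\|\hat V\|_{\ell^2}^2$ after an appropriate Cauchy--Schwarz split; and the $k$-sum by $\sum_k\chi_{L_k}(\xi)\lambda_{k,\xi}^{-1}\leq C_\delta k_F^{1+\delta}$ and its $L_k'$-analogue (Lemma \ref{lem:gap-sum}), which simultaneously produces the $m(\xi)$ factor via the gap bound \eqref{gap2} together with the $k_F^\delta$ loss. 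Balancing these contributions against the two $k_F^{-1/2}$ factors extracted from the matrix elements yields the claimed bound $C_{\delta,V}k_F^{-3/2+\delta}\cQ\,m(\xi)$.

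The main obstacle will be the combinatorial bookkeeping of the parallel sub-cases $(j_1,j_2)\in\{(j_1,0),(0,j_2)\}$ crossed with $S_k\in\{L_k,L_k'\}$ and $p_3\in B_F$ versus $p_3\in B_F^c$: in each sub-case the Cauchy--Schwarz split between the two $\tilde a$-factors must be arranged so that exactly one is controlled by $\cQ$ (or $\cQ^{1-\delta}$) and the other by $\|\cN_E^{1/2}\Phi_{s_1}\|$, while preserving the pinned $k$-sum over $\cC_\xi$ and a free $l$-sum of $\|\hat V\|_{\ell^2}$-type. Once the correct distribution of variables is fixed, the remaining manipulations are direct analogues of those already carried out in Propositions \ref{prop:esti-top1}--\ref{prop:esti-single-com-2}, with each appearance of a single commutator weight from \eqref{E2-com-single-esti1} replaced by the double commutator weight from \eqref{double-com}; the net effect of this replacement is the advertised gain of $k_F^{-1/2+\delta}$.
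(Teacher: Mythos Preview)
Your plan is correct and follows essentially the same route as the paper's proof: apply \eqref{double-com} (respectively the two single-commutator identities for \eqref{esti-double-com3}), use the pinning from $P_{\zeta;k}$ to collapse the $p$-sum to $\zeta\in\cD_{k,\xi}$, bound the two resulting matrix elements by Proposition \ref{prop:prod-Kk-esti} and Lemma \ref{lem:matrix-expK}, extract one factor of $\cQ$ from the surviving $\tilde a$-pair, and close the $k$- and $l$-sums via Cauchy--Schwarz together with Lemma \ref{lem:gap-sum}. One small simplification relative to your outline: after the pinning there is no free $p$-sum left, so Lemma \ref{lem:estimate-lambkp} is not needed here---the $k_F^{\delta}$ loss and the $m(\xi)$ factor both come solely from Lemma \ref{lem:gap-sum} and \eqref{gap2}, and you can use $\cQ$ directly rather than $\cQ^{1-\delta}$.
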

\begin{proof}
We will only consider the case $S_{k}=L_{k}$ since the case $S_{k}=L_{k}'$ is done by the same computations.  For \eqref{esti-double-com1} with $j_{2}=0$, by \eqref{double-com}, \eqref{excitation-bdd}--\eqref{excitation-bdd5}, \eqref{E2-com-single-esti}--\eqref{E2-com-single-esti}--\eqref{E2-com-single-esti1}, Cauchy-Schwarz inequality, Lemmas \ref{lem:Gronwall}, \ref{lem:excitation-Gronwall} and Proposition \ref{prop:prod-Kk-esti}, we have
\begin{align}\label{esti-double-com1-1}
	&\sum_{k,l\in\Z_{*}^{3}}\sum_{p\in S_{k}\cap S_{l}}\big|\big\langle\Phi_{s_{1}},\tilde{a}_{p_{2}}^{*}[b_{k}(T_{k,s_{2}}^{(j_{1},0)}e_{p_{1}}),[b_{l}(K_{l}e_{p_{4}}),\tilde{a}_{p_{3}}^{*}]^{*}]\Phi_{s_{1}}\big\rangle\big|\nonumber\\
	&=\sum_{k,l\in\Z_{*}^{3}}\sum_{p\in L_{k}\cap L_{l}}\chi_{L_{k}\cap L_{l}}(k+l-p)\big|\langle T_{k,s_{2}}^{(j_{1},0)}e_{p},e_{k+l-p}\rangle\big|\big|\langle e_{k+l-p},K_{l}e_{p}\rangle\big|\|\tilde{a}_{-p+l}\Phi_{s_{1}}\|^{2}\nonumber\\
	&\leq C k_{F}^{-2}\sum_{(k,\zeta)\in\cC_{\xi}}\sum_{l\in\Z_{*}^{3}}\frac{\hat{V}_{k}^{j_{1}}\hat{V}_{l}\chi_{L_{k}\cap L_{l}}(\zeta)\chi_{L_{k}\cap L_{l}}(k+l-\zeta)}{(\lambda_{k,\zeta}+\lambda_{k,k+l-\zeta})(\lambda_{l,\zeta}+\lambda_{l,k+l-\zeta})}\|\tilde{a}_{l-\zeta}\Phi_{s_{1}}\|^{2}\nonumber\\
	&\leq Ck_{F}^{-2}\cQ\sum_{k,l\in\Z_{*}^{3}}\Big(\frac{\hat{V}_{k}^{j_{1}}\hat{V}_{l}\chi_{L_{k}\cap L_{l}}(\xi)}{\lambda_{k,\xi}\lambda_{l,\xi}}\|\tilde{a}_{l-\xi}\Phi_{s_{1}}\|+\frac{\hat{V}_{k}^{j_{1}}\hat{V}_{l}\chi_{L_{k}\cap L_{l}}(k+\xi)\chi_{L_{l}'}(-\xi)}{\lambda_{k,k+\xi}\lambda_{l,l-\xi}}\|\tilde{a}_{l-k-\xi}\Phi_{s_{1}}\|\Big)\nonumber\\
	&\leq Ck_{F}^{-2}\cQ m(\xi)\sum_{k\in\Z_{*}^{3}}\frac{\hat{V}_{k}^{j_{1}}\chi_{L_{k}}(\xi)}{\lambda_{k,\xi}}\sqrt{\sum_{l\in\Z_{*}^{3}}\hat{V}_{l}^{2}}\sqrt{\sum_{l\in\Z_{*}^{3}}\chi_{L_{l}}(\xi)\|\tilde{a}_{l-\xi}\Phi_{s_{1}}\|^{2}}\nonumber\\
	&\quad+Ck_{F}^{-2}\cQ m(\xi)\sum_{l\in\Z_{*}^{3}}\frac{\hat{V}_{l}\chi_{L_{l}'}(-\xi)}{\lambda_{l,l-\xi}}\sqrt{\sum_{k\in\Z_{*}^{3}}\hat{V}_{k}^{2j_{1}}}\sqrt{\sum_{k\in\Z_{*}^{3}}\chi_{L_{l}}(k+\xi)\tilde{a}_{l-k-\xi}\Phi_{s_{1}}\|^{2}}\nonumber\\
	&\leq C_{V}k_{F}^{-2}\cQ m(\xi)\Big(\sqrt{\sum_{k\in\Z_{*}^{3}}\hat{V}_{k}^{2j_{1}}}\sqrt{\sum_{k\in\Z_{*}^{3}}\frac{\chi_{L_{k}}(\xi)}{\lambda_{k,\xi}}}+\sqrt{\sum_{l\in\Z_{*}^{3}}\hat{V}_{l}^{2}}\sqrt{\sum_{l\in\Z_{*}^{3}}\frac{\chi_{L_{l}'}(-\xi)}{\lambda_{l,l-\xi}^{2}}}\Big)\nonumber\\
	&\leq C_{\delta,V}k_{F}^{-3/2+\delta}\cQ m(\xi).
\end{align}

Next, for \eqref{esti-double-com2}, we have
\begin{align*}
	&\sum_{k,l\in\Z_{*}^{3}}\sum_{p\in S_{k}\cap S_{l}}\big|\big\langle\Phi_{s_{1}},[b_{l}(K_{l}e_{p_{4}}),[b_{k}(T_{k,s_{2}}^{(j_{1},0)}e_{p_{1}}),\tilde{a}_{p_{2}}^{*}]^{*}]^{*}\tilde{a}_{p_{3}}\Phi_{s_{1}}\big\rangle\big|\nonumber\\
	&=\sum_{k,l\in\Z_{*}^{3}}\sum_{p\in L_{k}\cap L_{l}}\chi_{L_{k}\cap L_{l}}(-p+l+k)\big|\langle K_{l}e_{p},e_{-p+l+k}\rangle\big|\big|\langle e_{-p+l+k},T_{k,s_{2}}^{(j_{1},0)}e_{p}\rangle\big|\|\tilde{a}_{-p+k}\Phi_{s_{1}}\|^{2}.
\end{align*}
We observe that, by reversing the roles of $k$ and $l$, this is exactly the same as \eqref{esti-double-com1-1} and thus satisfies the same bound.  For \eqref{esti-double-com3}, by \eqref{ba-comm}, in the case $S_{k}=L_{k}$, we have
\begin{align}
	&[b_{l}(K_{l}e_{p_{4}}),\tilde{a}_{p_{3}}^{*}]^{*}[b_{k}(T_{k,s_{2}}^{(j_{1},0)}e_{p_{1}}),\tilde{a}_{p_{2}}^{*}]\nonumber\\
	&=\chi_{L_{k}}(p_{2}+k)\chi_{L_{l}}(p_{3}+l)\langle T_{k,s_{2}}^{(j_{1},0)}e_{p_{1}},e_{p_{2}+k}\rangle\langle e_{p_{3}+l},K_{l}e_{p_{4}}\rangle\tilde{a}_{p_{3}+l}^{*}\tilde{a}_{p_{2}+k},
\end{align}
so for each $\delta>0$ that 
\begin{align}\label{esti-double-com3-1}
	&\sum_{k,l\in\Z_{*}^{3}}\sum_{p\in S_{k}\cap S_{l}}\big|\big\langle\Phi_{s_{1}},[b_{l}(K_{l}e_{p_{4}}),\tilde{a}_{p_{3}}^{*}]^{*}[b_{k}(T_{k,s_{2}}^{(j_{1},0)}e_{p_{1}}),\tilde{a}_{p_{2}}^{*}]\Phi_{s_{1}}\big\rangle\big|\nonumber\\
	&=\sum_{(k,\zeta)\in\cC_{\xi}}\sum_{l\in\Z_{*}^{3}}\chi_{L_{k}\cap L_{l}}(\zeta)\chi_{L_{k}\cap L_{l}}(k+l-\zeta)\big|\langle A_{k,s_{2}}^{(j_{1})}e_{\zeta},e_{k+l-\zeta}\rangle\big|\big|\langle e_{k+l-\zeta},K_{l}e_{\zeta}\rangle\big|\big|\|\tilde{a}_{k+l-\zeta}\Phi_{s_{1}}\|^{2}\nonumber\\
	&\leq C k_{F}^{-2}\sum_{(k,\zeta)\in\cC_{\xi}}\sum_{l\in\Z_{*}^{3}}\frac{\hat{V}_{k}^{j_{1}}\hat{V}_{l}\chi_{L_{k}\cap L_{l}}(\zeta)\chi_{L_{k}\cap L_{l}}(k+l-\zeta)}{(\lambda_{k,\zeta}+\lambda_{k,k+l-\zeta})(\lambda_{l,\zeta}+\lambda_{l,k+l-\zeta})}\|\tilde{a}_{k+l-\zeta}\Phi_{s_{1}}\|^{2}\nonumber\\
	&\leq Ck_{F}^{-2}\cQ\sum_{k,l\in\Z_{*}^{3}}\hat{V}_{k}^{j_{1}}\hat{V}_{l}\Big(\frac{\chi_{L_{k}}(\xi)\chi_{L_{l}'}(k-\xi)}{\lambda_{k,\xi}\lambda_{l,\xi}}\|\tilde{a}_{k+l-\xi}\Phi_{s_{1}}\|+\frac{\chi_{L_{k}}(\xi+k)\chi_{L_{l}}(l-\xi)}{\lambda_{k,k+\xi}\lambda_{l,l-\xi}}\|\tilde{a}_{l-\xi}\Phi_{s_{1}}\|\Big)\nonumber\\
	&\leq Ck_{F}^{-2}\cQ m(\xi)\sum_{k\in\Z_{*}^{3}}\frac{\hat{V}_{k}^{j_{1}}\chi_{L_{k}}(\xi)}{\lambda_{k,\xi}}\sqrt{\sum_{l\in\Z_{*}^{3}}\hat{V}_{l}^{2}}\sqrt{\sum_{l\in\Z_{*}^{3}}\chi_{L_{l}}(k+l-\xi)\|\tilde{a}_{k+l-\xi}\Phi_{s_{1}}\|^{2}}\nonumber\\
	&\quad\quad+Ck_{F}^{-2}\cQ m(\xi)\sum_{k\in\Z_{*}^{3}}\frac{\hat{V}_{k}^{j_{1}}\chi_{L_{k}'}(\xi)}{\lambda_{k,k+\xi}}\sqrt{\sum_{l\in\Z_{*}^{3}}\hat{V}_{l}^{2}}\sqrt{\sum_{l\in\Z_{*}^{3}}\chi_{L_{l}}(l-\xi)\|\tilde{a}_{l-\xi}\Phi_{s_{1}}\|^{2}}\nonumber\\
	&\leq C_{V}k_{F}^{-2}\cQ m(\xi)\|\cN_{E}^{1/2}\Phi_{s_{1}}\|\sqrt{\sum_{k\in\Z_{*}^{3}}\hat{V}_{k}^{2j_{1}}}\Big(\sqrt{\sum_{k\in\Z_{*}^{3}}\frac{\chi_{L_{k}}(\xi)}{\lambda_{k,\xi}}}+\sqrt{\sum_{k\in\Z_{*}^{3}}\frac{\chi_{L_{k}'}(\xi)}{\lambda_{k,k+\xi}^{2}}}\Big)\nonumber\\
	&\leq C_{\delta,V}k_{F}^{-3/2+\delta}\cQ m(\xi).
\end{align}

Finally, we mention that the quantities \eqref{esti-double-com1}--\eqref{esti-double-com3} with $j_{1}=0$ have same structure as in \eqref{esti-double-com3-1}--\eqref{esti-double-com3} with $j_{2}=0$ and thus they satisfy the same bounds.  
\end{proof}

\subsection{Analysis of \eqref{top-hard}}\label{sec:esti-top-hard}
In this section, we estimate the term \eqref{top-hard}.  As remarked earlier, if we use the same method in the proof of Proposition \ref{prop:esti-top1}, then this term is of order $O(k_{F}^{-1})$, which is not small in comparison with the bosonization contribution.  

In the same spirit as in Subsections \ref{sec:esti-errEk1}--\ref{sec:esti-errEk2}, we perform another Bogoliubov transformation and extract its leading order contribution.  To simplify notations, we note that, from the error terms listed in Subsection \ref{sec:class-errors}, it suffices to consider the case where $T_{k,s_{2}}^{(1,0)}=K_{k}$.  Moreover, we recall the definition of $(p_{1},p_{2},p_{3},p_{4})$ below \eqref{E2k-generic} and compute for $S_{k}=L_{k}$ that
\begin{align}\label{top-hardLk}
	&\sum_{k,l\in\Z_{*}^{3}}\sum_{p\in S_{k}\cap S_{l}}\tilde{a}_{p_{2}}^{*}b_{l}^{*}(K_{l}e_{p_{4}})b_{k}(K_{k}e_{p_{1}})\tilde{a}_{p_{3}}\nonumber\\
	&=2\sum_{k,l\in\Z_{*}^{3}}\chi_{L_{k}\cap L_{l}}(\xi)\tilde{a}_{l-\xi}^{*}b_{l}^{*}(K_{l}e_{\xi})b_{k}(K_{k}e_{\xi})\tilde{a}_{k-\xi}\nonumber\\
	&\quad+2\sum_{k,l\in\Z_{*}^{3}}\chi_{L_{k}\cap L_{l}}(k+\xi)\tilde{a}_{-k+l-\xi}^{*}b_{l}^{*}(K_{l}e_{k+\xi})b_{k}(K_{k}e_{k+\xi})\tilde{a}_{\xi},
\end{align}
and for $S_{k}=L_{k}'$ that
\begin{align}\label{top-hardLk'}
	&\sum_{k,l\in\Z_{*}^{3}}\sum_{p\in S_{k}\cap S_{l}}\tilde{a}_{p_{2}}^{*}b_{l}^{*}(K_{l}e_{p_{4}})b_{k}(K_{k}e_{p_{1}})\tilde{a}_{p_{3}}\nonumber\\
	&=2\sum_{k,l\in\Z_{*}^{3}}\chi_{L_{k}'\cap L_{l}'}(\xi)\tilde{a}_{-l-\xi}^{*}b_{l}^{*}(K_{l}e_{l+\xi})b_{k}(K_{k}e_{k+\xi})\tilde{a}_{-k-\xi}\nonumber\\
	&\quad+2\sum_{k,l\in\Z_{*}^{3}}\chi_{L_{k}'\cap L_{l}'}(-k+\xi)\tilde{a}_{k-l-\xi}^{*}b_{l}^{*}(K_{l}e_{\xi-k+l})b_{k}(K_{k}e_{\xi})\tilde{a}_{-\xi}.
\end{align}
We note that the matrix elements of the second sums in both of \eqref{top-hardLk} and \eqref{top-hardLk'} w.r.t. state $\Phi_{s_{1}}$ can be estimated using the same method in Proposition \ref{prop:esti-A1A2} and yields the upper bound $C_{\delta,V}k_{F}^{-1}\big(\cQ m(\xi)+m(\xi)^{1/2}\cdot\sup_{0\leq \tau\leq 1}\|\tilde{a}_{\xi}\Phi_{\tau}\|\big)$, so we omit details for simplicity.  Hence, it suffices to estimate the first terms in \eqref{top-hardLk}--\eqref{top-hardLk'}.  Furthermore, since the first sums in \eqref{top-hardLk} and \eqref{top-hardLk'} share similar structure, it suffices to only consider the first sum in \eqref{top-hardLk}.

Next, by the fundamental theorem of calculus and symmetry in summation over $k$ and $l$, we obtain
\begin{align}\label{top-hard1}
	&\sum_{k,l\in\Z_{*}^{3}}\chi_{L_{k}\cap L_{l}}(\xi)\big\langle\Phi_{s_{1}},\tilde{a}_{l-\xi}^{*}b_{l}^{*}(K_{l}e_{\xi})b_{k}(K_{k}e_{\xi})\tilde{a}_{k-\xi}\Phi_{s_{1}}\big\rangle\nonumber\\
	&=\Re\sum_{k,l\in\Z_{*}^{3}}\chi_{L_{k}\cap L_{l}}(\xi)\int_{0}^{s_{1}}\big\langle\Phi_{s_{3}},\tilde{a}_{l-\xi}^{*}b_{l}^{*}(K_{l}e_{\xi})[\cK,b_{k}(K_{k}e_{\xi})\tilde{a}_{k-\xi}]\Phi_{s_{3}}\big\rangle ds_{3}.
\end{align}
Using CAR and Proposition \ref{prop:Kcom-excit} and then substituting into \eqref{top-hard1}, we obtain
\begin{align}
	&\sum_{k,l\in\Z_{*}^{3}}\chi_{L_{k}\cap L_{l}}(\xi)\tilde{a}_{l-\xi}^{*}b_{l}^{*}(K_{l}e_{\xi})[\cK,b_{k}(K_{k}e_{\xi})\tilde{a}_{k-\xi}]\nonumber\\
	\label{top-hard2-1}&=\sum_{(k,l)\in\cB_{\xi}}\tilde{a}_{l-\xi}^{*}b_{l}^{*}(K_{l}e_{\xi})b_{-k}^{*}(K_{-k}^{2}e_{-\xi})\tilde{a}_{k-\xi}\\
	\label{top-hard2-2}&\quad-\sum_{(k,l,\ell)\in\cB_{\xi}'}\tilde{a}_{l-\xi}^{*}b_{l}^{*}(K_{l}e_{\xi})\Big(\tilde{a}_{-\eta}^{*}b_{\ell}^{*}(K_{\ell}e_{\eta})b_{k}(K_{k}e_{\xi})+b_{\ell}^{*}(K_{\ell}e_{\eta})[b_{k}(K_{k}e_{\xi}),\tilde{a}_{-\eta}^{*}]\nonumber\\
	&\quad\quad\quad\quad\quad\quad\quad\quad\quad\quad\quad\quad\quad\quad+[b_{k}(K_{k}e_{\xi}),b_{\ell}^{*}(K_{\ell}e_{\eta})]\tilde{a}_{-\eta}^{*}\Big)\\
	\label{top-hard2-3}&\quad+\frac{1}{2}\sum_{(k,l)\in\cB_{\xi}}\sum_{\ell\in\Z_{*}^{3}}\sum_{q\in L_{\ell}}\tilde{a}_{l-\xi}^{*}b_{l}^{*}(K_{l}e_{\xi})\{\varepsilon_{k,\ell}(K_{k}e_{\xi};e_{q}),b_{-\ell}^{*}(K_{-\ell}e_{-q})\}\tilde{a}_{k-\xi},
\end{align}
where we denote $\eta:=\xi-k+\ell$, $\cB_{\xi}:=\{(k,l)\in (\Z_{*}^{3})^{2}\mid \xi\in L_{k}\cap L_{l}\}$ and $\cB_{\xi}':=\{(k,l,\ell)\in(\Z_{*}^{3})^{3}\mid \xi\in L_{k}\cap L_{l},\xi-k\in L_{\ell}'\}$.  We will analyse the terms in \eqref{top-hard2-1}--\eqref{top-hard2-3} separately as they require different strategies.



We begin by estimating \eqref{top-hard2-1} in the next proposition:
\begin{prop}\label{prop:esti-top-hard2-1}
For each $0\leq s_{3}\leq 1$, it holds for each $\delta>0$ that
\begin{align}\label{esti-top-hard2-1}
	\sum_{(k,l)\in\cB_{\xi}}\big|\big\langle\Phi_{s_{3}},\tilde{a}_{l-\xi}^{*}b_{l}^{*}(K_{l}e_{\xi})b_{-k}^{*}(K_{-k}^{2}e_{-\xi})\tilde{a}_{k-\xi}\Phi_{s_{3}}\big\rangle\big|\leq C_{V}k_{F}^{-1}\cQ m(\xi).
\end{align}
\end{prop}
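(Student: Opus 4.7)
\emph{Proposal for the proof of Proposition~\ref{prop:esti-top-hard2-1}.} The plan is to decouple the two ``hard'' factors $\tilde{a}_{l-\xi}^{*}$ and $\tilde{a}_{k-\xi}$ by Cauchy--Schwarz in Fock space, placing $\tilde{a}_{k-\xi}\Phi_{s_{3}}$ on one side so that the single factor of $\cQ$ required by the statement appears from the bound $\|\tilde{a}_{k-\xi}\Phi_{s_{3}}\|\leq \cQ$. Concretely, rewrite
\[
\big\langle\Phi_{s_{3}},\tilde{a}_{l-\xi}^{*}b_{l}^{*}(K_{l}e_{\xi})b_{-k}^{*}(K_{-k}^{2}e_{-\xi})\tilde{a}_{k-\xi}\Phi_{s_{3}}\big\rangle=\big\langle b_{-k}(K_{-k}^{2}e_{-\xi})b_{l}(K_{l}e_{\xi})\tilde{a}_{l-\xi}\Phi_{s_{3}},\,\tilde{a}_{k-\xi}\Phi_{s_{3}}\big\rangle,
\]
so that Cauchy--Schwarz and $\|\tilde{a}_{k-\xi}\Phi_{s_{3}}\|\leq \cQ$ reduce matters to controlling
\[
\sum_{(k,l)\in \cB_{\xi}}\|b_{-k}(K_{-k}^{2}e_{-\xi})b_{l}(K_{l}e_{\xi})\tilde{a}_{l-\xi}\Phi_{s_{3}}\|
\]
by $C_{V}k_{F}^{-1}m(\xi)$; no further $\cQ$'s will be needed.

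The plan for the inner sum is to carry out the $k$-summation first. Applying \eqref{excitation-bdd} to $b_{-k}$ and then Cauchy--Schwarz in $k$ yields
\[
\sum_{k}\chi_{L_{k}}(\xi)\|K_{-k}^{2}e_{-\xi}\|\,\|\cN_{-k}^{1/2}\Psi_{l}\|\leq \Big(\sum_{k}\chi_{L_{k}}(\xi)\|K_{-k}^{2}e_{-\xi}\|^{2}\Big)^{1/2}\Big(\sum_{k}\|\cN_{-k}^{1/2}\Psi_{l}\|^{2}\Big)^{1/2}
\]
with $\Psi_{l}:=b_{l}(K_{l}e_{\xi})\tilde{a}_{l-\xi}\Phi_{s_{3}}$; Corollary~\ref{cor:sumNk-NE} bounds the second factor by $\|\cN_{E}\Psi_{l}\|$. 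For the first factor, Proposition~\ref{prop:prod-Kk-esti} applied to the two-fold product $K_{-k}^{2}$ gives $|\langle e_{p},K_{-k}^{2}e_{-\xi}\rangle|\leq k_{F}^{-1}\hat{V}_{k}^{2}/(\lambda_{-k,p}+\lambda_{-k,-\xi})$; together with $\lambda_{-k,-\xi}=\lambda_{k,\xi}$, Lemma~\ref{lem:estimate-lambkp}, the elementary bound $1/\lambda_{k,\xi}\leq 2m(\xi)$ from \eqref{gap}, and $\sum_{k}\hat{V}_{k}^{4}\leq \|\hat{V}\|_{\infty}^{2}\|\hat{V}\|_{\ell^{2}}^{2}<\infty$, one obtains $\sum_{k}\chi_{L_{k}}(\xi)\|K_{-k}^{2}e_{-\xi}\|^{2}\leq C_{V}k_{F}^{-1}m(\xi)$.

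For the $l$-summation, the key step is the commutation identity $\cN_{E}b_{l}(\varphi)=b_{l}(\varphi)(\cN_{E}-2)$, which together with $\cN_{l}\leq \cN_{E}$ and \eqref{excitation-bdd} yields $\|\cN_{E}b_{l}(\varphi)\Psi\|\leq \|\varphi\|\|\cN_{E}^{3/2}\Psi\|$. Applied here, $\|\cN_{E}\Psi_{l}\|\leq \|K_{l}e_{\xi}\|\|\cN_{E}^{3/2}\tilde{a}_{l-\xi}\Phi_{s_{3}}\|$; pulling $\tilde{a}_{l-\xi}$ through $\cN_{E}^{3/2}$ via \eqref{excitation-bdd5} and then Cauchy--Schwarz in $l$ leads to the analogous bound $\sum_{l}\chi_{L_{l}}(\xi)\|K_{l}e_{\xi}\|^{2}\leq C_{V}k_{F}^{-1}m(\xi)$ (derived in the same way from Proposition~\ref{prop:prod-Kk-esti} with one factor of $K_{l}$), while $\sum_{l}\|\tilde{a}_{l-\xi}\cN_{E}^{3/2}\Phi_{s_{3}}\|^{2}\leq \|\cN_{E}^{2}\Phi_{s_{3}}\|^{2}\leq C_{V}$ by Lemma~\ref{lem:Gronwall}. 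Multiplying the two square roots of $C_{V}k_{F}^{-1}m(\xi)$ produces the desired $C_{V}k_{F}^{-1}m(\xi)$, which combined with the $\cQ$ from the initial Cauchy--Schwarz gives the claim.

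The main subtlety is bookkeeping: the proposition asks for $m(\xi)$ only to the first power and $\cQ$ only to the first power. The latter is handled by using $\|\tilde{a}_{k-\xi}\Phi_{s_{3}}\|\leq \cQ$ exactly once and bounding the opposite side of the Cauchy--Schwarz by $O(1)$ $\cQ$-free quantities (via Lemma~\ref{lem:Gronwall}). The former is handled by ensuring $1/\lambda_{k,\xi}\leq 2m(\xi)$ enters once in each of the independent $k$- and $l$-summations producing $\sqrt{m(\xi)}\cdot\sqrt{m(\xi)}=m(\xi)$, which is critical since using the lattice estimate $\sum_{k}\chi_{L_{k}}(\xi)/\lambda_{k,\xi}\leq C_{\delta}k_{F}^{1+\delta}$ (Lemma~\ref{lem:gap-sum}) in either sum would spoil the $k_{F}^{-1}$ factor. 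The price to avoid Lemma~\ref{lem:gap-sum} is a harmless loss of $\|\hat{V}\|_{\infty}^{2}$ when Cauchy--Schwarz'ing the $\sum_{k}\hat{V}_{k}^{4}$ term, which is absorbed into $C_{V}$.
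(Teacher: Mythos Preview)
Your proposal is correct and follows essentially the same route as the paper: apply Cauchy--Schwarz to isolate $\|\tilde{a}_{k-\xi}\Phi_{s_{3}}\|\leq\cQ$, bound $\|b_{-k}(K_{-k}^{2}e_{-\xi})\cdot\|$ via \eqref{excitation-bdd} and Proposition~\ref{prop:prod-Kk-esti}, sum in $k$ using Corollary~\ref{cor:sumNk-NE}, commute $\cN_{E}$ through $b_{l}$, and close the $l$-sum with Lemma~\ref{lem:Gronwall}. The only cosmetic difference is that the paper simply drops $\tilde{a}_{l-\xi}$ at the outset via $\|\tilde{a}_{l-\xi}\|=1$ and keeps $\cN_{l}^{1/2}$ in the final $l$-sum, whereas you retain $\tilde{a}_{l-\xi}$, use the pull-through \eqref{excitation-bdd5}, and then close via $\sum_{l}\|\tilde{a}_{l-\xi}\cN_{E}^{3/2}\Phi_{s_{3}}\|^{2}\leq\|\cN_{E}^{2}\Phi_{s_{3}}\|^{2}$; both variants give the same bound.
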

\begin{proof}
By \eqref{excitation-bdd}--\eqref{excitation-bdd5}, \eqref{E2-com-single-esti}--\eqref{E2-com-single-esti1}, Cauchy-Schwarz inequality, Lemmas \ref{lem:Gronwall}, \ref{lem:excitation-Gronwall}, Proposition \ref{prop:prod-Kk-esti}, the relation $[b_{l,p},\cN_{E}]=b_{l,p}$ and the operator norm $\|\tilde{a}_{q}\|=1$, we have
\begin{align}\label{esti-top-hard2-1-comp}
	&\sum_{(k,l)\in\cB_{\xi}}\big|\big\langle\Phi_{s_{3}},\tilde{a}_{l-\xi}^{*}b_{l}^{*}(K_{l}e_{\xi})b_{-k}^{*}(K_{-k}^{2}e_{-\xi})\tilde{a}_{k-\xi}\Phi_{s_{3}}\big\rangle\big|\nonumber\\
	&\leq\sum_{(k,l)\in\cB_{\xi}}\|\tilde{a}_{l-\xi}b_{-k}(K_{-k}^{2}e_{-\xi})b_{l}(K_{l}e_{\xi})\Phi_{s_{3}}\|\|\tilde{a}_{k-\xi}\Phi_{s_{3}}\|\nonumber\\
	&\leq k_{F}^{-1/2}\cQ m(\xi)^{1/2}\sum_{l\in\Z_{*}^{3}}\chi_{L_{l}}(\xi)\sqrt{\sum_{k\in\Z_{*}^{3}}\hat{V}_{k}^{4}}\sqrt{\sum_{k\in\Z_{*}^{3}}\|\cN_{-k}^{1/2}b_{l}(K_{l}e_{\xi})\Phi_{s_{3}}\|^{2}}\nonumber\\
	&\leq C_{V}k_{F}^{-1/2}\cQ m(\xi)^{1/2}\sum_{l\in\Z_{*}^{3}}\chi_{L_{l}}(\xi)\|\cN_{E}b_{l}(K_{l}e_{\xi})\Phi_{s_{3}}\|\nonumber\\
	&=C_{V}k_{F}^{-1/2}\cQ m(\xi)^{1/2}\sum_{l\in\Z_{*}^{3}}\chi_{L_{l}}(\xi)\|b_{l}(K_{l}e_{\xi})(\cN_{E}+1)\Phi_{s_{3}}\|\nonumber\\
	&\leq C_{V}k_{F}^{-1}\cQ m(\xi)\sqrt{\sum_{l\in\Z_{*}^{3}}\hat{V}_{l}^{2}}\sqrt{\sum_{l\in\Z_{*}^{3}}\|\cN_{l}^{1/2}(\cN_{E}+1)\Phi_{s_{3}}\|^{2}}\leq C_{V}k_{F}^{-1}\cQ m(\xi).
\end{align}
This completes the proof.
\end{proof}

Next, we consider the terms in \eqref{top-hard2-2}.  Before proceeding, we remark that the first two sums and the last one require different strategies to proceed.  In the following proposition, we conisder the first two sums in \eqref{top-hard2-2}:
\begin{prop}\label{prop:esti-top-hard2-2-1}
For each $0\leq s_{3}\leq 1$, it holds that
\begin{align}
	\label{esti-top-hard2-2-11}&\sum_{(k,l,\ell)\in\cB_{\xi}'}\big|\big\langle \Phi_{s_{3}},\tilde{a}_{l-\xi}^{*}b_{l}^{*}(K_{l}e_{\xi})b_{\ell}^{*}(K_{\ell}e_{\eta})\tilde{a}_{-\eta}^{*}b_{k}(K_{k}e_{\xi})\Phi_{s_{3}}\big\rangle\big|,\\
	\label{esti-top-hard2-2-12}&\sum_{(k,l,\ell)\in\cB_{\xi}'}\big|\big\langle\Phi_{s_{3}},\tilde{a}_{l-\xi}^{*}b_{l}^{*}(K_{l}e_{\xi})b_{\ell}^{*}(K_{\ell}e_{\eta})[b_{k}(K_{k}e_{\xi}),\tilde{a}_{-\eta}^{*}]\Phi_{s_{3}}\big\rangle\big|,
\end{align}
are bounded by $C_{\delta,V}\big(k_{F}^{-3/2}+k_{F}^{-3/2+\delta}\cQ\big)m(\xi)$.
\end{prop}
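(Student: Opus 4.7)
The plan for \eqref{esti-top-hard2-2-11} is to regroup the operators and apply Cauchy--Schwarz after taking an adjoint. Specifically, I rewrite
\[
\big\langle\Phi_{s_3}, \tilde{a}_{l-\xi}^*b_l^*(K_l e_\xi)\,b_\ell^*(K_\ell e_\eta)\tilde{a}_{-\eta}^* b_k(K_k e_\xi)\Phi_{s_3}\big\rangle = \big\langle b_l(K_l e_\xi)\tilde{a}_{l-\xi}\Phi_{s_3},\, b_\ell^*(K_\ell e_\eta)\tilde{a}_{-\eta}^*b_k(K_k e_\xi)\Phi_{s_3}\big\rangle
\]
and bound the resulting inner product by a product of norms. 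Each factor is controlled via the excitation estimates \eqref{excitation-bdd}, the pull-through inequalities \eqref{excitation-bdd2}--\eqref{excitation-bdd5}, the matrix-element estimates of Proposition \ref{prop:prod-Kk-esti}, and Lemma \ref{lem:Gronwall}. In particular, the norm $\|K_l e_\xi\|$ is of the schematic size $C k_F^{-1}\hat{V}_l\lambda_{l,\xi}^{-1/2}$, so that after using $\lambda_{l,\xi}\ge m(\xi)^{-1}$ and summing the free index $\ell$ via $\sum_\ell \|\cN_\ell^{1/2}(\,\cdot\,)\|^2\le \|\cN_E(\,\cdot\,)\|^2$, the remaining double sum over $k,l$ is closed by Cauchy--Schwarz together with Lemma \ref{lem:gap-sum} (for the lone factor $\chi_{L_k}(\xi)/\lambda_{k,\xi}$) and Lemma \ref{lem:estimate-lambkp} (for the $\sum_p \lambda_{k,p}^{-1}$ factors). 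The $\cQ$-free contribution of order $k_F^{-3/2}m(\xi)$ arises from the grouping in which the lone annihilation operator $\tilde{a}_{l-\xi}$ can be absorbed into a factor $\|\cN_E^{3/2}\Phi_{s_3}\|$, whereas the $\cQ$-dependent piece comes from the alternative grouping that isolates $\|\tilde{a}_{-\eta}^*b_k(K_k e_\xi)\Phi_{s_3}\|$ and then uses $\sup_q\|\tilde{a}_q\Phi_{s_3}\|\le \cQ$.

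For \eqref{esti-top-hard2-2-12}, the first step is to evaluate the commutator $[b_k(K_k e_\xi),\tilde{a}_{-\eta}^*]$ via \eqref{ba-comm}, which (since $-\eta\in B_F$ by the constraint $\xi-k\in L_\ell'$) produces a single annihilation operator $\tilde{a}_{k-\eta}$ together with a scalar matrix element $\langle K_k e_\xi, e_{k-\eta}\rangle$ of size at most $C k_F^{-1}\hat{V}_k/\sqrt{\lambda_{k,\xi}\lambda_{k,k-\eta}}$, cf.\ \eqref{E2-com-single-esti1}. The resulting expectation is structurally simpler than in the first sum: it contains only two $b^*$'s, one $\tilde{a}^*$, and a single $\tilde{a}_{k-\eta}$. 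Applying Cauchy--Schwarz together with \eqref{excitation-bdd}, Proposition \ref{prop:prod-Kk-esti}, and $\|\tilde{a}_{k-\eta}\Phi_{s_3}\|\le \cQ$ then yields the $\cQ$-dependent bound. The triple sum over $(k,l,\ell)$ is handled by absorbing the $l$-sum into a number operator, splitting the $(k,\ell)$-sum via Cauchy--Schwarz, and using Lemmas \ref{lem:gap-sum} and \ref{lem:estimate-lambkp} on the residual single-index sums.

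The main obstacle is the coupling $\eta=\xi-k+\ell$ that links all three outer indices, which prevents naive decoupling through Cauchy--Schwarz. The remedy is to use the gap lower bound $\lambda_{\ell,\eta}\ge 1/2$ from \eqref{gap} to obtain estimates on $\|K_\ell e_\eta\|$ that are uniform in $\eta$, and then to orient the Cauchy--Schwarz splitting so that the remaining $\eta$-dependent factor combines with $\tilde{a}_{-\eta}^*$ (or its commutator) into a quantity controlled by either $\cN_E$ or $\cQ$. Tracking the correct $m(\xi)$-dependence through the triple sum relies on the fact that $\lambda_{l,\xi},\lambda_{k,\xi}\ge m(\xi)^{-1}$, but extracted in only one of the two sums by Lemma \ref{lem:gap-sum} (to yield a factor $k_F^{1+\delta}m(\xi)$), while the other is absorbed directly into $m(\xi)$. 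A parallel analysis applies to the case $\xi-k\in L_\ell$, and the two pieces combine to give the claimed $C_{\delta,V}(k_F^{-3/2}+k_F^{-3/2+\delta}\cQ)m(\xi)$ bound.
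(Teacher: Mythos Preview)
Your overall strategy matches the paper's—Cauchy--Schwarz together with \eqref{excitation-bdd}--\eqref{excitation-bdd5}, Proposition~\ref{prop:prod-Kk-esti}, and Lemma~\ref{lem:Gronwall}—but there is a concrete error in your treatment of \eqref{esti-top-hard2-2-12}. You claim $-\eta\in B_F$, but the constraint $\xi-k\in L_\ell'$ says precisely that $\xi-k\in B_F$ \emph{and} $(\xi-k)+\ell=\eta\in L_\ell\subset B_F^c$; since $|{-\eta}|=|\eta|>k_F$, one has $-\eta\in B_F^c$. With the correct branch of \eqref{ba-comm} the commutator is
\[
[b_k(K_ke_\xi),\tilde a_{-\eta}^*]=\chi_{L_k}(-\eta)\,\langle K_ke_\xi,e_{-\eta}\rangle\,\tilde a_{-\eta-k},
\]
so the annihilation index is $-\eta-k$ and the scalar involves $e_{-\eta}$, not $e_{k-\eta}$. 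Your $B_F$-branch formula would carry the factor $\chi_{L_k'}(-\eta)$, which vanishes identically since $L_k'\subset B_F$—so the commutator would be zero, which is wrong. With the correct formula one bounds $|\langle K_ke_\xi,e_{-\eta}\rangle|\le k_F^{-1}\hat V_k/(\lambda_{k,\xi}+\lambda_{k,-\eta})$, extracts $\cQ$ from $\|\tilde a_{-\eta-k}\Phi_{s_3}\|$, and closes the $(k,\ell)$-sum via an estimate of the form $\sum_k\chi_{L_k'}(-\xi-\ell)\,m(k-\ell-\xi)^2\le C_\delta k_F^{1+\delta}$ obtained as in Lemma~\ref{lem:gap-sum}.

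For \eqref{esti-top-hard2-2-11} your two-grouping plan is unnecessarily elaborate: the paper uses a single split that drops both $\tilde a$-factors by operator norm, bounds $\|b_k(K_ke_\xi)\Phi_{s_3}\|\le\|K_ke_\xi\|\,\|\cN_k^{1/2}\Phi_{s_3}\|$ on one side and $\|K_\ell e_\eta\|\,\|\cN_\ell^{1/2}b_l(K_le_\xi)\Phi_{s_3}\|$ on the other, then sums $k,\ell$ by Cauchy--Schwarz (using $\lambda_{k,\xi}\ge m(\xi)^{-1}$ and $\lambda_{\ell,\eta}\ge\tfrac12$). The remaining $l$-sum is handled by the commutation $[b_l,\cN_E]=b_l$, which pulls the number operator through to give $\sum_l\chi_{L_l}(\xi)\|b_l(K_le_\xi)(\cN_E+1)\Phi_{s_3}\|\le C_Vk_F^{-1/2}m(\xi)^{1/2}$. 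This produces the $\cQ$-free bound $C_Vk_F^{-3/2}m(\xi)$ directly; no $\cQ$-dependent contribution arises from this term.
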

\begin{proof}
For \eqref{esti-top-hard2-2-11}, by \eqref{excitation-bdd}--\eqref{excitation-bdd5}, \eqref{E2-com-single-esti}--\eqref{E2-com-single-esti1}, Cauchy-Schwarz inequality, Lemmas \ref{lem:Gronwall}, \ref{lem:excitation-Gronwall}, Proposition \ref{prop:prod-Kk-esti}, the relations $[b_{l,p},\cN_{E}]=b_{l,p}$ and the operator norm $\|\tilde{a}_{q}\|=1$, we use the same computation in \eqref{esti-top-hard2-1-comp} to obtain
\begin{align}
	&\sum_{(k,l,\ell)\in\cB_{\xi}'}\big|\big\langle \Phi_{s_{3}},\tilde{a}_{l-\xi}^{*}b_{l}^{*}(K_{l}e_{\xi})b_{\ell}^{*}(K_{\ell}e_{\eta})\tilde{a}_{-\eta}^{*}b_{k}(K_{k}e_{\xi})\Phi_{s_{3}}\big\rangle\big|\nonumber\\
	&\leq \sum_{(k,l,\ell)\in\cB_{\xi}'}\|K_{\ell}e_{\eta}\|\|K_{k}e_{\xi}\|\|\cN_{k}^{1/2}\Phi_{s_{3}}\|\|\cN_{\ell}^{1/2}b_{l}(K_{l}e_{\xi})\Phi_{s_{3}}\|\nonumber\\
	&\leq Ck_{F}^{-1}\sum_{(k,l,\ell)\in\cB_{\xi}'}\frac{\hat{V}_{\ell}\hat{V}_{k}}{\sqrt{\lambda_{\ell,\eta}\lambda_{k,\xi}}}\|\cN_{k}^{1/2}\Phi_{s_{3}}\|\|\cN_{\ell}^{1/2}b_{l}(K_{l}e_{\xi})\Phi_{s_{3}}\|\nonumber\\
	&\leq C_{V}k_{F}^{-1}m(\xi)^{1/2}\sum_{l\in\Z_{*}^{3}}\chi_{L_{l}}(\xi)\|b_{l}(K_{l}e_{\xi})(\cN_{E}+1)\Phi_{s_{3}}\|
	\leq C_{V}k_{F}^{-3/2}m(\xi).
\end{align}

Next, for \eqref{esti-top-hard2-2-12}, we first compute the commutator
\begin{align}
	[b_{k}(K_{k}e_{\xi}),\tilde{a}_{-\eta}^{*}]&=\chi_{L_{k}}(-\eta)\langle K_{k}e_{\xi},e_{-\eta}\rangle\tilde{a}_{-\eta-k},
\end{align}
where we have used \cite[Eq.(4.23)]{CHN-23} and the fact that $\eta\in B_{F}^{c}$.  By substituting the definition $\eta=\xi-k+\ell$, it follows from the same argument above that
\begin{align}\label{esti-top-hard2-2-121}
	&\sum_{(k,l,\ell)\in\cB_{\xi}'}\big|\big\langle\Phi_{s_{3}},\tilde{a}_{l-\xi}^{*}b_{l}^{*}(K_{l}e_{\xi})b_{\ell}^{*}(K_{\ell}e_{\eta})[b_{k}(K_{k}e_{\xi}),\tilde{a}_{-\eta}^{*}]\Phi_{s_{3}}\big\rangle\big|\nonumber\\
	&\leq \sum_{(k,l,\ell)\in\cB_{\xi}'}\chi_{L_{k}}(-\eta)\big|\langle K_{k}e_{\xi},e_{-\eta}\rangle\big|\|K_{l}e_{\xi}\|\|\cN_{l}^{1/2}b_{\ell}(K_{\ell}e_{\eta})\Phi_{s_{3}}\|\|\tilde{a}_{-\eta-k}\Phi_{s_{3}}\|\nonumber\\
	&\leq k_{F}^{-3/2}\cQ\sum_{(k,l,\ell)\in\cB_{\xi}'}\chi_{L_{k}}(-\eta)\frac{\hat{V}_{k}\hat{V}_{l}}{(\lambda_{k,\xi}+\lambda_{k,-\eta})\sqrt{\lambda_{l,\xi}}}\|\cN_{l}^{1/2}b_{\ell}(K_{\ell}e_{\eta})\Phi_{s_{3}}\|\nonumber\\
	&\leq C_{V}k_{F}^{-3/2}\cQ m(\xi)\sum_{k,\ell\in\Z_{*}^{3}}\frac{\hat{V}_{k}\hat{V}_{\ell}\chi_{L_{k}}(\xi)\chi_{L_{k}'}(-\xi-\ell)}{\sqrt{\lambda_{k,k-\ell-\xi}\lambda_{\ell,\xi-k+\ell}}}\|\cN_{\ell}^{1/2}(\cN_{E}+1)\Phi_{s_{3}}\|\nonumber\\
	&\leq C_{V}k_{F}^{-3/2}\cQ m(\xi)\sum_{\ell\in\Z_{*}^{3}}\hat{V}_{\ell}\|\cN_{\ell}^{1/2}(\cN_{E}+1)\Phi_{s_{3}}\|\nonumber\\
	&\quad\quad\quad\quad\quad\quad\times\sqrt{\sum_{k\in\Z_{*}^{3}}\hat{V}_{k}^{2}}\sqrt{\sum_{k\in\Z_{*}^{3}}\chi_{L_{k}'}(-\xi-\ell)m(k-\ell-\xi)^{2}}.
\end{align}
Using the same argument as in the proof of Lemma \ref{lem:gap-sum} and lower bound \eqref{gap}, the last factor in \eqref{esti-top-hard2-2-121} yields for each $\delta>0$ that
\begin{align}
	\sum_{k\in\Z_{*}^{3}}\chi_{L_{k}'}(-\xi-\ell)m(k-\ell-\xi)^{2}&\leq C_{\delta}k_{F}^{1+\delta}m(\xi+\ell)\leq C_{\delta}k_{F}^{1+\delta}.
\end{align}
Hence, \eqref{esti-top-hard2-2-121} becomes
\begin{align*}
	&\sum_{(k,l,\ell)\in\cB_{\xi}'}\big|\big\langle\Phi_{s_{3}},\tilde{a}_{l-\xi}^{*}b_{l}^{*}(K_{l}e_{\xi})b_{\ell}^{*}(K_{\ell}e_{\eta})[b_{k}(K_{k}e_{\xi}),\tilde{a}_{-\eta}^{*}]\Phi_{s_{3}}\big\rangle\big|\nonumber\\
	&\leq C_{\delta,V}k_{F}^{-1+\delta}\cQ m(\xi)\sqrt{\sum_{\ell\in\Z_{*}^{3}}\hat{V}_{\ell}^{2}}\sqrt{\sum_{\ell\in\Z_{*}^{3}}\|\cN_{\ell}^{1/2}(\cN_{E}+1)\Phi_{s_{3}}\|^{2}}\leq C_{\delta,V}k_{F}^{-1+\delta}\cQ m(\xi).
\end{align*}
This completes the proof.
\end{proof}

Now, for the last sum in \eqref{top-hard2-2}, we compute using Lemma \ref{lem:CR-excitation} to obtain
\begin{align*}
	\tilde{a}_{l-\xi}^{*}&b_{l}^{*}(K_{l}e_{\xi})[b_{k}(K_{k}e_{\xi}),b_{\ell}^{*}(K_{\ell}e_{\eta})]\tilde{a}_{-\eta}^{*}\nonumber\\
	&=\delta_{k,\ell}\|K_{k}e_{\xi}\|^{2} \tilde{a}_{l-\xi}^{*}\tilde{a}_{-\xi}^{*}b_{l}^{*}(K_{l}e_{\xi})+\tilde{a}_{l-\xi}^{*}b_{l}^{*}(K_{l}e_{\xi})\varepsilon_{k,\ell}(K_{k}e_{\xi};K_{\ell}e_{-\eta})\tilde{a}_{-\eta}^{*}.
\end{align*}
where we have used the fact that $\eta=\xi$ when $k=\ell$.  Recall from \eqref{exchange-correction} that 
\begin{align*}
	-\varepsilon_{k,\ell}(K_{k}e_{\xi};K_{\ell}e_{\eta})&=\sum_{p\in L_{k}}\sum_{q\in L_{\ell}}\langle e_{\xi},K_{k}e_{p}\rangle\langle e_{q},K_{\ell}e_{\eta}\rangle\big(\delta_{p,q}\tilde{a}_{q-\ell}^{*}\tilde{a}_{p-k}+\delta_{p-k,q-\ell}\tilde{a}_{q}^{*}\tilde{a}_{p}\big)\nonumber\\
	&=\sum_{q\in L_{k}\cap L_{\ell}}\langle e_{\xi},K_{k}e_{q}\rangle\langle e_{q},K_{\ell}e_{\eta}\rangle\tilde{a}_{q-\ell}^{*}\tilde{a}_{q-k}\nonumber\\
	&\quad\quad\quad+\sum_{q\in L_{k}'\cap L_{\ell}'}\langle e_{\xi},K_{k}e_{q+k}\rangle\langle e _{q+\ell},K_{\ell}e_{\eta}\rangle\tilde{a}_{q+\ell}^{*}\tilde{a}_{q+k}
\end{align*}
We observe that, as suggested in \cite{CHN-23}, these sums take a common schematic form:
\begin{align}\label{schematic}
	\sum_{q\in S_{k}\cap S_{\ell}}\langle e_{\xi},K_{k}e_{q_{1}}\rangle\langle e_{q_{4}},K_{\ell}e_{-\eta}\rangle\tilde{a}_{q_{2}}^{*}\tilde{a}_{q_{3}},
\end{align}
where
\begin{align}\label{q-variable}
	(q_{1},q_{2},q_{3},q_{4})&=\begin{cases}
		(q, q-\ell, q-k,q)\quad&\text{ if }S_{k}=L_{k},\\
		(q+k,q+\ell,q+k,q+\ell)&\text{ if }S_{k}=L_{k}'.
	\end{cases}
\end{align}
In summary, to estimate the last commutator term in \eqref{top-hard2-2}, it suffices to estimate
\begin{align}
	&\sum_{k,l\in\Z_{*}^{3}}\chi_{L_{k}\cap L_{l}}(\xi)\|K_{k}e_{\xi}\|^{2}\tilde{a}_{l-\xi}^{*}\tilde{a}_{-\xi}^{*}b_{l}^{*}(K_{l}e_{\xi}),\\
	&\sum_{(k,l,\ell)\in\cB_{\xi}'}\sum_{q\in S_{k}\cap S_{\ell}}\langle e_{\xi},K_{k}e_{q_{1}}\rangle\langle e_{q_{4}},K_{\ell}e_{\eta}\rangle\tilde{a}_{l-\xi}^{*}b_{l}^{*}(K_{l}e_{\xi})\tilde{a}_{q_{2}}^{*}\tilde{a}_{q_{3}}\tilde{a}_{-\eta}^{*}.
\end{align}

\begin{prop}\label{prop:esti-top-hard2-2}
For each $0\leq s_{3}\leq 1$, it holds for each $\delta>0$ that 
\begin{align}
	\label{esti-top-hard2-2-1}&\sum_{k,l\in\Z_{*}^{3}}\chi_{L_{k}\cap L_{l}}(\xi)\|K_{k}e_{\xi}\|^{2}\big|\big\langle\Phi_{s_{3}},\tilde{a}_{l-\xi}^{*}\tilde{a}_{-\xi}^{*}b_{l}^{*}(K_{l}e_{\xi})\Phi_{s_{3}}\big\rangle\big|,\\
	\label{esti-top-hard2-2-2}&\sum_{(k,l,\ell)\in\cB_{\xi}'}\sum_{q\in S_{k}\cap S_{\ell}}\big|\langle e_{\xi},K_{k}e_{q_{1}}\rangle\big|\big|\langle e_{q_{4}},K_{\ell}e_{\eta}\rangle\big|\big|\langle\Phi_{s_{3}},\tilde{a}_{l-\xi}^{*}b_{l}^{*}(K_{l}e_{\xi})\tilde{a}_{q_{2}}^{*}\tilde{a}_{q_{3}}\tilde{a}_{-\eta}^{*}\Phi_{s_{3}}\big\rangle\big|,
\end{align}
are bounded by $C_{\delta,V}\big(k_{F}^{-3/2}\cQ^{1-\delta}+k_{F}^{-2+\delta}\big)m(\xi)$.
\end{prop}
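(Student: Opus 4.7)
\smallskip

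\noindent\textbf{Proof plan.} Both estimates will be established by Cauchy--Schwarz, combined with the matrix-element bounds of Proposition \ref{prop:prod-Kk-esti}, the excitation-operator bounds \eqref{excitation-bdd}--\eqref{excitation-bdd5}, the pull-through formulas of Corollary \ref{cor:pull-through}, the lattice sum bounds of Lemmas \ref{lem:estimate-lambkp} and \ref{lem:gap-sum}, the Gr\"onwall bound of Lemma \ref{lem:Gronwall}, and the excitation-Gr\"onwall bound of Lemma \ref{lem:excitation-Gronwall}. The $\cQ^{1-\delta}$ factor will be extracted from quantities like $\|\tilde{a}_{-\xi}\cN_E^{1/2}\Phi_{s_3}\|$ via Lemma \ref{lem:excitation-Gronwall}. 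A key simplification is that the integer-lattice definition \eqref{kappa} forces $m(\xi)\le 2$ uniformly in $\xi\in\Z^3$, so any stray factor $m(\xi)^{1/2}$ arising from the computation can be absorbed into $m(\xi)$ up to a universal constant.

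For \eqref{esti-top-hard2-2-1} the double sum factorizes since $\|K_k e_\xi\|^2$ depends only on $k$. The $k$-sum is estimated directly: using $|\langle e_p,K_k e_\xi\rangle|\le C k_F^{-1}\hat V_k/(\lambda_{k,p}+\lambda_{k,\xi})$ from Proposition \ref{prop:prod-Kk-esti} together with $\sum_p 1/\lambda_{k,p}\le Ck_F$ from Lemma \ref{lem:estimate-lambkp}, one obtains $\sum_k\chi_{L_k}(\xi)\|K_k e_\xi\|^2\le C_V m(\xi)k_F^{-1}$. For the $l$-sum, write the matrix element as $\langle \tilde a_{-\xi}\tilde a_{l-\xi}\Phi_{s_3},b_l^*(K_l e_\xi)\Phi_{s_3}\rangle$, bound it by $\|\tilde a_{-\xi}\tilde a_{l-\xi}\Phi_{s_3}\|\cdot\|K_l e_\xi\|\|(\cN_l+1)^{1/2}\Phi_{s_3}\|$ via \eqref{excitation-bdd}, and apply Cauchy--Schwarz in $l$. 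The excitation factor satisfies $\sum_l\chi_{L_l}(\xi)\|\tilde a_{l-\xi}\tilde a_{-\xi}\Phi_{s_3}\|^2\le \|\cN_E^{1/2}\tilde a_{-\xi}\Phi_{s_3}\|^2\le\|\tilde a_{-\xi}\cN_E^{1/2}\Phi_{s_3}\|^2\le C_{\delta,V}\cQ^{2(1-\delta)}$ by Corollary \ref{cor:pull-through} and Lemma \ref{lem:excitation-Gronwall}, while the $K$-factor is bounded by the same computation as the $k$-sum together with Lemma \ref{lem:Gronwall}, yielding $C_V m(\xi)^{1/2}k_F^{-1/2}$. Multiplying the three pieces and absorbing $m(\xi)^{3/2}$ into $m(\xi)$ yields the claim.

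For \eqref{esti-top-hard2-2-2} the matrix element carries five $a$-type operators; I split via
\[
|\langle\Phi_{s_3},\tilde a_{l-\xi}^*b_l^*(K_l e_\xi)\tilde a_{q_2}^*\tilde a_{q_3}\tilde a_{-\eta}^*\Phi_{s_3}\rangle|\le\|b_l(K_l e_\xi)\tilde a_{l-\xi}\Phi_{s_3}\|\cdot\|\tilde a_{q_2}^*\tilde a_{q_3}\tilde a_{-\eta}^*\Phi_{s_3}\|,
\]
so that the first factor is estimated as in \eqref{esti-top-hard2-2-1}. For the second factor I invoke the CAR identity $\tilde a_{q_3}\tilde a_{-\eta}^*=\delta_{q_3,-\eta}-\tilde a_{-\eta}^*\tilde a_{q_3}$ to split the bound into two pieces. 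The normal-ordered piece contributes $\|\tilde a_{q_3}\Phi_{s_3}\|\le\cQ$, and Cauchy--Schwarz over $q,k,\ell$ with Proposition \ref{prop:prod-Kk-esti} and Lemma \ref{lem:excitation-Gronwall} produces the $k_F^{-3/2}\cQ^{1-\delta}$ contribution. The Kronecker-delta piece $\delta_{q_3,-\eta}$ collapses the $q$-sum (the single value $q$ is fixed by $k$ and $\ell$), reducing to a triple sum over $(k,l,\ell)$ with two $K$-matrix elements; applying Cauchy--Schwarz and invoking Lemma \ref{lem:gap-sum} twice (once for $k$ via $\sum_k\chi_{L_k}(\xi)/\lambda_{k,\xi}\le C_\delta k_F^{1+\delta}$ and once for $\ell$) produces the residual $k_F^{-2+\delta}m(\xi)$.

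The main obstacle is the bookkeeping in the second estimate: one must track the constraints $(k,l,\ell)\in\cB_\xi'$, $q\in S_k\cap S_\ell$, and $\eta=\xi-k+\ell$ throughout the CAR splitting, and then recognize that only the \emph{sharp} lattice bound of Lemma \ref{lem:gap-sum} gives the correct $k_F^{-2+\delta}$ power in the delta-collapsed sum. A naive use of $1/\lambda_{k,\xi}\le 2m(\xi)$ alone would yield $Ck_F^3$ after summation over $k$, which is far too crude to produce the claimed $k_F^{-2+\delta}$ residual term.
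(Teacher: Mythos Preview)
Your treatment of \eqref{esti-top-hard2-2-1} is fine and essentially matches the paper's.

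For \eqref{esti-top-hard2-2-2}, however, your Cauchy--Schwarz split
\[
|\langle\Phi_{s_3},\tilde a_{l-\xi}^*b_l^*(K_l e_\xi)\tilde a_{q_2}^*\tilde a_{q_3}\tilde a_{-\eta}^*\Phi_{s_3}\rangle|\le\|b_l(K_l e_\xi)\tilde a_{l-\xi}\Phi_{s_3}\|\cdot\|\tilde a_{q_2}^*\tilde a_{q_3}\tilde a_{-\eta}^*\Phi_{s_3}\|
\]
is the wrong cut. In the normal--ordered piece you then bound $\|\tilde a_{q_2}^*\tilde a_{-\eta}^*\tilde a_{q_3}\Phi_{s_3}\|\le\|\tilde a_{q_3}\Phi_{s_3}\|\le\cQ$, which throws away \emph{all} dependence on $\eta=\xi-k+\ell$. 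After factoring out the $l$--sum you are left with
\[
\cQ\sum_{k,\ell}\chi_{L_k}(\xi)\chi_{L_\ell'}(\xi-k)\sum_{q\in S_k\cap S_\ell}|\langle e_\xi,K_ke_{q_1}\rangle||\langle e_{q_4},K_\ell e_\eta\rangle|.
\]
Cauchy--Schwarz in $q$ gives at best $\|K_ke_\xi\|\|K_\ell e_\eta\|\le Ck_F^{-1}\hat V_k\hat V_\ell/\sqrt{\lambda_{k,\xi}\lambda_{\ell,\eta}}$, and since $\lambda_{\ell,\eta}^{-1}\le 2m(\xi-k)$ is independent of $\ell$, the remaining $\ell$--sum is $\sum_\ell\hat V_\ell\chi_{L_\ell'}(\xi-k)$. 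The set $\{\ell:\xi-k\in L_\ell'\}=\{\ell:\xi-k+\ell\in B_F^c\}$ is infinite, and with only $\hat V\in\ell^2$ this sum diverges. No rearrangement of Cauchy--Schwarz over $q,k,\ell$ rescues this: once $\tilde a_{-\eta}^*$ sits on the norm side it is a creation operator with $\|\tilde a_{-\eta}^*\|\le 1$, and there is nothing left to make the $\ell$--sum finite.

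The paper avoids this by \emph{not} factoring out the $l$--piece first. Instead it moves $\tilde a_{q_2}^*\tilde a_{-\eta}^*$ across $b_l^*(K_le_\xi)\tilde a_{l-\xi}^*$ so that the inner product is bounded by $\|K_le_\xi\|\,\|\cN_l^{1/2}\tilde a_{l-\xi}\tilde a_{q_2}\tilde a_{-\eta}\Phi_{s_3}\|\,\|\tilde a_{q_3}\Phi_{s_3}\|$. Now $\tilde a_{-\eta}$ is an \emph{annihilation} operator acting on (a descendant of) $\Phi_{s_3}$; summing $l$ produces $\|\cN_E\tilde a_{q_2}\tilde a_{-\eta}\Phi_{s_3}\|$, and then Cauchy--Schwarz in $\ell$ uses $\sum_\ell\chi_{L_\ell'}(\xi-k)\|\tilde a_{k-\xi-\ell}\Psi\|^2\le\|\cN_E^{1/2}\Psi\|^2$ to close the $\ell$--sum. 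This is precisely the mechanism your split destroys.

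A secondary issue: even for the Kronecker--delta piece your counting gives only $k_F^{-3/2+\delta}m(\xi)$, not the claimed $k_F^{-2+\delta}m(\xi)$. After factoring out the $l$--sum ($\sim k_F^{-1/2}m(\xi)^{1/2}$) and the two $K$--matrix elements ($\sim k_F^{-2}$), two applications of Lemma~\ref{lem:gap-sum} on the $(k,\ell)$--sum each cost $k_F^{(1+\delta)/2}$, totalling $k_F^{-3/2+\delta}$. This would still suffice for the final Theorem~\ref{thm:main}, but it does not match the proposition as stated.
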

\begin{proof}
For \eqref{esti-top-hard2-2-1}, by \eqref{excitation-bdd}--\eqref{excitation-bdd5}, \eqref{E2-com-single-esti}--\eqref{E2-com-single-esti1}, Cauchy-Schwarz inequality, Lemmas \ref{lem:Gronwall}, \ref{lem:excitation-Gronwall}, Proposition \ref{prop:prod-Kk-esti} and the operator norm $\|\tilde{a}_{q}\|=1$, we have
\begin{align}
	&\sum_{k,l\in\Z_{*}^{3}}\chi_{L_{k}\cap L_{l}}(\xi)\|K_{k}e_{\xi}\|^{2}\big\langle\Phi_{s_{3}},\tilde{a}_{l-\xi}^{*}\tilde{a}_{-\xi}^{*}b_{l}^{*}(K_{l}e_{\xi})\Phi_{s_{3}}\big\rangle\big|\nonumber\\
	&\leq \sum_{k,l\in\Z_{*}^{3}}\chi_{L_{k}\cap L_{l}}(\xi)\frac{\hat{V}_{k}^{2}}{\lambda_{k,\xi}}\|\tilde{a}_{l-\xi}b_{l}(K_{l}e_{\xi})\tilde{a}_{-\xi}\Phi_{s_{3}}\|\nonumber\\
	&\leq C_{V}k_{F}^{-3/2}m(\xi) \Big(\sum_{k\in\Z_{*}^{3}}\hat{V}_{k}^{2}\Big)\sqrt{\sum_{l\in\Z_{*}^{3}}\hat{V}_{l}^{2}}\sqrt{\sum_{l\in\Z_{*}^{3}}\|\cN_{l}^{1/2}\tilde{a}_{-\xi}\Phi_{s_{3}}\|^{2}}\nonumber\\
	&\leq C_{V}k_{F}^{-3/2}m(\xi)\|\tilde{a}_{-\xi}\cN_{E}\Phi_{s_{3}}\|\leq C_{\delta,V}k_{F}^{-3/2}\cQ^{1-\delta}m(\xi).
\end{align}

Next, for \eqref{esti-top-hard2-2-2}, we first put operators in the summand into normal order:
\begin{align}\label{esti-top-hard2-2-21}
	\tilde{a}_{q_{2}}^{*}\tilde{a}_{q_{3}}\tilde{a}_{-\eta}^{*}&=\delta_{q_{3},-\eta}\tilde{a}_{q_{2}}^{*}+\tilde{a}_{q_{2}}^{*}\tilde{a}_{-\eta}^{*}\tilde{a}_{q_{3}}.
\end{align}
For the part involving the first operator in \eqref{esti-top-hard2-2-21}, we note that, since $\eta\in B_{F}^{c}$ and $q_{3}=q-k\in B_{F}$ if $S_{k}=L_{k}$, this sum vanishes for $S_{k}=L_{k}$.  For $S_{k}=L_{k}'$, we substitute the definition \eqref{q-variable} and use a similar argument as in the proof of Lemma \ref{lem:gap-sum} to obtain for each $\delta>0$ that
\begin{align}
	&\sum_{(k,l,\ell)\in\cB_{\xi}'}\sum_{q\in S_{k}\cap S_{\ell}}\delta_{q_{3},-\eta}\big|\langle e_{\xi},K_{k}e_{q_{1}}\rangle\big|\big|\langle e_{q_{4}},K_{\ell}e_{\eta}\rangle\big|\big|\big\langle\Phi_{s_{3}},\tilde{a}_{l-\xi}^{*}b_{l}^{*}(K_{l}e_{\xi})\tilde{a}_{q_{2}}\Phi_{s_{3}}\big\rangle\big|\nonumber\\
	&\leq Ck_{F}^{-2}\sum_{(k,l,\ell)\in\cB_{\xi}'}\sum_{q\in S_{k}\cap S_{\ell}}\delta_{q_{3},-\eta}\frac{\hat{V}_{k}\chi_{L_{k}}(\xi)}{\lambda_{k,\xi}+\lambda_{k,q_{1}}}\frac{\hat{V}_{\ell}\chi_{L_{\ell}}(\eta)}{\lambda_{\ell,\eta}+\lambda_{\ell,q_{4}}}\|K_{l}e_{\xi}\|\|\cN_{l}^{1/2}\tilde{a}_{q_{2}}\Phi_{s_{3}}\|\nonumber\\
	\displaybreak
	&\leq Ck_{F}^{-5/2}m(\xi)^{1/2}\sum_{k,\ell\in\Z_{*}^{3}}\sum_{q\in S_{k}\cap S_{\ell}}\delta_{q_{3},-\eta}\frac{\hat{V}_{k}\chi_{L_{k}}(\xi)}{\lambda_{k,\xi}+\lambda_{k,q_{1}}}\frac{\hat{V}_{\ell}\chi_{L_{\ell}}(\eta)}{\lambda_{\ell,\eta}+\lambda_{\ell,q_{4}}}\|\tilde{a}_{q_{2}}\cN_{E}\Phi_{s_{3}}\|\nonumber\\
	&=Ck_{F}^{-5/2}\cQ m(\xi)^{1/2}\sum_{k,\ell\in\Z_{*}^{3}}\chi_{L_{k}'\cap L_{\ell}'}(-\xi-\ell)\frac{\hat{V}_{k}\chi_{L_{k}}(\xi)}{\sqrt{\lambda_{k,\xi}\lambda_{k,k-\ell-\xi}}}\frac{\hat{V}_{\ell}\chi_{L_{\ell}'}(\xi-k)}{\sqrt{\lambda_{\ell,\xi-k+\ell}\lambda_{\ell,-\xi}}}\nonumber\\
	&\leq Ck_{F}^{-5/2}\cQ m(\xi)\sum_{k\in\Z_{*}^{3}}\frac{\hat{V}_{k}\chi_{L_{k}}(\xi)}{\sqrt{\lambda_{k,\xi}}}\sqrt{\sum_{\ell\in\Z_{*}^{3}}\hat{V}_{\ell}^{2}}\sqrt{\sum_{\ell\in\Z_{*}^{3}}\chi_{L_{\ell}'}(\xi-k)m(\xi-k+\ell)^{2}}\nonumber\\
	&\leq C_{\delta,V}k_{F}^{-2+\delta}\cQ m(\xi)\sqrt{\sum_{k\in\Z_{*}^{3}}\hat{V}_{k}^{2}}\sqrt{\sum_{k\in\Z_{*}^{3}}\frac{\chi_{L_{k}}(\xi)}{\lambda_{k,\xi}}}\leq C_{\delta,V}k_{F}^{-3/2+\delta}\cQ m(\xi).
\end{align}

Now, for the part involving the second operator in \eqref{esti-top-hard2-2-21}, by \eqref{excitation-bdd}--\eqref{excitation-bdd5}, \eqref{E2-com-single-esti}--\eqref{E2-com-single-esti1}, Cauchy-Schwarz inequality, Lemmas \ref{lem:gap-sum}, \ref{lem:Gronwall}, \ref{lem:excitation-Gronwall}, Proposition \ref{prop:prod-Kk-esti} and the relations $[b_{l,p},\cN_{E}]=b_{l,p}$, we obtain
\begin{align}
	&\sum_{(k,l,\ell)\in\cB_{\xi}'}\sum_{q\in S_{k}\cap S_{\ell}}\big|\langle e_{\xi},K_{k}e_{q_{1}}\rangle\big|\big|\langle e_{q_{4}},K_{\ell}e_{\eta}\rangle\big|\big|\big\langle\Phi_{s_{3}},\tilde{a}_{l-\xi}^{*}b_{l}^{*}(K_{l}e_{\xi})\tilde{a}_{q_{2}}^{*}\tilde{a}_{-\eta}^{*}\tilde{a}_{q_{3}}\Phi_{s_{3}}\big\rangle\big|\nonumber\\
	&\leq Ck_{F}^{-2}\sum_{(k,l,\ell)\in\cB_{\xi}'}\sum_{q\in S_{k}\cap S_{\ell}}\frac{\hat{V}_{k}\chi_{L_{k}}(\xi)}{\sqrt{\lambda_{k,\xi}\lambda_{k,q_{1}}}}\frac{\hat{V}_{\ell}\chi_{L_{\ell}}(\eta)}{\sqrt{\lambda_{\ell,\eta}\lambda_{\ell,q_{4}}}}\|K_{l}e_{\xi}\|\|\cN_{l}^{1/2}\tilde{a}_{l-\xi}\tilde{a}_{q_{2}}\tilde{a}_{-\eta}\Phi_{s_{3}}\|\|\tilde{a}_{q_{3}}\Phi_{s_{3}}\|\nonumber\\
	&\leq C_{V}k_{F}^{-5/2}m(\xi)\sum_{k,\ell\in\Z_{*}^{3}}\sum_{q\in S_{k}\cap S_{\ell}}\frac{\hat{V}_{k}\hat{V}_{\ell}\chi_{L_{k}}(\xi)\chi_{L_{\ell}'}(\xi-k)}{\sqrt{\lambda_{k,q_{1}}\lambda_{\ell,\xi-k+\ell}\lambda_{\ell, q_{4}}}}\|\cN_{E}\tilde{a}_{q_{2}}\tilde{a}_{k-\ell-\xi}\Phi_{s_{3}}\|\|\tilde{a}_{q_{3}}\Phi_{s_{3}}\|\nonumber\\
	&\leq C_{V}k_{F}^{-5/2}m(\xi)\sum_{k,\ell\in\Z_{*}^{3}}\frac{\hat{V}_{k}\hat{V}_{\ell}\chi_{L_{k}}(\xi)\chi_{L_{\ell'}}(\xi-k)}{\sqrt{\lambda_{\ell,\xi-k+\ell}}}\nonumber\\
	&\quad\quad\quad\quad\quad\quad\quad\quad\times\sqrt{\sum_{q\in S_{\ell}}\|\tilde{a}_{q_{2}}\cN_{E}\tilde{a}_{k-\ell-\xi}\Phi_{s_{3}}\|^{2}}\sqrt{\sum_{q\in S_{k}}\|\tilde{a}_{q_{3}}\Phi_{s_{3}}\|^{2}}\nonumber\\
	&\leq C_{V}k_{F}^{-5/2}m(\xi)\sum_{k,\ell\in\Z_{*}^{3}}\hat{V}_{k}\hat{V}_{\ell}\chi_{L_{k}}(\xi)\chi_{L_{\ell}'}(\xi-k)m(\xi-k)^{1/2}\|\tilde{a}_{k-\xi-\ell}(\cN_{E}+1)^{3/2}\Phi_{s_{3}}\|\nonumber\\
	&\leq C_{V}k_{F}^{-5/2}m(\xi)\sum_{k\in\Z_{*}^{3}}\hat{V}_{k}\chi_{L_{k}}(\xi)m(\xi-k)^{1/2}\sqrt{\sum_{\ell\in\Z_{*}^{3}}\chi_{L_{\ell}'}(\xi-k)\|\tilde{a}_{k-\ell-\xi}(\cN_{E}+1)^{3/2}\Phi_{s_{3}}\|^{2}}\nonumber\\
	&\leq C_{V}k_{F}^{-5/2}m(\xi)\sqrt{\sum_{k\in\Z_{*}^{3}}\hat{V}_{k}^{2}}\sqrt{\sum_{k\in\Z_{*}^{3}}\chi_{L_{k}}(\xi)m(\xi-k)}\leq C_{\delta,V}k_{F}^{-2+\delta}m(\xi),
\end{align}
where we use a similar argument in the proof of Lemma \ref{lem:gap-sum} to obtain for each $\delta>0$ that
\begin{align}
	\sum_{k\in\Z_{*}^{3}}\chi_{L_{k}}(\xi)m(\xi-k)&\leq C_{\delta}k_{F}^{1+\delta}.
\end{align}
This completes the proof.
\end{proof}

Now, we come to the sum in \eqref{top-hard2-3}.  Again, we observe that the summand of \eqref{top-hard2-3} splits into two sums:
\begin{align}
	&-\sum_{\ell\in\Z_{*}^{3}}\sum_{q\in L_{\ell}}\{\varepsilon_{k,\ell}(K_{k}e_{\xi};e_{q}),b_{-\ell}^{*}(K_{-\ell}e_{-q})\}\tilde{a}_{k-\xi}\nonumber\\
	&=\sum_{\ell\in\Z_{*}^{3}}\sum_{q\in L_{k}\cap L_{\ell}}\langle e_{\xi},K_{k}e_{q}\rangle\{\tilde{a}_{q-\ell}^{*}\tilde{a}_{q-k},b_{-\ell}^{*}(K_{-\ell}e_{-q})\}\tilde{a}_{k-\xi}\nonumber\\
	&\quad\quad+\sum_{\ell\in\Z_{*}^{3}}\sum_{q\in L_{k}'\cap L_{\ell}'}\langle e_{\xi},K_{k}e_{q+k}\rangle\{\tilde{a}_{q+\ell}^{*}\tilde{a}_{q+k},b_{-\ell}^{*}(K_{-\ell}e_{-q-\ell})\}\tilde{a}_{k-\xi},
\end{align}
and these sums can be written into the schematic form as in \eqref{schematic}--\eqref{q-variable}:
\begin{align}
	\sum_{\ell\in\Z_{*}^{3}}\sum_{q\in S_{k}\cap S_{\ell}}\langle e_{\xi},K_{k}e_{q_{1}}\rangle\{\tilde{a}_{q_{2}}^{*}\tilde{a}_{q_{3}},b_{-\ell}^{*}(K_{-\ell}e_{q_{4}})\}\tilde{a}_{k-\xi},
\end{align}
where the variables $(q_{1},q_{2},q_{3},q_{4})$ are now redefined as
\begin{align*}
	(q_{1},q_{2},q_{3},q_{4})&=\begin{cases}
		(q,q-\ell,q-k,-q)\quad&\text{ if }S_{k}=L_{k},\\
		(q+k,q+\ell,q+k,-q-\ell)&\text{ if }S_{k}=L_{k}'.
	\end{cases}
\end{align*}
By normal-ordering this expression, we see that \eqref{top-hard2-3} take the form
\begin{align}
	&2\sum_{(k,l)\in\cB_{\xi}}\sum_{\ell\in\Z_{*}^{3}}\sum_{q\in S_{k}\cap S_{\ell}}\langle e_{\xi},K_{k}e_{q_{1}}\rangle\tilde{a}_{l-\xi}^{*}b_{l}^{*}(K_{l}e_{\xi})\tilde{a}_{q_{2}}^{*}b_{-\ell}^{*}(K_{-\ell}e_{q_{4}})\tilde{a}_{q_{3}}\tilde{a}_{k-\xi}\nonumber\\
	&\quad+\sum_{(k,l)\in\cB_{\xi}}\sum_{\ell\in\Z_{*}^{3}}\sum_{q\in S_{k}\cap S_{\ell}}\langle e_{\xi},K_{k}e_{q_{1}}\rangle\tilde{a}_{l-\xi}^{*}b_{l}^{*}(K_{l}e_{\xi})\tilde{a}_{q_{2}}^{*}[b_{-\ell}(K_{-\ell}e_{q_{4}}),\tilde{a}_{q_{3}}^{*}]^{*}\tilde{a}_{k-\xi},
\end{align}
so that it suffices to estimate these sums:
\begin{prop}\label{prop:esti-top-hard2-3}
For each $0\leq s_{3}\leq 1$, it holds that 
\begin{align}
	\label{esti-top-hard2-3-1}&\sum_{(k,l)\in\cB_{\xi}}\sum_{\ell\in\Z_{*}^{3}}\sum_{q\in S_{k}\cap S_{\ell}}\big|\langle e_{\xi},K_{k}e_{q_{1}}\rangle\big|\big|\big\langle\Phi_{s_{3}},\tilde{a}_{l-\xi}^{*}b_{l}^{*}(K_{l}e_{\xi})\tilde{a}_{q_{2}}^{*}b_{-\ell}^{*}(K_{-\ell}e_{q_{4}})\tilde{a}_{q_{3}}\tilde{a}_{k-\xi}\Phi_{s_{3}}\big\rangle\big|,\\
	\label{esti-top-hard2-3-2}&\sum_{(k,l)\in\cB_{\xi}}\sum_{\ell\in\Z_{*}^{3}}\sum_{q\in S_{k}\cap S_{\ell}}\big|\langle e_{\xi},K_{k}e_{q_{1}}\rangle\big|\big|\big\langle\Phi_{s_{3}},\tilde{a}_{l-\xi}^{*}b_{l}^{*}(K_{l}e_{\xi})\tilde{a}_{q_{2}}^{*}[b_{-\ell}(K_{-\ell}e_{q_{4}}),\tilde{a}_{q_{3}}^{*}]^{*}\tilde{a}_{k-\xi}\Phi_{s_{3}}\big\rangle\big|,
\end{align}
are bounded by $C_{V}k_{F}^{-3/2}m(\xi)$.
\end{prop}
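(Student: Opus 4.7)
The plan is to handle \eqref{esti-top-hard2-3-1} and \eqref{esti-top-hard2-3-2} by the same template that has been used throughout Subsection \ref{sec:esti-top-hard}: take absolute values inside the inner product, apply Cauchy--Schwarz to split the operator string between the two copies of $\Phi_{s_{3}}$, bound the kernel factors $\langle e_{\xi},K_{k}e_{q_{1}}\rangle$, $\|K_{l}e_{\xi}\|$ and $\|K_{-\ell}e_{q_{4}}\|$ by Proposition \ref{prop:prod-Kk-esti} and Lemma \ref{lem:matrix-expK}, dominate each bosonic operator $b^{\#}(\cdot)$ by the excitation bounds \eqref{excitation-bdd} together with the pull-through estimates of Corollary \ref{cor:pull-through} and Lemma \ref{lem:Gronwall}, and resum the remaining lattice variables via Lemmas \ref{lem:estimate-lambkp}--\ref{lem:gap-sum}. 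Throughout I treat the two regimes $S_{k}=L_{k}$ and $S_{k}=L_{k}'$ separately, but the computations are of identical structure.

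For \eqref{esti-top-hard2-3-1}, the structural novelty compared with \eqref{esti-top-hard2-2-2} treated in Proposition \ref{prop:esti-top-hard2-2} is that the creation operator $\tilde{a}_{-\eta}^{*}$ there is replaced here by the bosonic creation operator $b_{-\ell}^{*}(K_{-\ell}e_{q_{4}})$. I peel this off through $\|b_{-\ell}^{*}(K_{-\ell}e_{q_{4}})\Psi\|\leq\|K_{-\ell}e_{q_{4}}\|\|(\cN_{-\ell}+1)^{1/2}\Psi\|$, bound $\|K_{-\ell}e_{q_{4}}\|\leq Ck_{F}^{-1/2}\hat{V}_{\ell}/\sqrt{\lambda_{-\ell,q_{4}}}$ via Lemma \ref{lem:matrix-expK} and Lemma \ref{lem:estimate-lambkp}, and absorb the extra $\cN_{E}^{1/2}$ factor through Lemma \ref{lem:Gronwall}. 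After Cauchy--Schwarz in $\ell$ against $\|\hat{V}\|_{\ell^{2}}$, in $q\in L_{\ell}$ via Lemma \ref{lem:estimate-lambkp}, and then in $l$ again using $\|\hat{V}\|_{\ell^{2}}$, the remaining $k$-sum $\sum_{k}\hat{V}_{k}\chi_{L_{k}}(\xi)/\lambda_{k,\xi}$ is controlled by Cauchy--Schwarz against $\|\hat{V}\|_{\ell^{2}}$ and Lemma \ref{lem:gap-sum}. A crucial point is that here the outermost annihilation operator is $\tilde{a}_{k-\xi}$, whose index is summed over $k$ rather than fixed at $\xi$; hence $\sum_{k}\|\tilde{a}_{k-\xi}\Phi_{s_{3}}\|^{2}$ telescopes into $\|\cN_{E}^{1/2}\Phi_{s_{3}}\|^{2}\leq C_{V}$ via Lemma \ref{lem:Gronwall}. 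This is what eliminates the $\cQ$-dependence that appeared in the preceding propositions and produces the clean bound $C_{V}k_{F}^{-3/2}m(\xi)$.

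For \eqref{esti-top-hard2-3-2}, the commutator $[b_{-\ell}(K_{-\ell}e_{q_{4}}),\tilde{a}_{q_{3}}^{*}]^{*}$ is, by \eqref{E2-com-single-esti}--\eqref{E2-com-single-esti1}, a single $\tilde{a}^{\#}$ weighted by a scalar of size $Ck_{F}^{-1}\hat{V}_{\ell}/\sqrt{\lambda_{k,\xi}\lambda_{-\ell,q_{4}}}$. This scalar carries precisely the same $k_{F}^{-1/2}\hat{V}_{\ell}/\sqrt{\lambda_{-\ell,q_{4}}}$ weight as the product $\|K_{-\ell}e_{q_{4}}\|\cdot\|(\cN_{-\ell}+1)^{1/2}\|$ in the previous case (up to the extra $1/\sqrt{\lambda_{k,\xi}}$, which is absorbed into the $m(\xi)^{1/2}$ bookkeeping), so the remaining operator string is of the same schematic form as in \eqref{esti-top-hard2-3-1} and is estimated identically. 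The main obstacle will be the combinatorics of choosing the Cauchy--Schwarz splits so that the fixed index $\xi$ always lands inside a matrix element of $K_{k}$ (producing $\lambda_{k,\xi}^{-1}\leq m(\xi)$ via Lemma \ref{lem:gap-sum}) rather than inside a norm $\|\tilde{a}_{\xi}\Phi_{s_{3}}\|$, since the latter would force invoking the bootstrap quantity $\cQ$ and spoil the $\cQ$-free form of the claimed bound.
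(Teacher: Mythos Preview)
Your overall template matches the paper's proof: Cauchy--Schwarz split, kernel bounds via Proposition~\ref{prop:prod-Kk-esti} and Lemma~\ref{lem:matrix-expK}, excitation bounds \eqref{excitation-bdd}, pull-through (Corollary~\ref{cor:pull-through}), and Gr\"onwall (Lemma~\ref{lem:Gronwall}). The crucial observation you make---that $\tilde{a}_{k-\xi}$ carries a \emph{summed} index so that $\sum_{k}\chi_{L_{k}}(\xi)\|\tilde{a}_{k-\xi}\Phi_{s_{3}}\|^{2}\leq\|\cN_{E}^{1/2}\Phi_{s_{3}}\|^{2}$ and no factor of $\cQ$ survives---is exactly the mechanism the paper exploits.

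Two bookkeeping points should be corrected. First, for \eqref{esti-top-hard2-3-1} your description of the $k$-sum is internally inconsistent: you cannot both use Lemma~\ref{lem:gap-sum} on $\sum_{k}\hat{V}_{k}\chi_{L_{k}}(\xi)/\lambda_{k,\xi}$ \emph{and} telescope $\sum_{k}\|\tilde{a}_{k-\xi}\Phi_{s_{3}}\|^{2}$, since the same $k$-sum cannot be spent twice. The paper does \emph{not} invoke Lemma~\ref{lem:gap-sum} here; it extracts the single factor $m(\xi)$ from $(\lambda_{k,\xi}\lambda_{l,\xi})^{-1/2}$ and then the Cauchy--Schwarz in $k$ pairs $\hat{V}_{k}$ against $\|\tilde{a}_{k-\xi}\cN_{E}^{1/2}\Phi_{s_{3}}\|$, closing via the telescope alone (and in fact obtaining $k_{F}^{-2}m(\xi)$, stronger than needed). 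Using Lemma~\ref{lem:gap-sum} instead would cost an unwanted $k_{F}^{1/2+\delta}$ and produce a $\delta$-dependent constant absent from the statement.

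Second, for \eqref{esti-top-hard2-3-2} the commutator weight is not $Ck_{F}^{-1}\hat{V}_{\ell}/\sqrt{\lambda_{k,\xi}\lambda_{-\ell,q_{4}}}$ but $Ck_{F}^{-1}\hat{V}_{-\ell}/\sqrt{\lambda_{k,q_{1}}\lambda_{-\ell,q_{4}}}$, with the \emph{summed} variable $q_{1}$ in place of $\xi$. This factor does not yield an $m(\xi)^{1/2}$; instead it combines with the $\lambda_{k,q_{1}}^{-1/2}$ already present in $|\langle e_{\xi},K_{k}e_{q_{1}}\rangle|$ to give $\lambda_{k,q_{1}}^{-1}$, which is then resummed over $q\in L_{k}$ via Lemma~\ref{lem:estimate-lambkp} (costing $k_{F}^{1/2}$) rather than via $m(\xi)$-bookkeeping. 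With these two corrections your plan becomes the paper's proof.
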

\begin{proof}
For \eqref{esti-top-hard2-3-1}, by \eqref{excitation-bdd}--\eqref{excitation-bdd5}, \eqref{E2-com-single-esti}--\eqref{E2-com-single-esti1}, Cauchy-Schwarz inequality, Lemmas \ref{lem:gap-sum}, \ref{lem:Gronwall}, \ref{lem:excitation-Gronwall}, Proposition \ref{prop:prod-Kk-esti} and the relations $[b_{l,p},\cN_{E}]=b_{l,p}$, we obtain
\begin{align}\label{esti-top-hard2-3-1-1}
	&\sum_{(k,l)\in\cB_{\xi}}\sum_{\ell\in\Z_{*}^{3}}\sum_{q\in S_{k}\cap S_{\ell}}\big|\langle e_{\xi},K_{k}e_{q_{1}}\rangle\big|\big|\big\langle\Phi_{s_{3}},\tilde{a}_{l-\xi}^{*}b_{l}^{*}(K_{l}e_{\xi})\tilde{a}_{q_{2}}^{*}b_{-\ell}^{*}(K_{-\ell}e_{q_{4}})\tilde{a}_{q_{3}}\tilde{a}_{k-\xi}\Phi_{s_{3}}\big\rangle\big|\nonumber\\
	\displaybreak
	&\leq Ck_{F}^{-1}\sum_{(k,l)\in\cB_{\xi}}\sum_{\ell\in\Z_{*}^{3}}\sum_{q\in S_{k}\cap S_{\ell}}\frac{\hat{V}_{k}}{\sqrt{\lambda_{k,\xi}\lambda_{k,q_{1}}}}\|K_{l}e_{\xi}\|\|\cN_{l}^{1/2}b_{-\ell}(K_{-\ell}e_{q_{4}})\tilde{a}_{q_{2}}\Phi_{s_{3}}\|\|\tilde{a}_{q_{3}}\tilde{a}_{k-\xi}\Phi_{s_{3}}\|\nonumber\\
	&\leq Ck_{F}^{-3/2}m(\xi)\sum_{k,\ell\in\Z_{*}^{3}}\sum_{q\in S_{k}\cap S_{\ell}}\frac{\hat{V}_{k}\chi_{L_{k}}(\xi)}{\sqrt{\lambda_{k,q_{1}}}}\|K_{-\ell}e_{q_{4}}\|\|\cN_{-\ell}^{1/2}(\cN_{E}+1)\tilde{a}_{q_{2}}\Phi_{s_{3}}\|\|\tilde{a}_{q_{3}}\tilde{a}_{k-\xi}\Phi_{s_{3}}\|\nonumber\\
	&\leq Ck_{F}^{-2}m(\xi)\sum_{k,\ell\in\Z_{*}^{3}}\chi_{L_{k}}(\xi)\hat{V}_{k}\hat{V}_{-\ell}\sqrt{\sum_{q\in S_{k}}\|\tilde{a}_{q_{3}}\tilde{a}_{k-\xi}\Phi_{s_{3}}\|^{2}}\sqrt{\sum_{q\in S_{\ell}}\|\cN_{-\ell}^{1/2}(\cN_{E}+1)\tilde{a}_{q_{2}}\Phi_{s_{3}}\|^{2}}.
\end{align}
For the last factor in \eqref{esti-top-hard2-3-1-1}, we use the pull-through formula \eqref{pull-through} and the estimate $\tilde{a}_{p}^{*}\cN_{k}\tilde{a}_{p}\leq \cN_{k}^{1/2}\tilde{a}_{p}^{*}\tilde{a}_{p}\cN_{k}^{1/2}$ for each $p\in\Z^{3}$ and $k\in\Z_{*}^{3}$ (see \cite[Eq. (4.20)]{CHN-23}) to obtain
\begin{align*}
	\tilde{a}_{q_{2}}^{*}(\cN_{E}+1)\cN_{-\ell}(\cN_{E}+1)\tilde{a}_{q_{2}}&=\cN_{E}\tilde{a}_{q_{2}}^{*}\cN_{-\ell}\tilde{a}_{q_{2}}\cN_{E}\leq \cN_{E}\cN_{-\ell}^{1/2}\tilde{a}_{q_{2}}^{*}\tilde{a}_{q_{2}}\cN_{-\ell}^{1/2}\cN_{E}.
\end{align*}
Hence, the last factor in \eqref{esti-top-hard2-3-1-1} yields
\begin{align}
	\sum_{q\in S_{\ell}}\|\cN_{-\ell}^{1/2}(\cN_{E}+1)\tilde{a}_{q_{2}}\Phi_{s_{3}}\|^{2}&\leq \sum_{q\in S_{\ell}}\|\tilde{a}_{q_{2}}\cN_{-\ell}^{1/2}\cN_{E}\Phi_{s_{3}}\|^{2}\nonumber\\
	&\leq \|\cN_{E}^{1/2}\cN_{-\ell}^{1/2}\cN_{E}\Phi_{s_{3}}\|^{2}=\|\cN_{-\ell}^{1/2}\cN_{E}^{3/2}\Phi_{s_{3}}\|^{2}.
\end{align}
Thus, using the argument in \eqref{ex-esti}--\eqref{ex-esti2}, \eqref{esti-top-hard2-3-1-1} yields for each $\delta>0$ that
\begin{align}
	&\sum_{(k,l)\in\cB_{\xi}}\sum_{\ell\in\Z_{*}^{3}}\sum_{q\in S_{k}\cap S_{\ell}}\big|\langle e_{\xi},K_{k}e_{q_{1}}\rangle\big|\big|\big\langle\Phi_{s_{3}},\tilde{a}_{l-\xi}^{*}b_{l}^{*}(K_{l}e_{\xi})\tilde{a}_{q_{2}}^{*}b_{-\ell}^{*}(K_{-\ell}e_{q_{4}})\tilde{a}_{q_{3}}\tilde{a}_{k-\xi}\Phi_{s_{3}}\big\rangle\big|\nonumber\\
	&\leq Ck_{F}^{-2}m(\xi)\sum_{k,\ell\in\Z_{*}^{3}}\chi_{L_{k}}(\xi)\hat{V}_{k}\hat{V}_{-\ell}\|\tilde{a}_{k-\xi}\cN_{E}^{1/2}\Phi_{s_{3}}\|\|\cN_{-\ell}^{1/2}\cN_{E}^{3/2}\Phi_{s_{3}}\|\nonumber\\
	&\leq Ck_{F}^{-2}m(\xi)\sqrt{\sum_{k\in\Z_{*}^{3}}\hat{V}_{k}^{2}}\sqrt{\sum_{\ell\in\Z_{*}^{3}}\hat{V}_{-\ell}^{2}}\sqrt{\sum_{k\in\Z_{*}^{3}}\chi_{L_{k}}(\xi)\|\tilde{a}_{k-\xi}\cN_{E}^{1/2}\Phi_{s_{3}}\|^{2}}\sqrt{\sum_{\ell\in\Z_{*}^{3}}\|\cN_{-\ell}^{1/2}\cN_{E}^{3/2}\Phi_{s_{3}}\|^{2}}\nonumber\\
	&\leq C_{\delta,V}k_{F}^{-2}m(\xi)\|\cN_{E}\Phi_{s_{3}}\|\|\cN_{E}^{5/2}\Phi_{s_{3}}\|\leq C_{\delta,V}k_{F}^{-2}m(\xi).
\end{align}

Next, for \eqref{esti-top-hard2-3-2}, we first compute the commutator using \cite[Eq.(4.23)]{CHN-23}:
\begin{align*}
	[b_{-\ell}(K_{-\ell}e_{q_{4}}),\tilde{a}_{q_{3}}^{*}]&=\begin{cases}
		-\chi_{L_{-\ell}'}(q_{3})\langle K_{-\ell}e_{q_{4}},e_{q_{3}-\ell}\rangle\tilde{a}_{q_{3}-\ell}\quad&S_{k}=L_{k},\\
		\chi_{L_{-\ell}}(q_{3})\langle K_{-\ell}e_{q_{4}},e_{q_{3}}\rangle\tilde{a}_{q_{3}+\ell}& S_{k}=L_{k}',
	\end{cases}
\end{align*}
which satisfies the following bound (see \cite[Eq. (4.28)]{CHN-23})
\begin{align*}
	\begin{cases}
		\chi_{L_{-\ell}'}(q_{3})\big|\langle K_{-\ell}e_{q_{4}},e_{q_{3}-\ell}\rangle\big|\quad&S_{k}=L_{k}\\
		\chi_{L_{-\ell}}(q_{3})\big|\langle K_{-\ell}e_{q_{4}},e_{q_{3}}\rangle\big|& S_{k}=L_{k}'
	\end{cases}\leq C\frac{k_{F}^{-1}\hat{V}_{-\ell}}{\sqrt{\lambda_{k,q_{1}}\lambda_{-\ell,q_{4}}}}.
\end{align*}
It follows from the same argument above that
\begin{align}
	&\sum_{(k,l)\in\cB_{\xi}}\sum_{\ell\in\Z_{*}^{3}}\sum_{q\in S_{k}\cap S_{\ell}}\big|\langle e_{\xi},K_{k}e_{q_{1}}\rangle\big|\big|\big\langle\Phi_{s_{3}},\tilde{a}_{l-\xi}^{*}b_{l}^{*}(K_{l}e_{\xi})\tilde{a}_{q_{2}}^{*}[b_{-\ell}(K_{-\ell}e_{q_{4}}),\tilde{a}_{q_{3}}^{*}]^{*}\tilde{a}_{k-\xi}\Phi_{s_{3}}\big\rangle\big|\nonumber\\
	&\leq Ck_{F}^{-2}\sum_{(k,l)\in\cB_{\xi}}\sum_{\ell\in\Z_{*}^{3}}\sum_{q\in S_{k}\cap S_{\ell}}\frac{\hat{V}_{k}}{\sqrt{\lambda_{k,\xi}\lambda_{k,q_{1}}}}\frac{\hat{V}_{-\ell}}{\sqrt{\lambda_{k,q_{1}}\lambda_{-\ell,q_{4}}}}\|K_{l}e_{\xi}\|\|\cN_{l}^{1/2}\tilde{a}_{q_{2}}\tilde{a}_{q_{3}\mp\ell}\Phi_{s_{3}}\|\|\tilde{a}_{k-\xi}\Phi_{s_{3}}\|\nonumber\\
	&\leq Ck_{F}^{-5/2}m(\xi)\sum_{k,\ell\in\Z_{*}^{3}}\sum_{q\in S_{k}\cap S_{\ell}}\frac{\hat{V}_{k}\hat{V}_{-\ell}\chi_{L_{k}}(\xi)}{\sqrt{\lambda_{k,q_{1}}\lambda_{k,q_{1}}\lambda_{-\ell,q_{4}}}}\|\cN_{E}\tilde{a}_{q_{2}}\tilde{a}_{q_{3}\mp\ell}\Phi_{s_{3}}\|\|\tilde{a}_{k-\xi}\Phi_{s_{3}}\|\nonumber\\
	\displaybreak
	&\leq Ck_{F}^{-5/2}m(\xi)\sum_{k\in\Z_{*}^{3}}\sum_{q\in S_{k}}\frac{\hat{V}_{k}\chi_{L_{k}}(\xi)}{\sqrt{\lambda_{k,q_{1}}}}\|\tilde{a}_{k-\xi}\Phi_{s_{3}}\|\sqrt{\sum_{\ell\in\Z_{*}^{3}}\hat{V}_{-\ell}^{2}\frac{\chi_{S_{\ell}}(q)}{\lambda_{-\ell,q_{4}}}}\sqrt{\sum_{\ell\in\Z_{*}^{3}}\chi_{S_{\ell}}(q)\|\tilde{a}_{q_{2}}\cN_{E}\Phi_{s_{3}}\|^{2}}\nonumber\\
	&\leq Ck_{F}^{-5/2}m(\xi)\sum_{k\in\Z_{*}^{3}}\chi_{L_{k}}(\xi)\hat{V}_{k}\|\tilde{a}_{k-\xi}\Phi_{s_{3}}\|\|\cN_{E}^{3/2}\Phi_{s_{3}}\|\sqrt{\sum_{q\in L_{k}}\frac{1}{\lambda_{k,q}}}\sqrt{\sum_{\ell\in\Z_{*}^{3}}\hat{V}_{-\ell}^{2}\sum_{q\in L_{\ell}}\frac{1}{\lambda_{\ell,q}}}\nonumber\\
	&\leq C_{V}k_{F}^{-3/2}m(\xi)\sqrt{\sum_{k\in\Z_{*}^{3}}\hat{V}_{k}^{2}}\sqrt{\sum_{k\in\Z_{*}^{3}}\chi_{L_{k}}(\xi)\|\tilde{a}_{k-\xi}\Phi_{s_{3}}\|^{2}}\leq C_{V}k_{F}^{-3/2}m(\xi).
\end{align}
This completes the proof.
\end{proof}

Finally, after the above lengthy argument, we obtain the following theorem from Propositions \ref{prop:esti-top1}--\ref{prop:esti-top-hard2-3}:
\begin{thm}\label{thm:errorE2k}
For each $0\leq s_{1},s_{2}\leq 1$ and $(j_{1},j_{2})\in\Sigma_{*}^{2}$, each term in $\cW_{s_{1},s_{2}}^{(j_{1},j_{2})}(\widetilde{\cE}_{2,k})$ is bounded by
\begin{align*}
	C_{V}k_{F}^{-1}m(\xi)^{1/2}\cdot\sup_{0\leq \tau\leq 1}\|\tilde{a}_{\xi}\Phi_{\tau}\|+C_{\delta,V}\big(k_{F}^{-1}\cQ^{1-\delta}+k_{F}^{-1+\delta}\cQ+k_{F}^{-3/2+\delta}\cQ+k_{F}^{-3/2}\big)m(\xi),
\end{align*}
for each $\delta>0$.
\end{thm}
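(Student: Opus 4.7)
The plan is to assemble the bound from the individual estimates established in the preceding subsections for each of the operator-valued pieces coming out of the normal-ordering of $\cE_{2,k}(T_{k,s_{2}}^{(j_{1},j_{2})})$. Recall that a generic term in $\cW_{s_{1},s_{2}}^{(j_{1},j_{2})}(\widetilde{\cE}_{2,k})$ is of the form
\begin{align*}
\sum_{(k,\zeta)\in\cC_{\xi}}\big\langle\Phi_{s_{1}},\widetilde{\cE}_{2,k}\big(A_{k,s_{2}}^{(j_{1})}P_{\zeta;k}A_{k,s_{2}}^{(j_{2})}\big)\Phi_{s_{1}}\big\rangle,
\end{align*}
and, using \eqref{E2k-generic} together with the identity expanding $\{b_{k}(\cdot),\{\tilde{a}^{*}\tilde{a},b_{l}^{*}(\cdot)\}\}$ as the sum \eqref{top}--\eqref{double-com2}, the constant piece in \eqref{double-com2} is exactly $\langle\Psi_{FS},\cE_{2,k}(T_{k,s_{2}}^{(j_{1},j_{2})})\Psi_{FS}\rangle$ and therefore cancels against the subtraction built into $\widetilde{\cE}_{2,k}$.

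First, I would split the proof into the case $(j_{1},0)\in\Sigma_{*}^{2}$ with $j_{1}\geq 2$, the case $(0,j_{2})\in\Sigma_{*}^{2}$ with $j_{2}\geq 1$, and the delicate case $(1,0)$. For the non-constant ``top'' terms of type \eqref{top}, I invoke Propositions \ref{prop:esti-top1} and \ref{prop:esti-top3} to bound them by $C_{\delta,V}(k_{F}^{-1}\cQ^{1-\delta}+k_{F}^{-3/2})m(\xi)$, noting that the commutator piece in \eqref{top} is treated by Proposition \ref{prop:esti-top1} with the additional observation (already stated there) that the bound survives $j_{1}=1$. For the single-commutator terms \eqref{single-com}, I apply Propositions \ref{prop:esti-single-com-1} and \ref{prop:esti-single-com-2}, which yield $C_{\delta,V}(k_{F}^{-3/2}+k_{F}^{-1+\delta}\cQ+k_{F}^{-1}\cQ)m(\xi)$. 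For the double-commutator terms \eqref{double-com1}--\eqref{double-com2}, Proposition \ref{prop:esti-double-com} gives $C_{\delta,V}k_{F}^{-3/2+\delta}\cQ\,m(\xi)$. Summing these contributions, which are all dominated by the four terms appearing in the claimed bound, immediately yields the theorem in this easier range of indices.

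The main obstacle, and the step that requires the most care, is the remaining term \eqref{top-hard} corresponding to the case $T_{k,s_{2}}^{(1,0)}=K_{k}$: a naive Cauchy--Schwarz estimate produces only $O(k_{F}^{-1})m(\xi)$, which is not small relative to the bosonization term. To handle it, I would perform a second Bogoliubov-type expansion as in the derivation \eqref{top-hardLk}--\eqref{top-hard2-3}: the fundamental theorem of calculus introduces an extra factor of $\cK$, and after applying Proposition \ref{prop:Kcom-excit} and isolating the direct and exchange pieces one obtains the three families \eqref{top-hard2-1}, \eqref{top-hard2-2}, \eqref{top-hard2-3}. I then quote Propositions \ref{prop:esti-top-hard2-1}, \ref{prop:esti-top-hard2-2-1}, \ref{prop:esti-top-hard2-2}, and \ref{prop:esti-top-hard2-3} to bound these three families by $C_{\delta,V}(k_{F}^{-1}\cQ+k_{F}^{-3/2}+k_{F}^{-3/2+\delta}\cQ+k_{F}^{-3/2}\cQ^{1-\delta}+k_{F}^{-2+\delta})m(\xi)$, all of which fit into the claimed four-term envelope.

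Finally, I would remark that the analogous analysis of the ``$L_{k}'$-side'' sums in \eqref{top-hardLk'} is parallel to the $L_{k}$-case treated above, and that the secondary pieces appearing after isolating the hard sum in \eqref{top-hardLk}--\eqref{top-hardLk'} (the ones indexed by $k+\xi$ instead of $\xi$) are bounded by the estimate of Proposition \ref{prop:esti-top3}, hence already absorbed into $C_{\delta,V}k_{F}^{-1}\cQ^{1-\delta}m(\xi)$. Collecting all contributions and using $\delta$ as a free small parameter completes the proof. I expect the genuinely nontrivial content to lie entirely in the handling of the $(1,0)$-top term via the secondary Bogoliubov expansion; once those estimates are in place, the combinatorial bookkeeping is routine.
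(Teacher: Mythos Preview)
Your proposal is correct and follows exactly the paper's approach: the theorem is stated in the paper as a direct summary of Propositions~\ref{prop:esti-top1}--\ref{prop:esti-top-hard2-3}, and your argument is precisely the bookkeeping that collects those bounds, including the crucial observation that the only piece not covered by the easy propositions is the first top term \eqref{top-hard} for $T_{k,s_{2}}^{(1,0)}$, which is then handled via the second Bogoliubov expansion of Subsection~\ref{sec:esti-top-hard}. Your grouping of the resulting error contributions into the four-term envelope is accurate.
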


\section{Conclusion of proof of Theorem \ref{thm:main}}\label{sec:pf-main}
In this section, we complete the proof of Theorem \ref{thm:main}.  First, 
\begin{lemma}\label{lem:apriori-new}
The quantity $\cQ$ satisfies
\begin{align}\label{apriori-new}
	\cQ&\leq C_{\delta,V}k_{F}^{-1/2+\delta},
\end{align}
for any $\delta>0$.  
\end{lemma}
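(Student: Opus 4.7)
The starting observation is that $\tilde a_\xi^*\tilde a_\xi = \chi_{B_F}(\xi)a_\xi a_\xi^* + \chi_{B_F^c}(\xi)a_\xi^*a_\xi = \rho_\xi$, by direct comparison of the definitions \eqref{momentum-op} and \eqref{particle-hole}. Consequently
\begin{align*}
\|\tilde a_\xi \Phi_\tau\|^2 = \langle\Phi_\tau,\rho_\xi\Phi_\tau\rangle = n_\tau(\xi),
\end{align*}
so $\cQ^2 = \sup_{\xi,\tau} n_\tau(\xi)$. The plan is to bound $n_\tau(\xi)$ uniformly in $\xi \in \Z^3$ and $\tau \in [0,1]$ in terms of $\cQ$ itself, producing a self-referential inequality that closes on a single bootstrap step in the spirit of \cite{BeneLill-25}.

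The decomposition \eqref{trial-momentum} carries over verbatim to any $t=\tau \in [0,1]$ by the derivation in Section \ref{sec:momentum-boson-ex}. Substituting the estimates already established in Sections \ref{sec:momentum-boson-ex}--\ref{sec:principal-errors}---namely Propositions \ref{prop:bosonized-momentum}, \ref{prop:ex-corr-momentum}, \ref{prop:esti-A5A6} and Theorems \ref{thm:esti-varepk}, \ref{thm:errorE1k}, \ref{thm:errorE2k}---every contribution to $n_\tau(\xi)$ is a product of $m(\xi)$ with a polynomial in $\cQ$ whose coefficients carry negative powers of $k_F$. Two reductions apply. First, $m(\xi)\leq 2$ uniformly in $\xi$, since $m(p)^{-1} = \big||p|^2 - \kappa\big| \geq 1/2$ by \eqref{kappa}, and the factor $m(\xi)$ can therefore be dropped. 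Second, $\cQ \leq 1$ allows the absorptions $\cQ^2 \leq \cQ^{1-\delta}$, $k_F^{-1}\leq k_F^{-1+\delta}$, and $k_F^{-3/2}\leq k_F^{-1+\delta}$. After these simplifications one obtains, uniformly in $\xi$ and $\tau$,
\begin{align*}
n_\tau(\xi) \leq C_{\delta,V}\bigl(k_F^{-1+\delta} + k_F^{-1}\cQ^{1-\delta} + k_F^{-1+\delta}\cQ\bigr),
\end{align*}
where the first term absorbs the bosonization, exchange, and all $O(k_F^{-3/2})$-type error contributions. Taking the supremum yields the self-referential inequality
\begin{align*}
\cQ^2 \leq C_{\delta,V}\bigl(k_F^{-1+\delta} + k_F^{-1}\cQ^{1-\delta} + k_F^{-1+\delta}\cQ\bigr).
\end{align*}

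The proof is then closed in one bootstrap step: inserting the trivial bound $\cQ \leq 1$ into the right-hand side makes each of the three terms at most $C_{\delta,V}k_F^{-1+\delta}$, giving $\cQ^2 \leq C_{\delta,V}k_F^{-1+\delta}$, whence $\cQ \leq C_{\delta,V}k_F^{-1/2+\delta/2} \leq C_{\delta,V}k_F^{-1/2+\delta}$. A further iteration using the improved bound produces no gain (the dominant term on the right remains $k_F^{-1+\delta}$), so this is essentially the best exponent reachable from the above inequality. The genuine work of the proof has already been done in Sections \ref{sec:class-errors}--\ref{sec:principal-errors}: every error estimate there was deliberately written with a factor $\cQ^{1-\delta}$ or $\cQ$ rather than the trivial $\cQ^0$, and it is exactly this quasi-contractive structure that lets the bootstrap close. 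No additional analytic input is needed here; the present lemma is purely an assembly of those estimates with the identity $\|\tilde a_\xi\Phi_\tau\|^2 = n_\tau(\xi)$ and the uniform bound $m(\xi)\leq 2$.
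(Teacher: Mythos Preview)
Your proof is correct and follows essentially the same approach as the paper's: both identify $\|\tilde a_\xi\Phi_\tau\|^2=n_\tau(\xi)$, invoke the already-established Propositions \ref{prop:bosonized-momentum}, \ref{prop:ex-corr-momentum}, \ref{prop:esti-A5A6} and Theorems \ref{thm:esti-varepk}, \ref{thm:errorE1k}, \ref{thm:errorE2k} to bound $n_\tau(\xi)$ by $C_{\delta,V}(k_F^{-1+\delta}+k_F^{-1}\cQ^{1-\delta}+k_F^{-1+\delta}\cQ)m(\xi)$, use $m(\xi)\leq C$ and the trivial bound $\cQ\leq 1$, and close in a single bootstrap step. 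The paper's write-up is slightly terser but the logic is identical.
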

\begin{proof}
First, by the operator norm $\|\tilde{a}_{\xi}\|=1$, it is immediate that $\cQ\leq 1$ from its definition \eqref{bootstrap}.  Then, by Propositions \ref{prop:bosonized-momentum}, \ref{prop:ex-corr-momentum}, \ref{prop:esti-A5A6}, Theorems \ref{thm:esti-varepk}, \ref{thm:errorE1k} and \ref{thm:errorE2k}, we conclude for each $\delta>0$, $\xi\in\Z^{3}$ and $0\leq\tau\leq 1$ that
\begin{align}\label{Q-esti}
	\|\tilde{a}_{\xi}\Phi_{\tau}\|^{2}&\leq C_{V}k_{F}^{-1}\cQ m(\xi)^{1/2}+ C_{\delta,V}(k_{F}^{-1+\delta}+k_{F}^{-1}\cQ^{1-\delta}+k_{F}^{-1+\delta}\cQ)m(\xi)\nonumber\\
	&\leq C_{\delta,V}k_{F}^{-1+\delta}m(\xi)^{1/2}.
\end{align}
Since $m(\xi)\leq C$ uniformly in $\xi$ by its definition, we obtain
\begin{align}
	\cQ&=\sup_{\xi\in\Z^{3}}\sup_{0\leq \tau\leq 1}\|\tilde{a}_{\xi}\Phi_{\tau}\|\leq C_{\delta,V}k_{F}^{-1/2+\delta}.
\end{align}
This completes the proof.
%
\end{proof}

\begin{proof}[Proof of Theorem \ref{thm:main}]
By \eqref{momentum2}, we have
\begin{align}
	n(\xi)\equiv n_{1}(\xi)&=n_{\rm b}(\xi)+n_{{\rm ex},1}(\xi)+\cE_{1,1}(\xi)+\cE_{2,1}(\xi)+\cE_{3,1}(\xi).
\end{align}
where $n_{\rm b}(\xi):=n_{{\rm b},1}(\xi)$.  By Proposition \ref{prop:ex-corr-momentum}, we have
\begin{align}
	n_{{\rm ex},1}(\xi)=n_{\rm ex}(\xi)+\cE_{\rm ex}(\xi),
\end{align}
for some error term $|\cE_{\rm ex}(\xi)|\leq C_{V}k_{F}^{-3/2}m(\xi)$.  

Next, we estimate the error $\cE(\xi):=\cE_{1,1}(\xi)+\cE_{2,1}(\xi)+\cE_{3,1}(\xi)$ in \eqref{momentum2}.  By Proposition \ref{prop:esti-A5A6}, Theorems \ref{thm:esti-varepk}, \ref{thm:errorE1k}, \ref{thm:errorE2k} and Lemma \ref{lem:apriori-new}, it holds for each $\delta>0$ that 
\begin{align}\label{error-esti}
	\big|\cE(\xi)\big|&\leq C_{V}k_{F}^{-1}m(\xi)^{1/2}\sup_{0\leq \tau\leq 1}\|\tilde{a}_{\xi}\Phi_{\tau}\|+ C_{\delta,V}(k_{F}^{-1}\cQ^{1-\delta}+k_{F}^{-1+\delta}\cQ+k_{F}^{-1}\cQ)m(\xi)\nonumber\\
	&\leq C_{\delta,V}k_{F}^{-3/2+\delta}m(\xi)+C_{V}k_{F}^{-3/2}m(\xi)^{1/2},
\end{align}
where we have dropped any term $o(\delta)$ since we only consider small $\delta$.  This implies for each $0\leq \tau\leq 1$ that
\begin{align}
	\|\tilde{a}_{\xi}\Phi_{\tau}\|^{2}&\leq C_{\delta,V}k_{F}^{-3/2+\delta}m(\xi)^{1/2}.
\end{align}
By bootstrapping this inequality, \eqref{error-esti} becomes
\begin{align}
	\big|\cE(\xi)\big|&\leq C_{\delta,V}k_{F}^{-3/2+\delta}m(\xi)+C_{V}k_{F}^{-3/2}m(\xi)^{3/4},
\end{align}
which implies again for each $0\leq \tau\leq 1$ that
\begin{align}
	\|\tilde{a}_{\xi}\Phi_{\tau}\|^{2}&\leq C_{\delta,V}k_{F}^{-3/2+\delta}m(\xi)^{3/4}.
\end{align}
By continuing this process $n$-times, this implies for each $n\geq 1$ that
\begin{align}
	\|\tilde{a}_{\xi}\Phi_{\tau}\|^{2}&\leq C_{\delta,V}k_{F}^{-3/2+\delta}m(\xi)^{1-2^{-n}}.
\end{align}
Taking $n\rightarrow\infty$ implies that $\|\tilde{a}_{\xi}\Phi_{\tau}\|^{2}\leq C_{\delta,V}k_{F}^{-3/2+\delta}m(\xi)$.  The error estimate in \eqref{error-esti} thus implies that 
\begin{align}
	|\cE(\xi)|&\leq C_{\delta,V}k_{F}^{-3/2+\delta}m(\xi).
\end{align}
which, together with Propositions \ref{prop:bosonized-momentum} and \ref{prop:ex-corr-momentum}, gives the estimates in \eqref{boson-f-esti}--\eqref{error-f-esti} for each $f\in\ell^{\infty}(\Z^{3})$.

Now, we compute $n_{\rm b}(\xi)$.  Recall from Proposition \ref{prop:bosonized-momentum} that
\begin{align}
	n_{\rm b}(\xi)&=\sum_{(k,\zeta)\in\cC_{\xi}}\langle e_{\zeta},(\cosh(-2K_{k})-1)e_{\zeta}\rangle.
\end{align}
By definition, we have
\begin{align}\label{cosh(2Kk)-1}
	\cosh(-2K_{k})-1\DETAILS{=\frac{1}{2}\Big(h_{k}^{-1/2}\Big(h_{k}^{1/2}(h_{k}+2P_{k})h_{k}^{1/2}\Big)^{1/2}h_{k}^{-1/2}-1\Big)&\quad\quad\quad\quad\quad\quad\quad\quad+\frac{1}{2}\Big(h_{k}^{1/2}\Big(h_{k}^{1/2}(h_{k}+2P_{k})h_{k}^{1/2}\Big)^{-1/2}h_{k}^{1/2}-1\Big)\nonumber\\
	}
	&=\frac{1}{2}h_{k}^{-1/2}\Big[\big(h_{k}^{2}+2P_{u_{k}}\big)^{1/2}-h_{k}\Big]h_{k}^{-1/2}\nonumber\\
	&\quad\quad\quad\quad+\frac{1}{2}h_{k}^{1/2}\Big[\big(h_{k}^{2}+2P_{u_{k}}\big)^{-1/2}-h_{k}^{-1}\Big]h_{k}^{1/2},
\end{align}
where $P_{u_{k}}=\ket{u_{k}}\bra{u_{k}}$ with $u_{k}:=h_{k}^{1/2}v_{k}$.  For the first term in \eqref{cosh(2Kk)-1}, we apply \cite[Lemma A.8]{CHN-23} to obtain
\begin{align}
	&\Big\langle e_{\zeta},h_{k}^{-1/2}\Big(h_{k}^{1/2}(h_{k}+2P_{k})h_{k}^{1/2}\Big)^{1/2}h_{k}^{-1/2}e_{\zeta}\Big\rangle\nonumber\\
	&=\frac{4}{\pi\lambda_{k,\zeta}}\int_{0}^{\infty}\Big\langle (h_{k}^{2}+s^{2})^{-1}e_{\zeta},\Big(\frac{s^{2}}{1+2\langle u_{k},(h_{k}^{2}+s^{2})^{-1}u_{k}\rangle}\Big)P_{u_{k}}(h_{k}^{2}+s^{2})^{-1}e_{\zeta}\Big\rangle ds\nonumber\\
	&=\frac{2}{\pi}\frac{k_{F}^{-1}\hat{V}_{k}}{(2\pi)^{3}}\int_{0}^{\infty}\frac{s^{2}}{(\lambda_{k,\zeta}^{2}+s^{2})^{2}}\frac{1}{1+Q_{k}(s)}ds,
\end{align}
where we recall $Q_{k}(s)=\frac{k_{F}^{-1}\hat{V}_{k}}{(2\pi)^{3}}\sum_{p\in L_{k}}\frac{\lambda_{k,p}}{s^{2}+\lambda_{k,p}^{2}}$.  Similarly, for the second term in \eqref{cosh(2Kk)-1}, we use the identity from functional calculus to write
\begin{align}\label{inv-sqr}
	\big(h_{k}^{2}+2P_{u_{k}}\big)^{-1/2}&=\frac{2}{\pi}\int_{0}^{\infty}\frac{1}{h_{k}^{2}+2P_{u_{k}}+s^{2}}ds.
\end{align}
The resolvent for matrix with rank-1 perturbation is easily calculated using Sherman-Morrison formula as
\begin{align*}
	\frac{1}{h_{k}^{2}+2P_{u_{k}}+s^{2}}&=\frac{1}{h_{k}^{2}+s^{2}}-2\frac{(h_{k}^{2}+s^{2})^{-1}P_{u_{k}}(h_{k}^{2}+s^{2})^{-1}}{1+2\langle u_{k},(h_{k}^{2}+s^{2})^{-1}u_{k}\rangle}\nonumber\\
	&=\frac{1}{h_{k}^{2}+s^{2}}-\frac{k_{F}^{-1}\hat{V}_{k}}{(2\pi)^{3}}\frac{1}{1+Q_{k}(s)}\sum_{p,q\in L_{k}}M_{pq}(s)\ket{e_{p}}\bra{e_{q}},
\end{align*}
where $M_{pq}(s)=\frac{\lambda_{k,p}^{1/2}\lambda_{k,q}^{1/2}}{(s^{2}+\lambda_{k,p}^{2})(\lambda_{k,q}^{2}+s^{2})}$.  It follows that 
\begin{align*}
	\big(h_{k}^{2}+2P_{u_{k}}\big)^{-1/2}&=\frac{2}{\pi}\int_{0}^{\infty}\Big[\frac{1}{h_{k}^{2}+s^{2}}-\frac{k_{F}^{-1}\hat{V}_{k}}{(2\pi)^{3}}\frac{1}{1+Q_{k}(s)}\sum_{p,q\in L_{k}}M_{pq}(s)\ket{e_{p}}\bra{e_{q}}\Big]ds\nonumber\\
	&=h_{k}^{-1}-\frac{2}{\pi}\frac{k_{F}^{-1}\hat{V}_{k}}{(2\pi)^{3}}\int_{0}^{\infty}\frac{1}{1+Q_{k}(s)}\sum_{p,q\in L_{k}}M_{pq}(s)\ket{e_{p}}\bra{e_{q}}ds.
\end{align*}
Hence, the second term in \eqref{cosh(2Kk)-1} gives
\begin{equation}
	\begin{split}
		\Big\langle e_{\zeta},h_{k}^{1/2}\Big(\frac{1}{h_{k}^{2}+2P_{u_{k}}}-h_{k}^{-1}\Big)h_{k}^{1/2}e_{\zeta}\Big\rangle&=-\frac{2\lambda_{k,\zeta}}{\pi}\frac{k_{F}^{-1}\hat{V}_{k}}{(2\pi)^{3}}\int_{0}^{\infty}\frac{1}{1+Q_{k}(s)}M_{\zeta\zeta}(s)ds\nonumber\\
		&=-\frac{2\lambda_{k,\zeta}^{2}}{\pi}\frac{k_{F}^{-1}\hat{V}_{k}}{(2\pi)^{3}}\int_{0}^{\infty}\frac{1}{1+Q_{k}(s)}\frac{1}{(\lambda_{k,\zeta}^{2}+s^{2})^{2}}ds.
	\end{split}
\end{equation}
Putting everything together yields
\begin{equation}
	\langle e_{\zeta},(\cosh(-2K_{k})-1)e_{\zeta}\rangle=\frac{2}{\pi}\frac{k_{F}^{-1}\hat{V}_{k}}{(2\pi)^{3}}\int_{0}^{\infty}\frac{(s^{2}-\lambda_{k,\zeta}^{2})(\lambda_{k,\xi}^{2}+s^{2})^{-2}}{1+Q_{k}(s)}ds,
\end{equation}
which completes the proof.
\end{proof}

\appendix

\section{Comparison with work by Daniel and Voskov}\label{sec:compare-DV}
In the seminal work \cite{DanielVosko-60} by Daniel and Voskov, the momentum distribution of ground state in random phase approximation for electron gas in high density limit was obtained due to a Hellmann-Feynman type argument.  By a suitable change of variables, \cite[Eqs. (8), (9)]{DanielVosko-60} coincides exactly with \cite[Eq. (2.8)]{Lam-71}, which dealt with the same system at metallic densities.  

In this section, at least formally, we compare our result with the momentum distribution obtained by Daniel and Voskov, given in thermodynamic limit followed high density limit.  We shall focus on the case $\xi\in B_{F}^{c}$ since the reverse case can be treated similarly.  According to \cite[Eq. (8)]{DanielVosko-60}, the momentum distribution due to bosonization for $\xi\in B_{F}^{c}$ is given by (and we have adapted their formula in our notations)
\begin{align}\label{bosonization-DV}
	n_{\rm b}^{\rm (DV)}(\xi)&=\frac{k_{F}\alpha}{|\xi|}\int_{|\xi|-k_{F}}^{|\xi|+k_{F}}d|k| |k|\int_{0}^{\infty}\Big[\frac{|\xi|-|k|/2}{\big(|\xi|-|k|/2\big)^{2}+s^{2}}-\frac{\frac{|\xi|^{2}-k_{F}^{2}}{2|k|}}{\Big(\frac{|\xi|^{2}-k_{F}^{2}}{2|k|}\Big)^{2}+s^{2}}\Big]\nonumber\\
	&\quad\quad\quad\quad\quad\quad\quad\quad\quad\quad\quad\times\Big[|k|^{2}+\alpha k_{F}^{2} Q_{k}^{\rm (DV)}(s)\Big]^{-1}ds,
\end{align}
where
\begin{align*}
	Q_{k}^{\rm (DV)}(s)&=k_{F}^{-1}\int_{|p|<k_{F}}dp\int_{-\infty}^{\infty} dt e^{its|k|}\exp\Big[-|t|\Big(p\cdot k+\frac{1}{2}|k|^{2}\Big)\Big]\nonumber\\
	&=2\pi\Big[1+\frac{1}{2|k|k_{F}}\big(k_{F}^{2}-\frac{1}{4}|k|^{2}+s^{2}\big)\ln\Big(\frac{(k_{F}+|k|/2)^{2}+s^{2}}{(k_{F}-|k|/2)^{2}+s^{2}}\Big)\nonumber\\
	&\quad\quad\quad-\frac{s}{k_{F}}\arctan\Big(\frac{k_{F}+|k|/2}{s}\Big)-\frac{s}{k_{F}}\arctan\Big(\frac{k_{F}-|k|/2}{s}\Big)\Big].
\end{align*}
We remark that the variables $(k,q,p,u)$ in \cite{DanielVosko-60} corresponds to $(\xi/k_{F},k/k_{F},p/k_{F},s/k_{F})$ in our notations.  

To compare with our result, we denote $\cC_{\xi}:=\{k\in\Z_{*}^{3}\mid \xi\in L_{k}\}$ and replace our underlying configuration space by $[0,L]^{3}$, substitute
\begin{align*}
	k_{F}^{-1}\hat{V}_{k}\rightarrow 4\pi e^{2}|k|^{-2},\quad (2\pi)^{3}\rightarrow L^{3}\text{ (i.e., the volume of }[0,L]^{3}),
\end{align*}
and then formally take thermodynamic limit $L\rightarrow\infty$.  In this limit, we replace $L^{-3}\sum_{k}$ by integration $\int_{\R^{3}}dk$ and write $\lambda_{k,\xi}=\xi\cdot k-\frac{1}{2}|k|^{2}=|k|\big(\xi\cdot\hat{k}-\frac{1}{2}|k|\big)=:|k|\tilde{\lambda}_{k,\xi}$ so that our formula in \eqref{bosonization} becomes
\begin{align}\label{bosonization3}
	n_{\rm b}(\xi)&\sim \DETAILS{4\pi e^{2}\int_{\cC_{\xi}}\frac{dk}{|k|^{2}}\int_{0}^{\infty}\frac{(s^{2}-\lambda_{k,\xi}^{2})(s^{2}+\lambda_{k,\xi}^{2})^{-2}}{1+Q_{k}(s)}ds\nonumber\\
	&=}4\pi e^{2}\int_{\cC_{\xi}}\frac{dk}{|k|}\int_{0}^{\infty}\frac{(s^{2}-\tilde{\lambda}_{k,\xi}^{2})(s^{2}+\tilde{\lambda}_{k,\xi}^{2})^{-2}}{|k|^{2}+|k|^{2}Q_{k}(|k|s)}ds.
\end{align}
By replacing the Riemann summation $L^{-3}\sum_{p\in L_{k}}$ with the corresponding integral and then rewriting the resulting formula using Abel kernel, we obtain
\begin{align*}
	|k|^{2}Q_{k}(|k|s)\DETAILS{=\frac{4\pi e^{2}}{L^{3}k_{F}^{2}|k|^{2}}\sum_{p\in L_{k}}\frac{\lambda_{k,p}}{\lambda_{k,p}^{2}+|k|^{2}s^{2}}\nonumber\\}
	&\sim 4\pi e^{2}\int_{\overline{B}(k,k_{F})\cap B_{k_{F}}(0)^{c}}\frac{p\cdot k-\frac{1}{2}|k|^{2}}{(p\cdot k-|k|^{2}/2)^{2}+|k|^{2}s^{2}}dp\nonumber\\
	&=\frac{e^{2}}{2\pi}\int_{\overline{B}(0,k_{F})\cap B(-k,k_{F})^{c}}dp\int_{-\infty}^{\infty}dte^{-its|k|}\exp\big[-|t|\big(p\cdot k+\frac{1}{2}|k|^{2}\big)\big].
\end{align*}
For $|k|\geq 2k_{F}$, we see the integral region is simply $\overline{B}(0,k_{F})$ so that, in this regime, the integral is
\begin{align}
	\int_{|p|\leq k_{F}}dp\int_{-\infty}^{\infty}dt e^{-its|k|}\exp\Big[-|t|\big(p\cdot k+\frac{1}{2}|k|^{2}\big)\Big].
\end{align}
It follows that, in the regime where $k$ sufficiently far away from Fermi ball, we may identify $Q_{k}(|k|s)$ with $2\pi |k|^{-2}\alpha k_{F} Q_{k}^{\rm (DV)}(s)$.  

Now, we simplify the integral in \eqref{bosonization3} by switching to spherical coordinate of $k$ and following a similar argument as in \cite[Appendix C]{BeneLill-25}.  Since $\chi_{L_{k}}(\xi)$ is non-zero only when $|\xi|+k_{F}\leq |k|<|\xi|+k_{F}$, the radial integral over $|k|$ starts from $R_{\xi}:=|\xi|-k_{F}$ and can end at $|\xi|+k_{F}$.  The integration over $\theta$, measuring the angle between $\xi$ and $k$, runs from $0$ to $\theta_{\rm max}$ with $\cos\theta_{\rm max}\approx R_{\xi}/|k|=:\lambda_{\rm min}$.  It follows that
\begin{align}\label{bosonization4}
	&n_{\rm b}(\xi)\sim 8\pi^{2} e^{2}\int_{|\xi|-k_{F}}^{|\xi|+k_{F}}d|k||k|\int_{0}^{\theta_{\rm max}}\sin\theta d\theta\int_{0}^{\infty}\frac{s^{2}-(|\xi|\cos\theta-|k|/2)^{2}}{(s^{2}+(|\xi|\cos\theta-|k|/2)^{2})^{2}}\nonumber\\
	&\quad\quad\quad\quad\quad\quad\quad\quad\quad\quad\quad\quad\quad\quad\quad\quad\quad\quad\times\Big[|k|^{2}+|k|^{2}Q_{k}(|k|s)\Big]^{-1}ds\nonumber\\
	\displaybreak
	&=8\pi^{2} e^{2}\int_{|\xi|-k_{F}}^{|\xi|+k_{F}}d|k||k| \int_{0}^{\infty}\Big[\int_{\lambda_{\rm min}}^{1}\frac{s^{2}-(|\xi|\lambda-|k|/2)^{2}}{(s^{2}+(|\xi|\lambda-|k|/2)^{2})^{2}}d\lambda\Big]\nonumber\\
	&\quad\quad\quad\quad\quad\quad\quad\quad\quad\quad\quad\quad\quad\quad\quad\quad\quad\quad\times\Big[|k|^{2}+|k|^{2}Q_{k}(|k|s)\Big]^{-1}ds\nonumber\\
	&=\frac{8\pi^{2}e^{2}}{|\xi|}\int_{|\xi|-k_{F}}^{|\xi|+k_{F}}d|k||k|\int_{0}^{\infty}\Big[\frac{|\xi|-|k|/2}{s^{2}+(|\xi|-|k|/2)^{2}}-\frac{\frac{|\xi|^{2}-(|k|^{2}+2|\xi|k_{F}-|\xi|^{2})}{2|k|}}{s^{2}+\big(\frac{|\xi|^{2}-(|k|^{2}+2|\xi|k_{F}-|\xi|^{2})}{2|k|}\big)^{2}}\Big]\nonumber\\
	&\quad\quad\quad\quad\quad\quad\quad\quad\quad\quad\quad\quad\quad\quad\quad\quad\quad\quad\times\Big[|k|^{2}+|k|^{2}Q_{k}(|k|s)\Big]^{-1}ds.
\end{align}
In the region $|k|\sim |\xi|\gg k_{F}\gg 1$ (e.g., $|k|\sim |\xi|\sim k_{F}^{\varepsilon}$ for $\varepsilon>1$), we can formally write $k_{F}|\xi|^{-1}+|\xi|^{2}-|k|^{2}\sim o(1)$ and 
\begin{align*}
	\frac{|\xi|^{2}-\big(|k|^{2}-|\xi|^{2}+2|\xi|k_{F}\big)}{2|k|}&\approx\frac{|\xi|^{2}(1-o(k_{F}|\xi|^{-1}))+o(1)}{2|k|}\approx\frac{|\xi|^{2}}{2|k|}+o(1),
\end{align*}
where $o(1)$ denotes any formal small parameter.  The quantity $\frac{|\xi|^{2}-k_{F}^{2}}{2|k|}$ in \eqref{bosonization-DV} admits a similar approximation.  In this way, we may formally identify $n_{\rm b}^{\rm (DV)}(\xi)$ with our result in the high momentum regime.

Now, the second order correction for momentum distribution due to exchange correlation for $|\xi|>k_{F}$ was given explicitly in \cite[Eq. (12)]{DanielVosko-60} as
\begin{align}
	n_{\rm ex}^{\rm (DV)}(\xi)=-\frac{k_{F}^{2}\alpha^{2}}{4}\int_{|p+k|>k_{F}}\frac{dk}{|k|^{2}}\int_{|p|<k_{F}}\frac{dp}{[k\cdot(p-\xi)]^{2}|p-\xi|^{2}}.
\end{align}
In our result, by formally replacing summations with the corresponding integral as above and defining a new variable $p'=k-p$, the term \eqref{exchange} for $|\xi|>k_{F}$ can be written as
\begin{align}
	n_{\rm ex}(\xi)&\sim-\frac{16\pi^{2}e^{4}}{8}\int_{k\in\cC_{\xi}}\frac{dk}{|k|^{2}}\int_{\substack{|p'+k|>k_{F},\\ |p'|\leq k_{F}}}\frac{|p'-\xi|^{-2}}{(\lambda_{k,p'+k}+\lambda_{k,\xi})^{2}}dp\nonumber\\
	&=-2\pi^{2}e^{4}\int_{\substack{k\in\cC_{\xi},\\ |p'+k|>k_{F}}}\frac{dk}{|k|^{2}}\int_{|p'|\leq k_{F}}\frac{dp'}{|p'-\xi|^{2}[(p'-\xi)\cdot k]^{2}},
\end{align}
where we have used the following computation
\begin{align*}
	\lambda_{k,p}+\lambda_{k,\xi}&=\frac{1}{2}\big(|p'-k|^{2}-|p'|^{2}\big)+\frac{1}{2}\big(|\xi|^{2}-|\xi-k|^{2}\big)\\
	&=-p'\cdot k+\frac{1}{2}|k|^{2}+\xi\cdot k-\frac{1}{2}|\xi|^{2}=(\xi-p')\cdot k.
\end{align*}
We see that the integrand of $n_{\rm ex}(\xi)$ and $n_{\rm ex}^{\rm (DV)}(\xi)$ exactly coincides.  To conclude this section, we give some comments on the integral region.  For $k\in\cC_{\xi}$, one must have $|\xi-k|\leq k_{F}<|\xi|$ for $\xi\in B_{F}^{c}$.  For those $\xi$'s that are really far away from Fermi ball, those $k$'s that belong to $\cC_{\xi}$ must be far away from Fermi ball as well.  Since $p'\in B_{F}$, the condition $|p'+k|>k_{F}$ for these $k$'s is almost always guaranteed.  In this way, we can drop the symbol ``$k\in\cC_{\xi}$" in the above integral, which make $n_{\rm ex}(\xi)$ and $n_{\rm ex}^{\rm (DV)}(\xi)$ directly propositional to each other within this regime of $\xi$.

\bigskip

\noindent\textbf{Acknowledgements}.  The author is very greatful to Phan Th\`anh Nam for suggesting this work and many valuable feedbacks.  The author also thanks Niels Benedikter, Sascha Lill and Diwakar Naidu for stimulating discussions.  This work is partially supported by the European Research Council (ERC Consolidator Grant RAMBAS Project Nr. 10104424) and the Deutsche Forschungsgemeinschaft (TRR 352 Project Nr. 470903074).

\medskip

\noindent\textbf{Conflict of Interest}.  The author has no conflicts to disclose.

\medskip

\noindent\textbf{Data Availability}.
Data sharing is not applicable to this article as no new data were created or analyzed in this study.

\end{document}